\definecolor{maincolor}{HTML}{BA0C2F}
\pgfplotsset{compat=1.18}
\newcommand{\wdeg}{\mathrm{wdeg}}
\newcommand{\coeff}{\mathrm{coeff}}
\newcommand{\res}{\mathrm{res}}
\newcommand{\pres}{\mathrm{proj.res}}
\newcommand{\ord}{\mathrm{ord}}
\newcommand{\id}[2]{\mathrm{Id}_{#1}(#2)}
\newcommand{\cont}[1]{\mathrm{cont}(#1)}
\newcommand{\disc}[1]{\mathrm{disc}(#1)}
\newcommand{\ideal}[1]{\left\langle #1 \right\rangle}
\newcommand{\vecX}{\mathbf{X}}
\newcommand{\vecY}{\mathbf{Y}}
\newcommand{\vecZ}{\mathbf{Z}}
\newcommand{\vecT}{\mathbf{T}}
\newcommand{\veca}{\mathbf{a}}
\newcommand{\vecb}{\mathbf{b}}
\newcommand{\vecc}{\mathbf{c}}
\newcommand{\vecu}{\mathbf{u}}
\newcommand{\vecv}{\mathbf{v}}
\newcommand{\vecx}{\mathbf{x}}
\newcommand{\ZZ}{\mathbb{Z}}
\newcommand{\QQ}{\mathbb{Q}}
\newcommand{\NN}{\mathbb{N}}
\newcommand{\FF}{\mathbb{F}}
\newcommand{\KK}{\mathbb{K}}
\newcommand{\LL}{\mathbb{L}}
\newcommand{\EE}{\mathbb{E}}
\newcommand{\aff}{\mathbb{A}}
\newcommand{\proj}{\mathbb{P}}
\newcommand{\fr}[1]
{\mathrm{Frac}(#1)}
\newcommand{\order}{\mathcal{O}}
\newcommand{\pp}{\mathfrak{p}}
\newcommand{\mm}{\mathfrak{m}}
\newcommand{\lt}[1]{\mathrm{LT}(#1)}
\newcommand{\lc}[1]{\mathrm{LC}(#1)}
\newcommand{\lm}[1]{\mathrm{LM}(#1)}
\newcommand{\grobner}{Gr\"{o}bner }
\newcommand{\bezout}{B\'{e}zout}
\renewcommand{\epsilon}{\varepsilon}
\newtheorem{theorem}{Theorem}[section]
\newtheorem{corollary}[theorem]{Corollary}
\newtheorem{lemma}[theorem]{Lemma}
\newtheorem{fact}[theorem]{Fact}
\newtheorem{claim}[theorem]{Claim}
\newtheorem{definition}[theorem]{Definition}
\theoremstyle{remark}
\newtheorem{remark}{Remark}
\newtheorem*{remark*}{Remark}
\newif\ifmynotes
\title{Deterministic Depth-4 PIT and Normalization}
\author{Zeyu Guo\thanks{The Ohio State University. Email: \texttt{zguotcs@gmail.com}. Supported by the NSF CAREER award (grant no. CCF-2440926).}
\and Siki Wang\thanks{Caltech. Email: \texttt{siki.wang@caltech.edu}}}
\date{}
\begin{document}

\maketitle

\begin{abstract}
In this paper, we initiate the study of deterministic PIT for $\Sigma^{[k]}\Pi\Sigma\Pi^{[\delta]}$ circuits over fields of any characteristic, where $k$ and $\delta$ are bounded. Our main result is a deterministic polynomial-time black-box PIT algorithm for $\Sigma^{[3]}\Pi\Sigma\Pi^{[\delta]}$ circuits, under the additional condition that one of the summands at the top $\Sigma$ gate is squarefree.

Our techniques are purely algebro-geometric: they do not rely on Sylvester--Gallai-type theorems, and our PIT result holds over arbitrary fields.

The core of our proof is based on the normalization of algebraic varieties. Specifically, we carry out the analysis in the integral closure of a coordinate ring, which enjoys better algebraic properties than the original ring.
\end{abstract}

\section{Introduction}

Polynomial Identity Testing (PIT) is the fundamental problem of deciding whether a given multivariate algebraic circuit computes the identically zero polynomial. Equivalently, it asks whether two arithmetic circuits compute the same polynomial, by checking whether their difference is identically zero. Despite its simple formulation, PIT captures a striking \emph{randomized vs.\ deterministic} dichotomy in complexity theory. On the one hand, it admits an efficient randomized algorithm: by the Schwartz--Zippel Lemma~\cite{Sch80,Zip79}, evaluating a degree-$d$ polynomial at a random point (or tuple of points) over a sufficiently large field yields a correct identity test with high probability. This leads to a fast Monte Carlo algorithm that treats the circuit as a black box. On the other hand, no efficient deterministic algorithm is known, even when the circuit’s structure is fully accessible. In complexity-theoretic terms, PIT lies in the class \textsf{coRP}, but whether it can be solved in \textsf{P} remains a major open question. Closing this gap is widely regarded as a central challenge in theoretical computer science. Indeed, PIT is often viewed as the algebraic analogue of the classic \textsf{P} vs.\ \textsf{BPP} (or \textsf{P} vs.\ \textsf{RP}) question.

The importance of PIT lies both in its algorithmic applications and its deep connections to complexity theory. It has been used in a wide range of settings, such as primality testing \cite{AKS04}, polynomial factoring \cite{KSS14,BSV20,KRS24,DST24}, and perfect matching \cite{Lov79,MVV87,CRS95,FGT21,ST17}. More fundamentally, PIT plays a central role in the \emph{hardness vs.\ randomness} paradigm, and its derandomization is known to imply long-sought arithmetic circuit lower bounds \cite{HS80,KI04}.

Over the past few decades, there have been numerous results on deterministic PIT for various restricted models, including sparse polynomials, depth-3 circuits, low-depth circuits, and arithmetic branching programs, among others. See, for example,~\cite{LST24,AF22} for recent breakthroughs, including subexponential-time deterministic PIT algorithms for low-depth circuits over fields of characteristic zero or large characteristic. For surveys of earlier developments, see~\cite{Sax09,Sax14,SY10}. Nonetheless, despite decades of significant progress, a general polynomial-time deterministic PIT algorithm remains elusive. As such, PIT continues to serve as a central testing ground for our understanding of algebraic structure, pseudorandomness, and computational hardness.

\paragraph{Depth-4 PIT.}
A major breakthrough by Agrawal and Vinay~\cite{AV08} revealed that PIT for \mbox{depth-4} circuits essentially captures the full complexity of general PIT. Roughly speaking, they showed that any arithmetic circuit of subexponential size can be transformed into a depth-4 circuit of subexponential size. Thus, proving lower bounds for the latter would imply lower bounds for the former. Combined with hardness-vs-randomness connections~\cite{NW94, KI04}, they further showed that a complete derandomization of depth-4 PIT---even for circuits with slightly unbounded top fan-in and $O(\log n)$ bottom fan-in---would yield a quasi-polynomial-time deterministic PIT algorithm for general circuits. Notably, no analogous reduction is known in the Boolean setting.

This equivalence was unexpected: it elevated depth-4 circuits from a technical intermediate to a canonical class encapsulating the full difficulty of PIT. As a result, depth-4 PIT has become a central focus in the broader derandomization program.

Despite this equivalence, the expressive power of general depth-4 circuits makes deterministic PIT for them highly challenging. This has motivated a refined line of work focused on syntactic subclasses, notably $\Sigma^{[k]}\Pi\Sigma\Pi^{[\delta]}$ circuits---depth-4 circuits in which the top fan-in is at most $k$, and the bottom multiplication gates have fan-in at most $\delta$, where $k$ and $\delta$ are bounded. The study of deterministic PIT for these circuits has led to the development of new techniques grounded in Sylvester--Gallai-type arguments and tools from algebraic geometry.

\subsection{Previous Work}

\paragraph{Sylvester--Gallai-based approach.}
A $\Sigma^{[k]}\Pi\Sigma\Pi^{[\delta]}$ circuit consists of a top $\Sigma$ gate of fan-in at most $k$, followed by alternating layers of unbounded $\Pi$, $\Sigma$, and $\Pi$ gates, where the bottom $\Pi$ gates have fan-in at most $\delta$. This model naturally generalizes bounded top fan-in depth-3 circuits ($\Sigma^{[k]}\Pi\Sigma$), for which efficient deterministic PIT algorithms are known. 
Over fields of characteristic zero, Sylvester--Gallai (SG)-type theorems \cite{EK66,BW90} from combinatorial geometry have been used to establish constant-rank bounds on the linear forms in $\Sigma^{[k]}\Pi\Sigma$ circuits, leading to deterministic polynomial-time black-box PIT algorithms~\cite{DS07,KS09,SS13}.

Gupta~\cite{Gup14} conjectured a nonlinear Sylvester--Gallai-type statement which, if true, would yield deterministic polynomial-time black-box PIT algorithms for $\Sigma^{[k]}\Pi\Sigma\Pi^{[\delta]}$ circuits with bounded $k$ and $\delta$ over fields of characteristic zero. Motivated by this conjecture, Shpilka proved an SG-type theorem for quadratic polynomials, initiating a substantial line of work~\cite{Shp20,PS22,PS21,GOS22,OS22,GOPS23,OS24,GOS25} aimed at fully resolving the conjecture.

Notably, Peleg and Shpilka~\cite{PS21}, building on earlier work~\cite{Shp20,PS22}, proved a quadratic SG-type theorem that yields a deterministic polynomial-time black-box PIT algorithm for $\Sigma^{[3]}\Pi\Sigma\Pi^{[2]}$ circuits. Subsequent works~\cite{OS22,GOPS23,OS24,GOS25} extended these results and introduced new techniques, including connections to the Stillman uniformity phenomenon and applications of the Cohen--Macaulay property from commutative algebra.

However, SG-based methods remain largely confined to characteristic zero. Kayal and Saxena~\cite{KS07} showed that the constant-rank bounds implied by SG-type arguments do not hold in positive characteristic, even for the simpler $\Sigma^{[k]}\Pi\Sigma$ model.

\paragraph{Other work.}
Dutta, Dwivedi, and Saxena~\cite{DDS21} gave a deterministic quasipolynomial-time black-box PIT algorithm for $\Sigma^{[k]}\Pi\Sigma\Pi^{[\delta]}$ circuits. Due to their use of logarithmic derivatives, the result holds only when the characteristic of the base field is zero or sufficiently large. Their approach reduces the problem to black-box PIT for read-once oblivious arithmetic branching programs (ROABPs). However, obtaining a deterministic polynomial-time black-box PIT algorithm for this class remains open despite extensive effort.

We also mention the recent breakthrough of superpolynomial lower bounds for low-depth circuits by Limaye, Srinivasan, and Tavenas~\cite{LST24}. This result, together with the earlier work of Chou, Kumar, and Solomon \cite{CKS18} or the work of Andrews and Forbes \cite{AF22}, yields subexponential-time black-box PIT algorithms for low-depth circuits, including $\Sigma^{[k]}\Pi\Sigma\Pi^{[\delta]}$ circuits.  These algorithms also rely on properties that only hold when the characteristic is zero or large. Recently, Forbes~\cite{For24} extended the superpolynomial lower bound for low-depth circuits to positive characteristics. However, whether the PIT algorithms themselves can be extended to this setting remains an open question~\cite{For24}.

In summary, all of the results and approaches mentioned above assume that the characteristic of the base field is zero or sufficiently large. This naturally raises the following question: Is it possible to design a nontrivial, or even polynomial-time, deterministic PIT algorithm for $\Sigma^{[k]}\Pi\Sigma\Pi^{[\delta]}$ circuits over arbitrary fields? We believe the answer is yes. In this work, we make progress toward answering this question affirmatively.

As a reality check, any such result must first address the simpler case of $\Sigma^{[k]}\Pi\Sigma$ circuits. Deterministic polynomial-time algorithms for this class are indeed known in positive characteristics: they were first obtained by Kayal and Saxena in the white-box setting~\cite{KS07}, and later by Saxena and Seshadhri in the black-box setting~\cite{SS12}. Notably, these results do not rely on SG-type theorems or other arguments that require assumptions on the characteristic of the base field. Our work is inspired by these approaches and seeks to extend them to the depth-4 setting.

\paragraph{Concurrent work.} In a recent exciting and independent paper \cite{GOS25-3}, Garg, Oliveira, and Sengupta gave a deterministic polynomial-time black-box PIT algorithm for $\Sigma^{[3]}\Pi\Sigma\Pi^{[\delta]}$ circuits over fields of characteristic zero. Their work and ours were developed independently and use different techniques. The release of the two papers was coordinated.

\subsection{Our Results}

In this paper, we initiate the study of deterministic PIT for $\Sigma^{[k]}\Pi\Sigma\Pi^{[\delta]}$ circuits over fields of any characteristic.
Our main result is a deterministic polynomial-time black-box PIT algorithm for $\Sigma^{[3]}\Pi\Sigma\Pi^{[\delta]}$ circuits under the condition that one of the summands at the top $\Sigma$ gate is squarefree.

Our techniques are purely algebro-geometric. Notably, the proof does not rely on Sylvester--Gallai-type theorems, and the result remains valid even over fields of small positive characteristic.

We begin by stating the homogeneous version of our result.

\begin{theorem}[Main theorem, homogeneous version]\label{thm:main-intro}
Let $C_{n,d,k,\delta,\FF}$ be the set of polynomials $F\in\FF[\vecX]=\FF[X_1,\dots,X_n]$ over a field $\FF$ satisfying the following conditions:
\begin{enumerate}[(1)]
    \item $F$ can be expressed as a sum $F=\sum_{i=0}^{k_0-1} F_i$, where $k_0\leq k$, $F_i=\prod_{j=1}^{m_i} f_{i,j}$ for $i\in \{0,1,\dots,k_0-1\}$, and each $f_{i,j}\in \FF[\vecX]$ is a nonzero homogeneous polynomial of degree at most $\delta$.
    \item $\deg(F_i)=d_0$ for some $d_0\leq d$ and all $i\in \{0,1,\dots,k_0-1\}$.
    \item\label{item:hom3-intro} There exists $i\in \{0,1,\dots,k_0-1\}$ such that $F_i$ is squarefree, meaning that the irreducible factors of $F_i$ over $\overline{\FF}$ are distinct.
\end{enumerate}
Then there exists an explicit $(nd)^{O_{\delta}(1)}$-sized hitting set $\mathcal{H}\subseteq \overline{\FF}^n$ for $C_{n,d,3,\delta,\FF}$. Equivalently, there exists a deterministic black-box PIT algorithm for  $C_{n,d,3,\delta,\FF}$ that makes at most $(nd)^{O_\delta(1)}$ evaluation queries.
\end{theorem}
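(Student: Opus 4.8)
The plan is to reduce black-box PIT for $\Sigma^{[3]}\Pi\Sigma\Pi^{[\delta]}$ circuits (with one squarefree summand) to a "faithful map"–style variable reduction, and then to analyze the vanishing of $F = F_0 + F_1 + F_2$ in the coordinate ring of the hypersurface defined by the squarefree summand. Say $F_0$ is squarefree. The key identity is that $F \equiv 0$ forces $F_1 + F_2 \equiv -F_0$, so modulo each irreducible factor $g$ of $F_0$ we have $F_1 + F_2 \equiv 0 \pmod g$. Since $F_0 = \prod_j f_{0,j}$ is squarefree, the factors are pairwise coprime, so by CRT the condition "$F \equiv 0$" is controlled by the behavior of $F_1, F_2$ modulo each $f_{0,j}$ simultaneously, together with a divisibility/degree bookkeeping coming from $\deg F_i = d_0$. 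The first step, then, is to set up this modular picture carefully: pass to $R = \FF[\vecX]/\langle f_{0,j}\rangle$ for each bottom factor $f_{0,j}$ of $F_0$, and record that $F_1 \equiv -F_2$ in $R$.

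The core of the argument, following the abstract, is to work not in $R$ itself but in its integral closure (normalization) $\widetilde R$. The reason is that $R$ can be badly non-normal — its non-normal locus is exactly where the low-degree hypersurface $\{f_{0,j}=0\}$ is singular — and the factorization behavior of the products $F_1 = \prod f_{1,j}$, $F_2 = \prod f_{2,j}$ inside $R$ is uncontrolled, whereas $\widetilde R$ is a normal domain (in fact a product of normal domains after splitting into components) where unique factorization of the relevant ideals into height-one primes is available. So the second step is: (i) reduce to the case where each $f_{0,j}$ is irreducible, giving that $R$ is a domain; (ii) pass to $\widetilde R$, whose "size" (degree, number of generators) is bounded by $(nd)^{O_\delta(1)}$ because $f_{0,j}$ has degree $\le \delta$ and lives in a variety of bounded degree — this bound on the normalization is where I expect to need a quantitative Noether-normalization / effective-integral-closure input. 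Then, in $\widetilde R$, the equation $\prod_j \tilde f_{1,j} = -\prod_j \tilde f_{2,j}$ forces, via comparison of divisors (height-one primes), that up to units and regrouping the two products share a common "core," which in turn constrains $F_1, F_2$ back in $\FF[\vecX]$ to depend — after a linear change of coordinates or a low-degree substitution — on only $O_\delta(1)$ many essential directions relative to $F_0$.

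The third step converts this structural dictatorship into a hitting set. Once we know that modulo the (bounded-degree) data of $F_0$ the polynomials $F_1, F_2$ are governed by a bounded number of essential parameters, we can build an explicit generator map $\Phi\colon \overline\FF^{\,r} \to \overline\FF^{\,n}$ with $r = O_\delta(1)$ whose image is a "faithful" test set: $F \circ \Phi \equiv 0 \iff F \equiv 0$. Such maps can be constructed explicitly from low-degree Kronecker/Vandermonde substitutions together with the SV-generator or a rank-condenser for the $O_\delta(1)$-dimensional space of essential linear forms/parameters, all of which have size $(nd)^{O_\delta(1)}$; finally brute-force evaluation of $F\circ\Phi$ over a grid of size $(nd)^{O_\delta(1)}$ in each of the $r$ coordinates gives the hitting set $\mathcal H \subseteq \overline\FF^n$. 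A standard bootstrapping then removes the intermediate reductions (making $f_{0,j}$ irreducible, handling $k_0 < 3$, etc.), since each only changes parameters by $O_\delta(1)$ factors.

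The step I expect to be the main obstacle is controlling the factorization of $F_1$ and $F_2$ inside the normalization $\widetilde R$ and translating it back to $\FF[\vecX]$: a priori $\tilde f_{1,j}$ need not be irreducible in $\widetilde R$, units in $\widetilde R$ are no longer just constants, and the map $\FF[\vecX] \twoheadrightarrow R \hookrightarrow \widetilde R$ loses information about which height-one primes of $\widetilde R$ actually come from genuine polynomial factors. Making the "shared core" argument precise — quantitatively bounding how many new directions the equation $\prod \tilde f_{1,j} = \mathrm{unit}\cdot\prod \tilde f_{2,j}$ can introduce, uniformly over all fields including small characteristic where multiplicities behave badly — is the technical heart, and is presumably where the careful normalization analysis advertised in the abstract does the real work.
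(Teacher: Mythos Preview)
Your proposal has two genuine gaps relative to the paper's argument.

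\textbf{Order of operations.} You plan to normalize the full $(n-1)$-dimensional hypersurface $\{f_{0,j}=0\}$ and then do variable reduction. The paper does the opposite: it first restricts to a \emph{generic plane} (treating the plane's parameters as indeterminates $\vecY$, working over $\KK=\FF(\vecY)$), so that $f_{0,1}$ cuts out a \emph{curve} of degree $\le\delta$, and only then normalizes. This is essential: you yourself flag the effective-integral-closure bound as a worry, and indeed normalizing an $(n-1)$-fold with $n$ unbounded is hopeless for $(nd)^{O_\delta(1)}$ complexity. After restricting to a plane, everything---the curve, its normalization, the relevant \grobner bases---lives in a constant number of variables and has degree $O_\delta(1)$, so the cost of normalization is $O_\delta(1)$.

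\textbf{Wrong mechanism for nonvanishing.} Your plan is to compare divisors of $\prod \tilde f_{1,j}$ and $\prod \tilde f_{2,j}$ in $\widetilde R$, extract a ``shared core,'' and conclude that $F_1,F_2$ depend on only $O_\delta(1)$ essential directions, then apply a rank-condenser. This is the Sylvester--Gallai-style endgame, and the paper explicitly does \emph{not} go that route (indeed, such rank bounds fail in small characteristic). The paper's mechanism is a clean dichotomy on the \emph{projective} normalized curve $\widetilde C$: the rational function $g=\pres(F_1)/\pres(F_2)$ restricted to $\widetilde C$ is either regular everywhere---hence constant, since $\widetilde C$ is proper and geometrically integral, giving $F_1\equiv \gamma F_2 \pmod{f_{0,1}}$ and reducing to an easy case---or $g$ has a pole at some closed point $\mm$, i.e.\ $\ord_\mm(\res(F_1)) < \ord_\mm(\res(F_2))$. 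In the second case that single valuation already certifies $\res(F_1+F_2)\not\equiv 0 \pmod{\res(f_{0,1})}$, and the remaining work is to specialize $\vecY\to\veca$ while preserving this (done by hitting a bounded collection of low-degree polynomials in $\vecY$). Your ``shared core'' step, even if it went through, would only tell you $F_1/F_2$ is a unit in $\widetilde R$---which is exactly the \emph{regular} branch of the paper's dichotomy---and does not by itself yield a rank bound.
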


The above homogeneous result easily extends to an analogous statement for inhomogeneous polynomials. We now explain how this extension works.

Let $F=\sum_{i=0}^{k_0-1} F_i\in \FF[X_1,\dots,X_n]$, where $F_0,\dots,F_{k_0-1}$ are nonzero and possibly inhomogeneous polynomials. Define the homogenization of $F$ with respect to $F_0,\dots,F_{k_0}$ as 
\[
\mathrm{H}(F,F_0,\dots,F_{k_0-1})
:=\sum_{i=0}^{k_0-1} F_i(X_1/X_0, \dots,X_n/X_0)X_0^{d_0}, \quad \text{where}~d_0=\max_{0\leq i\leq k_0-1}(\deg(F_i)),
\]
which is a homogeneous polynomial living in $\FF[X_0,X_1,\dots,X_n]$.

We now state the inhomogeneous version of our result.

\begin{corollary}[Main theorem, inhomogeneous version]\label{cor:inhom}
        Let $C_{n,d,k,\delta,\FF}^*$ be the set of polynomials $F\in \FF[\vecX]=\FF[X_1,\dots,X_n]$ over a field $\FF$ that can be written as a sum $F=\sum_{i=0}^{k_0-1} F_i$ for some $k_0\leq k$ such that $\mathrm{H}(F,F_0,\dots,F_{k_0-1})\in C_{n,d,k,\delta,\FF}$. This includes all those $F=\sum_{i=0}^{k_0-1} F_i$ with $k_0\leq k$ for which the following conditions hold:
\begin{enumerate}[(1)]
    \item\label{item:inhom1} $F_i=\prod_{j=1}^{m_i} f_{i,j}$ for $i\in \{0,1,\dots,k_0-1\}$, where each $f_{i,j}\in \FF[\vecX]$ is a nonzero 
    polynomial of degree at most $\delta$.
    \item\label{item:inhom2} $d_0:=\max_{0\leq i\leq k_0-1}(\deg(F_i))$ is at most $d$.
    \item\label{item:inhom3} $F_i$ is squarefree for some $i\in \{0,1,\dots,k_0-1\}$ satisfying $\deg(F_i)=d_0$, or more generally, satisfying $\deg(F_i)\geq d_0-1$. 
\end{enumerate}
And there exists an explicit $(nd)^{O_{\delta}(1)}$-sized hitting set $\mathcal{H}\subseteq \overline{\FF}^n$ for $C_{n,d,3,\delta,\FF}^*$. Equivalently, there exists a deterministic black-box PIT algorithm for  $C_{n,d,3,\delta,\FF}^*$ that makes at most $(nd)^{O_\delta(1)}$ evaluation queries.
\end{corollary}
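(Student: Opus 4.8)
The plan is to reduce \Cref{cor:inhom} to \Cref{thm:main-intro} by a standard homogenization argument. First I would observe that the second assertion of the corollary—the existence of an explicit $(nd)^{O_\delta(1)}$-sized hitting set for $C^*_{n,d,3,\delta,\FF}$—follows immediately from \Cref{thm:main-intro} once we know that $F \in C^*_{n,d,3,\delta,\FF}$ implies $\mathrm{H}(F,F_0,\dots,F_{k_0-1}) \in C_{n,d,3,\delta,\FF}$ (this is the definition of $C^*$), together with the elementary fact that $F \equiv 0$ if and only if $\mathrm{H}(F,F_0,\dots,F_{k_0-1}) \equiv 0$, and that an evaluation of the homogenization at a point $(a_0, a_1, \dots, a_n) \in \overline{\FF}^{n+1}$ with $a_0 \neq 0$ equals $a_0^{d_0} F(a_1/a_0, \dots, a_n/a_0)$. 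Thus a hitting set $\mathcal H' \subseteq \overline{\FF}^{n+1}$ for $C_{n,d,3,\delta,\FF}$ (in $n+1$ variables $X_0,\dots,X_n$) pulls back to a hitting set $\mathcal H \subseteq \overline{\FF}^n$ for $C^*_{n,d,3,\delta,\FF}$ by setting $\mathcal H = \{(a_1/a_0,\dots,a_n/a_0) : (a_0,\dots,a_n) \in \mathcal H',\ a_0 \neq 0\}$, and we may arrange $\mathcal H'$ to avoid the hyperplane $X_0 = 0$ at the cost of a constant-factor blowup (e.g.\ by a generic shift, or by noting the construction of $\mathcal H'$ can be taken to have all points with nonzero first coordinate). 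The size bound $(nd)^{O_\delta(1)}$ is preserved since $n+1 = O(n)$.

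The main content is therefore to verify the second sentence of the corollary: that conditions \eqref{item:inhom1}--\eqref{item:inhom3} imply $\mathrm{H}(F,F_0,\dots,F_{k_0-1}) \in C_{n,d,3,\delta,\FF}$, i.e.\ that homogenization preserves the structural hypotheses of \Cref{thm:main-intro}. Write $G := \mathrm{H}(F,F_0,\dots,F_{k_0-1}) = \sum_{i=0}^{k_0-1} G_i$ where $G_i := F_i(X_1/X_0,\dots,X_n/X_0) X_0^{d_0}$. For condition (1) of \Cref{thm:main-intro}: since $F_i = \prod_{j=1}^{m_i} f_{i,j}$ with $\deg f_{i,j} \leq \delta$, we get $G_i = X_0^{d_0 - \deg F_i} \prod_{j=1}^{m_i} \widetilde{f_{i,j}}$, where $\widetilde{f_{i,j}} := f_{i,j}(X_1/X_0,\dots,X_n/X_0) X_0^{\deg f_{i,j}}$ is the (standard) homogenization of $f_{i,j}$, which is homogeneous of degree $\deg f_{i,j} \leq \delta$; the extra factor $X_0^{d_0 - \deg F_i}$ can be absorbed as $d_0 - \deg F_i$ additional bottom factors each equal to the degree-$1$ (hence degree-$\leq\delta$) homogeneous polynomial $X_0$. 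So $G_i$ is a product of nonzero homogeneous polynomials of degree at most $\delta$. Condition (2) holds because each $G_i$ is homogeneous of degree exactly $d_0$ by construction, and $d_0 \leq d$ by \eqref{item:inhom2}.

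The one step requiring genuine (though still short) care is condition \eqref{item:hom3-intro} of \Cref{thm:main-intro}: we need some $G_i$ to be squarefree. This is where the subtlety flagged in \eqref{item:inhom3} enters—homogenization does not in general preserve squarefreeness, because the factor $X_0^{d_0 - \deg F_i}$ introduces a repeated factor $X_0$ as soon as $d_0 - \deg F_i \geq 2$. I would argue as follows. Pick $i$ with $F_i$ squarefree and $\deg F_i \geq d_0 - 1$, as granted by \eqref{item:inhom3}. If $\deg F_i = d_0$, then $G_i = \prod_j \widetilde{f_{i,j}}$ is the honest homogenization of the squarefree polynomial $F_i$; since homogenization is multiplicative and preserves irreducibility and non-associateness of factors over $\overline{\FF}$ (a polynomial $g$ and its homogenization $\widetilde g$ have the same factorization structure after dehomogenizing, and dehomogenization sends the irreducible factors of $\widetilde g$ other than $X_0$ bijectively to the irreducible factors of $g$, while $X_0$ divides $\widetilde g$ at most once precisely when $X_0^{d_0-\deg F_i}=X_0^0$ contributes nothing), $G_i$ is squarefree. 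If instead $\deg F_i = d_0 - 1$, then $G_i = X_0 \cdot \widetilde{F_i}$ where $\widetilde{F_i}$ is the homogenization of $F_i$; this is squarefree iff $X_0 \nmid \widetilde{F_i}$, which holds iff $F_i$ has no factor of degree strictly dropping under homogenization—equivalently iff the top-degree homogeneous part of $F_i$ is not divisible by... more simply: $X_0 \mid \widetilde{F_i}$ iff $\widetilde{F_i}(0, X_1,\dots,X_n) = 0$ iff the degree-$\deg F_i$ homogeneous component of $F_i$ vanishes, which is impossible since $\deg F_i$ is the degree of $F_i$. Hence $X_0 \nmid \widetilde{F_i}$, so $G_i = X_0 \widetilde{F_i}$ is squarefree. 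Either way \eqref{item:hom3-intro} holds, completing the verification that $G \in C_{n,d,3,\delta,\FF}$ and hence the reduction. The only genuine obstacle is bookkeeping: ensuring the hitting set for the $(n{+}1)$-variable homogeneous class can be chosen disjoint from $\{X_0 = 0\}$ so that the dehomogenization map is well-defined on it, which is routine.
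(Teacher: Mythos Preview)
Your proposal is correct and follows essentially the same approach as the paper: verify that conditions \eqref{item:inhom1}--\eqref{item:inhom3} imply the homogenization lies in $C_{n,d,k,\delta,\FF}$ (with the key point being that each $\widetilde{f_{i,j}}$ is not divisible by $X_0$, so the extra $X_0^{d_0-\deg F_i}$ factor with exponent at most $1$ preserves squarefreeness), then pull back the hitting set via $(a_0,\dots,a_n)\mapsto(a_1/a_0,\dots,a_n/a_0)$. The paper handles the $a_0=0$ issue by noting that $C_{n+1,d,3,\delta,\FF}$ is closed under invertible linear transformations, so one may replace $\mathcal H$ by $\phi(\mathcal H)$ for a suitable $\phi$ making all first coordinates nonzero---your ``generic shift'' remark is the same idea.
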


\begin{proof}
Consider $i\in\{0,1,\dots,k_0-1\}$.
Note that
\[
F_i(X_1/X_0, \dots,X_n/X_0)X_0^{d_0}=\left(\prod_{j=1}^{m_i} \left(f_{i,j}(X_1/X_0, \dots,X_n/X_0)X_0^{\deg(f_{i,j})}\right)\right) X_0^{d_0-\deg(F_i)}.
\]
Here, each factor $f_{i,j}(X_1/X_0, \dots,X_n/X_0)X_0^{\deg(f_{i,j})}$ is not divisible by $X_0$.
Therefore, if $F_i$ is squarefree and $\deg(F_i)\geq d_0-1$, then $F_i(X_1/X_0, \dots,X_n/X_0)X_0^{d_0}$ is also squarefree. So if $F=\sum_{i=0}^{k_0-1} F_i$ satisfies \cref{item:inhom3} in \cref{cor:inhom}, then $\mathrm{H}(F,F_0,\dots,F_{k_0-1})$ satisfies \cref{item:hom3-intro} in \cref{thm:main-intro}.
It is straightforward to verify that \cref{item:inhom1} and \cref{item:inhom2} in \cref{cor:inhom} imply the corresponding items in \cref{thm:main-intro} as well.

It remains to construct an explicit $(nd)^{O_{\delta}(1)}$-sized hitting set $\mathcal{H}^*\subseteq \overline{\FF}^n$ for $C_{n,d,3,\delta,\FF}^*$.
First, by \cref{thm:main-intro}, there exists an explicit $(nd)^{O_{\delta}(1)}$-sized hitting set $\mathcal{H}\subseteq \overline{\FF}^{n+1}$ for $C_{n+1,d,3,\delta,\FF}$.

Let us first assume that $a_0\neq 0$ for all  $(a_0,\dots,a_n)\in \mathcal{H}$.
Construct the set
\[
\mathcal{H}^*=\{(a_1/a_0,\dots,a_n/a_0): (a_0,\dots,a_n)\in\mathcal{H}\}.
\]
Consider a nonzero polynomial $F=\sum_{i=0}^{k_0-1} F_i\in C_{n,d,3,\delta,\FF}^*$. Then $\mathrm{H}(F,F_0,\dots,F_{k_0-1})$ is a nonzero polynomial in $C_{n,d,3,\delta,\FF}$. 
Choose $\veca=(a_0,\dots,a_n)\in \mathcal{H}$  such that $\mathrm{H}(F,F_0,\dots,F_{k_0-1})(\veca)\neq 0$.
By definition,
\[
\mathrm{H}(F,F_0,\dots,F_{k_0-1})(\veca)=\sum_{i=0}^{k_0-1} F_i(a_1/a_0, \dots,a_n/a_0)a_0^{d_0}=F(a_1/a_0,\dots,a_n/a_0)a_0^{d_0}.
\]
Therefore, $F(a_1/a_0,\dots,a_n/a_0)=\mathrm{H}(F,F_0,\dots,F_{k_0-1})(\veca)a_0^{-d_0}\neq 0$, where $(a_1/a_0, \dots,a_n/a_0)\in\mathcal{H}^*$ by construction. So $\mathcal{H}^*$ is a hitting set for $C_{n,d,3,\delta,\FF}^*$, whose size is at most $|\mathcal{H}|=(nd)^{O_{\delta}(1)}$. 

Finally, we remove the assumption that $a_0 \neq 0$ for all $(a_0, \dots, a_n) \in \mathcal{H}$. Observe that the class $C_{n+1,d,3,\delta,\FF}$ is closed under invertible linear transformations of the coordinates. Therefore, the hitting set property for $C_{n+1,d,3,\delta,\FF}$ is preserved under such transformations as well.

We may assume that $\mathbf{0} = (0, \dots, 0) \notin \mathcal{H}$, since any non-constant homogeneous polynomial always vanishes at $\mathbf{0}$. Hence, every point in $\mathcal{H}$ has at least one nonzero coordinate.

We can deterministically and efficiently find an invertible linear transformation $\phi : \overline{\FF}^{n+1} \to \overline{\FF}^{n+1}$ such that every point $(a_0, \dots, a_n) \in \phi(\mathcal{H})$ satisfies $a_0 \neq 0$. We then replace $\mathcal{H}$ by $\phi(\mathcal{H})$ and construct $\mathcal{H}^*$ as before.
\end{proof}

We remark that removing \cref{item:inhom3} from \cref{cor:inhom} (the squarefreeness condition) recovers the class $\Sigma^{[3]}\Pi\Sigma\Pi^{[\delta]}$.

Previously, no polynomial-time deterministic PIT algorithm was known for the classes of polynomials described in \cref{thm:main-intro} and \cref{cor:inhom}, and no subexponential-time algorithm was known over fields of small positive characteristic.

\subsection{Proof Overview}

To explain our ideas, we begin with a solved case: consider a nonzero polynomial $F \in \Sigma^{[2]}\Pi\Sigma\Pi^{[\delta]}$, i.e., a depth-4 circuit with top fan-in two. In this case, we can write $F = F_1 + F_2 \neq 0$, where $F_1$ and $F_2$ are products of nonzero polynomials, each of degree at most $\delta$.

Suppose
\begin{equation}\label{eq:factor}
F_1 = g_1 \cdots g_r \quad \text{and} \quad F_2 = h_1 \cdots h_s
\end{equation}
are factorizations of $F_1$ and $F_2$ into irreducible polynomials, respectively. If $F_2 = c F_1$ for some $c \in \FF^\times$, then $F = (c+1)F_1$, which is again a product of degree-$(\leq \delta)$ nonzero polynomials. PIT for such polynomials is straightforward. So assume this is not the case. Then the factorizations in \eqref{eq:factor} do not ``match.'' That is, there does not exist a bijection $\sigma : [r] \to [s]$ such that $g_i$ and $h_{\sigma(i)}$ are scalar multiples of each other for all $i \in [r]$.

It is natural to choose an affine line $\ell$ such that restricting to $\ell$ reduces the ambient dimension while preserving the non-matching structure of the factorizations. In other words, we want $F_1|_\ell$ and $F_2|_\ell$ to still have non-matching factorizations. This ensures that $(F_1 + F_2)|_\ell \neq 0$, reducing the problem to PIT for univariate polynomials, which is easy.

To ensure the factorizations remain non-matching after restriction, it suffices to guarantee that any two coprime polynomials $g, h \in \{g_1, \dots, g_r, h_1, \dots, h_s\}$ remain coprime after restricting to $\ell$, i.e., they do not share a common root on the line. Geometrically, this amounts to ensuring that $\ell$ avoids the codimension-two variety $V(g, h)$. In~\cite{Guo24}, it is shown how to construct a polynomial-sized set of lines such that most lines avoid all such varieties. This effectively (re)solves the problem.

We now move to the class $C_{n,d,3,\delta,\FF}$. Let $F$ be a nonzero polynomial in this class. Then we may write
\[
F = F_0 + F_1 + F_2,
\]
where $F_0, F_1, F_2$ are homogeneous of the same degree and are products of nonzero homogeneous polynomials, each of degree at most $\delta$. Furthermore, one of the summands, say $F_0$, is squarefree.

In the study of Boolean circuits, one common technique is applying a restriction (i.e. partial assignment) to simplify the circuit. An analogous idea works here: we restrict $F$ to the zero locus of an irreducible factor $\theta$ of $F_0$ to eliminate $F_0$. Algebraically, this corresponds to working in the quotient ring $\FF[X_1, \dots, X_n]/(\theta)$, where each $F_i$ is replaced by $\overline{F}_i := F_i \bmod \theta$. Since $\theta$ divides $F_0$, we have $\overline{F}_0 = 0$. Thus, we have effectively reduced to the $k = 2$ case, though now over the quotient ring.

Why can't we directly reuse the $k = 2$ argument? The issue is that $\FF[X_1, \dots, X_n]/(\theta)$ is not, in general, a unique factorization domain (UFD), so the factorizations of $\overline{F}_1$ and $\overline{F}_2$ are not well-defined.

But is unique factorization truly necessary? We argue that it is not: even in the absence of unique factorization, working in rings with the weaker property of \emph{normality} still enables us to obtain meaningful results.

\paragraph{Normality.}  Let $A$ be an integral domain, whose field of fractions is denoted by $\mathrm{Frac}(A)$. We say $A$ is \emph{integrally closed} if for any monic polynomial $P(X)\in A[X]$, all roots of $P(X)$ in $\mathrm{Frac}(A)$ are in $A$.
An irreducible affine variety is said to be \emph{normal} if its coordinate ring is integrally closed.

It is not easy to give a purely geometric definition of normality. However, its usefulness lies in the fact that, if $V$ is a normal variety with coordinate ring $A$, then for each codimension-one irreducible subvariety  $\mathcal{Z}\subseteq V$, there is a well-behaved ``order function'' $\ord_\mathcal{Z}: \mathrm{Frac}(A)\to \mathbb{Z}\cup\{\infty\}$ indicating the order of zeros or poles of every $g\in \mathrm{Frac}(A)$ along $\mathcal{Z}$. 
For example, for $A=\FF[X,Y]$ and $g=(X+Y)^2/X^3\in\mathrm{Frac}(A)$, we have $\ord_{V(X+Y)}(g)=2$ and $\ord_{V(X)}(g)=-3$. 

If $A$ is normal, then for $g\in A$, one can define a ``generalized factorization'' of $g$, where the ``irreducible factors'' are not polynomials, but codimension-one irreducible subvarieties $\mathcal{Z}$ of $V$, each with multiplicity $\ord_{\mathcal{Z}}(g)$. (This is called the Weil divisor associated with $g$, written additively as $\mathrm{div}(g):=\sum_{\mathcal{Z}} \ord_\mathcal{Z}(g) \cdot \mathcal{Z}$.)

With this generalized notion of factorization, one can carry out an argument analogous to (and in fact generalizing) the one for the $k = 2$ case, assuming the variety defined by the factor $\theta$ of $F_0$ is normal.

Thus, normality may be viewed as a useful weakening of unique factorization. In general, however, the variety in question may even fail to be normal. To address this, we apply a standard technique from algebraic geometry known as \emph{normalization}.

\paragraph{Normalization.}
Conceptually, the normalization of a variety $V$ produces a normal variety $\widetilde{V}$ that best approximates $V$ among all normal varieties. Algebraically, if $V$ is affine, normalization corresponds to taking the integral closure $\widetilde{\FF[V]}$ of the coordinate ring $\FF[V]$ in its field of fractions. Our key idea is to work within the ``nicer'' ring $\widetilde{\FF[V]}$ in place of $\FF[V]$.

The strategy of enlarging a ring to recover a weak form of unique factorization has deep roots in number theory. It began with Kummer’s introduction of \emph{ideal numbers} to address the failure of unique factorization in cyclotomic rings, and was later formalized by Dedekind through the theory of \emph{ideals}, recovering unique factorization at the level of ideal decomposition in number rings. In modern terms, this philosophy is embodied in the process of normalization—passing to the integral closure of a ring in its field of fractions---which yields an integrally closed ring that better approximates a unique factorization domain.

As a concrete example, consider the plane curve $C$ defined by $Y^2 - X^2(X+1) = 0$ (see \cref{fig:normalization}). The rational function $Z = Y/X$ satisfies the monic polynomial $Z^2 - (X+1)$ on $C$, but is not a regular function in $\FF[C]$, meaning that it is not well-defined on the curve. Geometrically, this reflects the fact that $C$ has two branches at the point $(0,0)$, where the limits of $Y/X$ approach $1$ and $-1$, respectively. Introducing $Z$ as a new coordinate function separates these branches. One can show that $X$, $Y$, and $Z$ generate an integrally closed ring in $\mathrm{Frac}(\FF[C])$, and that their defining relations are generated by $Y^2 - X^2(X+1)$, $XZ = Y$, and $Z^2 - (X+1)$. These equations define the normalization $\widetilde{C} \subseteq \mathbb{A}^3$.

\begin{figure}[htb]
\centering
\begin{tikzpicture}
\begin{axis}[ticks = none, axis lines=none, unit vector ratio=1 1 1, xmin=-1.2, xmax=1.5, ymin=-1.8, ymax=1.8, smooth, samples=200, xlabel=$X$,ylabel=$Y$, at={(0,0)}]
\draw[domain=-1.45:1.45, variable=\t, thick] plot ({\t*\t-1}, {(\t*\t-1)*\t});
\node[thick, fill, circle,inner sep=1.8pt, label=right:{$(0,0)$}] (p) at (0,0){};
\end{axis}
\begin{axis}[view={15}{10}, smooth,
                axis lines=center, 
                xlabel=$X$,ylabel=$Y$,zlabel=$Z$, y label style={anchor=south},
                xmin=-1.2, xmax=1.5, ymin=-1.8, ymax=1.8, zmin=-1.5, zmax=1.5,
                ticks=none,
                unit vector ratio=1 1 1, at={(140pt,0)}]
\addplot3[variable=t,domain=-1.45:1.45,thick,samples y=1] ({t*t-1},{(t*t-1)*t}, t);
\node[thick, fill, circle,inner sep=1.8pt, label={[yshift=-0.2cm]above left:{$(0,0,1)$}}] (p1) at (0,0,1){};
\node[thick, fill, circle,inner sep=1.8pt, label=right:{$(0,0,-1)$}] (p2) at (0,0,-1){};
\end{axis}
\node at (2.2,1.6) {$C$};
\node at (6.3,1.6) {$\widetilde{C}$};
\draw [thick, -{>[scale=1.0]}]  (5.0,2.85) -- node[above] {$\pi$} (3.5,2.85);
\end{tikzpicture}
\caption{Normalization $\widetilde{C}$ of the curve $C$ defined by $Y^2 - X^2(X+1) = 0$. The map $\pi : \widetilde{C} \to C$ sends $(x, y, z) \mapsto (x, y)$.}
\label{fig:normalization}
\end{figure}
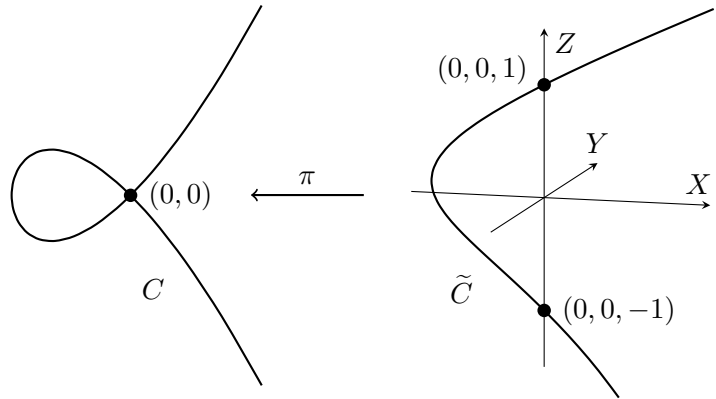

This example illustrates why normalization is useful for defining a generalized notion of factorization. On the non-normal curve $C$, the singular point $(0,0)$ behaves like two overlapping points, with functions like $Y/X$ exhibiting distinct limiting behavior along each branch. After normalization, these branches are separated into two points, $p_1 = (0,0,1)$ and $p_2 = (0,0,-1)$, each admitting a well-behaved order function $\ord_{p_i}(\cdot)$.

A knowledgeable reader may recognize the above example as the resolution of singularities of $C$. Indeed, the singular locus of a normal variety is known to have codimension at least two. For curves, which have dimension one, normalization therefore coincides with resolution of singularities.

In higher dimensions, by contrast, resolution of singularities is highly complex in characteristic zero~\cite{Hir64,Hir64-2}, and remains an open problem in positive characteristic. Normalization, on the other hand, is significantly more tractable. A classical theorem of Emmy Noether shows that the integral closure of a coordinate ring $\FF[V]$ is finitely generated as a module over $\FF[V]$ (see, e.g.,~\cite{Eis13}).

Constructive normalization has also been extensively studied; see~\cite{Sto74,Sei70,Sei75,Tra84,deJ98}. In this paper, we only require normalization for curves, and we follow the framework of Trager~\cite{Tra84} to perform this task.

That said, normalization is generally regarded as a computationally expensive problem, often requiring tools such as \grobner bases. In the context of Geometric Complexity Theory, a key challenge is the non-normality of certain orbit closures related to the determinant and permanent, a fact established by Kumar~\cite{Kum13}. In that setting, the potential utility of normalization remains unclear due to its high complexity.

In our setting, however, this complexity is not a barrier. We apply a dimensionality reduction technique that restricts the input to a carefully chosen constant-dimensional subspace (specifically, a plane). Consequently, all relevant parameters become constant. As a result, even though normalization may be expensive in general, its cost is constant for us and poses no obstacle to obtaining deterministic polynomial-time PIT algorithms.

\paragraph{Dimensionality reduction.}
We reduce the dimension of the ambient space by restricting to a constant-dimensional subspace (specifically, a plane) while preserving all the structural properties required for our analysis. This naturally raises the question: How should one choose such a plane?

It is not hard to show that a randomly chosen plane will work with high probability. However, using randomness would contradict the goal of this work: derandomization. Instead, we restrict to a \emph{generic plane}. That is, we treat the parameters defining the plane as indeterminates $\vecY = (Y_1, \dots, Y_\ell)$, where $\ell = \Theta(n)$, and work over the function field $\FF(\vecY)$ in place of the original base field $\FF$. The analysis is carried out symbolically over this function field, and we eventually specialize $\vecY$ to a tuple $\veca \in \FF^\ell$.

This specialization step is constructive. We identify a small collection of polynomials \( Q_1, \dots, Q_m \) in the ring \( \FF[\vecY] \), each of \emph{bounded degree}, such that any assignment \( \vecY \gets \veca \) satisfying \( Q_i(\veca) \ne 0 \) for all \( i \) ensures that the chosen plane preserves the desired properties. Verifying such an assignment reduces to testing whether each \( Q_i \) is nonzero at \( \veca \), which is a bounded-degree PIT problem and can be solved deterministically in polynomial time~\cite{KS01}.

Such generic-to-specific arguments are common in algebraic complexity. Notably, they appear in Kaltofen's work on multivariate polynomial factorization~\cite{Kal95} (see also~\cite{KSS14}), and in Sylvester--Gallai-based approaches to PIT for $\Sigma^{[k]}\Pi\Sigma\Pi^{[\delta]}$ circuits, e.g.,~\cite{Gup14}. 
The general philosophy---prove that a generic geometric object satisfies a given property, and then deduce that the property holds for a dense open set of specific instances---is also a standard principle in algebraic geometry.

Carrying out this analysis introduces several technical challenges, which we address in this paper. First, the function field $\FF(\vecY)$ is not algebraically closed, so arguments must be written more carefully and sometimes more abstractly. Second, we need to compute \grobner bases over $\FF(\vecY)$. While degree bounds on the polynomials in a \grobner basis are well-known~\cite{MM82,Du90}, we additionally require bounds on the \emph{coefficient complexity}. Specifically, we need to bound the degrees of numerators and denominators of the rational functions in $\FF(\vecY)$ that appear as coefficients. Since such bounds are not readily available in the literature, we establish them ourselves. These bounds may be of independent interest for other problems concerning algebraic pseudorandomness.

Third, although we may assume that $\FF$ is a \emph{perfect field} by passing to its algebraic closure, the function field $\FF(\vecY)$ may still be non-perfect in characteristic $p > 0$. In particular, field extensions over $\FF(\vecY)$ can be inseparable, complicating tasks such as computing radicals or primary decompositions. To address this, we adjoin $p^e$-th roots of the variables $\vecY$ for sufficiently large $e$, thereby passing to the extended function field $\FF(\vecY^{1/p^e}) := \FF(Y_1^{1/p^e}, \dots, Y_\ell^{1/p^e})$, over which the relevant extensions become separable.

\paragraph{Organization of this paper.}
We begin with preliminaries in \cref{sec:prelim}. In \cref{sec:comp}, we develop computational tools, with a focus on \grobner bases. \cref{sec:normalization} is devoted to the normalization of curves. Finally, in \cref{sec:pit}, we bring all components together to prove our main theorem (\cref{thm:main-intro}).

\section{Preliminaries}\label{sec:prelim}

Denote by $[n]$ the set $\{1, 2, \dots, n\}$.
For a polynomial $f$ and a monomial $m$, let $\coeff_f(m)$ denote the coefficient of $m$ in $f$.
For polynomials $f$ and $g \neq 0$, we write $g \mid f$ to indicate that $g$ divides $f$.

It is well-known that designing deterministic black-box PIT algorithms is equivalent to constructing explicit hitting sets. We now define this notion formally.

\begin{definition}[Hitting set] A finite collection $\mathcal{H} \subseteq \FF^n$ of points is said to be a \emph{hitting set} for a polynomial $f \in \FF[X_1, \dots, X_n]$ if either $f = 0$ or $f(\veca) \neq 0$ for some $\veca \in \mathcal{H}$.
For a family $\mathcal{C} \subseteq \FF[X_1, \dots, X_n]$ of polynomials, we say $\mathcal{H} \subseteq \FF^n$ is a hitting set for $\mathcal{C}$ if it is a hitting set for every $f \in \mathcal{C}$.

Let $\varepsilon \in [0,1]$.
We say $\mathcal{H} \subseteq \FF^n$ is an \emph{$\varepsilon$-hitting set} for $f \in \FF[X_1, \dots, X_n]$ if either $f = 0$ or $\Pr_{\veca \in \mathcal{H}}[f(\veca) \neq 0] \geq 1 - \varepsilon$.
We say $\mathcal{H}$ is an $\varepsilon$-hitting set for $\mathcal{C}$ if it is an $\varepsilon$-hitting set for every $f \in \mathcal{C}$. \end{definition}

We focus on constructing explicit hitting sets, although the stronger notion of $\varepsilon$-hitting sets will also be used.
Given a hitting set $\mathcal{H}$, one can boost it to an $\varepsilon$-hitting set as follows: Interpolate a degree-$(|\mathcal{H}| - 1)$ curve $C$ that passes through all points in $\mathcal{H}$, and then choose sufficiently many points on $C$. It can be shown that the resulting set of points is an $\varepsilon$-hitting set.

\subsection{Commutative Algebra}

All rings are assumed to be commutative with unity. The algebraic closure of a field $\FF$ is denoted by $\overline{\FF}$.

A matrix over a ring $A$ is denoted by $A^{n\times m}$, or more generally by $A^{S\times T}$, where $S$ and $T$ are the index sets for rows and columns, respectively. We often use $A[i,j]$ to denote the $(i,j)$-th entry of $A$. 

The ideal of a ring $A$ generated by a set $S\subseteq A$ is denoted $\ideal{S}$. And the ideal of $A$ generated by $f_1,\dots,f_k\in A$ is denoted $\ideal{f_1,\dots,f_k}$. 

\paragraph{Prime, maximal, and radical ideals.}

An ideal $\pp$ of a ring $A$ is a \emph{prime ideal} if $ab\in \pp \implies a\in\pp$ or $b\in \pp$. Equivalently, $\pp$ is prime if $A/\pp$ is an integral domain.

An ideal $\pp$ of $A$ is a \emph{maximal ideal} if it is maximal among all ideals properly contained in $A$.
Equivalently, $\pp$ is maximal if $A/\pp$ is a field. All maximal ideals are prime.

For an ideal $\pp$ of $A$, the \emph{radical} of $A$ is the ideal $\sqrt{\pp}:=\{a\in A: a^k\in \pp \text{ for some } k\geq 1\}$.
An ideal $\pp$ is \emph{radical} if $\pp=\sqrt{\pp}$. Prime ideals and maximal ideals are both radical.

\paragraph{Absolute irreducibility.} A polynomial $f\in\FF[\vecX]$ is \emph{absolutely irreducible} over $\FF$ if it is irreducible over the algebraic closure $\overline{\FF}$ of $\FF$. This is equivalent to $f$ being irreducible over every algebraic extension of $\FF$. By convention, the zero polynomial is not considered irreducible.

\paragraph{Separability.}

A polynomial $f(X)\in\FF[X]$ is said to be \emph{separable} over $\FF$ if its roots are distinct in the algebraic closure of $\FF$. 
This is equivalent to the statement that $f(X)$ and its derivative $f'(X)$ are coprime.
An algebraic element over $\FF$ is said to be separable over $\FF$ if its minimal polynomial over $\FF$ is separable.
The set of separable elements in an algebraic extension $\KK/\FF$ form a subfield of $\KK$ containing $\FF$, called the \emph{separable closure} of $\FF$ in $\KK$. 
An algebraic extension $\KK/\FF$ is separable if the separable closure of $\FF$ in $\KK$ equals $\KK$. Otherwise, it is inseparable.

A field $\FF$ is \emph{perfect} if every irreducible polynomial over $\FF$ is separable. Examples of perfect fields include fields of characteristic zero, finite fields,  and algebraically closed fields. 

An irreducible polynomial can be inseparable over non-perfect fields, which must have positive characteristics. We will need the following lemma to address this issue.

\begin{lemma}\label{lem:pth-root}
Suppose $\FF$ is a field of characteristic $p>0$, and $\KK/\FF$ is a finite extension. Let $e\geq 0$ be the largest integer such that $p^{e}$ divides $[\KK:\FF]$.
For $i\geq 0$, let $\FF^{(i)}$ be the field $\{a^{1/p^i}: a\in\FF\}$.
Then for $e'\geq e$, every $b\in\KK$ is separable over $\FF^{(e')}$.
\end{lemma}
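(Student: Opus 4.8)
The plan is to treat each $b \in \KK$ separately and reduce to a statement about its minimal polynomial. Fix an algebraic closure $\overline{\FF} \supseteq \KK$, so that all the fields $\FF^{(i)} = \{a^{1/p^i} : a \in \FF\}$ also sit inside $\overline{\FF}$ (in characteristic $p$, $a^{1/p^i}$ is the \emph{unique} $p^i$-th root of $a$). Let $m(X) \in \FF[X]$ be the minimal polynomial of $b$ over $\FF$. I would start from the standard structural fact that $m(X) = g(X^{p^t})$ for a unique maximal $t \geq 0$ and some $g \in \FF[X]$; this $g$ is irreducible (a nontrivial factorization of $g$ would yield one of $m$) and satisfies $g'(X) \neq 0$ (by maximality of $t$), so $\gcd(g, g') = 1$, and hence $g$ has $\deg g$ distinct roots in $\overline{\FF}$, while $\deg m = p^{t} \deg g$. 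The first, easy observation is that $t \leq e$: by the tower law for $\FF \subseteq \FF(b) \subseteq \KK$, the degree $\deg m = [\FF(b):\FF]$ divides $[\KK:\FF]$, so $p^{t} \mid [\KK:\FF]$, and the maximality of $e$ forces $t \leq e \leq e'$.

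The heart of the argument is to produce a polynomial over $\FF^{(e')}$ with distinct roots that has $b$ as a zero. Write $g(X) = \sum_i a_i X^i$ with $a_i \in \FF$ and set $\tilde{g}(X) := \sum_i a_i^{1/p^{t}} X^i$. Since $t \leq e'$, each $a_i^{1/p^{t}}$ lies in $\FF^{(t)} \subseteq \FF^{(e')}$, so $\tilde g \in \FF^{(e')}[X]$. Because $x \mapsto x^{p^{t}}$ is additive in characteristic $p$, one has $\tilde g(X)^{p^{t}} = \sum_i a_i X^{i p^{t}} = g(X^{p^{t}}) = m(X)$; therefore $\tilde g(b)^{p^{t}} = m(b) = 0$, and as $\overline{\FF}$ is reduced, $\tilde g(b) = 0$. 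Thus the minimal polynomial $\mu$ of $b$ over $\FF^{(e')}$ divides $\tilde g$. Finally, $\tilde g$ has distinct roots in $\overline{\FF}$: the map $c \mapsto c^{1/p^{t}}$ is the restriction to $\FF$ of the automorphism $\phi^{-t}$ of $\overline{\FF}$ (with $\phi$ the Frobenius, an automorphism since $\overline{\FF}$ is perfect), so it induces a ring automorphism of $\overline{\FF}[X]$ fixing $X$ and carrying $g$ to $\tilde g$, and such an automorphism preserves the property of having distinct roots (equivalently, it transports $\gcd(g, g') = 1$ to $\gcd(\tilde g, \tilde g') = 1$, since $\phi^{-t}$ fixes the prime field and hence commutes with formal differentiation). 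A divisor of a polynomial with distinct roots again has distinct roots, so $\mu$ has distinct roots; being irreducible, $\mu$ is separable. Hence $b$ is separable over $\FF^{(e')}$, which is what we wanted for every $b \in \KK$.

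I do not anticipate a real obstacle here: once the decomposition $m(X) = g(X^{p^{t}})$ is available the rest is essentially bookkeeping, and the hypothesis $e' \geq e$ enters in exactly one spot, namely to guarantee that $\tilde g$ has its coefficients in $\FF^{(e')}$. The only places that deserve a moment's care are the inequality $t \leq e$ (tower law plus maximality of $e$) and the separability of $\tilde g$, for which I would either invoke the Frobenius automorphism as above or, equivalently, note directly that the roots of $\tilde g$ in $\overline{\FF}$ are exactly the $p^{t}$-th roots of the distinct roots of $g$ and hence are themselves distinct. It is perhaps worth remarking that when $t = 0$ the statement collapses to the familiar fact that an element separable over $\FF$ stays separable over every extension of $\FF$.
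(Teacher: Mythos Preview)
Your proof is correct and follows essentially the same approach as the paper: both write the minimal polynomial $m$ of $b$ over $\FF$ as $\tilde g^{p^t}$ for a polynomial $\tilde g \in \FF^{(t)}[X]$ with $\tilde g' \neq 0$, then use $t \leq e \leq e'$ to place $\tilde g$ over $\FF^{(e')}$. The only cosmetic difference is that the paper additionally checks $\tilde g$ is irreducible over $\FF^{(e_0)}$ (so that $\tilde g$ \emph{is} the minimal polynomial there), whereas you instead observe that the minimal polynomial over $\FF^{(e')}$ divides $\tilde g$ and inherits separability---both conclusions are equivalent, and your justification of the bound $t \leq e$ via the tower law is more explicit than the paper's.
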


\begin{proof}
Let \( b \in \KK \), and let \( f(X) \) be its minimal polynomial over \( \FF \). Then there exists \( e_0 \leq e' \) such that over \( \FF^{(e_0)} \), we may write
\[
f(X) = g(X)^{p^{e_0}},
\]
such that \( g \in \FF^{(e_0)}[X] \) has a monomial whose degree is not divisible by \( p \). In particular, the derivative \( g' \) of \( g \) is nonzero. It suffices to show that \( b \) is separable over \( \FF^{(e_0)} \). Since \( b \) is a root of \( g \), and \( g' \neq 0 \), it remains only to show that \( g \) is irreducible over \( \FF^{(e_0)} \).

Assume, for contradiction, that \( g(X) = u(X)v(X) \) for some nonconstant \( u(X), v(X) \in \FF^{(e_0)}[X] \). Then \( f \) factors over \( \FF \) as
$f(X) = U(X) V(X)$ where $U(X)=u(X)^{p^{e_0}}$ and $V(X)=v(X)^{p^{e_0}}$,
contradicting the irreducibility of \( f \) over \( \FF \).
\end{proof}

\paragraph{Localization.} Localization is a construction in algebra that allows us to formally invert a chosen set of elements in a ring, effectively turning them into units. For example, $\QQ$ is a localization of $\ZZ$, where every nonzero integer has been made invertible. We now give the formal definition.

\begin{definition}[Localization]\label{def:localization}
Let $M$ be a module over a ring $A$. Let $S$ be a \emph{multiplicative closed subset} of $A$, i.e, it holds that $1\in S$ and $ab\in S$ for $a,b\in S$.
Define $S^{-1} M$ to be the set of representations of pairs $(a,s)\in M\times S$ subject to the equivalence relation
\[
(a,s)\equiv (b,t)\quad \Longleftrightarrow \quad (at-bs)u=0 \text{ for some } u\in S.
\]
Write $a/s$ or $\frac{a}{s}$ for $(a,s)$. Call $S^{-1}M$ the \emph{localization} of $M$ with respect to $S$.

$S^{-1} A$ is a ring equipped with addition $(a/s)+(b/t)=(at+bs)/(st)$ and multiplication $(a/s)\cdot (b/t)=(ab)/(st)$.
And $S^{-1} M$ is an $S^{-1} A$-module equipped with addition $(a/s)+(b/t)=(at+bs)/(st)$ and scalar multiplication $(a/s)\cdot (b/t)=(ab)/(st)$. 
\end{definition}

Intuitively, $S^{-1}A$ is the ring obtained from $A$ by making the elements in $S$ invertible.
Note that if $0\not\in S$ and $A$ is an integral domain, the condition that $(at-bs)u=0 \text{ for some } u\in S$ in \cref{def:localization} is equivalent to $at-bs=0$ since $A$ does not have a nonzero zero-divisor.

\begin{definition}\label{def:localizaiton-examples}
When $S=\{1,f,f^2,\dots\}$ for some $f\in A$, denote $S^{-1}M$ by $M_f$. When $S=A\setminus\pp$ for some prime ideal $\pp$ of $A$, denote $S^{-1}M$ by $M_{\pp}$. When $A$ is an integral domain and $\pp$ is the zero ideal of $A$, denote $A_{\pp}$ by $\fr{A}$, called the \emph{field of fractions} of $A$.
\end{definition}

We also have the following fact. See, e.g., \cite{Art18} for a proof.
\begin{fact}\label{fact:inverse}
Let $A$ be a ring and $f\in A$. Then $A_f$ is isomorphic to $A[X]/\ideal{X f-1}$ via the map that sends $a/f^i$ to $aX^i+\ideal{Xf-1}$ for $a\in A$ and $i\in \NN$, with the inverse map sending $X+\ideal{Xf-1}$ to $1/f$.
\end{fact}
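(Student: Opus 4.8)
The plan is to write down both maps explicitly and check that they are mutually inverse ring homomorphisms; essentially everything is formal, and the only point requiring care is that $A$ need not be an integral domain.

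First I would build the map from right to left. By the universal property of the polynomial ring there is a unique ring homomorphism $\phi\colon A[X]\to A_f$ restricting on $A$ to the canonical map $a\mapsto a/1$ and sending $X\mapsto 1/f$. Then $\phi(Xf-1)=(1/f)(f/1)-1/1=0$, so $\ideal{Xf-1}\subseteq\ker\phi$ and $\phi$ factors through a ring homomorphism $\bar\phi\colon A[X]/\ideal{Xf-1}\to A_f$ with $\bar\phi\bigl(aX^i+\ideal{Xf-1}\bigr)=a/f^i$ for all $a\in A$, $i\in\NN$. Surjectivity of $\bar\phi$ is immediate since every element of $A_f$ is of the form $a/f^i$.

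The workhorse is the congruence $(fX)^m\equiv 1\pmod{Xf-1}$ for every $m\in\NN$ (since $Xf-1$ divides $(Xf)^m-1$); for $P=\sum_{i=0}^n a_iX^i\in A[X]$ this gives $f^nP\equiv\sum_{i=0}^n a_if^{n-i}=:c_P\in A\pmod{Xf-1}$, reducing questions about the quotient to questions about $A$. For injectivity of $\bar\phi$, suppose $\bar\phi(P+\ideal{Xf-1})=0$. Applying $\bar\phi$ to $f^nP\equiv c_P$ yields $c_P/1=(f^n/1)\cdot\bar\phi(P+\ideal{Xf-1})=0$ in $A_f$, so by the definition of localization $f^kc_P=0$ in $A$ for some $k\in\NN$. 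Hence $f^{n+k}P\equiv f^kc_P=0\pmod{Xf-1}$, and multiplying by $(fX)^{n+k}\equiv 1$ gives $P\equiv 0\pmod{Xf-1}$, proving injectivity. Finally I would identify the inverse with the stated map $\psi\colon A_f\to A[X]/\ideal{Xf-1}$, $a/f^i\mapsto aX^i+\ideal{Xf-1}$: since $\bar\phi$ is a bijective ring homomorphism, its set-theoretic inverse is automatically a ring homomorphism, and it sends $a/1\mapsto a+\ideal{Xf-1}$ and $1/f\mapsto X+\ideal{Xf-1}$, hence agrees with $\psi$ on every $a/f^i$. (Alternatively one checks $\psi$ directly: $a/f^i=b/f^j$ means $(af^j-bf^i)f^k=0$ for some $k$, and multiplying $aX^i$ and $bX^j$ by suitable powers of $fX\equiv 1$ reduces $aX^i\equiv bX^j\pmod{Xf-1}$ to the identity $af^{j+k}=bf^{i+k}$ in $A$.)

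I expect no genuine obstacle here. The single subtle step is the non-domain case, where "$g/s=0$ in $A_f$" means $f^kg=0$ for some $k$ rather than $g=0$; the identity $(fX)^m\equiv 1\pmod{Xf-1}$ is exactly what absorbs these stray powers of $f$, and it is used in the same way both for the injectivity of $\bar\phi$ and for the well-definedness of $\psi$. As an aside, the whole statement also follows in one line from the universal property of localization of rings—$A[X]/\ideal{Xf-1}$ is visibly the universal $A$-algebra in which the image of $f$ becomes invertible—but since that property is not developed in the excerpt I would present the concrete argument above.
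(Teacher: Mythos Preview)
Your proof is correct. The paper does not actually supply its own proof of this fact; it simply states it and refers the reader to a textbook (Artin), so there is no in-paper argument to compare against.
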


\paragraph{Local rings.} A ring $A$ is \emph{local} if it has a unique maximal ideal $\mm$. For a ring $A$ and a prime ideal $\pp$ of $A$, the ring $A_\pp$ (see \cref{def:localizaiton-examples}) is a local ring with the unique maximal ideal $\pp_\pp$.

The following statement is well-known.

\begin{lemma}[{\cite[Theorem~4.7]{Mat89}}]\label{lem:intersection-of-localization}
Let $A$ be an integral domain. Then
\[
A=\bigcap_{\textup{prime ideal }\pp\subseteq A} A_\pp=\bigcap_{\textup{maximal ideal }\mm\subseteq A} A_\mm,
\]
where the intersections are taken within $\fr{A}$.
\end{lemma}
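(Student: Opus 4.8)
The plan is to prove the two equalities
\[
A \;=\; \bigcap_{\mathfrak{p}} A_{\mathfrak{p}} \;=\; \bigcap_{\mathfrak{m}} A_{\mathfrak{m}},
\]
all intersections taken inside $K := \mathrm{Frac}(A)$, where the first runs over all prime ideals and the second over all maximal ideals. Since every maximal ideal is prime, we have the chain of inclusions $A \subseteq \bigcap_{\mathfrak{p}} A_{\mathfrak{p}} \subseteq \bigcap_{\mathfrak{m}} A_{\mathfrak{m}}$: the first because the canonical map $A \to A_{\mathfrak{p}}$ is injective for an integral domain (so $A$ sits inside each $A_{\mathfrak{p}}$ as a subring of $K$), and the second because the family of maximal ideals is a subfamily of the family of primes, so we are intersecting fewer sets. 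Hence it suffices to prove the single reverse inclusion $\bigcap_{\mathfrak{m}} A_{\mathfrak{m}} \subseteq A$, which then forces all three to be equal.

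For that reverse inclusion I would use the standard \emph{ideal of denominators} (conductor) argument. Take $x \in K$ lying in $A_{\mathfrak{m}}$ for every maximal ideal $\mathfrak{m}$, and consider the set $I := \{\, a \in A : ax \in A \,\}$. One checks immediately that $I$ is an ideal of $A$ (it is closed under addition and under multiplication by elements of $A$, using that $K$ is a ring containing $A$). The claim is that $I = A$, which is equivalent to $1 \in I$, i.e. $x \in A$. Suppose instead that $I$ is a proper ideal; then $I$ is contained in some maximal ideal $\mathfrak{m}_0$. By hypothesis $x \in A_{\mathfrak{m}_0}$, so we may write $x = b/s$ with $b \in A$ and $s \in A \setminus \mathfrak{m}_0$. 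Then $sx = b \in A$, so $s \in I \subseteq \mathfrak{m}_0$, contradicting the choice $s \notin \mathfrak{m}_0$. Therefore $I = A$ and $x \in A$, completing the proof.

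The only mild subtlety — and the step I'd be most careful about — is the very first reduction: making sure the localization maps are genuinely injective so that all the rings in sight can be regarded as subrings of the one field $K$, and the intersections literally make sense. For an integral domain $A$ and a multiplicative set $S$ with $0 \notin S$, injectivity of $A \to S^{-1}A$ is exactly the remark made after Definition 2.8 in the excerpt (the equivalence relation collapses to genuine equality since $A$ has no nonzero zero-divisors), and each $S^{-1}A$ embeds in $S^{-1}A$ localized at the zero ideal, which is $K$ itself; so every $A_{\mathfrak{p}}$ and $A_{\mathfrak{m}}$ is canonically a subring of $K$ and the intersections are well-defined. Once that bookkeeping is in place, the conductor argument above is short and purely formal, and no deeper input (Noetherian hypotheses, finite generation, etc.) is needed.
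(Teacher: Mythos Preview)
Your proof is correct and is essentially the standard conductor (ideal-of-denominators) argument; this is also the proof given in the cited reference \cite[Theorem~4.7]{Mat89}, and the paper itself simply quotes the result without reproducing a proof.
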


\paragraph{Krull dimension.} The \emph{(Krull) dimension} of a ring $A$, denoted by $\dim A$, is the length $\ell$ of the longest chain of prime ideals
$
\pp_0\subseteq \pp_1\subseteq\cdots\subseteq\pp_\ell
$ of $A$. 

If $A$ is a finitely generated integral domain over a field $\KK$, then $\dim A$ equals the transcendence degree of $\fr{A}$ over $\KK$.
Under the same assumption, $\dim A=\dim A_\mm$ holds for every maximal ideal $\mm$ of $A$ \cite[Corollary~11.27]{AM94}.

We say an ideal $I$ of a ring $A$ has dimension $k$ if $\dim(A/I)=k$.

\paragraph{Integrality and finiteness.}
Let $A\subseteq B$ be commutative rings. We say $b\in B$ is \emph{integral} over $A$ if there exists a monic polynomial $f(X)\in A[X]$ such that $f(b)=0$.
We say $B$ is integral over $A$ if every $b\in B$ is integral over $A$.
The set of elements in $B$ integral over $A$ is called the \emph{integral closure} of $A$ in $B$, and is a ring \cite{AM94}.
The integral closure of an integral domain $A$ in its field of fractions $\fr{A}$ is simply called the integral closure of $A$, and denoted $\widetilde{A}$.

\begin{lemma}[{\cite[Corollary~5.4]{AM94}}]\label{lem:integral-transitivity}
Suppose $C$ is integral over $B$, and $B$ is integral over $A$. Then $C$ is integral over $A$.
\end{lemma}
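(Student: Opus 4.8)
The plan is to prove transitivity of integrality through the standard module-finiteness characterization, which encapsulates the Cayley--Hamilton (determinant) trick. First I would establish, for any ring extension $A\subseteq R$ and element $r\in R$, the equivalence: $r$ is integral over $A$ if and only if $r$ lies in some subring $R'$ with $A\subseteq R'\subseteq R$ that is finitely generated as an $A$-module. The forward direction is immediate: a monic relation $r^n+a_{n-1}r^{n-1}+\cdots+a_0=0$ with $a_i\in A$ shows $A[r]$ is generated as an $A$-module by $1,r,\dots,r^{n-1}$. The reverse direction is the heart of the matter: if $R'$ is generated as an $A$-module by $m_1,\dots,m_k$, write $r\,m_i=\sum_j a_{ij}m_j$ with $(a_{ij})\in A^{k\times k}$; then the matrix $rI-(a_{ij})$ annihilates the column $(m_1,\dots,m_k)^{\mathsf T}$, so multiplying on the left by its adjugate shows $\det(rI-(a_{ij}))$ kills every $m_i$, hence kills $1\in R'$, hence is zero. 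Expanding this determinant gives a monic polynomial relation for $r$ over $A$.

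Next I would record the transitivity of module-finiteness: if $B$ is finitely generated as an $A$-module and $C$ is finitely generated as a $B$-module, then $C$ is finitely generated as an $A$-module (pairwise products of the two generating sets work). As a corollary, for $b_1,\dots,b_t$ each integral over $A$, the subring $A[b_1,\dots,b_t]$ is a finitely generated $A$-module: induct on $t$, using the equivalence above to see that at each stage adjoining $b_{j}$ (which is integral over the previous subring) keeps the module finite.

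For the lemma itself: fix $c\in C$. Since $C$ is integral over $B$, there is a monic relation $c^n+b_{n-1}c^{n-1}+\cdots+b_0=0$ with $b_0,\dots,b_{n-1}\in B$. Set $B':=A[b_0,\dots,b_{n-1}]$. Each $b_i$ is integral over $A$ because $B$ is, so $B'$ is a finitely generated $A$-module by the corollary above. The same monic relation exhibits $c$ as integral over $B'$, so $B'[c]$ is a finitely generated $B'$-module, and therefore a finitely generated $A$-module by transitivity of module-finiteness. Since $B'[c]$ is a subring of $C$ containing both $A$ and $c$, the equivalence from the first step yields that $c$ is integral over $A$. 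As $c\in C$ was arbitrary, $C$ is integral over $A$.

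The main obstacle is the reverse direction of the integrality equivalence, i.e.\ the determinant trick: it requires the adjugate identity over an arbitrary commutative ring together with the (easy but indispensable) observation that the module $R'$ is faithful over $A[r]$ since it contains $1$, which is exactly what upgrades ``$\det$ annihilates the generators'' to ``$\det$ is zero''. Everything else is routine bookkeeping with finitely generated modules.
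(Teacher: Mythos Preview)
Your proposal is correct and follows the standard argument (module-finiteness characterization plus the Cayley--Hamilton determinant trick), which is precisely the proof given in the cited reference \cite[Propositions~5.1--5.2 and Corollary~5.4]{AM94}. The paper itself does not supply an independent proof of this lemma but simply cites Atiyah--Macdonald, so there is nothing further to compare.
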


\begin{corollary}\label{cor:integral-subring}
Suppose $A\subseteq B\subseteq C$ are commutative rings and $B$ is integral over $A$. Then $c\in C$ is integral over $A$ if and only if it is integral over $B$. In other words, the integral closure of $A$ in $C$ equals the integral closure of $B$ in $C$.
\end{corollary}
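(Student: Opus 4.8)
The plan is to derive both implications from the transitivity of integrality (\cref{lem:integral-transitivity}), together with the fact recalled in the preliminaries that the elements of a ring integral over a fixed subring form a ring containing that subring.

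First I would dispatch the easy direction. If $c\in C$ is integral over $A$, choose a monic polynomial $f(X)\in A[X]$ with $f(c)=0$; since $A\subseteq B$, we have $f(X)\in B[X]$, so $c$ is a fortiori integral over $B$.

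For the converse, suppose $c$ is integral over $B$, and consider the subring $B[c]\subseteq C$. Because $c$ is integral over $B$ and the set of elements of $C$ integral over $B$ is a ring containing $B$, every element of $B[c]$ is integral over $B$; that is, $B[c]$ is integral over $B$. Since $B$ is integral over $A$ by hypothesis, \cref{lem:integral-transitivity}, applied to the chain $A\subseteq B\subseteq B[c]$, shows that $B[c]$ is integral over $A$. In particular $c\in B[c]$ is integral over $A$. Combining the two directions, $c\in C$ is integral over $A$ if and only if $c$ is integral over $B$. Rephrasing this: the set of elements of $C$ integral over $A$—i.e.\ the integral closure of $A$ in $C$—coincides with the set of elements of $C$ integral over $B$, i.e.\ the integral closure of $B$ in $C$.

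I do not expect a genuine obstacle here; the statement is a direct corollary. The only point requiring a little care is that \cref{lem:integral-transitivity} is phrased for one ring being integral over another rather than for a single element, so the argument must be routed through the intermediate ring $B[c]$; and to know that $B[c]$ (not merely the generator $c$) is integral over $B$, one invokes the ``integral closure is a ring'' fact rather than just the definition of an integral element.
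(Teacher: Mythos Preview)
Your proof is correct and follows essentially the same approach as the paper: both directions are handled identically, and the converse is obtained by applying the transitivity lemma (\cref{lem:integral-transitivity}) to an intermediate ring integral over $B$. The only cosmetic difference is that the paper uses the full integral closure $C^*$ of $B$ in $C$ as that intermediate ring, whereas you use the smaller ring $B[c]$; both choices work for the same reason.
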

\begin{proof}
Suppose $c$ is integral over $A$. Then as $A\subseteq B$, by definition, $c$ is also integral over $B$. 

To see the converse, let $C^*$ be the integral closure of $B$ in $C$. Suppose $c$ is integral over $B$, i.e., $c\in C^*$. As $C^*$ is integral over $B$ and $B$ is integral over $A$, by \cref{lem:integral-transitivity}, $C^*$ is integral over $A$. So $c$ is integral over $A$. 
\end{proof}

A module or algebra $M$ over a ring $A$ is said to be \emph{finite} over $A$ if $M$ is a finitely generated $A$-module. Finiteness is closely related to integrality, as indicated by the following lemma.

\begin{lemma}[{\cite[Proposition~5.1 and Corollary~5.2]{AM94}}]\label{lem:finite-vs-integral}
A finitely generated algebra over a ring
$A$ is a finite module over $A$ iff it is integral over $A$.
\end{lemma}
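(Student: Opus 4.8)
The plan is to prove the two implications of the lemma separately, both by elementary means; write $B$ for the algebra in question, viewed as a ring containing $A$ (the only case needed here, and the one for which the notion of integrality was defined).

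For the backward direction -- integral implies finite -- I would induct on the number of algebra generators. Write $B = A[b_1,\dots,b_n]$. If $n=0$ there is nothing to prove. If $n=1$, integrality of $b_1$ over $A$ supplies a monic relation $b_1^d + a_{d-1}b_1^{d-1}+\cdots+a_0 = 0$ with $a_i\in A$, and repeatedly using it to reduce higher powers of $b_1$ shows that $1,b_1,\dots,b_1^{d-1}$ span $B$ over $A$, so $B$ is finite over $A$. For the inductive step, set $A':=A[b_1,\dots,b_{n-1}]$, which is finite over $A$ by the induction hypothesis; since $b_n$ is integral over $A$ it is a fortiori integral over $A'$, so $B=A'[b_n]$ is finite over $A'$ by the $n=1$ case. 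One then invokes the tower principle for finitely generated modules: if $\{u_j\}$ generate $A'$ over $A$ and $\{v_k\}$ generate $B$ over $A'$, the products $\{u_j v_k\}$ generate $B$ over $A$, so $B$ is finite over $A$.

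For the forward direction -- finite implies integral -- I would use the determinant trick. Let $m_1,\dots,m_r$ generate $B$ as an $A$-module and fix $b\in B$. Writing $b m_i = \sum_j a_{ij} m_j$ with $a_{ij}\in A$ yields the matrix identity $(b\cdot I_r - M)\mathbf{m} = 0$, where $M=(a_{ij})\in A^{r\times r}$ and $\mathbf{m}=(m_1,\dots,m_r)^{\mathsf{T}}$. Multiplying on the left by the adjugate of $b\cdot I_r - M$ and using $\adj(N)\,N = \det(N) I_r$ gives $\det(b\cdot I_r - M)\, m_i = 0$ for every $i$; since the $m_i$ generate $B$ and in particular express $1\in B$, the scalar $\det(b\cdot I_r - M)$ annihilates $1$, hence is $0$. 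Expanding this determinant produces a monic degree-$r$ polynomial in $b$ with coefficients in $A$ that vanishes, so $b$ is integral over $A$.

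The routine parts are the power-reduction argument in the base case and the bookkeeping in the tower principle; the one genuinely clever ingredient is the determinant trick, whose only subtlety is the passage from ``$\det$ kills every generator $m_i$'' to ``$\det = 0$'', which uses that the $m_i$ generate all of $B$ (so $\det$ kills the unit). No step poses a real obstacle -- the statement is classical, being exactly \cite[Proposition~5.1 and Corollary~5.2]{AM94} -- and the argument could equivalently be organized around the characterization ``$b$ is integral over $A$ iff $b$ lies in a subring of $B$ that is module-finite over $A$,'' which isolates the determinant trick once and then makes both directions and the transitivity statement (\cref{lem:integral-transitivity}) immediate.
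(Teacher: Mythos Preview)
The paper does not supply its own proof of this lemma; it is stated with a citation to \cite[Proposition~5.1 and Corollary~5.2]{AM94} and used as a black box. Your argument is correct and is essentially the standard proof found in that reference: the determinant (Cayley--Hamilton) trick for finite $\Rightarrow$ integral, and induction on the number of algebra generators together with the tower principle for integral $\Rightarrow$ finite.
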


We also include the following lemma.
\begin{lemma}[{\cite[Proposition~5.13]{AM94}}]\label{lem:integral-local}
Let $A$ be an integral domain. The following are equivalent:
\begin{enumerate}[(1)]
\item $A$ is integrally closed.
\item $A_\pp$ is integrally closed for every prime ideal $\pp$ of $A$.
\item $A_\mm$ is integrally closed for every maximal ideal $\mm$ of $A$.
\end{enumerate}
\end{lemma}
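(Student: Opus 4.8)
The plan is to prove the cyclic chain of implications $(1)\Rightarrow(2)\Rightarrow(3)\Rightarrow(1)$, using \cref{lem:intersection-of-localization} as the key input for closing the loop.

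For $(1)\Rightarrow(2)$, I would fix a prime ideal $\pp$ and set $S=A\setminus\pp$, so that $A_\pp=S^{-1}A$ and, since $A$ is a domain with $0\notin S$, the field of fractions of $A_\pp$ is again $\fr{A}$. It then suffices to take an element $x\in\fr{A}$ satisfying a monic equation $x^n+b_1x^{n-1}+\cdots+b_n=0$ with all $b_i\in A_\pp$ and to show $x\in A_\pp$. Picking a common denominator $u\in S$ with $ub_i\in A$ for every $i$ and multiplying the equation through by $u^n$, a routine regrouping shows that $ux$ satisfies a monic equation over $A$ in which the coefficient of $(ux)^{n-i}$ is $u^i b_i\in A$. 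Since $A$ is integrally closed and $ux\in\fr{A}$, we conclude $ux\in A$, hence $x=(ux)/u\in A_\pp$; therefore $A_\pp$ is integrally closed.

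The implication $(2)\Rightarrow(3)$ is immediate, since every maximal ideal is prime. For $(3)\Rightarrow(1)$, I would take $x\in\fr{A}$ integral over $A$; for each maximal ideal $\mm$, the inclusion $A\subseteq A_\mm$ makes $x$ integral over $A_\mm$ as well, and $\fr{A_\mm}=\fr{A}$, so the hypothesis that $A_\mm$ is integrally closed forces $x\in A_\mm$. Hence $x\in\bigcap_{\mm}A_\mm$, and by \cref{lem:intersection-of-localization} this intersection equals $A$, so $x\in A$ and $A$ is integrally closed.

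The only step carrying any content is the denominator-clearing manipulation in $(1)\Rightarrow(2)$ — the standard fact that localization preserves integral closedness — while the remaining implications are purely formal once \cref{lem:intersection-of-localization} is in hand. I therefore expect no genuine obstacle; the proof is short and self-contained given the tools already developed in this section.
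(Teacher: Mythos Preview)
Your proof is correct. Note, however, that the paper does not give its own proof of this lemma: it is simply quoted as \cite[Proposition~5.13]{AM94} without argument, so there is no in-paper proof to compare against.

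For what it is worth, your route differs slightly from the one in Atiyah--Macdonald. There the three equivalences are deduced from two separate facts: that the integral closure $\widetilde{A}$ of $A$ satisfies $(\widetilde{A})_\pp=\widetilde{A_\pp}$ (localization commutes with integral closure), and that a module $M$ is zero iff $M_\mm=0$ for every maximal ideal $\mm$; applying the latter to $M=\widetilde{A}/A$ gives the result. Your argument is more hands-on: the denominator-clearing computation in $(1)\Rightarrow(2)$ is exactly the content of ``localization commutes with integral closure'' unpacked at the level of elements, and your use of \cref{lem:intersection-of-localization} for $(3)\Rightarrow(1)$ replaces the ``being zero is local'' principle. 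Both approaches are standard and equally short; yours has the small advantage of staying entirely within the tools already assembled in this preliminaries section.
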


\paragraph{Regular local rings.} Let $A$ be a local ring with the maximal ideal $\mm$. Use $\dim_k \mm/\mm^2$ to denote the dimension of $\mm/\mm^2$ as a vector space over $k = A/\mm$. We say a local ring $A$ is \emph{regular} if its Krull dimension $\dim A$ equals $\dim_k \mm/\mm^2$.

\paragraph{Discrete valuation rings.} A \emph{discrete valuation} on a field $K$ is a mapping $v: K^\times \to \ZZ\cup\{\infty\}$ satisfying the conditions:
\begin{enumerate}[(1)]
\item $v(xy) = v(x) + v(y)$,
\item $v(x + y) \geq \min (v(x), v(y))$, and
\item $v(x)=\infty \Longleftrightarrow x=0$.
\end{enumerate}
for all $x,y\in K$.
We say a discrete valuation $v$ is \emph{normalized} if $v(K^\times)=\ZZ$.

Given a discrete valuation $v: K^\times \to \ZZ\cup\{\infty\}$, the \emph{valuation ring} of $v$ is defined to be $\{x\in K: v(x)\geq 0\}$.
An integral domain $A$ is a \emph{discrete valuation ring} if there is a discrete valuation $v$ on $\fr{A}$ such that $A$ is the valuation ring of $v$. 

We include the following useful characterizations of discrete valuation rings.

\begin{lemma}[{\cite[Proposition~9.2]{AM94}}]\label{eq:equivalence}
Let $A$ be a Noetherian\footnote{A ring $A$ is Noetherian if every ideal of $A$ is finitely generated. All rings considered in this paper are Noetherian.} local domain of dimension one with the maximal ideal $\mm$.
Then the following are equivalent:
\begin{enumerate}[(1)]
\item $A$ is a discrete valuation ring; 
\item $A$ is integrally closed;
\item $A$ is a regular local ring;
\item $\mm$ is a principal ideal;
\item There exists $t \in A$ such that every nonzero ideal of $A$ is of the form $\ideal{t^k}$ for some $k \geq 0$.
\end{enumerate}
\end{lemma}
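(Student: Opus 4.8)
The plan is to prove the five conditions equivalent via the cycle $(1)\Rightarrow(2)\Rightarrow(4)\Rightarrow(5)\Rightarrow(1)$, supplemented by the equivalence $(3)\Leftrightarrow(4)$, which splices regularity into the loop. Throughout one uses that, since $A$ is a one-dimensional local \emph{domain}, the unique maximal chain of primes is $\ideal{0}\subsetneq\mm$; in particular $\mm\neq\ideal{0}$, and whenever $\mm=\ideal{t}$ the generator $t$ is nonzero and not a unit.

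The routine directions I would dispatch as follows. For $(1)\Rightarrow(2)$: valuation rings are integrally closed — if $x\in\fr{A}\setminus A$ were integral, with $x^m+a_{m-1}x^{m-1}+\dots+a_0=0$ and $a_i\in A$, then $v(x)<0$ forces $v(x^m)=m\,v(x)<v(a_ix^i)$ for every $i<m$, so the valuation of the left-hand side is the finite number $m\,v(x)$, contradicting that it equals $0$. For $(3)\Leftrightarrow(4)$: Nakayama's lemma shows $\mm$ is principal iff $\dim_k\mm/\mm^2\le 1$, while $\mm/\mm^2\neq 0$ (again Nakayama, as $\mm\neq\ideal{0}$), so both conditions amount to $\dim_k\mm/\mm^2=1=\dim A$. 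For $(4)\Rightarrow(5)$: writing $\mm=\ideal{t}$, the Krull intersection theorem gives $\bigcap_n\mm^n=\ideal{0}$, so each nonzero $x\in A$ lies in $\mm^k\setminus\mm^{k+1}$ for a unique $k$, whence $x=ut^k$ with $u$ a unit and $\ideal{x}=\ideal{t^k}$; a nonzero ideal $I$ is then $\ideal{t^{k_0}}$ for $k_0$ the least such exponent among elements of $I$. For $(5)\Rightarrow(1)$: the exponent $k$ with $\ideal{x}=\ideal{t^k}$ is well defined (here one uses that $A$ is a domain and $t\neq 0$), and setting $v(x):=k$ for nonzero $x\in A$ and extending by $v(x/y):=v(x)-v(y)$ on $\fr{A}^\times$ yields a discrete valuation whose valuation ring is exactly $A$.

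The crux is $(2)\Rightarrow(4)$: one must manufacture a principal generator of $\mm$ from integral closedness. Fix $0\neq a\in\mm$. The only prime of $A$ containing $\ideal{a}$ is $\mm$ (a prime strictly between $\ideal{0}$ and $\mm$ would produce a chain of length two), so $\sqrt{\ideal{a}}=\mm$, and by Noetherianity $\mm^n\subseteq\ideal{a}$ for some $n$; take $n$ minimal and pick $b\in\mm^{n-1}\setminus\ideal{a}$. Put $x:=a/b\in\fr{A}$, so that $x^{-1}=b/a\notin A$. Since $b\mm\subseteq\mm^n\subseteq\ideal{a}$, we have $x^{-1}\mm\subseteq A$, i.e.\ $x^{-1}\mm$ is an ideal of $A$. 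If $x^{-1}\mm\subseteq\mm$, then multiplication by $x^{-1}$ is an $A$-endomorphism of the finitely generated faithful $A$-module $\mm$, so the determinant (Cayley--Hamilton) trick makes $x^{-1}$ integral over $A$, contradicting $(2)$ because $x^{-1}\notin A$. Hence $x^{-1}\mm=A$, so $\mm=xA$ is principal.

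I expect $(2)\Rightarrow(4)$ to be the main obstacle, as it is the only step requiring genuine commutative-algebra input beyond Nakayama: the dimension-theoretic fact that $\sqrt{\ideal{a}}=\mm$ (hence $\mm^n\subseteq\ideal{a}$) for nonzero $a$, together with the Cayley--Hamilton characterization of integrality via a finitely generated module stable under multiplication. The rest is bookkeeping with valuations and principal ideals, and I would take care to check the small non-degeneracy points noted above ($\mm\neq\ideal{0}$, $t\neq 0$, uniqueness of the exponent $k$) so that the valuation built in $(5)\Rightarrow(1)$ is both well defined and non-trivial.
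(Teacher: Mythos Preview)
Your proof is correct and is essentially the standard argument found in the cited reference \cite[Proposition~9.2]{AM94}. Note that the paper itself does not supply a proof of this lemma; it merely quotes the result from Atiyah--Macdonald, so there is no in-paper proof to compare against, but your cycle $(1)\Rightarrow(2)\Rightarrow(4)\Rightarrow(5)\Rightarrow(1)$ with $(3)\Leftrightarrow(4)$ spliced in matches the textbook treatment exactly, including the key step $(2)\Rightarrow(4)$ via the Cayley--Hamilton/determinant trick.
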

We call such $t$ a \emph{uniformizer}. 
Given a uniformizer $t\in A$, every element $r \in A$ can be written as $r = t^k u$ where $u$ is invertible; the (normalized) valuation $v$ on $A$ is determined by $v(t^k u) = k$.

\paragraph{Valuation at a point.}
Let $A$ be a (Noetherian) integrally closed domain of dimension one, not necessarily local. Let $\mm$ be a maximal ideal of $A$. By \cref{lem:integral-local}, the localization $A_\mm$ is an integrally closed local domain of dimension one.
So by \cref{eq:equivalence}, it is also a regular local ring and a discrete valuation ring. We often denote the corresponding (normalized) discrete valuation by $\ord_\mm(\cdot)$.

From a geometric point of view, the maximal ideal $\mm$ corresponds to a point $p$, and $\ord_\mm(f)$ indicates the order of zero or pole of $f$ at $p$. If $\ord_\mm(f) \geq 0$, then $\ord_\mm(f)$ is the order of zero of $f$ at $p$; otherwise, $-\ord_\mm(f)$ is the order of the pole of $f$ at $p$.

\paragraph{Base change.}

Let $A$ be a ring. For an $A$-module $M$ and an $A$-algebra $B$, their tensor product $M \otimes_A B$ over $A$ is defined to
be the $A$-module generated by the set of elements 
$\{a\otimes b : a \in M, b \in B\}$ subject to the $A$-bilinear relations $a \otimes b + a' \otimes b = (a + a') \otimes b$, 
$a \otimes b + a \otimes b' = a \otimes (b + b')$, 
and $c(a \otimes b) = (ca) \otimes b = a \otimes (cb)$ 
for $a, a'\in M$, $b, b' \in B$,
and $c \in A$. The $A$-module $M \otimes_A B$ is also a $B$-module.  Furthermore, if $M$ is an $A$-algebra, then $M\otimes_A B$ is a $B$-algebra.

Intuitively, $M \otimes_A B$ is obtained from $M$ by changing the ring of scalars from $A$ to $B$. For example, if $M$ is an $A$-algebra $A[X_1,\dots,X_n]/\ideal{r_1,\dots,r_m}$ and $A\to B$ is a ring extension, then $M \otimes_A B$ is simply $B[X_1,\dots,X_n]/\ideal{r_1,\dots,r_m}$.

\paragraph{Noether normalization.} The Noether normalization lemma is an important lemma in commutative algebra. Roughly speaking, it states that a finitely generated algebra $R$ over a field is not too far from a polynomial ring, in the sense that $R$ is a finite module over a subring that is isomorphic to a polynomial ring. 

\begin{lemma}[Noether normalization lemma]\label{lem:noether}
Let $\KK/\FF$ be a field extension.
Let $I$ be an ideal of $A=\KK[X_1,\dots,X_n]$ such that the Krull dimension of $A/I$ is $k$. 
Then for almost all\footnote{``Almost all'' here means that there exists a nonzero polynomial $Q\in\overline{\FF}[X_{1,1},\dots,X_{k,n}]$ such that the claim holds for all $\vecc\in\FF^{kn}$ satisfying $Q(\vecc)\neq 0$.} $\vecc=(c_{1,1},\dots,c_{k,n})\in\FF^{kn}$, it holds that that:
\begin{enumerate}[(1)]
\item the elements $a_1,\dots,a_k\in A/I$ defined by $a_i=(\sum_{j=1}^n c_{i,j} X_j)+I$ are algebraically independent over $\KK$, and
\item  $A/I$ is finite over $\KK[a_1,\dots,a_k]$.
\end{enumerate}
In pariticular, such a vector $\vecc$ exists if $\FF$ is infinite.
\end{lemma}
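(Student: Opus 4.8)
The plan is to follow the classical ``generic linear change of coordinates'' argument, but bookkeeping the genericity condition carefully so that it is witnessed by the non-vanishing of a single polynomial $Q$ over $\overline{\FF}$. First I would reduce to the case where the $n$ generic linear forms $\ell_i = \sum_j c_{i,j} X_j$ for $i = 1,\dots,n$ (not just $i \le k$) are extended to a full generic linear change of coordinates; i.e. I treat all $n^2$ entries $c_{i,j}$ as the coordinates and let $Q_0$ be the determinant of the coefficient matrix, so that for $\vecc$ with $Q_0(\vecc) \neq 0$ the forms $\ell_1, \dots, \ell_n$ are a new coordinate system for $A = \KK[X_1,\dots,X_n]$. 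Under this substitution $A/I$ is unchanged as a $\KK$-algebra; it suffices to show that for generic $\vecc$, $A/I$ is finite over $\KK[\ell_1,\dots,\ell_k]$ and these $\ell_i$ are algebraically independent mod $I$.

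The key step is the finiteness. Here I would argue by downward induction on the number of variables, mimicking the standard proof: if $I \neq 0$, pick a nonzero $g \in I$ of total degree $D$; after a generic linear change of coordinates the degree-$D$ homogeneous part $g_D$ of $g$ has a nonzero coefficient on $\ell_n^D$ (this is exactly the condition that a certain nonzero polynomial in the $c_{i,j}$'s — namely $g_D$ evaluated at the column corresponding to $X_n$ — does not vanish), so up to a nonzero scalar $g$ is monic in $\ell_n$ over $\KK[\ell_1,\dots,\ell_{n-1}]$. Hence $A/I$ is finite over $\KK[\ell_1,\dots,\ell_{n-1}]/(I \cap \KK[\ell_1,\dots,\ell_{n-1}])$, and we recurse. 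Crucially, at each stage the genericity condition needed is the non-vanishing of an explicit nonzero polynomial in the $c_{i,j}$'s, and since there are finitely many stages their product, together with $Q_0$, gives a single nonzero $Q \in \overline{\FF}[X_{1,1},\dots,X_{n,n}]$ (restrict it to the variables $X_{i,j}$ with $i\le k$ after the final substitution — one must check it does not become identically zero, which holds because the construction only ever constrains, generically, the ``new'' top variables being eliminated). After eliminating down to $k$ variables we reach a subring $B = \KK[\ell_1,\dots,\ell_k]/(I \cap \KK[\ell_1,\dots,\ell_k])$ over which $A/I$ is finite; since finite extensions preserve Krull dimension, $\dim B = \dim A/I = k$. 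But $B$ is a quotient of the polynomial ring $\KK[\ell_1,\dots,\ell_k]$, which has dimension $k$, and a proper quotient of a polynomial ring over a field drops dimension, so $I \cap \KK[\ell_1,\dots,\ell_k] = 0$; thus $B = \KK[\ell_1,\dots,\ell_k]$, giving both the algebraic independence of $a_1,\dots,a_k$ and the desired finiteness.

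I would then handle the ``in particular'' clause: if $\FF$ is infinite, a nonzero polynomial $Q$ over $\overline{\FF}$ cannot vanish on all of $\FF^{n^2}$ (a nonzero polynomial over an infinite field, even viewed over an extension, has a non-root in $\FF^{n^2}$ by the Schwartz–Zippel / combinatorial nullstellensatz argument applied coordinate-wise over $\FF$), so a valid $\vecc \in \FF^{n^2}$ — and hence the required $\vecc \in \FF^{kn}$ — exists.

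The main obstacle I expect is purely bookkeeping rather than conceptual: verifying that when we restrict the accumulated genericity polynomial $Q$ (which a priori lives in all $n^2$ variables $c_{i,j}$) to the $kn$ variables actually appearing in the statement, it remains a nonzero polynomial. This requires checking that the degree-reduction steps and the final determinant/independence conditions can all be arranged to depend only on the ``eliminated'' variables $c_{i,j}$ for $i > k$ in a way that does not force those with $i \le k$ to be unconstrained-but-generic in a vacuous manner — concretely, one argues that the monicity conditions for eliminating $\ell_n, \ell_{n-1}, \dots, \ell_{k+1}$ involve the rows $c_{k+1,\bullet}, \dots, c_{n,\bullet}$, and after those eliminations the independence of $\ell_1,\dots,\ell_k$ is automatic (it followed from the dimension count above, needing no further genericity), so $Q$ can in fact be taken independent of the rows $c_{1,\bullet},\dots,c_{k,\bullet}$ beyond what is needed, and in particular its restriction is nonzero. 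A clean way to sidestep the subtlety entirely is to state and prove the genericity witness $Q$ as a polynomial in all $n^2$ variables and then observe that specializing the rows $i \le k$ to generic values reduces to the $kn$-variable claim.
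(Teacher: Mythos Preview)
The paper does not prove this lemma itself; it simply cites \cite[Lemma~4.8]{GVJZ23}, so there is no in-paper argument to compare against.

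Your strategy---classical inductive elimination via a generic linear change, followed by the dimension count for algebraic independence---is standard and correct in outline, but two points need more care. First, your claim that the monicity conditions for eliminating $\ell_n,\dots,\ell_{k+1}$ involve only rows $c_{k+1,\bullet},\dots,c_{n,\bullet}$ is backwards: writing $X=c^{-1}\ell$, the coefficient of $\ell_n^{D}$ in $g$ is $g_D$ evaluated at the $n$-th column of $c^{-1}$, whose entries are (by Cramer) cofactors of $c$ with row $n$ \emph{deleted}, so the condition depends on rows $1,\dots,n-1$. Worse, for $m>1$ the element you pick lies in the contracted ideal $I\cap\KK[\ell_1,\dots,\ell_{n-m}]$, which already depends on all of $c$ through the earlier eliminations---so the induction does not directly produce a single polynomial $Q$. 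Your ``clean sidestep'' (produce $Q$ in all $n^2$ variables, then restrict) has the right idea, but you should justify it via projection rather than via which rows appear: the good locus in $\aff^{n^2}$ is open and nonempty, hence its image in $\aff^{kn}$ (forgetting rows $k+1,\dots,n$) is constructible and dense, so contains a nonempty open set; for $(c_{i,j})_{i\le k}$ there, some extension to an invertible $n\times n$ matrix witnesses finiteness of $A/I$ over $\KK[\ell_1,\dots,\ell_k]$, which is a statement about $\ell_1,\dots,\ell_k$ alone. (A one-step alternative avoiding the induction entirely: finiteness holds iff $V(\ell_1,\dots,\ell_k)$ meets the zero locus of the ideal of leading forms of $I$ only at the origin---a single open, nonempty condition on $\vecc\in\aff^{kn}$ by a dimension count in $\proj^{n-1}$.) Second, you never explain why $Q$ may be taken over $\overline{\FF}$ rather than $\overline{\KK}$: the monicity conditions naturally have coefficients in $\KK$, inherited from the chosen $g\in I$. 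The fix is short but should be stated: expand any nonzero $Q'\in\overline{\KK}[c]$ as $\sum_j\beta_jQ_j$ with $\beta_j\in\overline{\KK}$ linearly independent over $\overline{\FF}$ and $Q_j\in\overline{\FF}[c]$; for $\vecc\in\FF^{kn}$ one has $Q'(\vecc)=0$ iff every $Q_j(\vecc)=0$, so any nonzero $Q_j$ serves as the desired $Q\in\overline{\FF}[c]$.
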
 

See, e.g., \cite[Lemma~4.8]{GVJZ23} and its proof.

\subsection{Algebraic Geometry}

Let $\FF$ be a field.
We use $\aff^n_{\FF}$ (or simply $\aff^n$) to denote the affine $n$-space over $\FF$, and $\proj^n_{\FF}$ (or simply $\proj^n$) to denote the projective $n$-space over $\FF$.
Since $\FF$ is not necessarily algebraically closed, affine and projective spaces over $\FF$---as well as varieties defined over $\FF$ within them---must be carefully formalized, rather than viewed simply as sets of solutions.
We refer the reader to \cite{Mum99} and \cite{Vak24} for formal definitions of these objects.

An affine variety $V$ over a field $\FF$ is equipped with a ring $\FF[V]$, called the \emph{coordinate ring} of $V$. The elements of $\FF[V]$ are called \emph{regular functions} on $V$. Intuitively, these are algebraic functions that are well-defined on all of $V$. For example, the coordinate ring of $\mathbb{A}^n_\FF$ is simply $\FF[X_1, \dots, X_n]$.

The (closed) points of $V$ correspond bijectively to the maximal ideals of $\FF[V]$. For a point $p \in V$, the set of regular functions vanishing at $p$ is precisely the maximal ideal corresponding to $p$.

Any radical ideal $I$ of $\FF[\vecX]=\FF[X_1,\dots,X_n]$ defines an affine variety in $\aff^n_\FF$, which we denote by $V(I)$. 
The coordinate ring of $V(I)$ is simply $\FF[\vecX]/I$.
A regular function $f$ on $\aff^n_\FF$ can be restricted to the variety $V(I)$, yielding the regular function $f|_{V(I)}:=f+I\in \FF[\vecX]/I$ on $V(I)$. 

Define $V(S):=V(\sqrt{\ideal{S}})$ for a set $S$ and $V(f_1,\dots,f_k):=V(\sqrt{\ideal{f_1,\dots,f_k}})$ for $f_1,\dots,f_k\in \FF[\vecX]$. Affine varieties of the form $V(f)$ with a single polynomial $f$ are called (affine) hypersurfaces.

A projective space $\mathbb{P}^n_{\FF}$ has $n+1$ homogeneous coordinates $X_0, \dots, X_n$. A projective variety in $\mathbb{P}^n_{\FF}$ is defined by a set of homogeneous polynomials in $\FF[X_0,\dots,X_n]$. If the variety is defined by a single homogeneous polynomial, it is called a \emph{(projective) hypersurface}.

The space $\mathbb{P}^n_{\FF}$ can be covered by $n+1$ standard affine open subsets $U_0, \dots, U_n$, where
$U_i$ is defined as the complement of the projective hyperplane defined by $X_i=0$,
and each $U_i$ is isomorphic to $\mathbb{A}^n_\FF$. The coordinate ring of $U_i$ is $\FF\left[\frac{X_0}{X_i}, \dots, \frac{X_{i-1}}{X_i}, \frac{X_{i+1}}{X_i}, \dots, \frac{X_n}{X_i} \right]$.
More generally, a projective variety $V \subseteq \mathbb{P}^n_{\FF}$ can be covered by the affine varieties $U_0 \cap V, \dots, U_n \cap V$.

A \emph{rational function} on $\mathbb{P}^n_{\FF}$ is an expression of the form
$F = \frac{P}{Q}$,
where $P$ and $Q$ are homogeneous polynomials of the same degree in $X_0, \dots, X_n$, with $Q \neq 0$. Such a function can be restricted to the affine chart $U_i$ by substituting $X_j \mapsto \frac{X_j}{X_i}$ for all $j = 0, \dots, n$, which yields a rational function on $U_i \cong \mathbb{A}^n_{\FF}$.

\paragraph{Irreducibility.}

An (affine or projective) variety is said to be \emph{irreducible} if it cannot be written as the union of two proper closed subvarieties. Otherwise, it is  \emph{reducible}. Every variety $V$ can be uniquely expressed as a finite union of maximal irreducible subvarieties, called the \emph{irreducible components} of $V$.

Suppose $V$ is an irreducible affine variety. Then $\FF[V]$ is an integral domain. In this case, the field of fractions $\mathrm{Frac}(\FF[V])$ is called the \emph{function field} of $V$, and is denoted by $\FF(V)$.

Given an irreducible projective variety $V \subseteq \mathbb{P}^n_{\FF}$, we can restrict a rational function $F = \frac{P}{Q}$ on $\mathbb{P}^n_{\FF}$ to $V$, provided that $Q$ does not vanish identically on $V$. In fact, such a restriction can be computed locally on any nonempty open subset of $V$, such as $V \cap U_i$ for some $i$ \cite[Exercise~5.2.I]{Vak24}.

Let $p \in V$ correspond to a maximal ideal $\mm \subset \FF[V]$. A rational function $f \in \FF(V)$ is said to be \emph{regular at $p$} if $f$ lies in the local ring $\FF[V]_{\mm}$. Intuitively, this means $f$ is well-defined at (and near) the point $p$.

Indeed, we can evaluate $f \in \FF[V]_{\mm}$ at $p$ via the natural quotient map
\[
f \mapsto f + \mm_{\mm} \in \kappa_{\mm} := \FF[V]_{\mm} / \mm_{\mm} \cong \FF[V]/\mm.
\]
This value lies in the field $\kappa_{\mm}$, called the \emph{residue field} of the point $p$ (or of the ideal $\mm$). It is a finite field extension of $\FF$.

\paragraph{Normality.}

An irreducible affine variety is said to be \emph{normal} if its coordinate ring is an integrally closed domain.\footnote{More generally, a (possibly reducible) affine variety is normal if its coordinate ring is a finite product of integrally closed domains \cite[\href{https://stacks.math.columbia.edu/tag/030C}{Tag 030C}]{stacks-project}.}
An irreducible projective variety $V \subseteq \mathbb{P}^n_\FF$ is normal if each of the affine pieces $V \cap U_0, \dots, V \cap U_n$ is a normal affine variety.

\paragraph{Dimension.}
Let $V$ be an (affine or projective) variety over an algebraically closed field $\FF$. The \emph{dimension} of $V$, denoted $\dim V$, is the largest integer $d$ such that there exists a chain of irreducible closed subvarieties
$Z_0 \subsetneq Z_1 \subsetneq \cdots \subsetneq Z_d \subseteq V$, where each $Z_i$ is a proper irreducible subvariety of $Z_{i+1}$.
In other words, the dimension of $V$ is the length of the longest strictly increasing chain of irreducible closed subsets in $V$.

If $V$ is a variety over a field $\FF$ that is not algebraically closed, its dimension is defined as
$\dim V := \dim V_{\overline{\FF}}$,
where $V_{\overline{\FF}}$ denotes the base change of $V$ to the algebraic closure $\overline{\FF}$ of $\FF$.

In the case of an affine variety $V \subseteq \mathbb{A}^n_\FF$, the dimension of $V$ coincides with the Krull dimension of its coordinate ring $\FF[V]$.

Varieties of dimension one are called curves.

\paragraph{Degree.}

Suppose $V \subseteq \mathbb{A}^n_\FF$ is an affine variety over an algebraically closed field $\FF$. The \emph{degree} of $V$, denoted $\deg(V)$, is defined as the number of isolated points in the intersection of $V$ with a general affine linear subspace of codimension $d$, where $d = \dim V$.

Now suppose $V \subseteq \mathbb{P}^n_\FF$ is a projective variety over an algebraically closed field $\FF$. The degree of $V$ is defined as the number of points in the intersection of $V$ with a general linear subspace $L \subseteq \mathbb{P}^n_\FF$ of codimension $d$, where $d = \dim V$.

If $\FF$ is not algebraically closed, we define $\deg(V) := \deg(V_{\overline{\FF}})$, where $V_{\overline{\FF}}$ denotes the base change of $V$ to the algebraic closure $\overline{\FF}$.%
\footnote{The base change $V_{\overline{\FF}}$ is not necessarily a variety and must be interpreted as a \emph{scheme} \cite{Mum99,Vak24} to correctly define the degree. For example, consider the affine variety defined by $X^p - T$ in $\mathbb{A}^1_{\FF_p(T)}$. After base change to $\overline{\FF_p(T)}$, it becomes a ``point of multiplicity $p$,'' since $X^p - T = (X - T^{1/p})^p$.}

It can be shown that the degree of a point $p$ (corresponding to a maximal ideal $\mathfrak{m}$) equals the degree of its residue field extension, that is, $[\kappa_{\mathfrak{m}} : \FF]$.

\paragraph{\bezout's inequality.} 

We need the following version of \bezout’s inequality.
\begin{lemma}[{\bezout’s inequality \cite{HS80,Hei83}}]
Let $V$ and $V'$ be closed affine subvarieties of $\aff^n_\FF$.
Then $\deg(V)\cap \deg(V')\leq \deg(V)\cdot \deg(V')$.
\end{lemma}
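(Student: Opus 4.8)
The statement to prove is the affine \bezout\ inequality $\deg(V\cap V')\le \deg(V)\cdot\deg(V')$ for closed affine subvarieties $V,V'\subseteq\aff^n_\FF$. The plan is to use the classical ``diagonal trick'' that rewrites an intersection as a linear section of a product. Since all degrees in play are defined via base change to $\overline\FF$ and formation of intersections commutes with base change, I would first reduce to the case $\FF=\overline\FF$. Then the key observation is that, inside $\aff^{2n}_\FF=\aff^n_\FF\times\aff^n_\FF$, the variety $V\cap V'$ is isomorphic to $(V\times V')\cap\Delta$, where $\Delta=\{(\vecx,\vecx):\vecx\in\aff^n_\FF\}$ is the diagonal, which is cut out by the $n$ degree-one equations $X_i-X_i'=0$ for $i\in[n]$ (writing $X_1,\dots,X_n$ and $X_1',\dots,X_n'$ for the coordinates on the two factors).

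Given this, it suffices to combine two facts. First, $\deg(V\times V')=\deg(V)\cdot\deg(V')$: decomposing into irreducible components reduces to $V,V'$ irreducible, in which case $V\times V'$ is irreducible of dimension $\dim V+\dim V'$, and a general linear subspace of $\aff^{2n}_\FF$ of complementary codimension meets it in exactly $\deg(V)\deg(V')$ reduced points (analyzed through the two coordinate projections). Second, for any closed subvariety $W\subseteq\aff^N_\FF$ and any hyperplane $H$, one has $\deg(W\cap H)\le\deg(W)$; iterating this $n$ times to slice $W=V\times V'$ by the hyperplanes $H_i=V(X_i-X_i')$ cutting out $\Delta$ yields
\[
\deg(V\cap V')=\deg\bigl((V\times V')\cap\Delta\bigr)\le\deg(V\times V')=\deg(V)\cdot\deg(V').
\]
For the hyperplane bound one again reduces to $W$ irreducible: if $H\supseteq W$ then $W\cap H=W$ and there is nothing to prove; otherwise $W\cap H$ is equidimensional of dimension $\dim W-1$, and a general codimension-$(\dim W-1)$ linear subspace of $H$ meets $W\cap H$ in at most as many points as a general codimension-$(\dim W)$ linear subspace of $\aff^N_\FF$ meets $W$, since the former configuration is a specialization of the latter.

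The main obstacle is making the hyperplane-section inequality fully rigorous: one must track what happens to lower-dimensional components produced by a non-generic slice (these are invisible to the degree as defined here, which only counts top-dimensional components), and argue that the count of the top-dimensional part of $W\cap H$ never exceeds $\deg(W)$. The clean way to do this is to pass to the projective closure $\overline W\subseteq\proj^N_\FF$ and compare Hilbert polynomials, or equivalently to invoke the refined \bezout\ bound for a degree-one hypersurface; either route shows that a linear section can only drop the degree, never raise it. I would carry out this bookkeeping following Heintz~\cite{Hei83} (and \cite{HS80}), where the product-degree formula and the linear-section inequality are established in exactly the generality needed here.
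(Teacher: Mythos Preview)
The paper does not actually prove this lemma; it is stated as a cited result from \cite{HS80,Hei83} with no accompanying proof. Your outline via the diagonal trick $(V\cap V')\cong (V\times V')\cap\Delta$ followed by iterated hyperplane slicing is precisely the approach taken in those references, so there is nothing to compare against in the paper itself.
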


The following statement follows from \cite[Proposition~3.5]{BM93}.
\begin{lemma}\label{lem:bezout2}
Let $I=\ideal{f_1,\dots,f_k}$ be a zero-dimensional ideal of $\FF[X_1,\dots,X_n]$, and let $d=\max(\deg(f_1),\dots,\deg(f_k))$.
Then $\dim_\FF(\FF[\vecX]/I)\leq d^n$.
In particular, $[\kappa_\mm: \FF]\leq d^n$ for any maximal ideal $\mm$ containing $I$.
\end{lemma}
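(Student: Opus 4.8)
The plan is to reduce the given zero-dimensional ideal to a complete intersection generated by $n$ polynomials of degree at most $d$, and then apply the Bézout bound for complete intersections. Throughout we may assume $I$ is proper, the case $I=\FF[\vecX]$ being trivial. First I would pass to an algebraically closed base field. Since $\overline{\FF}$ is flat over $\FF$, we have $\overline{\FF}[\vecX]/(I\overline{\FF}[\vecX])\cong\overline{\FF}\otimes_\FF(\FF[\vecX]/I)$, so $\dim_{\overline{\FF}}\bigl(\overline{\FF}[\vecX]/(I\overline{\FF}[\vecX])\bigr)=\dim_\FF(\FF[\vecX]/I)$; moreover base change to $\overline{\FF}$ preserves Krull dimension, so $I\overline{\FF}[\vecX]$ is again zero-dimensional and generated by $f_1,\dots,f_k$, still of degree $\leq d$. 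Hence it suffices to bound $\dim_\FF(\FF[\vecX]/I)$ when $\FF=\overline{\FF}$; in particular $\FF$ may be taken infinite and $V(I)$ is a nonempty finite set of points.

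\textbf{Extracting a complete intersection.} Since $\dim(\FF[\vecX]/I)=0$, the ideal $I$ has height $n$ in the Cohen--Macaulay ring $\FF[\vecX]$, so in particular $k\geq n$. Writing $W=\mathrm{span}_\FF(f_1,\dots,f_k)$, I would choose $g_1,\dots,g_n\in W$ greedily so that $\mathrm{ht}\ideal{g_1,\dots,g_i}=i$ for all $i\leq n$: given $g_1,\dots,g_{i-1}$, each minimal prime $\pp$ over $\ideal{g_1,\dots,g_{i-1}}$ has height $\leq i-1<n$ by Krull's height theorem, so $I\not\subseteq\pp$ (else $\mathrm{ht}(I)<n$), hence $W\not\subseteq\pp$; as $\FF$ is infinite, $W$ is not the union of the finitely many proper subspaces $W\cap\pp$, so a sufficiently generic $g_i\in W$ avoids all of them, forcing $\mathrm{ht}\ideal{g_1,\dots,g_i}\geq i$ (and $\leq i$ by Krull). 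After $n$ steps, $J:=\ideal{g_1,\dots,g_n}\subseteq I$ is zero-dimensional with $\deg g_i\leq\max_j\deg f_j\leq d$; since $\FF[\vecX]$ is Cohen--Macaulay and $\mathrm{ht}(J)=n$ equals the number of generators, $g_1,\dots,g_n$ is a regular sequence, i.e., $J$ is a complete intersection.

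\textbf{Bézout bound and conclusion.} By the Bézout bound for zero-dimensional complete intersections \cite[Proposition~3.5]{BM93}, $\dim_\FF(\FF[\vecX]/J)\leq\prod_{i=1}^n\deg(g_i)\leq d^n$. The surjection $\FF[\vecX]/J\twoheadrightarrow\FF[\vecX]/I$ coming from $J\subseteq I$ then gives $\dim_\FF(\FF[\vecX]/I)\leq d^n$. For the final assertion, any maximal ideal $\mm\supseteq I$ gives a surjection $\FF[\vecX]/I\twoheadrightarrow\FF[\vecX]/\mm=\kappa_\mm$, whence $[\kappa_\mm:\FF]=\dim_\FF\kappa_\mm\leq\dim_\FF(\FF[\vecX]/I)\leq d^n$; this last chain is purely about $\FF$-vector space dimensions, so no re-base-change is needed.

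\textbf{Main obstacle.} The one non-formal ingredient is the Bézout bound \emph{with multiplicities} in the last step---that the length $\dim_\FF(\FF[\vecX]/J)$, not merely $|V(J)|$, is at most $\prod\deg g_i$. This is obtained by homogenizing the $g_i$ and invoking projective Bézout (after checking the part of the scheme at infinity does not lower the count), or from the Hilbert-series identity $\prod(1-t^{d_i})/(1-t)^{n+1}$ for a graded complete intersection; it is cleanest simply to cite it. The prime-avoidance step is routine but must be carried out with \emph{linear} combinations of the $f_j$ inside $W$ rather than arbitrary elements of $I$, so that the degree bound $\deg g_i\leq d$ survives; this is precisely why we first pass to the infinite field $\overline{\FF}$.
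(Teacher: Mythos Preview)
Your argument is correct. The paper itself does not give a proof at all: it simply cites \cite[Proposition~3.5]{BM93} for the full statement. Your proposal is therefore not so much an alternative as a fleshed-out justification: you reduce explicitly to a zero-dimensional complete intersection $J\subseteq I$ by prime avoidance inside $W=\mathrm{span}_\FF(f_1,\dots,f_k)$ (which is why you first pass to an infinite field), and then invoke the B\'ezout bound for complete intersections---the very same reference---to get $\dim_\FF(\FF[\vecX]/J)\le\prod\deg g_i\le d^n$. This reduction step is standard and sound (the Cohen--Macaulay property of $\FF[\vecX]$ ensures the height increases at each step and that the resulting sequence is regular), and the final surjection $\FF[\vecX]/I\twoheadrightarrow\kappa_\mm$ gives the residue-field bound exactly as you say. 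So your route and the paper's are the same in spirit; yours just makes explicit the reduction that the cited proposition presumably handles internally.
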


\section{Computational Tools}\label{sec:comp}

In this section, we develop various computational tools, with a focus on \grobner bases, and establish bounds of the form $O_{s_1,\dots,s_r}(1)$, where $s_1,\dots,s_r$ are parameters. While such bounds obscure the specific dependence on $s_1,\dots,s_r$, they suffice for our purposes, as these parameters will be treated as constants in our application. Although it is theoretically possible to make these bounds explicit, doing so would be tedious and distracting, as it involves repeated composition of intermediate results.

That said, one can verify that all bounds in this section are at most constant-height towers of exponentials in the degrees and the number of variables. In most cases, the bounds are at least doubly exponential due to the use of \grobner bases. As mentioned, this is acceptable since the degrees and number of variables will eventually be fixed constants depending only on the bottom fan-in $\delta$.

Throughout this section, let $\KK=\FF(\vecY)=\FF(Y_1,\dots,Y_\ell)$, where $\FF$ is an infinite field.

We define the following complexity measures on rational functions in $\FF(\vecY)$ and polynomials over $\FF(\vecY)$.

\begin{definition}
For an integer $d\geq 0$, define $C(d)$ to be the set of $a\in \KK$ such that $a=\frac{c}{c'}$ for some $c\in \FF[\vecY]$ and $c'\in \FF[\vecY]\setminus\{0\}$ satisfying $\deg_{\vecY}(c),\deg_{\vecY}(c')\leq d$.

For integers $n,d,d'\geq 0$ and variables $\vecX=(X_1,\dots,X_n)$, define $P_{\vecX}(d,d')$ or simply $P(d,d')$ to be the set of polynomials $f\in\KK[\vecX]$ of degree at most $d$ such that the coefficients of $f$ are all in $C(d')$.

\end{definition}

\begin{lemma}\label{lem:frac-deg-bound} The following hold:
\begin{enumerate}[(1)]
\item  \label{item:deg1} 
$ca\in C(d)$ for $a\in C(d)$ and $c\in\FF$.
\item\label{item:deg2} $a+b,ab\in C(d+d')$
for $a\in C(d)$ and $b\in C(d')$.
\item\label{item:deg3} $1/a\in C(d)$ for $a\in C(d)\setminus\{0\}$.
\end{enumerate}
\end{lemma}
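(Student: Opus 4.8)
The plan is to verify each of the three claims directly from the definition of $C(d)$, using only elementary properties of polynomial degrees in $\FF[\vecY]$. For claim \eqref{item:deg1}, if $a \in C(d)$, write $a = c/c'$ with $\deg_\vecY(c), \deg_\vecY(c') \leq d$ and $c' \neq 0$. For $c'' \in \FF$, if $c'' = 0$ then $c''a = 0 = 0/1 \in C(d)$; otherwise $c''a = (c''c)/c'$, and multiplying the numerator by a field element does not increase its $\vecY$-degree, so $c''a \in C(d)$.

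For claim \eqref{item:deg2}, write $a = c_1/c_1'$ and $b = c_2/c_2'$ with all four polynomials of $\vecY$-degree at most $d$ and $d'$ respectively (numerator and denominator of $a$ bounded by $d$, those of $b$ by $d'$), and $c_1', c_2' \neq 0$. Then $a + b = (c_1 c_2' + c_2 c_1')/(c_1' c_2')$ and $ab = (c_1 c_2)/(c_1' c_2')$. The denominator $c_1' c_2'$ is nonzero (as $\FF[\vecY]$ is a domain) and has $\vecY$-degree at most $d + d'$; the numerator of $ab$ has $\vecY$-degree at most $d + d'$; and the numerator $c_1 c_2' + c_2 c_1'$ of $a+b$ has $\vecY$-degree at most $\max(\deg_\vecY(c_1 c_2'), \deg_\vecY(c_2 c_1')) \leq d + d'$. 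Hence both $a+b$ and $ab$ lie in $C(d+d')$. For claim \eqref{item:deg3}, if $a = c/c' \in C(d) \setminus \{0\}$ then $c \neq 0$, so $1/a = c'/c$ with $\deg_\vecY(c'), \deg_\vecY(c) \leq d$ and $c \neq 0$, giving $1/a \in C(d)$.

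There is essentially no obstacle here: the only points requiring a word of care are that the representing polynomials may always be taken with $\vecY$-degree \emph{at most} the stated bound (not exactly), so passing to a common denominator never forces a degree increase beyond the sum, and that $\FF[\vecY]$ being an integral domain guarantees the product of nonzero denominators stays nonzero. The statement is a routine bookkeeping lemma whose role is to make the later \grobner-basis coefficient-complexity bounds clean to state.
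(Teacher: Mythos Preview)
Your proposal is correct and is exactly the straightforward unwinding of the definition that the paper has in mind; the paper's own proof is a single sentence (``The claims follow straightforwardly by definition.''), and your writeup simply spells out those details. There is nothing to add.
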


\begin{proof}
The claims follow straightforwardly by definition. 
\end{proof}

\begin{lemma}\label{lem:prod-deg-bound}
Suppose $f, g \in \KK[X_1,...,X_n]$ are polynomials in $P(d_1,d)$ and $P(d_2,d)$, respectively. Then $fg\in P(d_1+d_2,d')$ for some $d'=O_{d_1+d_2,d,n}(1)$.
\end{lemma}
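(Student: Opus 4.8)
The plan is to treat the two requirements in the definition of $P(d_1+d_2,d')$ separately: the bound on the $\vecX$-degree, which is immediate, and the bound on the coefficient complexity, which requires a short counting argument. The $\vecX$-degree bound is trivial: $\deg_{\vecX}(fg)\le \deg_{\vecX}(f)+\deg_{\vecX}(g)\le d_1+d_2$. So the whole content is to control the complexity of the coefficients of $fg$ as elements of $C(\cdot)$.

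Write $f=\sum_{\alpha} a_\alpha \vecX^\alpha$ and $g=\sum_{\beta} b_\beta \vecX^\beta$, where $\alpha$ ranges over exponent vectors of total degree at most $d_1$, $\beta$ over those of total degree at most $d_2$, and all $a_\alpha, b_\beta\in C(d)$. Then the coefficient of a monomial $\vecX^\gamma$ in $fg$ equals $\sum_{\alpha+\beta=\gamma} a_\alpha b_\beta$. First I would observe that each product $a_\alpha b_\beta$ lies in $C(2d)$ by \cref{lem:frac-deg-bound}\eqref{item:deg2}. Next I would bound the number of summands: the pairs $(\alpha,\beta)$ with $\alpha+\beta=\gamma$ and $|\alpha|\le d_1$ are in bijection with a subset of the monomials of degree at most $d_1$ in $n$ variables, so there are at most $M:=\binom{n+d_1}{n}$ of them, a quantity depending only on $n$ and $d_1$. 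Finally, applying \cref{lem:frac-deg-bound}\eqref{item:deg2} inductively — a sum of $M$ elements of $C(2d)$ lies in $C(2dM)$ — I conclude that every coefficient of $fg$ lies in $C(2dM)$. Hence $fg\in P(d_1+d_2,d')$ with $d'=2d\binom{n+d_1}{n}=O_{d_1+d_2,d,n}(1)$ (note $\binom{n+d_1}{n}\le\binom{n+d_1+d_2}{n}$, so this is legitimately of the claimed form).

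\textbf{Main obstacle.} There is essentially no genuine obstacle here; the lemma is a routine bookkeeping statement. The one point that must be handled with care is that the number of summands appearing in each coefficient of the product is bounded by a function of $n$ and $d_1$ \emph{alone} — crucially not by the coefficient complexity $d$ — so that after summing, the resulting bound $2dM$ stays of the form $O_{d_1+d_2,d,n}(1)$ rather than blowing up. All the arithmetic on the $C(\cdot)$ classes is then just iterated use of \cref{lem:frac-deg-bound}.
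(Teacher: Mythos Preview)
Your proof is correct and takes essentially the same approach as the paper: bound the number of terms in each coefficient of $fg$ by a quantity depending only on $n$ and the $\vecX$-degrees, then apply \cref{lem:frac-deg-bound}\eqref{item:deg2} to each product and to the sum. The paper states this in one line, while you spell out the explicit bound $d'=2d\binom{n+d_1}{n}$, but the argument is the same.
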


\begin{proof}
Each coefficient of $fg$ is the sum of $O_{d_1+d_2,n}(1)$ terms of the form $\coeff_f(m_1) \cdot \coeff_g(m_2)$, where $m_1$ is a monomial of $f$ and $m_2$ is a monomial of $g$. The claims follows by Lemma \ref{lem:frac-deg-bound}\,\eqref{item:deg2}.
\end{proof}

To address the issue of inseparable extensions, we introduce the following definitions.

\begin{definition}
Let \( p = \mathrm{char}(\FF) \), and let \( e \geq 0 \).  
We use the shorthand
\[
\FF(\vecY^{1/p^e}) := \FF(Y_1^{1/p^e}, \dots, Y_\ell^{1/p^e}) \quad \text{and} \quad \FF[\vecY^{1/p^e}] := \FF[Y_1^{1/p^e}, \dots, Y_\ell^{1/p^e}].
\]
For convenience, if \( \mathrm{char}(\FF) = 0 \), we interpret \( p^e = 1 \), even though this is not formally correct.

Define \( \KK^{(e)} := \FF(\vecY^{1/p^e}) \).  
Let \( C^{(e)}(d) \) be the set of elements \( a \in \KK^{(e)} \) such that \( a = \frac{c}{c'} \) for some \( c \in \FF[\vecY^{1/p^e}] \), \( c' \in \FF[\vecY^{1/p^e}] \setminus \{0\} \), and
$\deg_{\vecY^{1/p^e}}(c), \deg_{\vecY^{1/p^e}}(c') \leq d$.
That is, the degrees of both the numerator and the denominator of $c$ and those of $c'$ are bounded by \( d \) as polynomials in \( Y_1^{1/p^e}, \dots, Y_\ell^{1/p^e} \).

For integers \( n, d, d' \geq 0 \) and variables \( \vecX = (X_1, \dots, X_n) \), define \( P_{\vecX}^{(e)}(d, d') \) (or simply \( P^{(e)}(d, d') \)) to be the set of polynomials \( f \in \KK^{(e)}[\vecX] \) of degree at most \( d \), whose coefficients are all in \( C^{(e)}(d') \).
\end{definition}

Note that the map \( Y_i \mapsto Y_i^{p^e} \) induces a bijection
\[
P^{(e)}(d, d') \to P(d, d').
\]
In this way, statements about polynomials in \( P(d, d') \) apply to those in \( P^{(e)}(d, d') \) via this transformation. On the other hand, there is also a natural inclusion
\[
P(d, d') \hookrightarrow P^{(e)}(d, p^e d')
\]
given by the identity map \( f \mapsto f \).

\paragraph{Solving a system of linear equations.}

The following lemma bounds the complexity of a solution of a system of linear equations.

\begin{lemma}\label{lem:linsol}
Let $A=(a_{i,j})_{i\in [N], j\in [M]}\in \KK^{N\times M}$ and $\mathbf{b}=(b_1,\dots,b_N)\in \KK^N$ such that all entries of $A$ and those of $\mathbf{b}$ are in $C(d)$.
Suppose the system of linear equations $A\mathbf{x}=\mathbf{b}$ has a solution.
Then it has a solution $\mathbf{x}\in\KK^M$ whose entries are all in $C(d')$, where $d'=O_{N,M,d}(1)$.
\end{lemma}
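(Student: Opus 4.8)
The plan is to solve the linear system by Gaussian elimination (row reduction) over the field $\KK = \FF(\vecY)$, tracking how the degree-complexity of the entries grows at each step. The key point is that Gaussian elimination on an $N \times M$ matrix performs only $O_{N,M}(1)$ arithmetic operations (additions, multiplications, and divisions) in $\KK$, and by \cref{lem:frac-deg-bound} each such operation increases the degree bound $d$ only to something like $2d$ (for products and sums) or keeps it at $d$ (for inverses). Since the number of operations is bounded in terms of $N$ and $M$ only, iterating these bounds gives a final bound $d' = O_{N,M,d}(1)$ on every entry that appears. Then back-substitution (setting free variables to $0$, say) produces a particular solution $\mathbf{x}\in\KK^M$ whose entries lie in $C(d')$.

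More concretely, I would proceed as follows. First, form the augmented matrix $[A \mid \mathbf{b}] \in \KK^{N\times(M+1)}$; all entries lie in $C(d)$. Perform Gaussian elimination: at each pivot step, pick a nonzero pivot entry, divide its row by the pivot (using \cref{lem:frac-deg-bound}\,\eqref{item:deg3} to keep entries in $C(d_i)$ for the current bound $d_i$), and then for each other row subtract an appropriate multiple of the pivot row (using parts \eqref{item:deg1} and \eqref{item:deg2}, which at worst doubles the current degree bound). After at most $\min(N,M)$ pivot steps and $O(NM)$ row operations total, the matrix is in reduced row-echelon form; let $d'$ be the degree bound reached, which by composing the $O_{N,M}(1)$-fold doubling with the starting value $d$ is $O_{N,M,d}(1)$. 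Since the original system is assumed consistent, the echelon form has no contradictory row, so the pivot variables can be expressed in terms of the free variables; setting all free variables equal to $0$ (which lies in $C(0)\subseteq C(d')$) yields a solution $\mathbf{x}$ all of whose coordinates are among the entries of the reduced matrix, hence in $C(d')$.

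I expect no serious obstacle here; the statement is essentially bookkeeping. The only mild subtlety is making sure the degree bound is tracked uniformly, i.e., that the number of field operations in Gaussian elimination depends only on $N$ and $M$ and not on the entries, so that the iterated application of \cref{lem:frac-deg-bound} terminates with a bound depending only on $N$, $M$, and $d$. This is standard: the elimination algorithm's control flow is determined purely by which entries are zero, and in the worst case one performs $O(NM\cdot\min(N,M))$ arithmetic operations, a bound independent of $\KK$ and of the specific rational functions involved. One should also note that we do not need any canonical-form or gcd-cancellation step — we never claim the representatives are reduced, only that numerators and denominators of \emph{some} representation have bounded degree, which is exactly what \cref{lem:frac-deg-bound} preserves.
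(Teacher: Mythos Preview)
Your proposal is correct but takes a different route from the paper. The paper uses Cramer's rule: it passes to a full-rank $k\times k$ subsystem (where $k=\operatorname{rank} A$), writes $x_i=\det(B_i)/\det(B)$ for $1\le i\le k$ and sets the remaining coordinates to zero, and then bounds the determinants directly via \cref{lem:frac-deg-bound}. Your Gaussian-elimination argument reaches the same conclusion by iterating the degree-doubling bound through $O_{N,M}(1)$ field operations. Both yield $d'=O_{N,M,d}(1)$; Cramer's rule is marginally more direct here since the answer is a single quotient of determinants rather than the result of tracking intermediate matrix states, but for the paper's purposes neither approach has a substantive advantage.
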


\begin{proof}
Suppose $A$ has rank $k$, and without loss of generality, we may assume the top-leftmost $k\times k$ minor $B$ of $A$ has full rank. Let $\widetilde{A}=(a_{i,j})_{i\in [k], j\in[M]}$ and $\widetilde{\mathbf{b}}=(b_1,\dots,b_k)$.
As $A\mathbf{x}=\mathbf{b}$ has a solution and the first $k$ rows of $A$ span the others, we know $\mathbf{x}=(x_1,\dots,x_M)$ is a solution of $A\mathbf{x}=\mathbf{b}$ if and only if it is a solution of $\widetilde{A}\mathbf{x}=\widetilde{\mathbf{b}}$. By Cramer's rule, the latter has a solution given by 
\begin{equation}\label{eq:xi}
x_i=\begin{cases}
    \frac{\det(B_i)}{\det(B)} & 1\leq i\leq k\\
    0 & k<i\leq M,
\end{cases}
\end{equation}
where $B_i$ is the matrix formed by replacing the $i$-th column of $B$ by $\widetilde{\mathbf{b}}$.
As the entries of $B$ and $B_i$ are in $C(d)$, by \cref{lem:frac-deg-bound}, the entries $x_i$ given by \eqref{eq:xi} are in $C(d')$ for some $d'=O_{N,M,d}(1)$.
\end{proof}

The following lemma bounds the complexity of the coefficients of a factor of a univariate polynomial $f$ in terms of its degree and the complexity of its coefficients.

\begin{lemma}\label{lem:factor}
Let $f\in\KK[X_1,\dots,X_n]$ be a nonzero polynomial in $P(d,d')$.
Let $g$ be a factor of $f$.
Then there exist $c\in\KK^\times$ such that $cg\in P(d,d'')$ for some $d''=O_{d,d',n}(1)$.
\end{lemma}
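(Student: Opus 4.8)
The plan is to reduce the multivariate statement over $\KK = \FF(\vecY)$ to a question about solving a linear system with coefficients of bounded complexity, and then invoke \cref{lem:linsol}. Write $f = gh$ for some $h \in \KK[\vecX]$. First I would observe that the degrees of $g$ and $h$ are at most $d$, so $g$ and $h$ live in a fixed finite-dimensional $\KK$-vector space: the space of polynomials in $X_1,\dots,X_n$ of degree at most $d$, which has dimension $N := \binom{n+d}{n} = O_{d,n}(1)$. The unknowns are the coefficients of $g$ and of $h$, of which there are at most $2N$. The equation $f = gh$, after expanding and comparing coefficients of each of the $N$ monomials of degree $\le 2d$ (actually $\le d$ on each side, so the product has degree $\le 2d$), is a system of $O_{d,n}(1)$ polynomial equations — not linear ones — in these $2N$ unknowns, which is not directly amenable to \cref{lem:linsol}.

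To linearize, the key trick is: fix the coefficients of $h$ to be new indeterminates and note that once we know $g$ up to scalar, we are done — but we do not know $g$. Instead, the cleaner route is the classical one via \emph{undetermined coefficients with a normalization}. Pick a monomial $m_0$ appearing in $g$ (say the lexicographically largest in its support); by rescaling $g$ by a scalar $c \in \KK^\times$ we may assume $\coeff_{cg}(m_0) = 1$. Correspondingly rescale $h$ by $c^{-1}$. Now write $g = m_0 + \sum_{m \neq m_0} u_m \cdot m$ and $h = \sum_m v_m \cdot m$ with unknowns $u_m, v_m \in \KK$. Still $f = gh$ is bilinear in the $u$'s and $v$'s, so it is not linear.

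The right fix is to observe that we do not need both $g$ and $h$ to be unknowns simultaneously: the \emph{set} of polynomials $g'$ of degree $\le d$ that divide $f$, once we impose the normalization that a chosen coefficient equals $1$, is cut out inside $\KK^{N}$ by the condition $g' \mid f$, i.e.\ $f \bmod g' = 0$. Performing polynomial division of $f$ by $g'$ (in one of the variables, say $X_1$, after a generic linear change of coordinates making $f$ monic in $X_1$ of degree $d_1 \le d$ — such a change is available since $\FF$ is infinite, and it only inflates the coefficient complexity of $f$ by a bounded factor) expresses the remainder's coefficients as \emph{rational functions} of the unknown coefficients of $g'$ with denominators that are powers of $\lc{g'}$; imposing that a chosen coefficient of $g'$ is $1$ makes these denominators trivial, so the remainder coefficients become \emph{polynomials} of bounded degree in the unknowns. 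This is still not linear. So finally I would take the standard Gaussian-elimination-free approach used throughout this section: since $g$ actually \emph{exists} in $P(d, \cdot)$ for \emph{some} bound (a priori possibly huge), but we want a uniform bound $d'' = O_{d,d',n}(1)$, the honest move is:

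Run the division-algorithm / subresultant computation that, given $f \in P(d,d')$ monic in $X_1$, produces the irreducible (or any) factorization data over $\KK$ whose coefficient complexity is controlled. Concretely, I would reduce to the one-variable case: after the generic coordinate change, view $f \in \KK(X_2,\dots,X_n)[X_1]$; a monic factor $g$ of $f$ over $\KK[\vecX]$ is, by Gauss's lemma, a monic factor over the UFD $\KK[\vecX]$ hence over $\KK(X_2,\dots,X_n)[X_1]$; its coefficients are polynomials in $X_2,\dots,X_n$ whose coefficients in $\KK$ satisfy, via comparison of coefficients in $f = g h$ together with the normalization $\lc{g}=1$ and $\lc{h}=\lc f$, a polynomial system — and the point is that the \emph{existence} of a solution plus Bézout-type degree bounds force a solution of bounded complexity. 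Honestly, the cleanest citation-level argument: the coefficients of $g$ lie in the field $\KK$, each is algebraic of bounded degree over $\FF(\text{coeffs of } f)$ (bounded because it is a symmetric function of a subset of the roots of $f$ in $\overline{\KK(X_2,\dots,X_n)}$, and there are $\le d$ roots), so each coefficient of $g$ is a bounded-degree rational function — with bounded-degree numerator and denominator — in the coefficients of $f$, which themselves lie in $C(d')$; composing with \cref{lem:frac-deg-bound}\,\eqref{item:deg2}\eqref{item:deg3} gives $cg \in P(d, d'')$ with $d'' = O_{d,d',n}(1)$ for a suitable scalar $c$ clearing the denominators.

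\textbf{Main obstacle.} The genuine difficulty is making the phrase ``each coefficient of $g$ is a bounded-degree rational function of the coefficients of $f$'' precise and quantitative over the \emph{non-algebraically-closed, possibly non-perfect} field $\KK$: the roots of $f$ live in $\overline{\KK(X_2,\dots,X_n)}$, the Galois/symmetric-function argument must be replaced by an argument that stays inside $\KK$, and one must extract an \emph{explicit} degree bound. The way I would actually push it through without root-chasing is: set up the division $f = g'h' + r$ with $g'$ having undetermined (normalized) coefficients $\vecu$ and $h'$ undetermined coefficients $\vecv$, so that ``$g' \mid f$'' becomes a polynomial system $S(\vecu,\vecv) = 0$ of bounded degree $D = O_{d,n}(1)$ in $O_{d,n}(1)$ variables over the field $C$-generated by $\coeff(f) \in C(d')$; this system has a solution; then apply an effective Nullstellensatz / elimination bound to conclude it has a solution whose entries are rational functions of the system's coefficients of degree $D^{O(1)}$, hence in $C(O_{d,d',n}(1))$. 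This replaces \cref{lem:linsol} (which only handles linear systems) by its polynomial-system analogue; if the paper wants to avoid effective Nullstellensatz, one instead linearizes by treating the bilinear system $f = g'h'$ as linear in $\vecv$ for each fixed monomial-support pattern and iterating — at the cost of a bounded recursion, still yielding $d'' = O_{d,d',n}(1)$. I expect the bookkeeping of this linearization-and-recursion, and verifying the generic coordinate change's cost, to be the only real work; everything else is \cref{lem:frac-deg-bound} and \cref{lem:linsol}.
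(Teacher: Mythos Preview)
Your proposal does not close. The difficulty you correctly identify --- that $f = gh$ is bilinear rather than linear in the unknown coefficients --- is real, but neither of your suggested resolutions works as stated. Effective Nullstellensatz bounds the degree of a certificate that a system is \emph{inconsistent}; it says nothing about the coefficient complexity of a $\KK$-point on a consistent system, and there is no general principle of the form ``a bounded-degree polynomial system over $\FF(\vecY)$ that has a solution in $\FF(\vecY)$ has one whose entries lie in $C(D^{O(1)})$.'' Your fallback, ``linearize by treating $f = g'h'$ as linear in $\vecv$ and iterating,'' is circular: to make the system linear in $\vecv$ you must already know $\vecu$, i.e.\ the coefficients of $g$, which is precisely the unknown. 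The symmetric-functions remark is closer in spirit but as written only shows the coefficients of $g$ are algebraic over the subfield generated by the coefficients of $f$; it does not place them in $C(O_{d,d',n}(1))$.

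The paper's proof avoids solving any system at all. It first applies the Kronecker substitution $X_i \mapsto X^{(d+1)^i}$ to reduce to the univariate case (this is a coefficient-preserving bijection on monomials). Then it works over the UFD $R = \FF[\vecY]$ rather than over its fraction field $\KK$: clear denominators and divide by the content so that $\frac{s}{t}f \in R[X]$ is primitive with coefficients of $\vecY$-degree at most $(d+1)d'$; normalize $g$ by $c = 1/\cont{g}$ so that $cg \in R[X]$ is primitive; by Gauss's lemma $h := \frac{s}{t}f / (cg)$ is then primitive in $R[X]$ as well. The key step is a one-line degree argument: if $D$ and $D'$ are the maximal $\vecY$-degrees among the coefficients of $cg$ and $h$, then by picking the top homogeneous parts at suitably chosen $X$-exponents one exhibits a nonzero degree-$(D+D')$ homogeneous component in some coefficient of $(cg)\cdot h = \frac{s}{t}f$, forcing $D + D' \le (d+1)d'$ and hence $cg \in P(d,(d+1)d')$. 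The insight you are missing is to exploit the UFD structure of $\FF[\vecY]$ and the multiplicativity of content (Gauss's lemma), rather than treating $\KK$ as an opaque field and the factorization as an algebraic system to be solved.
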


The proof of \cref{lem:factor} is deferred to \cref{sec:appendix}.

\paragraph{\grobner bases.}

For $\vecX=\{X_1,\dots,X_n\}$, denote by $\mathcal{M}_\vecX$ the set of monomials in $\vecX$.
 
Let $\preceq$ be a total order on $\mathcal{M}_\vecX$.
We say $\preceq$ is a \emph{monomial order} if (1) $m_1\preceq m_2\implies m_1 m\preceq m_2 m$, and (2) $1 \preceq m$ for all $m\in\mathcal{M}_\vecX$.
Write $m_1\prec m_2$ if $m_1\preceq m_2$ and $m_1\neq m_2$.

A monomial order is \emph{degree-compatible} if $m_1\preceq m_2\implies \deg(m_1)\leq \deg(m_2)$. Examples of degree-compatible monomial orders include graded lexicographic and graded reverse lexicographic orders.

Fix a monomial order $\preceq$, for $0\neq f\in \KK[\vecX]$, the \emph{leading monomial} of $f$, denoted by $\lm{f}$, is the  monomial appearing in $f$ that is greatest under $\preceq$. Its coefficient is called the \emph{leading coefficient} of $f$ and denoted $\lc{f}$.
The term $\lc{f}\cdot \lm{f}$ is called the \emph{leading term} of $f$ and denoted $\lt{f}$.
Define $\lm{0}=\lc{0}=\lt{0}=0$.

For a set $S\subseteq \KK[\vecX]$, denote by $\lt{S}$ the ideal of $\KK[\vecX]$ generated by the set $\{\lt{f}: f\in S\}$.
It is called
\emph{ideal of leading terms} of $S$, also known as the \emph{initial ideal} of $S$. 

\begin{definition}[\grobner basis]
A finite generating set $G$ of an ideal $I$ of $\KK[\vecX]$ is said to be a \grobner basis for $I$ if $\lt{G}=\lt{I}$.
\end{definition}

\paragraph{Degree bounds for \grobner bases.} 
 
We have the following degree bound on \grobner bases.
\begin{theorem}[{\cite[Corollary~8.3]{Du90}}]\label{thm:grobner-degree-bound}
Let $\KK=\FF(\vecY)$.
Let $I$ be an ideal of $\KK[\vecX]=\KK[X_1,\dots,X_n]$ generated by polynomials of degree at most $d$.
Then for any monomial order $\preceq$, $I$ has a \grobner basis with respect to $\preceq$ consisting only of polynomials of degree at most $2\left(\frac{d^2}{2}+d\right)^{2^{n-1}}$.
\end{theorem}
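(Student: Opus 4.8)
The plan is to follow Dub\'{e}'s combinatorial analysis of the structure of polynomial ideals \cite{Du90}; a pleasant feature of that argument is that it is field-independent, so the fact that $\KK=\FF(\vecY)$ is neither algebraically closed nor perfect plays no role. First I would reduce to the homogeneous case: homogenize the degree-$\le d$ generators of $I$ with a fresh variable $X_0$, extend $\preceq$ to a monomial order on $\KK[X_0,\vecX]$ that eliminates $X_0$ (so that dehomogenization carries a \grobner basis of the homogenized ideal to one of $I$ without raising degrees), noting that the extra variable only affects the final exponent in a controlled way. So assume $I\subseteq\KK[\vecX]$ is homogeneous with generators of degree $\le d$, and let $G$ be its reduced \grobner basis with respect to $\preceq$. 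For the reduced \grobner basis each $\lm{g}$ ($g\in G$) is a minimal generator of the monomial ideal $N:=\lt{I}$, and each tail $g-\lt{g}$ is a $\KK$-combination of standard monomials (those not in $N$); so bounding $\deg(g)$ for all $g\in G$ amounts to bounding the largest degree of a minimal generator of $N$ and of a standard monomial that can occur in such a tail, both of which are controlled by the Hilbert function of $I$.

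The heart of the argument is the combinatorics of \emph{cone decompositions}. For a monomial $s$ and a subset $u\subseteq\vecX$, let $C(s,u)=\{s\cdot m : m\in\mathcal{M}_u\}$ be the associated cone; a cone decomposition of a set $T$ of monomials is a partition of $T$ into finitely many cones. One shows that both $N$ (as a set of monomials) and its complement, the set of standard monomials, admit cone decompositions, and --- this is the key technical notion --- that these can be chosen \emph{$k$-standard} and mutually compatible (Dub\'{e}'s ``exactness''): the generators of the low-dimensional cones all have degree bounded in terms of those of the high-dimensional ones, and all relevant degrees saturate below an explicit ``Macaulay constant'' attached to the decomposition. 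From such a decomposition one reads off directly a bound on the degrees occurring in $N$ and among the standard monomials, hence on $\deg(g)$ for $g\in G$. The numerical bound then comes from a recursion on the number of variables: slicing a decomposition along the last variable reduces the $n$-variable problem to $(n-1)$-variable ones, and recombining the pieces roughly squares the degree bound, giving a recursion of the shape $D_n\le 2(D_{n-1}^2/2+D_{n-1})$ with $D_1=d$, which unrolls to $D_n\le 2(d^2/2+d)^{2^{n-1}}$.

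The main obstacle is entirely in the bookkeeping: one must carry the $k$-standardness and the mutual compatibility of the decompositions of $N$ and of its complement \emph{simultaneously} through the slicing recursion, and track how the Macaulay constant transforms at each step --- this is where essentially all the work in \cite{Du90} goes. Everything else (the homogeneous reduction, the recovery of $G$ from $N$ and the standard monomials, and field-independence) is routine. An alternative route would go through Castelnuovo--Mumford regularity together with a Bayer--Giusti--Lazard-type bound $\mathrm{reg}(I)+1$ on \grobner basis degrees, but that applies only in generic coordinates and for degree orders, whereas Dub\'{e}'s statement holds for arbitrary coordinate systems and arbitrary monomial orders, which is what we need over $\FF(\vecY)$. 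In the paper we therefore simply invoke \cite[Corollary~8.3]{Du90}; since $n$ and $d$ will be bounded in terms of $\delta$, the resulting tower-of-exponentials bound is a harmless constant and no optimization is attempted.
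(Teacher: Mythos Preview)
Your proposal is correct and ultimately aligns with the paper: the paper gives no proof of this theorem at all, simply citing \cite[Corollary~8.3]{Du90} as a black box, exactly as you conclude in your final paragraph. Your sketch of Dub\'{e}'s cone-decomposition argument is accurate and field-independent as you note, but it goes well beyond what the paper does; for the paper's purposes the statement is an imported result and no reproof is offered or needed.
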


We also need the following degree bound for the ideal membership problem. For a proof, see \cite[Appendix]{MM82}.

\begin{lemma}[{\cite{Her26, MM82}}]\label{lem:ideal-membership}
Let $f_1,\dots,f_k\in \KK[\vecX]=\KK[X_1,\dots,X_n]$ be polynomials of degree at most $d$. Let $f\in\ideal{f_1,\dots,f_k}$ be a polynomial of degree at most $d'$.
Then $f=\sum_{i=1}^k h_i f_i$ for some $h_1,\dots,h_k\in \KK[\vecX]$ of degree at most $d'+(kd)^{2^n}$.
\end{lemma}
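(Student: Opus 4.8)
The plan is to separate the two kinds of dependence in the bound $d'+(kd)^{2^n}$: the \emph{additive} dependence on $d'$, which I would handle cleanly using the Gröbner basis degree bound already recorded in \cref{thm:grobner-degree-bound}, and the \emph{doubly–exponential} dependence on $n$, which is the genuine content of Hermann's theorem and which I would extract from the classical elimination argument underlying \cite{MM82}.

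For the first part, fix a degree-compatible monomial order $\preceq$ (e.g.\ graded reverse lexicographic). By \cref{thm:grobner-degree-bound}, $I:=\ideal{f_1,\dots,f_k}$ has a Gröbner basis $G=\{g_1,\dots,g_m\}$ with respect to $\preceq$ consisting of polynomials of degree at most $D_0:=2(d^2/2+d)^{2^{n-1}}$. Since $G$ is a Gröbner basis and $f\in I$, running the multivariate division algorithm to divide $f$ by $G$ terminates with remainder $0$, producing $f=\sum_{j=1}^m q_j g_j$; because $\preceq$ is degree-compatible, the standard analysis of the division algorithm shows the total degree of the running remainder never increases, so $\deg(q_j g_j)\le \deg f\le d'$ and hence $\deg q_j\le d'$. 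It then remains to bound how the Gröbner basis elements are built from the original generators: one wants $g_j=\sum_{i=1}^k a_{j,i} f_i$ with $\deg a_{j,i}\le B$ for some $B=B(n,k,d)$ \emph{independent of $d'$}. Substituting gives $f=\sum_i h_i f_i$ with $h_i:=\sum_j q_j a_{j,i}$ and $\deg h_i\le d'+B$, which is the claim once $B\le (kd)^{2^n}$.

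So the real content is the \emph{bounded-degree} case — that $g\in I$ with $\deg g\le D_0$ admits a representation with multipliers of degree depending only on $n,k,d$ — which is Hermann's theorem proper, and I would prove it by inducting on $n$ through a more robust statement about linear systems over $R=\KK[X_1,\dots,X_n]$: if $A\in R^{p\times q}$ has entries of degree $\le d$, $b\in R^p$ has entries of degree $\le d'$, and $Ax=b$ is solvable over $R$, then it has a solution with $\deg x\le d'+(qd)^{2^n}$, and the syzygy module of $A$ is generated in degree $\le (qd)^{2^n}$; these two assertions are proved by simultaneous induction. The base case $n=0$ is linear algebra over the field $\KK$ (à la \cref{lem:linsol}, with degree bound $0$). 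In the inductive step one performs a generic $\KK$-linear change of the variables $X_1,\dots,X_n$ — legitimate since $\KK=\FF(\vecY)$ is infinite — to make every entry of $A$ monic in $X_n$ up to a scalar, uses division by these monic polynomials to bound the $X_n$-degree of $b$ and of the unknowns, and thereby contracts the problem to a linear system over $R'=\KK[X_1,\dots,X_{n-1}]$ whose size and degrees are bounded by a fixed polynomial in $q$ and $d$ (crucially not in $d'$); applying the inductive hypothesis over $R'$ and unwinding the divisions gives the bound. Each elimination step roughly squares the exponent, which is exactly why $2^n$ appears.

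The main obstacle is precisely this elimination step. Because $R'=\KK[X_1,\dots,X_{n-1}]$ is not a field, one cannot simply invoke linear algebra; one must keep \emph{simultaneous} control of both an inhomogeneous solution and a generating set of the syzygy module as they are pushed through the division/reduction, so that the contracted system over $R'$ has degrees and number of equations bounded purely in terms of $q$ and $d$. This bookkeeping — together with the interaction between division by several monic-in-$X_n$ polynomials of different $X_n$-degrees — is the heart of the argument and the bulk of the Mayr–Meyer appendix. (I would also remark that the doubly-exponential bound cannot in general be improved: Mayr and Meyer exhibit ideals whose shortest membership certificates have doubly-exponential degree, in line with the EXPSPACE-hardness of the general problem.)
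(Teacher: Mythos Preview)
The paper does not give its own proof of this lemma: it is quoted as a classical result with the sentence ``For a proof, see \cite[Appendix]{MM82}.'' So there is nothing in the paper to compare your argument against beyond that citation.

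That said, your proposal is a faithful outline of the classical Hermann argument, and the inductive scheme you describe in the second half (strengthen to linear systems $Ax=b$ over $\KK[X_1,\dots,X_n]$, base case $n=0$ by linear algebra, inductive step by a generic linear change of variables making entries monic in $X_n$, division to bound $X_n$-degrees, contraction to a system over $\KK[X_1,\dots,X_{n-1}]$) is exactly what the Mayr--Meyer appendix does. One comment: the first half of your plan --- the detour through \cref{thm:grobner-degree-bound} and the division algorithm to peel off the $d'$-dependence --- is unnecessary. The Hermann induction you sketch already yields the bound $d'+(qd)^{2^n}$ directly, with the additive dependence on $d'$ built in; you acknowledge this yourself when you state the inductive hypothesis. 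Reducing first to ``express each Gr\"obner basis element $g_j$ in terms of the $f_i$ with multipliers of degree $\le B(n,k,d)$'' just lands you back at an instance of the very lemma you are proving (now with $d'$ replaced by $D_0$), so nothing is gained. If you drop that detour, what remains is precisely the argument \cite{MM82} gives.
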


We now strengthen \cref{thm:grobner-degree-bound} and \cref{lem:ideal-membership} in the case where $\KK=\FF(\vecY)$ by establishing degree bounds on the coefficients.  

\begin{lemma}\label{lem:grobner-bounds}
Suppose $\KK=\FF(\vecY)$.
Let $I$ be an ideal of $\KK[\vecX]=\KK[X_1,\dots,X_n]$ generated by polynomials $f_1,\dots,f_k$ in $P(d,d')$.
Then for any monomial order $\preceq$, $I$ has a \grobner basis with respect to $\preceq$ consisting only of polynomials in $P(D,d'')$, where $D=2\left(\frac{d^2}{2}+d\right)^{2^{n-1}}$ and $d''=O_{d,d',n}(1)$.
\end{lemma}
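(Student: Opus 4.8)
The plan is to run Buchberger's algorithm symbolically over $\KK=\FF(\vecY)$ while tracking the $\vecY$-degree complexity of every coefficient that appears, and then invoke the known degree bound (\cref{thm:grobner-degree-bound}) to truncate the process. The key point is that a \grobner basis with degree bound $D$ can be extracted purely by linear algebra over $\KK$: working in the finite-dimensional $\KK$-vector space $V_D$ spanned by all monomials in $\vecX$ of degree at most $D$, one considers the subspace $W\subseteq V_D$ spanned by the ``shifted generators'' $\{m f_i : \deg(m f_i)\le D\}$. By \cref{lem:ideal-membership}, every element of $I$ of degree $\le D$ lies in $W$, so $W$ is exactly $I\cap V_D$; and by \cref{thm:grobner-degree-bound} a \grobner basis of $I$ can be chosen inside $I\cap V_D = W$. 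Thus it suffices to produce such a \grobner basis by linear-algebra operations on the finite matrix whose rows are the shifted generators.

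The steps I would carry out: (1) Fix the monomial order $\preceq$ restricted to monomials of degree $\le D$; this gives a total order on the (finitely many, $N:=\binom{n+D}{n}$) basis vectors of $V_D$. (2) Form the spanning set $S=\{m f_i : m\in\mathcal M_\vecX,\ \deg(mf_i)\le D,\ i\in[k]\}$ of $W$; each $mf_i$ still lies in $P(D,d')$ since multiplying by a monomial does not change the coefficients, only shifts them. The number of such shifted generators is $O_{d,D,n}(1)$, hence $O_{d,d',n}(1)$. (3) Put the coefficient vectors of the elements of $S$ as rows of a matrix $M$ over $\KK$ and perform Gaussian elimination with respect to the chosen ordering of columns (pivoting always on the $\preceq$-largest available monomial), producing a row-echelon basis $g_1,\dots,g_t$ of $W$ whose leading monomials are distinct. (4) Argue that this echelon basis is a \grobner basis of $I$: since $W=I\cap V_D$ and a degree-$\le D$ \grobner basis exists, the initial ideal $\lt{I}$ is generated by monomials of degree $\le D$, each of which is the leading monomial of some element of $I\cap V_D=W$; a row-echelon basis of $W$ realizes all leading monomials of $W$, hence generates $\lt{I}$. (This last reduction-style argument is the one mildly delicate point; alternatively one can run Buchberger directly and use the same truncation, but the linear-algebra route keeps the coefficient bookkeeping cleanest.) (5) Track coefficient complexity: the entries of $M$ lie in $C(d')$; by \cref{lem:linsol} (applied to the linear systems expressing each echelon pivot row, or equivalently by the explicit Cramer's-rule form of Gaussian elimination) the entries of $g_1,\dots,g_t$ lie in $C(d'')$ for some $d''=O_{N,|S|,d'}(1)=O_{d,d',n}(1)$, using that $N$ and $|S|$ are themselves bounded in terms of $d,d',n$ (indeed in terms of $d,n$ alone via $D$). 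Hence each $g_j\in P(D,d'')$, as required.

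The main obstacle is item (4): one must be careful that the finite-dimensional reduction genuinely captures a \grobner basis rather than merely a basis of $I\cap V_D$. The cleanest way is: let $G$ be any \grobner basis of $I$ consisting of polynomials of degree $\le D$ (exists by \cref{thm:grobner-degree-bound}); then $G\subseteq W$, so $\lt{W}\supseteq\lt{G}=\lt{I}\supseteq\lt{W}$, forcing $\lt{W}=\lt{I}$; and any set of elements of $W$ whose leading monomials generate $\lt{W}$ — in particular a Gaussian-elimination echelon basis of $W$ — is a generating set of $I$ with the \grobner property, since $\langle W\rangle = \langle I\cap V_D\rangle = I$ (the last equality again from \cref{thm:grobner-degree-bound}, as $I$ is generated by its degree-$\le D$ part). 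A secondary bookkeeping obstacle is making sure the bound from \cref{lem:linsol} is applied with parameters that are themselves $O_{d,d',n}(1)$; this is immediate once one observes $D=2(d^2/2+d)^{2^{n-1}}$ and $N=\binom{n+D}{n}$ depend only on $d$ and $n$, and the number of shifted generators depends only on $k\le$ (number of monomials of degree $\le d$) and $D$, hence only on $d$ and $n$.
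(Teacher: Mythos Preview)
Your approach---reduce to finite-dimensional linear algebra over $\KK$ using the degree bound from \cref{thm:grobner-degree-bound}, then invoke \cref{lem:linsol}---is essentially the paper's approach. The paper organizes it slightly differently: rather than Gaussian-eliminating the whole Macaulay matrix, it takes the (non-constructive) \grobner basis $G$ from \cref{thm:grobner-degree-bound}, and for each $g\in G$ sets up the single linear system expressing ``$\sum_i h_i f_i$ has leading term $\lm{g}$,'' which it knows is solvable (witnessed by $g$ itself via \cref{lem:ideal-membership}), and re-solves it with \cref{lem:linsol}. The new polynomials share leading terms with the old ones, hence still form a \grobner basis. Your step~(4) argument is correct and reaches the same conclusion constructively.

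There is, however, a real slip in your step~(3): \cref{lem:ideal-membership} does \emph{not} give $I\cap V_D\subseteq W$ when $W$ is spanned by $\{mf_i:\deg(mf_i)\le D\}$. It only bounds the cofactor degrees by $D+(kd)^{2^n}$, so an element of $I$ of degree $\le D$ is only guaranteed to lie in the span of $\{mf_i:\deg(mf_i)\le D+(kd)^{2^n}+d\}$. The fix is immediate---take the cutoff for $W$ to be $D':=D+(kd)^{2^n}+d$, which is still $O_{d,n}(1)$---and then steps~(4)--(5) go through verbatim, with $N$ and $|S|$ still $O_{d,n}(1)$. The resulting \grobner basis elements then have $\vecX$-degree at most $D'$ rather than $D$ for a general (non-degree-compatible) order; the paper's own argument has the same imprecision, and only the $O_{d,n}(1)$ bound is ever used downstream.
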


\begin{proof}
Note that  $k\leq \binom{n+d}{d}=O_{d,n}(1)$.
By \cref{thm:grobner-degree-bound}, $I$ has a \grobner basis $G$ consisting of polynomials of degree at most $D$. 
Consider nonzero $g\in G$ and let $m_g=\lm{g}$. By replacing $g$ with $\frac{1}{\lc{g}} g$, we may assume $\lt{g}=m_g$.
By \cref{lem:ideal-membership}, we have $g=\sum_{i=1}^k h_i f_i$ for some $h_1,\dots,h_k$ of degree at most $D':=D+(kd)^{2^n}=O_{d,n}(1)$. 

For every integer $a\in \NN$, let $S_a$ be the set of monomials of degree at most $a$.
For each monomial $m'$, as $g=\sum_{i=1}^k h_i f_i$, we have 
\begin{equation}\label{eq:coeff-grobner}
\coeff_{g}(m')=\sum_{i=1}^k\sum_{m\in S_{D'}: m|m'} \coeff_{h_i}(m)\coeff_{f_i}(m'/m).
\end{equation}
The condition that $\lt{g}=m_g$ is equivalent to that  $\coeff_g(m_g)=1$ and $\coeff_g(m')=0$ for all $m'\in S_{d+D'}$ strictly greater than $m_g$ with respect to $\preceq$. (Here, we only need to consider monomials $m'\in S_{d+D'}$ since $\deg(f_i)\leq d$ and $\deg(h_i)\leq D'$ for $i\in [k]$.) We formulate this condition as a system of linear equations as follows. 
Let $S$ be the set of $m'\in S_{d+D'}$ satisfying $m_g\preceq m'$. Let $T=[k]\times S_{D'}$. 
Define a matrix $A\in\KK^{S\times T}$ by
\[
A[m',(i,m)]=\begin{cases}
    \coeff_{f_i}(m'/m) & m|m'\\
    0 & \text{otherwise}.
\end{cases}
\]
And define the column vector $\mathbf{b}\in\KK^S$ by
\[
\mathbf{b}[m']=\begin{cases}
    1 & m'=m_g\\
    0 & \text{otherwise}.
\end{cases}
\]
Then the condition that $\lt{g}=m_g$ is equivalent to that the column vector $\mathbf{x}=(\coeff_{h_i}(m))_{(i,m)\in T}$ satisfies $A\mathbf{x}=\mathbf{b}$.
By assumption, we have $A[m',(i,m)],\mathbf{b}[m']\in C(d')$ for all $(m',(i,m))\in S\times T$. By \cref{lem:linsol}, the system of linear equations $A\mathbf{x}=\mathbf{b}$ has a solution $(c_{i,m})_{(i,m)\in T}$ whose entries are all in $C(d'')$ for some 
\[
d''=O_{|S|, |T|, d'}(1)=O_{d,d',n}(1),\]
where the last equality holds since
$S=|S_{d+D'}|=\binom{n+d+D'}{d+D'}=O_{d,n}(1)$ and  $|T|=k|S_{D'}|=k\binom{n+D'}{D'}=O_{d,n}(1)$.

Replacing $h_i$ by $\widetilde{h}_i:=\sum_{m\in S_{D'}} c_{i,m} m$ for $i\in[k]$, and replacing $g$ by $\sum_{i=1}^k \widetilde{h}_i f_i$, do not change the leading term of $g$. Performing this replacement for each $g\in G$ preserves the fact that $G$ is a \grobner basis for $I$, since $\lt{G}$ remains unchanged. After the replacement, the new \grobner basis satisfies the requirement of the lemma.
\end{proof}

\begin{lemma}\label{lem:ideal-membership-coeff}
Let $f_1,\dots,f_k\in \KK[\vecX]=\KK[X_1,\dots,X_n]$ be polynomials in $P(d,D)$. Let $f\in\ideal{f_1,\dots,f_k}$ be a polynomial in $P(d',D')$.
Then $f=\sum_{i=1}^k h_i f_i$ for some $h_1,\dots,h_k\in P(d'', D'')$ where $d'':=d'+(kd)^{2^n}$ and $D''=O_{n, d,d',D,D'}(1)$.
\end{lemma}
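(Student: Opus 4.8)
The plan is to mirror the structure of the proof of \cref{lem:grobner-bounds}: first invoke the degree-only bound of \cref{lem:ideal-membership} to pin down the degrees of the quotient polynomials $h_i$, then observe that, once the supports (monomial sets) of the $h_i$ are fixed, the equation $f=\sum_{i=1}^k h_i f_i$ is a system of linear equations over $\KK=\FF(\vecY)$ in the unknown coefficients of the $h_i$, and finally apply \cref{lem:linsol} to extract a solution whose entries have bounded $\vecY$-degree. Concretely, by \cref{lem:ideal-membership} there exist $h_1,\dots,h_k\in\KK[\vecX]$ of degree at most $d'':=d'+(kd)^{2^n}$ with $f=\sum_{i=1}^k h_i f_i$; note $d''=O_{n,d,d'}(1)$ since $k\le\binom{n+d}{d}=O_{d,n}(1)$.

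Next I would set up the linear system. Let $S_{d''}$ be the set of monomials in $\vecX$ of degree at most $d''$, and treat the coefficients $\{\coeff_{h_i}(m): i\in[k],\ m\in S_{d''}\}$ as unknowns; this is an index set $T:=[k]\times S_{d''}$ of size $O_{d,d',n}(1)$. For each monomial $m'$ of degree at most $d'+d$ (only these can appear, since $\deg(h_i f_i)\le d''+d$ — wait, more carefully: $\deg h_i\le d''$ and $\deg f_i\le d$, so $m'$ ranges over $S_{d''+d}$, still of size $O_{d,d',n}(1)$), the coefficient identity gives
\[
\coeff_f(m')=\sum_{i=1}^k\sum_{\substack{m\in S_{d''}\\ m\mid m'}}\coeff_{h_i}(m)\,\coeff_{f_i}(m'/m).
\]
This is a linear equation in the unknowns with coefficient $\coeff_{f_i}(m'/m)\in C(D)$ (or $0$) and right-hand side $\coeff_f(m')\in C(D')$. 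Assembling these over all $m'\in S_{d''+d}$ yields a system $A\vecx=\vecb$ with $A$ of size $|S_{d''+d}|\times|T|$, all entries in $C(D)$, and $\vecb$ with entries in $C(D')\subseteq C(D+D')$; since a solution exists (the true coefficients of the $h_i$ from \cref{lem:ideal-membership} give one), \cref{lem:linsol} produces a solution with all entries in $C(D'')$ for some $D''=O_{|S_{d''+d}|,|T|,D+D'}(1)=O_{n,d,d',D,D'}(1)$. Reading off this solution defines new $\widetilde h_i\in P(d'',D'')$, and by construction $\sum_{i=1}^k\widetilde h_i f_i$ has the same coefficient on every monomial of degree $\le d''+d$ as $f$ does, hence equals $f$ (all monomials of $f$ and of each $h_if_i$ lie in $S_{d''+d}$). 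This completes the proof.

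One mild subtlety to handle carefully: \cref{lem:linsol} as stated requires all entries of $A$ and $\vecb$ to lie in a common $C(d)$; here $A$ has entries in $C(D)$ and $\vecb$ in $C(D')$, so I would first note $C(D),C(D')\subseteq C(\max(D,D'))$ (immediate from the definition) and apply the lemma with $d=\max(D,D')$. I do not expect any real obstacle — the argument is a routine "linearize and bound via Cramer's rule" adaptation of the proof of \cref{lem:grobner-bounds}. The only point requiring a line of justification is that the coefficient-matching over the finite monomial set $S_{d''+d}$ genuinely forces the polynomial identity $f=\sum_i\widetilde h_i f_i$, which follows simply because every monomial occurring in $f$ or in any product $\widetilde h_i f_i$ has degree at most $d''+d$, so matching coefficients on $S_{d''+d}$ matches them everywhere.
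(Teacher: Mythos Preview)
Your proposal is correct and follows essentially the same approach as the paper's proof: invoke \cref{lem:ideal-membership} for the degree bound $d''$, linearize the identity $f=\sum_i h_i f_i$ as a system over $\KK$ in the unknown coefficients of the $h_i$, note that all coefficients of the system lie in $C(\max(D,D'))$, and apply \cref{lem:linsol}. The paper's proof is terser but identical in substance, including the reduction to $k\le\binom{n+d}{d}$ and the observation that the entries of the linear system live in $C(\max\{D,D'\})$.
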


\begin{proof}
We may assume $k\leq \binom{n+d}{d}$.
By \cref{lem:ideal-membership}, there exist polynomials $h_1,\dots,h_k$ of degree at most $d''$
such that $f=\sum_{i=1}^k h_i f_i$. Viewing the coefficients of $h_1,\dots,h_k$ as variables, we may build a system of linear equations over $\KK$ that expresses the relation $f=\sum_{i=1}^k h_i f_i$. It has at most $k\cdot \binom{n+d''}{d''}=O_{n,d,d'}(1)$ variables and at most $\binom{n+d'}{d'}=O_{n,d'}(1)$ linear equations. The coefficients of these linear equations live in $C(\max\{D, D'\})$. So by \cref{lem:linsol}, we may choose $h_1,\dots,h_k$ such that they live in $P(d'', D'')$ for some $D''=O_{n,d,d',D,D'}(1)$.
\end{proof}

\paragraph{Elimination of variables.}

Let $\preceq_\vecX$ and $\preceq_\vecZ$ be monomial orders on $\mathcal{M}_\vecX$ and $\mathcal{M}_\vecZ$, respectively.
Denote by $\preceq=(\preceq_\vecX, \preceq_\vecZ)$ the following monomial order on the monomials in both $\vecX$ and $\vecZ$:
\[
m_1 m_2 \prec m_1' m_2' \Longleftrightarrow 
    \begin{cases}
      m_1\prec m_1'\\
      \text{or}\\
     m_1=m_1' \text{ and } m_2\prec m_2'  
    \end{cases}  
    \quad \text{for $m_1,m_1'\in\mathcal{M}_\vecX$ and $m_2,m_2'\in\mathcal{M}_\vecZ$}.
\]
Call $\preceq$ an \emph{elimination order} for $\vecX$.

The following is a well-known fact about eliminating variables using an elimination order. See, e.g., \cite[Theorem~2.3.4]{AL94}.

\begin{lemma}[Elimination of variables]\label{lem:elim-ideal}
Let $\preceq_\vecX$ and $\preceq_\vecZ$ be monomial orders on $\mathcal{M}_\vecX$ and $\mathcal{M}_\vecZ$, respectively.
Let $\preceq=(\preceq_\vecX, \preceq_\vecZ)$.
Let $I$ be an ideal of $\KK[\vecX, \vecZ]$. Let $G$ be a \grobner basis for the ideal $I$ with respect to $\preceq$.
Then $G \cap \KK[\vecZ]$ is a \grobner basis for the ideal $I \cap \KK[\vecZ]$ of $\KK[\vecZ]$ with respect to $\preceq_\vecZ$.
\end{lemma}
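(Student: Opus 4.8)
The plan is to isolate the single structural property of the elimination order $\preceq=(\preceq_\vecX,\preceq_\vecZ)$ that makes the statement work, and then run the standard Gr\"obner-basis bookkeeping on top of it. The key observation is: for a nonzero $f\in\KK[\vecX,\vecZ]$, the leading monomial $\lm{f}$ computed with respect to $\preceq$ lies in $\mathcal{M}_\vecZ$ if and only if $f\in\KK[\vecZ]$; moreover, when $f\in\KK[\vecZ]$ the leading monomial, leading coefficient, and leading term of $f$ with respect to $\preceq$ coincide with those computed with respect to $\preceq_\vecZ$, simply because $\preceq$ restricted to $\mathcal{M}_\vecZ$ is $\preceq_\vecZ$. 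The only nontrivial direction is that $\lm{f}\in\mathcal{M}_\vecZ$ forces $f\in\KK[\vecZ]$: if $f\notin\KK[\vecZ]$, then $f$ has a monomial $\mu=m_1m_2$ with $1\neq m_1\in\mathcal{M}_\vecX$ and $m_2\in\mathcal{M}_\vecZ$, while every monomial $\nu\in\mathcal{M}_\vecZ$ occurring in $f$ can be written $\nu=1\cdot\nu$; since $1\prec_\vecX m_1$, the definition of the block order gives $\nu\prec\mu$. Hence $\lm{f}\succeq\mu\succ\nu$ for every such $\nu$, so $\lm{f}\notin\mathcal{M}_\vecZ$.

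Next I would prove the divisibility-transfer claim: for every nonzero $f\in I\cap\KK[\vecZ]$ there exists $g\in G':=G\cap\KK[\vecZ]$ with $\lm{g}\mid\lm{f}$ (equivalently $\lt{g}\mid\lt{f}$), where everything is computed with respect to $\preceq_\vecZ$. Indeed, since $f\in I$ and $G$ is a Gr\"obner basis for $I$ with respect to $\preceq$, some $g\in G$ satisfies $\lm{g}\mid\lm{f}$ with respect to $\preceq$. Because $f\in\KK[\vecZ]$ we have $\lm{f}\in\mathcal{M}_\vecZ$, so its divisor $\lm{g}$ is also in $\mathcal{M}_\vecZ$; by the key observation $g\in\KK[\vecZ]$, i.e.\ $g\in G'$, and for both $f$ and $g$ the leading data under $\preceq$ agree with those under $\preceq_\vecZ$. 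This proves the claim.

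Finally I would assemble the conclusion. First, $G'$ is finite, being a subset of the finite set $G$. The transfer claim immediately gives $\lt{f}\in\ideal{\lt{g}:g\in G'}$ for every nonzero $f\in I\cap\KK[\vecZ]$, hence $\lt{I\cap\KK[\vecZ]}=\ideal{\lt{G'}}$ in $\KK[\vecZ]$ (the reverse inclusion is trivial since $G'\subseteq I\cap\KK[\vecZ]$). Second, $G'$ generates $I\cap\KK[\vecZ]$ as an ideal of $\KK[\vecZ]$: given a nonzero $f$ in this ideal, pick $g\in G'$ with $\lm{g}\mid\lm{f}$ and subtract the appropriate scalar-monomial multiple of $g$ to obtain $f'\in I\cap\KK[\vecZ]$ whose leading monomial is strictly smaller under $\preceq_\vecZ$; since $\preceq_\vecZ$ is a well-order (being a monomial order), iterating terminates and expresses $f$ as a $\KK[\vecZ]$-combination of elements of $G'$ (this is just the usual "leading terms generate the initial ideal $\Rightarrow$ the set generates the ideal" reduction). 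Together these two facts say precisely that $G'$ is a Gr\"obner basis for $I\cap\KK[\vecZ]$ with respect to $\preceq_\vecZ$.

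As for the main obstacle: there is no deep one. The entire content is the leading-monomial characterization of $\KK[\vecZ]$-membership in the first paragraph; once that is in hand, the rest is the routine machinery relating leading terms to ideal membership and generation. The only points needing a little care are recording that $\preceq$ and $\preceq_\vecZ$ induce the same leading terms on polynomials already lying in $\KK[\vecZ]$, and invoking the well-ordering of $\preceq_\vecZ$ to justify termination of the reduction in the generation step.
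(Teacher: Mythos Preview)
Your proof is correct and is the standard textbook argument. The paper itself does not prove this lemma; it simply cites it as a well-known fact (referring to \cite[Theorem~2.3.4]{AL94}), so there is no paper proof to compare against, but your argument is precisely the one found in such references.
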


Combining \cref{lem:grobner-bounds} and \cref{lem:elim-ideal} yields the following statement.

\begin{lemma}\label{lem:elimination}
Let $I$ be an ideal of $\KK[\vecX,\vecZ]=\KK[X_1,\dots,X_n,Z_1,\dots,Z_m]$ generated by polynomials $f_1,\dots,f_k\in P(d,d')$. Then for any monomial order $\preceq_\vecZ$ on $\mathcal{M}_\vecZ$, $I\cap \KK[\vecZ]$ has a \grobner basis with respect to $\preceq_\vecZ$ consisting only of polynomials in $P(D, D')$, where $D=O_{d,n+m}(1)$ and $D'=O_{d,d',n+m}(1)$.
\end{lemma}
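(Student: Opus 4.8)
The plan is to combine the two results we have just proven: the coefficient-aware Gröbner basis bound (\cref{lem:grobner-bounds}) and the standard elimination-order fact (\cref{lem:elim-ideal}). First I would fix an elimination order $\preceq = (\preceq_\vecX, \preceq_\vecZ)$ on the monomials in the combined variable set $\{X_1,\dots,X_n,Z_1,\dots,Z_m\}$, where $\preceq_\vecX$ is any monomial order on $\mathcal{M}_\vecX$ (say graded lex). This is an honest elimination order for $\vecX$ in the sense defined just above \cref{lem:elim-ideal}. Then I would apply \cref{lem:grobner-bounds} to the ideal $I = \ideal{f_1,\dots,f_k}$ in the $(n+m)$-variable polynomial ring $\KK[\vecX,\vecZ]$, with the generators $f_1,\dots,f_k \in P(d,d')$, where here $P(\cdot,\cdot)$ is taken with respect to the full variable set $\vecX \cup \vecZ$. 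This produces a \grobner basis $G$ for $I$ with respect to $\preceq$ consisting only of polynomials in $P(D_0, D_0')$, where $D_0 = 2\left(\frac{d^2}{2}+d\right)^{2^{n+m-1}} = O_{d,n+m}(1)$ and $D_0' = O_{d,d',n+m}(1)$, simply by substituting $n+m$ for $n$ in the statement of \cref{lem:grobner-bounds}.

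Next I would invoke \cref{lem:elim-ideal}: since $G$ is a \grobner basis for $I$ with respect to the elimination order $\preceq=(\preceq_\vecX,\preceq_\vecZ)$, the subset $G \cap \KK[\vecZ]$ is a \grobner basis for $I \cap \KK[\vecZ]$ with respect to $\preceq_\vecZ$. It remains to read off the complexity bounds for the polynomials in $G \cap \KK[\vecZ]$. Each such polynomial is an element of $G$, hence lies in $P(D_0, D_0')$ as a polynomial in $\KK[\vecX,\vecZ]$; a polynomial in $\KK[\vecX,\vecZ]$ that happens to lie in $\KK[\vecZ]$ and has total degree at most $D_0$ with coefficients in $C(D_0')$ is, a fortiori, a polynomial in $\KK[\vecZ]$ of degree at most $D_0$ with coefficients in $C(D_0')$, i.e.\ it lies in $P_\vecZ(D_0, D_0')$. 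So taking $D = D_0 = O_{d,n+m}(1)$ and $D' = D_0' = O_{d,d',n+m}(1)$ completes the argument, since the monomial order $\preceq_\vecZ$ was arbitrary.

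There is really no hard step here — the lemma is a bookkeeping corollary of the two cited results. The only points that require a moment's care are: (i) making sure the complexity class $P(d,d')$ in the hypothesis is interpreted relative to the full variable set $\vecX\cup\vecZ$ (which is the natural reading, since $f_1,\dots,f_k \in \KK[\vecX,\vecZ]$), so that \cref{lem:grobner-bounds} applies directly with ambient variable count $n+m$; and (ii) the trivial observation that membership in $P_{\vecX,\vecZ}(D_0,D_0')$ together with membership in $\KK[\vecZ]$ implies membership in $P_\vecZ(D_0,D_0')$, i.e.\ restricting attention to the $\vecZ$-variables does not increase degree or coefficient complexity. Neither of these is a genuine obstacle; if anything, the "main obstacle" was already dispatched in the proof of \cref{lem:grobner-bounds}, which is where the actual linear-algebra work of controlling coefficient degrees took place.
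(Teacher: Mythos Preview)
Your proposal is correct and matches the paper's proof essentially line for line: fix an elimination order $\preceq=(\preceq_\vecX,\preceq_\vecZ)$, apply \cref{lem:grobner-bounds} in the $(n+m)$-variable ring to get a \grobner basis $G\subseteq P(D,D')$ with $D=2\left(\frac{d^2}{2}+d\right)^{2^{n+m-1}}$, then invoke \cref{lem:elim-ideal} to conclude that $G\cap\KK[\vecZ]$ is the desired \grobner basis.
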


\begin{proof}
Let $\preceq_\vecX$ be a monomial order on $\mathcal{M}_\vecX$ and let $\preceq =(\preceq_\vecX,\preceq_\vecZ)$.
Let $G$ be a \grobner basis for $I$ with respect to $\preceq$.
By \cref{lem:grobner-bounds}, we may assume that $G$ consists only of polynomials in $P(D,D')$, where $D=2\left(\frac{d^2}{2}+d\right)^{2^{n+m-1}}=O_{d,n+m}(1)$ and $D'=O_{d,d',n+m}(1)$.
By \cref{lem:elim-ideal}, $G\cap\KK[\vecZ]$ is a \grobner basis for $I\cap \KK[\vecZ]$ with respect to $\preceq_\vecZ$. The lemma follows.
\end{proof}

\paragraph{Reduction algorithm.}

The reduction algorithm, also known as the division algorithm, generalizes both row reduction in Gaussian elimination and long division of univariate polynomials.
Running this algorithm on a polynomial $f$ with respect to a \grobner basis yields a unique ``remainder'' of $f$.

\begin{definition}
Let $G$ be a subset of $\KK[\vecX]$.
$f\in \KK[\vecX]$ is said to be \emph{reducible modulo $G$} if $f$ has a nonzero term that is in $\lt{G}$. Otherwise, we say $f$ is \emph{reduced modulo $G$}.
\end{definition}

\begin{lemma}\label{lem:reduction-alg}
Let $f\in\KK[\vecX]=\KK[X_1,\dots,X_n]$ and let $G=\{g_1,\dots,g_k\}$ be a \grobner basis for an ideal $I$ of $\KK[\vecX]$ with respect to a monomial order $\preceq$. 
Then:
\begin{enumerate}[(1)]
\item\label{item:reduction1} There exists unique $f_G\in\KK[\vecX]$ such that $f_G$ is reduced mod $G$ and $f-f_G=\sum_{i=1}^k h_i g_i\in I$, where $h_1,\dots,h_k\in \KK[\vecX]$.
\item\label{item:reduction2} Suppose $f,g_1,\dots,g_k\in P(d,d')$. Further assume that for $i\in [k]$, $\deg(\lt{g_i})=\deg(g_i)$, which holds if $\preceq$ is degree-compatible.
Then in \cref{item:reduction1}, $f_G\in P(d,d'')$ and $h_1,\dots,h_k$ may be chosen in $P(d,d'')$ as well, where $d''=O_{d,d',n}(1)$.
\end{enumerate}
\end{lemma}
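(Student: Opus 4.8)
The plan is to recall the standard multivariate division algorithm and track both the monomials and the coefficient complexity of the intermediate polynomials. For part~\eqref{item:reduction1}, existence follows by the usual iterative procedure: as long as the current polynomial $r$ has a nonzero term $\lc{r}\cdot m$ lying in $\lt{G}$, we pick some $g_i$ with $\lt{g_i}\mid (\lc{r}\cdot m)$, say $m = m_i \cdot \lm{g_i}$, and subtract $\frac{\lc{r}\cdot m}{\lt{g_i}}\,g_i$ from $r$, accumulating $\frac{\lc{r}\cdot m}{\lt{g_i}}$ into $h_i$. Each such step strictly decreases the leading monomial of the "not-yet-reduced part", so termination follows from the fact that $\preceq$ is a well-order on $\mathcal{M}_\vecX$ (every monomial order on a polynomial ring in finitely many variables is a well-order, by Dickson's lemma). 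The output $f_G$ is reduced mod $G$ by construction, and $f - f_G \in I$ since we only ever subtracted multiples of elements of $G$. For uniqueness: if $f_G$ and $f_G'$ are both reduced mod $G$ with $f_G - f_G' \in I$, then $\lt{f_G - f_G'} \in \lt{I} = \lt{G}$ unless $f_G - f_G' = 0$; but every term of $f_G - f_G'$ is a term of $f_G$ or $f_G'$, hence not in $\lt{G}$ — contradiction. So $f_G = f_G'$.

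For part~\eqref{item:reduction2}, the point is that all arithmetic in the division algorithm stays inside bounded-complexity polynomial classes, but the naive bound on the number of steps is not obviously finite-in-a-useful-way, so I will instead re-derive $f_G$ and the $h_i$ from a linear-algebra formulation, exactly as in the proof of \cref{lem:grobner-bounds}. First, $f_G$ has degree at most $d$: since $\preceq$ is degree-compatible (or more generally since $\deg(\lt{g_i}) = \deg(g_i)$), each reduction step subtracts a polynomial $\frac{\lc{r}\cdot m}{\lt{g_i}} g_i$ all of whose monomials have degree at most $\deg(m) \le \deg(r) \le d$ — here I use that $\deg\!\big(\frac{m}{\lm{g_i}}\big) + \deg(g_i) = \deg(m)$ precisely because $\deg(\lm{g_i}) = \deg(g_i)$ — so the degree never increases. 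Likewise the $h_i$ have degree at most $d - \min_i \deg(g_i) \le d$. Now I know abstractly that there exist $h_1,\dots,h_k$ and $f_G$ of degree at most $d$ with $f = \sum_i h_i g_i + f_G$ and with $f_G$ supported only on monomials \emph{not} divisible by any $\lm{g_i}$ (a finite, explicit set of forbidden leading monomials, determined purely by the monomial supports of the $g_i$). Treating the coefficients of the $h_i$ and of $f_G$ as unknowns in $\KK = \FF(\vecY)$, the identity $f = \sum_i h_i g_i + f_G$ together with the reducedness constraints (each coefficient of $f_G$ at a forbidden monomial is set to $0$) becomes a solvable linear system over $\KK$ with $O_{d,n}(1)$ equations and $O_{d,n}(1)$ unknowns, whose coefficients all lie in $C(d')$ (they are coefficients of $f$ and of the $g_i$, both in $P(d,d')$). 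By \cref{lem:linsol} this system has a solution with all entries in $C(d'')$ for some $d'' = O_{d,d',n}(1)$, which gives the desired $f_G, h_i \in P(d,d'')$; and by the uniqueness from part~\eqref{item:reduction1} this $f_G$ is the reduced remainder.

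The main obstacle — and the reason I route through \cref{lem:linsol} rather than through the division algorithm directly — is controlling the coefficient complexity: running the division algorithm step by step could a priori involve a number of reduction steps that is not bounded by any function of $d$ and $n$ alone (only by the well-ordering, which gives no quantitative bound), and each step multiplies and adds elements of $\KK$, so the degrees of numerators and denominators could blow up uncontrollably. Recasting the whole reduction as a single bounded-size linear system over $\FF(\vecY)$ sidesteps this entirely: the existence of \emph{some} remainder (from part~\eqref{item:reduction1}, or from the degree bound above) guarantees the system is consistent, and \cref{lem:linsol} converts consistency into a solution of bounded coefficient complexity. One small point to be careful about: I should make sure the linear system encodes \emph{all} the reducedness constraints, i.e. that the set of monomials on which $f_G$ is allowed to be supported is exactly $\{m \in \mathcal{M}_\vecX : \deg(m) \le d,\ \lm{g_i} \nmid m \text{ for all } i\}$, so that the resulting $f_G$ is genuinely reduced mod $G$ and hence, by uniqueness, equals the remainder from part~\eqref{item:reduction1}.
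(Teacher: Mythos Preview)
Your proposal is correct and follows essentially the same approach as the paper: both establish the degree bound $\deg(f_G),\deg(h_i)\le d$ via the observation that $\deg(\lt{g_i})=\deg(g_i)$ forces each reduction step to be degree-preserving, and both then recast the existence of a reduced remainder as a linear system over $\KK=\FF(\vecY)$ of size $O_{d,n}(1)$ and invoke \cref{lem:linsol} to control the coefficient complexity. The only cosmetic difference is that you treat the coefficients of $f_G$ as additional unknowns constrained to vanish on monomials in $\lt{G}$, whereas the paper takes only the coefficients of the $h_i$ as unknowns and recovers $f_G$ afterward via $\coeff_{f_G}(m')=\coeff_f(m')-\sum_{i,m}\coeff_{h_i}(m)\coeff_{g_i}(m'/m)$; these are equivalent linear systems. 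One small point you leave implicit (as does the paper): the bound ``$O_{d,n}(1)$ unknowns'' requires $k=O_{d,n}(1)$, which holds because the $g_i$ have degree at most $d$ and we may discard any $g_i$ whose leading monomial coincides with that of another, leaving at most $\binom{n+d}{d}$ elements.
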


The reduction algorithm outputting $f_G$ and $h_1,\dots,h_k$ is given as follows.

\begin{algorithm}[H]
\caption{The reduction algorithm}\label{alg:reduction}
\begin{algorithmic}
\Require $f$, $G=\{g_1,\dots,g_k\}$
\Ensure $f_G$, $h_1,\dots,h_k$
\State $f_G\gets f$; $h_1,\dots,h_k\gets 0$
\While{$f_G$ has a term $T$ divisible by $\lt{g_i}$ for some $i\in [k]$}
\State $f_G\gets f_G - \frac{T}{\lt{g_i}} g_i$
\State $h_i\gets h_i+\frac{T}{\lt{g_i}}$
\EndWhile
\State \Return $f_G$, $h_1,\dots,h_k$
\end{algorithmic}
\end{algorithm}

\cref{alg:reduction} terminates in general, even without the assumption that $\deg(\lt{g_i}) = \deg(g_i)$ for $i \in [k]$. This fact follows from Hilbert's basis theorem or Dickson's lemma~\cite{BW98}. We make the additional assumption that $\deg(\lt{g_i}) = \deg(g_i)$ for $i \in [k]$, which leads to a simpler proof that bounds the complexity of the coefficients of $f_G, h_1, \dots, h_k$.

\begin{proof}[Proof of \cref{lem:reduction-alg}]
\cref{item:reduction1} is standard (see, e.g., \cite{AL94}).

We claim $\deg(f_G)\leq \deg(f)$ and $\deg(h_ig_i)\leq \deg(f)$ for $i\in [k]$. This follows from the following induction: At the beginning of the algorithm, these bounds hold since $f_G=f$ and $h_1,\dots,h_k=0$.
In each iteration, the degree of $\Delta:=\frac{T}{\lt{g_i}} g_i$ is $\deg(T)-\deg(\lt{g_i})+\deg(g_i)=\deg(T)\leq\deg(f_G)\leq \deg(f)$. We subtract $\Delta$ from $f_G$, which does not increase the degree of $f_G$ since the degrees of the newly added monomials are bounded by $\deg(\Delta)\leq \deg(f_G)$.
Similarly, we add $\Delta/g_i$ to $h_i$, which preserves the property that $\deg(h_ig_i)\leq \deg(f)$ since $\deg((\Delta/g_i)\cdot g_i)=\deg(\Delta)\leq \deg(f)$. So the degree bounds hold during and at the end of the algorithm.

Now we prove \cref{item:reduction2}. By \cref{item:reduction1}, we already know $\deg(f_G),\deg(h_1),\dots,\deg(h_k)\leq \deg(f)\leq d$.

We have 
\[
f_G=f-\sum_{i=1}^k h_i g_i=f-\sum_{i=1}^k\sum_{m\in S_i} \coeff_{h_i}(m) m g_i,
\]
where $S_i$ is the set of monomials $m$ such that $\deg(m\cdot g_i)\leq\deg(f)$.
Therefore, for each monomial $m'$ of degree at most $\deg(f)$, we have
\begin{equation}\label{eq:coeff-f-reduction}
\coeff_{f_G}(m')=\coeff_{f}(m')-\sum_{i=1}^k\sum_{m\in S_i: m|m'} \coeff_{h_i}(m)\coeff_{g_i}(m'/m).
\end{equation}

Let $S$ be the set of monomials of degree at most $\deg(f)\leq d$, and let $T=\{(i,m):m\in S_i\}$.
Note that $k, |S|\leq \binom{n+d}{d}$ and $|T|=\sum_{i=1}^k |S_i|\leq \binom{n+d}{d}^2$.
By definition, a polynomial $g$ satisfying $\deg(g)\leq \deg(f)$ is reduced mod $G$ if and only if $\coeff_{g}(m')=0$ for all $m'\in S$. 
So the polynomial $f_G=f-\sum_{i=1}^k h_i g_i$ with $h_i=\sum_{m\in S_i} \coeff_{h_i}(m) m$ is reduced mod $G$ if and only if the column vectors $\mathbf{x}=(\coeff_{h_i}(m))_{(i,m)\in T}$ and $\mathbf{b}=(\coeff_{f}(m'))_{m'\in S}$ satisfies $A\mathbf{x}=\mathbf{b}$, where the matrix $A\in \KK^{S\times T}$ is given by
\[
A[m',(i,m)]=\begin{cases}
    \coeff_{g_i}(m'/m) & m|m'\\
    0 & \text{otherwise}.
\end{cases}
\]
By assumption, we have $A[m',(i,m)],\mathbf{b}[m']\in C(d')$ for all $(m',(i,m))\in S\times T$. By \cref{lem:linsol}, we may choose the coefficients of $h_1,\dots,h_k$ to be in $C(d_0)$ for some $d_0=O_{|S|, |T|, d'}(1)=O_{d,d',n}(1)$.

By \eqref{eq:coeff-f-reduction} and \cref{lem:frac-deg-bound}, for such $h_1,\dots,h_k$, the coefficients of $f_G$ are all in $C(d_1)$, where $d_1=O_{d,d',n}(1)$.
Choose $d'':=\max(d_0,d_1)=O_{d,d',n}(1)$.
Then $f_G,h_1,\dots,h_k\in P(d,d'')$ by definition.
\end{proof}

\begin{definition}\label{def:remainder}
$f_G$ in \cref{lem:reduction-alg} is called the \emph{remainder} of $f$ modulo $G$ with respect to $\preceq$.
\end{definition}

We also need the following variant of \cref{lem:reduction-alg} for elimination orders.

\begin{lemma}\label{lem:reduction-alg2}
Let $f\in\KK[\vecX,\vecZ]=\KK[X_1,\dots,X_n,Z_1,\dots,Z_m]$.
Let $\preceq_\vecX$ and $\preceq_\vecZ$ be degree-compatible monomial orders on $\mathcal{M}_\vecX$ and on $\mathcal{M}_\vecZ$, respectively, and let $\preceq=(\preceq_\vecX, \preceq_\vecZ)$.
Let $G=\{g_1,\dots,g_k\}$ be a \grobner basis for an ideal $I$ of $\KK[\vecX,\vecZ]$ with respect to $\preceq$. 
Suppose $f,g_1,\dots,g_k\in P(d,d')$.
Then the remainder $f_G$ of $f$ is in $P(d_1,d_2)$ for some $d_1\in O_{d,n}(1)$ and $d_2=O_{d,d',n}(1)$.
\end{lemma}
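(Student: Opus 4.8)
The plan is to follow the proof of \cref{lem:reduction-alg}, but to bound the degree of $f_G$ in $\vecX$ and in $\vecZ$ \emph{separately}: the elimination order $\preceq=(\preceq_\vecX,\preceq_\vecZ)$ is not degree-compatible, so the ``total degree never increases'' argument behind \cref{lem:reduction-alg}\,\eqref{item:reduction2} fails for $\vecZ$. Write each monomial $\mu$ uniquely as $\mu=\mu_\vecX\mu_\vecZ$ with $\mu_\vecX\in\mathcal{M}_\vecX$, $\mu_\vecZ\in\mathcal{M}_\vecZ$, and fix one (terminating) execution of \cref{alg:reduction} on $f,G$. First I would bound $\deg_\vecX$. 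Since $\preceq_\vecX$ is degree-compatible, for each $g_i$ the $\vecX$-part $(\lm{g_i})_\vecX$ is $\preceq_\vecX$-maximal among the $\vecX$-parts of the monomials of $g_i$ and hence has maximal $\vecX$-degree, so $\deg_\vecX(\lt{g_i})=\deg_\vecX(g_i)$. Consequently each step $f_G\gets f_G-\tfrac{T}{\lt{g_i}}g_i$ (with $\lm{g_i}\mid T$) satisfies $\deg_\vecX\!\big(\tfrac{T}{\lt{g_i}}g_i\big)=\deg_\vecX(T)\le\deg_\vecX(f_G)$, so $\deg_\vecX(f_G)\le\deg_\vecX(f)\le d$ throughout; likewise each monomial $\tfrac{T}{\lt{g_i}}$ added to $h_i$ has $\vecX$-degree $\le d$.

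Next, the main step: bounding $\deg_\vecZ(f_G)$. I would trace the history of each monomial. A monomial $\mu$ occurring in $f_G$ at any stage admits a chain $\mu_0,\mu_1,\dots,\mu_t=\mu$ with $\mu_0$ a monomial of $f$ and $\mu_{j+1}=\tfrac{\mu_j}{\lm{g_i}}\,\nu$ for the $g_i$ used at the relevant step and some monomial $\nu$ of $g_i$; the chain is finite because the algorithm terminates. Because $\lm{g_i}$ is the $\preceq$-largest monomial of $g_i$ and $\preceq$ compares $\vecX$-parts first, $\nu_\vecX\preceq_\vecX(\lm{g_i})_\vecX$ always, with $\nu_\vecX=(\lm{g_i})_\vecX$ forcing $\nu_\vecZ\preceq_\vecZ(\lm{g_i})_\vecZ$. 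Hence $(\mu_{j+1})_\vecX=\tfrac{(\mu_j)_\vecX}{(\lm{g_i})_\vecX}\nu_\vecX\preceq_\vecX(\mu_j)_\vecX$: if equality holds then $\nu_\vecZ\preceq_\vecZ(\lm{g_i})_\vecZ$, so by degree-compatibility of $\preceq_\vecZ$ we get $\deg_\vecZ(\mu_{j+1})\le\deg_\vecZ(\mu_j)$; if the inequality is strict then $\deg_\vecZ(\mu_{j+1})\le\deg_\vecZ(\mu_j)+\deg(g_i)\le\deg_\vecZ(\mu_j)+d$. So $\deg_\vecZ$ increases only at steps where $(\mu_j)_\vecX$ strictly decreases; since by the previous paragraph every $(\mu_j)_\vecX$ lies in the set of $\vecX$-monomials of degree $\le d$, of size $\binom{n+d}{n}$, at most $\binom{n+d}{n}-1$ such steps occur, giving $\deg_\vecZ(\mu)\le\deg_\vecZ(\mu_0)+\big(\binom{n+d}{n}-1\big)d\le\binom{n+d}{n}d$. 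The same bound holds for all intermediate $f_G$ and all $h_i$, so $\deg(f_G)\le d_1:=\big(\binom{n+d}{n}+1\big)d=O_{d,n}(1)$.

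Finally, the coefficient bound. With $\deg(f_G),\deg(h_1),\dots,\deg(h_k)\le d_1$, I would argue as in the proof of \cref{lem:reduction-alg}\,\eqref{item:reduction2}: the identity $f-f_G=\sum_ih_ig_i$ together with the condition that $f_G$ be reduced modulo $G$ becomes a linear system over $\KK$ in the (finitely many) coefficients of the $h_i$, whose coefficient entries are coefficients of $f$ and of the $g_i$ and so lie in $C(d')$; \cref{lem:linsol} supplies a solution with entries in $C(O_{d,d',n}(1))$, and \cref{lem:frac-deg-bound} then shows that $f_G=f-\sum_ih_ig_i$ has all coefficients in $C(d_2)$ for some $d_2=O_{d,d',n}(1)$ — using that a monomial of $\vecX$-degree $\le d$ and $\vecZ$-degree $O_{d,n}(1)$ has only $O_{d,n}(1)$ divisors (and, after replacing $G$ by its reduced \grobner basis, that the number of generators does not enter the bound). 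The main obstacle is the $\deg_\vecZ$ bound: because $\preceq$ is not degree-compatible, $\deg_\vecZ(f_G)$ really can exceed $\deg_\vecZ(f)$ — already for $I=\ideal{X-Z^{d}}$ with $f=X$, where $f_G=Z^{d}$ — and confining the $\vecX$-parts encountered to a set of size $\binom{n+d}{n}$ is precisely what prevents unbounded growth.
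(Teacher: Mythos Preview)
Your proof is correct and takes a genuinely different route from the paper. The paper defines a weighted degree $\wdeg(m)=W\cdot\deg_\vecX(m)+\deg_\vecZ(m)$ with $W=\max_i\deg_\vecZ(g_i)+1\le d+1$, asserts that this choice makes $\wdeg(g_i)=\wdeg(\lt{g_i})$ for every $i$, and then reruns the proof of \cref{lem:reduction-alg} verbatim with $\wdeg$ in place of $\deg$, obtaining $\deg(f_G)\le\wdeg(f_G)\le\wdeg(f)\le(d+2)d$. Your chain-tracing argument is more hands-on but also more robust: the paper's assertion $\wdeg(g_i)=\wdeg(\lt{g_i})$ is in fact problematic when two monomials of $g_i$ have $\vecX$-parts of the same degree but unequal under $\preceq_\vecX$ (e.g.\ $g=X_1+X_2Z^2$ under graded lex with $X_1\succ X_2$ has $\lm g=X_1$, yet $\wdeg(X_2Z^2)>\wdeg(X_1)$). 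Your argument handles precisely this case, because you bound the number of strict $\preceq_\vecX$-drops of $(\mu_j)_\vecX$ along a history chain by $\binom{n+d}{n}-1$, rather than relying on any single additive weight being monotone across a reduction step. One minor remark on your final paragraph: the size of the linear system involves monomials in all $n+m$ variables, so the resulting $d_2$ depends on $m$ as well; the ``$n$'' in the lemma's stated bound appears to be a typo, and the paper's own applications of the lemma use $O_{d,d',n+m}(1)$.
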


\begin{proof}
We modify the proof of \cref{lem:reduction-alg} as follows. Let $W=\max\{\deg_\vecZ(g_1),\dots,\deg_\vecZ(g_k)\} +1\leq d+1$.
Define the weighted degree of a monomial $m$ by $\wdeg(m)=W \cdot\deg_\vecX(m)+\deg_\vecZ(m)$.
For a polynomial $P\in\KK[\vecX,\vecZ]$,
define its weighted degree $\wdeg(P)$ to be the maximum weighted degree among the monomials appearing in $P$, or $-\infty$ if $P=0$. Note that $\wdeg(\cdot)$ is multiplicative. Also note that the choice of $W$ guarantees that $\wdeg(g_i)=\wdeg(\lt{g_i})$ for $i\in [k]$.

We claim $\wdeg(f_G)\leq \wdeg(f)$ and $\wdeg(h_ig_i)\leq \wdeg(f)$ for $i\in [k]$. This follows from the following induction: At the beginning of the algorithm, these bounds hold since $f_G=f$ and $h_1,\dots,h_k=0$.
In each iteration, the weighted degree of $\Delta:=\frac{T}{\lt{g_i}} g_i$ is $\wdeg(T)-\wdeg(\lt{g_i})+\wdeg(g_i)=\wdeg(T)\leq\wdeg(f_G)\leq \wdeg(f)$. We subtract $\Delta$ from $f_G$, which does not increase the weighted degree of $f_G$ since the weighted degrees of the newly added monomials are bounded by $\wdeg(\Delta)\leq \wdeg(f_G)$.
Similarly, we add $\Delta/g_i$ to $h_i$, which preserves the property that $\wdeg(h_ig_i)\leq \wdeg(f)$ since $\wdeg((\Delta/g_i)\cdot g_i)=\wdeg(\Delta)\leq \wdeg(f)$. The claim follows by induction.

The rest of the proof follows that of \cref{lem:reduction-alg}, except that we use the weighted degree in place of the standard degree.
Note that for any polynomial $P\in\KK[\vecX,\vecZ]$, we have $\deg(P)\leq \wdeg(P)\leq W\cdot \wdeg(P)$, where $W\leq d+1$. Using this fact, we can still show that $f_G\in P(d_1,d_2)$ where $d_1=O_{d,n}(1)$ and $d_2=O_{d,d',n}(1)$.
\end{proof}

\paragraph{Ring homomorphisms.}

The following lemma gives degree bounds about the data describing a ring homomorphism between quotients of polynomial rings.

\begin{lemma}\label{lem:ring-iso}
Let $f_1,\dots,f_k,g_1,\dots,g_m\in \KK[\vecX]=\KK[X_1,\dots,X_n]$ be polynomials in $P_{\vecX}(d,d')$.
Let $I=\ideal{f_1,\dots,f_k}\subseteq\KK[\vecX]$.
Let $\phi$ be the $\KK$-linear ring homomorphism 
\begin{align*}
\phi:\KK[\vecZ]&=\KK[Z_1,\dots,Z_m]\to \KK[\vecX]/I\\
Z_i&\mapsto g_i + I, \quad i=1,2,\dots, m.
\end{align*} 
Let $A$ be the image of $\phi$, i.e., $A$ is the subring of $\KK[\vecX]/I$ generated by $g_1+I,\dots,g_m+I$.
Then:
\begin{enumerate}[(1)]
\item\label{item:poly1} $\phi$ induces an isomorphism $\KK[Z_1,\dots,Z_m]/\ker(\phi)\cong A$.
\item\label{item:poly2} For any monomial order $\preceq_\vecZ$ on $\mathcal{M}_\vecZ$, $\ker(\phi)$ has a \grobner basis with respect to $\preceq_\vecZ$ consisting only of polynomials in $P_{\vecZ}(D,D')$, where $D=O_{d,n+m}(1)$ and $D'=O_{d,d',n+m}(1)$.
\item\label{item:poly3}
Let $f\in\KK[\vecX]$ such that $f+I\in A$ and $f\in P_{\vecX}(d_1,d_2)$. Then there exists $h\in \KK[\vecZ]$ such that $\phi(h)=f+I$ and $h\in P_{\vecZ}(d_3,d_4)$, where $d_3=O_{d,d_1,n+m}(1)$ and $d_4=O_{d,d',d_1,d_2,n+m}(1)$.
\end{enumerate}
\end{lemma}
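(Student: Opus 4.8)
The plan is to reduce all three parts to an elimination computation in the larger ring $\KK[\vecX,\vecZ]$. I would introduce the ideal
$J:=\ideal{f_1,\dots,f_k,\,Z_1-g_1,\dots,Z_m-g_m}\subseteq\KK[\vecX,\vecZ]$.
Quotienting $\KK[\vecX,\vecZ]$ by $\ideal{Z_1-g_1,\dots,Z_m-g_m}$ identifies it with $\KK[\vecX]$ via $Z_i\mapsto g_i$, and quotienting further by (the image of) $I$ gives a ring isomorphism $\KK[\vecX,\vecZ]/J\cong\KK[\vecX]/I$ under which $Z_i\mapsto g_i+I$ and $X_j\mapsto X_j+I$. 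With this identification, the composite $\KK[\vecZ]\hookrightarrow\KK[\vecX,\vecZ]\twoheadrightarrow\KK[\vecX,\vecZ]/J\cong\KK[\vecX]/I$ is precisely $\phi$, so $\ker\phi=J\cap\KK[\vecZ]$, and \cref{item:poly1} follows from the first isomorphism theorem applied to $\phi$. Note that the displayed generators of $J$ all lie in $P_{\vecX,\vecZ}(\max(d,1),d')$.

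For \cref{item:poly2} I would apply \cref{lem:elimination} to the ideal $J$ in $\KK[X_1,\dots,X_n,Z_1,\dots,Z_m]$, eliminating the block $\vecX$. This produces a \grobner basis of $J\cap\KK[\vecZ]=\ker\phi$ with respect to $\preceq_\vecZ$ consisting of polynomials in $P_\vecZ(D,D')$ with $D=O_{d,n+m}(1)$ and $D'=O_{d,d',n+m}(1)$, which is exactly the claim.

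For \cref{item:poly3} I would fix degree-compatible monomial orders $\preceq_\vecX,\preceq_\vecZ$ and use the elimination order $\preceq=(\preceq_\vecX,\preceq_\vecZ)$ for $\vecX$. By \cref{lem:grobner-bounds}, $J$ has a \grobner basis $G$ with respect to $\preceq$ whose members all lie in $P_{\vecX,\vecZ}(D_0,D_0')$ with $D_0=O_{d,n+m}(1)$ and $D_0'=O_{d,d',n+m}(1)$. Let $f_G$ be the remainder of $f$ modulo $G$, so $f-f_G\in J$. The key point is that $f_G\in\KK[\vecZ]$: since $f+I\in A$, choose $q\in\KK[\vecZ]$ with $\phi(q)=f+I$; then $f-q(g_1,\dots,g_m)\in I$ and $q-q(g_1,\dots,g_m)\in\ideal{Z_1-g_1,\dots,Z_m-g_m}$, so $f-q\in J$ and hence $f_G-q\in J$. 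Let $q_G$ be the remainder of $q$ modulo $G$; since $\preceq$ is an elimination order for $\vecX$, any $g\in G$ whose leading monomial divides a monomial in $\mathcal{M}_\vecZ$ lies entirely in $\KK[\vecZ]$, so running \cref{alg:reduction} on $q\in\KK[\vecZ]$ keeps the intermediate polynomial in $\KK[\vecZ]$, and thus $q_G\in\KK[\vecZ]$. Now $f_G-q_G\in J$ and is reduced modulo $G$ (being a difference of two polynomials reduced modulo $G$), whereas any nonzero element of $J$ is reducible modulo $G$ because its leading term lies in $\lt{G}=\lt{J}$; hence $f_G=q_G\in\KK[\vecZ]$. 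Applying the quotient map $\KK[\vecX,\vecZ]\to\KK[\vecX]/I$ to $f-f_G\in J$ then gives $\phi(f_G)=f+I$, so $h:=f_G$ is a valid preimage. Finally, since $f\in P_{\vecX}(d_1,d_2)\subseteq P_{\vecX,\vecZ}(d_1,d_2)$ and every element of $G$ lies in $P_{\vecX,\vecZ}(D_0,D_0')$, an application of \cref{lem:reduction-alg2} to the elimination order $\preceq$ bounds $h=f_G$ in $P_\vecZ(d_3,d_4)$ with $d_3=O_{d,d_1,n+m}(1)$ and $d_4=O_{d,d',d_1,d_2,n+m}(1)$, using $D_0=O_{d,n+m}(1)$ and $D_0'=O_{d,d',n+m}(1)$ to absorb the \grobner-basis parameters.

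The only step that needs genuine care is the claim $f_G\in\KK[\vecZ]$ in \cref{item:poly3}: it is what forces the use of an elimination order for $\vecX$ and relies on the uniqueness of the reduced representative modulo a \grobner basis. Conceptually, reduction modulo $G$ computes the normal form in $\KK[\vecX,\vecZ]/J\cong\KK[\vecX]/I$, and the normal form of any element of the subalgebra generated by the $Z_i$'s must be supported only on $\vecZ$-monomials. Everything else is bookkeeping with the above isomorphism together with routine invocations of the degree and coefficient-degree bounds from \cref{sec:comp}.
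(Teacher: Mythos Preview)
Your proposal is correct and follows essentially the same approach as the paper: introduce the ideal $J=\ideal{f_1,\dots,f_k,Z_1-g_1,\dots,Z_m-g_m}$, identify $\ker\phi=J\cap\KK[\vecZ]$, apply \cref{lem:elimination} for \cref{item:poly2}, and for \cref{item:poly3} take $h$ to be the remainder of $f$ modulo a \grobner basis of $J$ with respect to an elimination order and bound it via \cref{lem:reduction-alg2}. The paper cites \cite[Theorems~2.4.10 and 2.4.11]{AL94} for $\ker\phi=J\cap\KK[\vecZ]$ and $f_G\in\KK[\vecZ]$ with $\phi(f_G)=f+I$, whereas you supply direct arguments for these facts; otherwise the proofs coincide.
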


\begin{proof}
\cref{item:poly1} holds by the first isomorphism theorem.
Let $J=\ideal{f_1,\dots,f_k, Z_1-g_1,\dots,Z_m-g_m}\subseteq\KK[\vecX, \vecZ]$.
Then $\ker(\phi)=J\cap \KK[\vecZ]$ \cite[Theorem~2.4.10]{AL94}. \cref{item:poly2} then follows from \cref{lem:elimination}.

It remains to prove \cref{item:poly3}.
Let $\preceq_\vecX$ and $\preceq_\vecZ$ be degree-compatible monomial orders on $\mathcal{M}_\vecX$ and on $\mathcal{M}_\vecZ$, respectively.
Let $\preceq=(\preceq_\vecX, \preceq_\vecZ)$.
Let $G$ be a \grobner basis of $J$ with respect to $\preceq$.
By \cref{lem:grobner-bounds}, we may assume that $G$ consists only of polynomials in $P(d_5,d_6)$, where $d_5=O_{d,n+m}(1)$ and $d_6=O_{d,d',n+m}(1)$.

By assumption, $f+I$ is in $A$ and hence in the image of $\phi$.
Let $h$ be the remainder of $f$ modulo $G$ with respect to $\preceq$. Then $\phi(h)=f+I$ by \cite[Theorem~2.4.11]{AL94}.
Moreover, $h\in P_{\vecZ}(d_3,d_4)$ for some $d_3=O_{\max(d_1,d_5),n+m}(1)=O_{d,d_1,n+m}(1)$ and $d_4=O_{\max(d_1,d_5),\max(d_2,d_6),n+m}(1)=O_{d,d',d_1,d_2,n+m}(1)$ by \cref{lem:reduction-alg2}.
\end{proof}

\paragraph{Primitive element theorem.} We now prove a quantitative form of the primitive element theorem, modifying an argument in \cite{ZS75}.

\begin{theorem}[Primitive element theorem, quantitative form]\label{thm:primitive}
Let $f_1,f_2,\dots,f_k\in\KK[\vecX]=\KK[X_1,\dots,X_n]$ be polynomials in $P(d,d')$ that generate a zero-dimensional ideal I.
Suppose $\mm$ is a maximal ideal of $\KK[\vecX]$ containing $I$ and $\LL:=\KK[\vecX]/\mm$ is a finite separable extension of $\KK$. 
Let $\alpha_i:=X_i+\mm\in\LL$ for $i\in [n]$, so that $\LL=\KK(\alpha_1,\dots,\alpha_n)$.
Then there exists a nonzero polynomial $Q\in\LL[\vecZ]=\LL[Z_1,\dots,Z_n]$ such that for every nonzero $\vecc=(c_1,\dots,c_n)\in\FF^n$ satisfying $Q(\vecc)\neq 0$, there exist  $P_0^\vecc,P_1^\vecc,\dots,P_n^\vecc\in\KK[T]$ such that the following hold:
\begin{enumerate}[(1)]
\item\label{item:PET1} 
Let $\beta_\vecc=\sum_{i=1}^n c_i \alpha_i\in\LL$.
Then $P_0^\vecc(\beta_\vecc)\neq 0$ and $\alpha_i=\frac{P_i^\vecc(\beta_\vecc)}{P_0^\vecc(\beta_\vecc)}$ for $i\in [n]$. In particular, $\LL=\KK(\beta_\vecc)$. 
\item\label{item:PET2} 
$P_0^\vecc,\dots, P_n^\vecc\in P(d^n,d_0)$ for some $d_0=O_{d,d',n}(1)$.
\end{enumerate}
In particular, the above hold for almost all $\vecc\in\FF^n$.
\end{theorem}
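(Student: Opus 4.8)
The plan is to realize the primitive element as a \emph{generic} $\FF$-linear combination of the $\alpha_i$ and then specialize the generic parameters. Introduce fresh indeterminates $\vecZ=(Z_1,\dots,Z_n)$ and work over $\KK':=\KK(\vecZ)=\FF(\vecY,\vecZ)$; all bounds in \cref{sec:comp} are independent of the number of indeterminates in the base field, so passing from $\FF(\vecY)$ to $\FF(\vecY,\vecZ)$ costs nothing. Since $\mm$ is maximal and $\LL/\KK$ is finite, $\mm':=\mm\,\KK'[\vecX]$ is again maximal, $\LL':=\KK'[\vecX]/\mm'\cong\LL\otimes_\KK\KK'$ is a field, $\LL'/\KK'$ is finite and separable (separability is preserved under a field base change), and $N:=[\LL':\KK']=[\LL:\KK]\le d^n$ by \cref{lem:bezout2}. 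Put $\beta:=\sum_{i=1}^n Z_i\alpha_i\in\LL'$, abusing notation to write $\alpha_i$ also for $X_i+\mm'\in\LL'$.

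First I would show $\beta$ is a primitive element of $\LL'/\KK'$. By separability there are exactly $N$ distinct $\KK'$-algebra embeddings $\sigma_1,\dots,\sigma_N\colon\LL'\hookrightarrow\overline{\KK'}$, and restriction $\tau\mapsto\tau|_\LL$ is a bijection from these onto the $N$ distinct $\KK$-embeddings $\LL\hookrightarrow\overline{\KK}\subseteq\overline{\KK'}$; in particular each $\sigma_u$ fixes every $Z_j$. Fix $u\ne v$. Since $\LL=\KK(\alpha_1,\dots,\alpha_n)$ and $\sigma_u|_\LL\ne\sigma_v|_\LL$, there is an $i$ with $\sigma_u(\alpha_i)\ne\sigma_v(\alpha_i)$, so $\sigma_u(\beta)-\sigma_v(\beta)=\sum_i Z_i\bigl(\sigma_u(\alpha_i)-\sigma_v(\alpha_i)\bigr)$ is a nonzero linear form in $\vecZ$ with coefficients in $\overline{\KK}$, hence a nonzero element of $\overline{\KK'}$. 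Thus the $N$ conjugates $\sigma_u(\beta)$ are pairwise distinct, so $[\KK'(\beta):\KK']\ge N$ and therefore $\KK'(\beta)=\LL'$. Consequently $1,\beta,\dots,\beta^{N-1}$ is a $\KK'$-basis of $\LL'$, and there are unique $r_{i,t}\in\KK'$ ($1\le i\le n$, $0\le t<N$) with $\alpha_i=\sum_{t=0}^{N-1}r_{i,t}\,\beta^t$.

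The technical core is to bound the complexity of the $r_{i,t}$ using the \grobner-basis machinery of \cref{sec:comp} over $\KK'$. Fix a degree-compatible monomial order and a \grobner basis $G$ of $\mm$ with monic leading terms; by \cref{lem:grobner-bounds} we may take $G\subseteq P(O_{d,n}(1),O_{d,d',n}(1))$, and $G$ remains a \grobner basis of $\mm'$ with the same set of standard monomials $e_1,\dots,e_N$. A pigeonhole argument bounds their degrees: for each variable $X_i$, one of $1,X_i,\dots,X_i^N$ must lie in the initial ideal (else there would be $N+1$ linearly independent standard monomials in the $N$-dimensional space $\LL'$), so every standard monomial has degree at most $n(N-1)\le nd^n=O_{d,n}(1)$. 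Reducing $X_i$ and $\bigl(\sum_i Z_iX_i\bigr)^t$ for $0\le t<N$ modulo $G$ via \cref{lem:reduction-alg} writes $\alpha_i$ and $\beta^t$ in the basis $\{e_j\}$ with all coefficients in $C(O_{d,d',n}(1))$; reading off coordinates turns $\alpha_i=\sum_t r_{i,t}\beta^t$ into an $N\times N$ linear system over $\KK'$ whose entries lie in $C(O_{d,d',n}(1))$, so \cref{lem:linsol} yields a solution with $r_{i,t}\in C(O_{d,d',n}(1))$. Let $R_0\in\FF[\vecY,\vecZ]\setminus\{0\}$ be a common denominator of all the $r_{i,t}$; then $\deg_{\vecY,\vecZ}(R_0)=O_{d,d',n}(1)$ and each $R_0\,r_{i,t}$ lies in $\FF[\vecY,\vecZ]$ with degree $O_{d,d',n}(1)$. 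Multiplying $\alpha_i=\sum_t r_{i,t}\beta^t$ by $R_0$ gives $R_0\,\alpha_i=\sum_{t=0}^{N-1}(R_0 r_{i,t})\,\beta^t$, an identity whose two sides both lie in the polynomial ring $\LL[\vecZ]$ (recall $\beta=\sum_i Z_i\alpha_i\in\LL[\vecZ]$), hence it holds in $\LL[\vecZ]$.

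Finally I would specialize. Take $Q(\vecZ):=R_0(\vecY,\vecZ)$, a nonzero element of $\FF[\vecY,\vecZ]\subseteq\LL[\vecZ]$; since $\FF$ is infinite, $Q(\vecc)\ne0$ for almost all $\vecc\in\FF^n$. For such $\vecc$, apply the substitution $\vecZ\mapsto\vecc$ to the identity in $\LL[\vecZ]$: it sends $\beta\mapsto\beta_\vecc=\sum_i c_i\alpha_i$ and yields $R_0(\vecY,\vecc)\,\alpha_i=\sum_{t=0}^{N-1}(R_0 r_{i,t})|_{\vecZ=\vecc}\,\beta_\vecc^{\,t}$ in $\LL$. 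Since $R_0(\vecY,\vecc)\in\KK^\times$, setting $P_0^\vecc:=R_0(\vecY,\vecc)$ (a constant in $T$) and $P_i^\vecc(T):=\sum_{t=0}^{N-1}(R_0 r_{i,t})|_{\vecZ=\vecc}\,T^t\in\KK[T]$ gives $P_0^\vecc(\beta_\vecc)=R_0(\vecY,\vecc)\ne0$ and $\alpha_i=P_i^\vecc(\beta_\vecc)/P_0^\vecc(\beta_\vecc)$, which is \cref{item:PET1}; in particular $\LL=\KK(\alpha_1,\dots,\alpha_n)\subseteq\KK(\beta_\vecc)\subseteq\LL$, so $\LL=\KK(\beta_\vecc)$. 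For \cref{item:PET2}, $\deg_T P_i^\vecc\le N-1<d^n$ and $\deg_T P_0^\vecc=0$, while each coefficient $(R_0 r_{i,t})|_{\vecZ=\vecc}$ is obtained from a polynomial in $\FF[\vecY,\vecZ]$ of degree $O_{d,d',n}(1)$ by substituting values for $\vecZ$, hence is a polynomial in $\FF[\vecY]$ of degree $d_0=O_{d,d',n}(1)$; thus $P_0^\vecc,\dots,P_n^\vecc\in P(d^n,d_0)$. I expect the main obstacle to be the middle step: faithfully propagating the coefficient-degree bounds through the $N$ iterated products defining $\beta^t$ and through the linear solve so that everything stays $O_{d,d',n}(1)$ — one must track that a $C(\cdot)$-class is closed under addition only at the cost of adding the degree bounds, so the bound grows (as a bounded tower) with $N\le d^n$, which is harmless since $N$ is a fixed parameter but has to be accounted for honestly. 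The conceptual points — that separability forces a generic linear combination to be primitive, and that clearing denominators jointly in $\vecY$ and $\vecZ$ before specializing keeps the $\vecY$-complexity bounded — are comparatively short.
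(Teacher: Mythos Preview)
Your argument is correct, but it takes a genuinely different route from the paper's proof. The paper computes, via elimination in $\KK[\vecX,T,\vecZ]$, a polynomial $g(T,\vecZ)$ whose normalization is the minimal polynomial of the generic element $\beta(\vecZ)=\sum_i Z_i\alpha_i$ over $\KK(\vecZ)$; it then differentiates the identity $g(\beta(\vecZ),\vecZ)=0$ with respect to each $Z_i$ and applies the chain rule to obtain $\alpha_i\cdot g_0(\beta(\vecZ),\vecZ)+g_i(\beta(\vecZ),\vecZ)=0$ with $g_0=\partial g/\partial T$ and $g_i=\partial g/\partial Z_i$, and finally sets $Q(\vecZ)=g_0(\beta(\vecZ),\vecZ)$, $P_0^{\vecc}(T)=g_0(T,\vecc)$, $P_i^{\vecc}(T)=-g_i(T,\vecc)$. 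Separability enters through $F'(\beta(\vecZ))\neq 0$, which makes $Q$ nonzero.

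By contrast, you bypass the minimal polynomial and differentiation entirely: you reduce $X_i$ and $(\sum_j Z_jX_j)^t$ modulo a \grobner basis of $\mm$ to write $\alpha_i$ and $\beta^t$ in the standard-monomial basis, solve the resulting $N\times N$ linear system, and then clear a common denominator $R_0\in\FF[\vecY,\vecZ]$ before specializing $\vecZ\mapsto\vecc$. Your $P_0^{\vecc}$ is a nonzero constant in $T$ (an element of $\KK^\times$) rather than a degree-$(\le d^n{-}1)$ polynomial, and your $Q=R_0$ already lies in $\KK[\vecZ]\subseteq\LL[\vecZ]$. Separability enters earlier, in the conjugate-counting argument showing $\beta$ is primitive over $\KK(\vecZ)$, which makes the linear system solvable. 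The paper's route is more streamlined (one elimination plus a derivative trick), while yours is more elementary linear algebra but needs the extra pigeonhole bound on standard-monomial degrees and more careful bookkeeping of coefficient growth through $N\le d^n$ reductions and the Cramer-style solve; both land at the same $P(d^n,O_{d,d',n}(1))$ bound.
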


\begin{proof}
Let $\KK^*=\KK(\vecZ)=\KK(Z_1,\dots,Z_n)$ and $\LL^*=\LL(\vecZ)=\LL(Z_1,\dots,Z_n)=\KK^*(\alpha_1,\dots,\alpha_n)$. 
By assumption, $\alpha_1,\dots,\alpha_n$ are separable over $\KK$. So they are also separable over $\KK^*$.
It follows that $\LL^*$ is a finite separable extension of $\KK^*$.

Let $\widehat{\alpha_i}=X_i+I\in \KK[X]/I$ for $i\in [n]$.
Let $\widehat{\beta}(\vecZ)=Z_1\widehat{\alpha_1}+Z_2\widehat{\alpha_2}+\dots+Z_n\widehat{\alpha_n}\in (\KK[X]/I)(\vecZ)$.
We have $\dim_{\KK} (\KK[X]/I)\leq d^n$ by \cref{lem:bezout2}.
So $\dim_{\KK^*}((\KK[X]/I)(\vecZ))\leq d^n$.
Therefore, $\widehat{\beta}(\vecZ)$ has a minimal polynomial over $\KK^*$, which we denote by $\widehat{F}$, and its degree is at most $d^n$.

We may construct $\widehat{F}$ as follows.
Let $\preceq_{T,\vecZ}$ be an elimination order for $T$ on $\mathcal{M}_{T,\vecZ}$, and let $\preceq=(\preceq_\vecX, \preceq_{T,\vecZ})$ be an elimination order for $\vecX$ on $\mathcal{M}_{\vecX,T,\vecZ}$.
Let $J$ be the ideal 
\[
J=\ideal{f_1,\dots,f_k,T-(Z_1X_1+\dots+Z_nX_n)}
\] of $\KK[\vecX, T, \vecZ]$.
The ideal $J$ is the preimage of the ideal $\ideal{T-\widehat{\beta}(\vecZ)}$ of $(\KK[\vecX]/I)[T,\vecZ]$ under the natural quotient map $ \KK[\vecX, T, \vecZ]\to (\KK[\vecX]/I)[T,\vecZ]$.
Let $G$ be a \grobner basis for $J$ with respect to $\preceq$.
Then $G\cap \KK[T,\vecZ]$ is a \grobner basis for $J\cap \KK[T,\vecZ]$ with respect to $\preceq_{T,\vecZ}$ by \cref{lem:elim-ideal}.
Choose $\widehat{g}\in G\cap \KK[T,\vecZ]$ such that $\lm{\widehat{g}}$ has the form $m T^{e}$, where $m\in \mathcal{M}_\vecZ$ and $e$ is minimized. 
By the choice of $\preceq$, we have $\widehat{g}\in \KK[T,\vecZ]$. Note that such $\widehat{g}$ exists since it can be obtained by clearing the denominators of the coefficients of $\widehat{F}(T)$. 
By \cref{lem:grobner-bounds}, we may assume $\deg(\widehat{g})\in P(D,d_1)$ with $D=O_{d,n}(1)$ and $d_1=O_{d,d',n}(1)$.

Write $\widehat{g}=\sum_{i=0}^e h_i T^i$ with $h_i\in \KK[\vecZ]$ for $i=0,1,\dots,e$. Then $h_e\neq 0$. As $e$ is minimized and $J$ is the preimage of $\ideal{T-\widehat{\beta}(\vecZ)}$ in $\KK[\vecX, T, \vecZ]$, $\widehat{g}/h_e=\sum_{i=0}^e (h_i/h_e) T^i$ is precisely the minimal polynomial $\widehat{F}(T)$ of $\widehat{\beta}(\vecZ)$ over $\KK^*$. 

Consider
\[
\beta(\vecZ):=Z_1 \alpha_1+Z_2 \alpha_2+\dots+Z_n \alpha_n\in \LL[\vecZ]\subseteq \LL^*.
\]
Let $F(T)\in\KK^*[T]$ be the minimal polynomial of $\beta(\vecZ)$ over $\KK^*$. As $\mm\supseteq I$, $F$ is a factor of $\widehat{F}$. In particular, $\deg(F)\leq \deg(\widehat{F})\leq d^n$. 
As $\widehat{g}/\widehat{F}\in(\KK^*)^\times$, $F$ is a factor of $\widehat{g}$, both viewed as polynomials over $\KK^*$. So $\widehat{g}$ has a factor $g\in \KK[T, \vecZ]$ such that $g/F\in (\KK^*)^\times$. Moreover, by \cref{lem:factor}, we may assume $g\in P(D, d_2)$ where $d_2=O_{D, d_1, n}(1)=O_{d,d',n}(1)$.

As $g/F\in (\KK^*)^\times$, we have
\begin{equation}\label{eq:g-is-zero}
g(Z_1 \alpha_1+Z_2 \alpha_2+\dots+Z_n \alpha_n, Z_1,\dots,Z_n)=g(\beta(\vecZ), \vecZ)=0.
\end{equation}
Taking the partial derivatives of \eqref{eq:g-is-zero} with respect to $Z_1,\dots,Z_n$ and applying the chain rule for multivariate polynomials, we obtain
\begin{equation}\label{eq:relations}
\alpha_i g_0(\beta(\vecZ), \vecZ)+g_i(\beta(\vecZ),\vecZ)=0,\quad i=1,2,\dots,n,
\end{equation}
where $g_0(T,\vecZ):=\frac{\partial g(T,\vecZ)}{\partial T}$ and $g_i(T,\vecZ):=\frac{\partial g(T,\vecZ)}{\partial Z_i}$ for $i\in [n]$. 

As $F(X)$ is the minimal polynomial of $\beta(\vecZ)\in\LL^*$ over $\KK^*$ and $\LL^*$ is a finite separable extension of $\KK^*$, we have $F'(\beta(\vecZ))\neq 0$. 
And as $g_0(T,\vecZ)=\frac{\partial g(T,\vecZ)}{\partial T}$ and $g/F\in (\KK^*)^\times$, we have
\[
g_0(\beta(\vecZ),\vecZ)= (g/F) \cdot F'(\beta(\vecZ))\neq 0.
\]
Define $Q(\vecZ)=g_0(\beta(\vecZ),\vecZ)\in\LL[\vecZ]$.
For every $\vecc=(c_1,\dots,c_n)\in\FF^n$, define  $P_0^\vecc(T)=g_0(T,\vecc)\in \KK[T]$ and $P_i^\vecc(T)=-g_i(T,\vecc)\in\KK[T]$ for $i\in [n]$.

Consider $\vecc\in\FF^n$ such that $Q(\vecc)\neq 0$.
Note that $\beta(\vecc)=\sum_{i=1}^n c_i \alpha_i=\beta_{\vecc}$ and $P_0^\vecc(\beta_\vecc)=g_0(\beta(\vecc), \vecc)=Q(\vecc)\neq 0$.
We have by \eqref{eq:relations} that
\[
\alpha_i=-\frac{g_i(\beta(\vecc),\vecc)}{g_0(\beta(\vecc),\vecc)}=\frac{P_i^\vecc(\beta_\vecc)}{P_0^\vecc(\beta_\vecc)},\quad i=1,2,\dots,n,
\]
By definition, $P_0^\vecc(T)=g_0(T,\vecc)$ and $P_i^\vecc(T)=-g_i(T,\vecc)$ for $i\in [n]$.
So we have $\deg(P_i^\vecc)\leq \deg_T(g_i)\leq \deg_T(g)=e\leq d^n$ for $i\in 0,1,\dots,n$. 
As already noted above, $\deg(g)\in P(D,d_2)$, so the coefficients of $g$ are all in $C(d_2)$.
For $i=0,1\dots,n$ and $j=0,1,\dots,\deg(P_i^\vecc)$,
the coefficient of $T^j$ in $P_i^\vecc$ is a linear combination of the coefficients of the monomials of degree $j$ in $T$ of $g_i$ over $\FF$, 
and $g_i$ has at most $\binom{n+D}{D}$ such monomials since $\deg(g_i)\leq \deg(g)\leq D$.
Therefore, by \cref{lem:frac-deg-bound}, the coefficients of $P_0^\vecc,\dots,P_n^\vecc$ are in $C(d_0)$, where $d_0=\binom{n+D}{D}d_2=O_{d,d',n}(1)$.
It follows that $P_0^\vecc,\dots, P_n^\vecc\in P(d^n,d_0)$.
\end{proof}

\paragraph{Extracting a maximal ideal.}

Let $I$ be a zero-dimensional ideal of $\KK[\vecX]$, whose generators are given.
We need to solve the problem of finding the generators of a maximal ideal $\mm$ containing $I$ or, more precisely, to bound the complexity of these generators. This problem is a special case of finding the radical and the primary decomposition of a zero-dimensional ideal, with coefficient bounds. The following lemma provides a direct solution.

\begin{lemma}\label{lem:extract}
Let $f_1,\dots,f_k\in\KK[\vecX]=\KK[X_1,\dots,X_n]$ be polynomials in $P(d,d')$.
Suppose $I=\ideal{f_1,\dots,f_k}$ is a zero-dimensional ideal of $\KK[\vecX]$.
Let $\mm$ be a maximal ideal of $\KK[\vecX]$ containing $I$ such that $\KK[\vecX]/\mm$ is a finite separable extension of $\KK$. 
Then $\mm$ is generated by polynomials in $P(d^n+1,d'')$, where $d''=O_{d,d',n}(1)$.
\end{lemma}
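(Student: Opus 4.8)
### Proof proposal for Lemma~\ref{lem:extract}

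The plan is to reduce the problem to a univariate situation via the quantitative primitive element theorem (\cref{thm:primitive}) and then extract the relevant irreducible factor, carefully tracking coefficient complexity at each stage. First I would observe that since $I$ is zero-dimensional and generated by polynomials in $P(d,d')$, \cref{lem:bezout2} gives $[\KK[\vecX]/\mm : \KK] \le d^n$, and by \cref{lem:grobner-bounds} (applied with an elimination order for $\vecX$, as in the proof of \cref{thm:primitive}) we can extract the minimal polynomial $F(T) \in \KK[T]$ of a generic linear form $\beta(\vecZ) = \sum_i Z_i \alpha_i$ together with the rational expressions $\alpha_i = P_i^\vecc(\beta_\vecc)/P_0^\vecc(\beta_\vecc)$. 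Concretely, by \cref{thm:primitive} there is a nonzero $Q \in \LL[\vecZ]$ such that for almost all $\vecc \in \FF^n$ (and one such $\vecc$ exists since $\FF$ is infinite), the element $\beta_\vecc = \sum_i c_i \alpha_i$ is a primitive element for $\LL = \KK[\vecX]/\mm$ over $\KK$, with $P_0^\vecc, \dots, P_n^\vecc \in P(d^n, d_0)$ for some $d_0 = O_{d,d',n}(1)$, and $P_0^\vecc(\beta_\vecc) \ne 0$.

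Next I would produce the minimal polynomial $F_\vecc(T)$ of $\beta_\vecc$ over $\KK$, which is the irreducible factor of the (accessible) polynomial $g(T,\vecc)$ — obtained by specializing $\vecZ \gets \vecc$ in the \grobner basis element $g(T,\vecZ)$ from the proof of \cref{thm:primitive} — that $\beta_\vecc$ is a root of. By that proof, $g(T, \vecc) \in P(d^n, d'')$ for some $d'' = O_{d,d',n}(1)$, and by \cref{lem:factor} its irreducible factor $F_\vecc$ (after scaling) lies in $P(d^n, d''')$ for some $d''' = O_{d,d',n}(1)$ as well. Then $\mm$ is the kernel of the $\KK$-algebra map $\KK[\vecX] \to \KK[T]/\ideal{F_\vecc(T)}$ sending $X_i \mapsto P_i^\vecc(T)/P_0^\vecc(T) \bmod F_\vecc$. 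To turn this into explicit generators: clear denominators by noting $P_0^\vecc$ is invertible mod $F_\vecc$, so there is $R(T) \in \KK[T]$ of degree $< \deg F_\vecc \le d^n$ with $R(T) P_0^\vecc(T) \equiv 1 \pmod{F_\vecc}$; computing $R$ amounts to solving a linear system over $\KK$ of bounded size with coefficients in $C(d''')$, hence $R \in P(d^n, O_{d,d',n}(1))$ by \cref{lem:linsol}. Then set $h_i(\vecX) := X_i - (R \cdot P_i^\vecc)(\text{something})$ — more precisely, I claim
\[
\mm = \ideal{\, X_1 - \rho_1(\beta_\vecc),\ \dots,\ X_n - \rho_n(\beta_\vecc),\ F_\vecc(\beta_\vecc) \,}
\]
does not quite make sense since $\beta_\vecc$ is not a variable; instead one works the other way. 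Let $\gamma := \sum_i c_i X_i \in \KK[\vecX]$ and take the generators to be $F_\vecc(\gamma)$ together with $P_0^\vecc(\gamma) \cdot X_i - P_i^\vecc(\gamma)$ for $i \in [n]$. One checks that these all lie in $\mm$ (they vanish in $\LL$ by \cref{thm:primitive}\,\eqref{item:PET1}), and conversely that the ideal $\mm'$ they generate is maximal: modulo $\mm'$, each $X_i$ is forced to equal $P_i^\vecc(\gamma)/P_0^\vecc(\gamma)$ using the invertibility of $P_0^\vecc(\gamma)$ (which follows since $F_\vecc \mid$ the resultant-type relation and $\gcd(P_0^\vecc, F_\vecc) = 1$ as $P_0^\vecc(\beta_\vecc) \ne 0$), so $\KK[\vecX]/\mm' \cong \KK[T]/\ideal{F_\vecc(T)}$, a field; hence $\mm' = \mm$. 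Each generator has degree at most $\deg_T(F_\vecc) \cdot \deg(\gamma) \le d^n$ plus lower-order contributions, comfortably bounded by $d^n + 1$, and coefficients in $C(d'')$ for $d'' = O_{d,d',n}(1)$ by \cref{lem:frac-deg-bound} (composition of a bounded-degree univariate polynomial with the linear form $\gamma$ only multiplies coefficient complexity by a bounded factor).

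The step I expect to be the main obstacle is verifying the degree bound $d^n + 1$ precisely rather than just $O_{d,n}(1)$ — the statement asks for this specific value, so one must be slightly careful that the generators $P_0^\vecc(\gamma) X_i - P_i^\vecc(\gamma)$ have degree exactly $\max(\deg P_0^\vecc + 1, \deg P_i^\vecc) \le d^n + 1$, using $\deg P_j^\vecc \le d^n$ from \cref{thm:primitive}\,\eqref{item:PET2}, and that $\deg F_\vecc(\gamma) \le d^n$. A secondary subtlety is that the generic choice of $\vecc$ must be made effective: the "almost all" in \cref{thm:primitive} and in the step extracting $F_\vecc$ is governed by finitely many nonzero polynomial conditions over $\FF$, which is fine for the existence statement in this lemma since $\FF$ is infinite, but one should note that in the eventual PIT application this is where the bounded-degree PIT of \cite{KS01} will be invoked. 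Finally, one should double-check that separability of $\LL/\KK$ is genuinely used — it is, precisely to guarantee $F'(\beta_\vecc) \ne 0$ so that $P_0^\vecc(\beta_\vecc) = h_e(\vecc) F'(\beta_\vecc) \ne 0$, which is what makes the denominators invertible.
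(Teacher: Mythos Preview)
Your proposal is correct and takes essentially the same approach as the paper: invoke \cref{thm:primitive} for $\vecc$ and the $P_i^\vecc$, extract the minimal polynomial of $\beta_\vecc$ as a factor (via \cref{lem:factor}) of a bounded-complexity polynomial, and generate $\mm$ by that minimal polynomial evaluated at $\gamma=\sum_i c_i X_i$ together with the relations $P_0^\vecc(\gamma)X_i - P_i^\vecc(\gamma)$. The only cosmetic difference is that the paper obtains the polynomial to factor by computing $I\cap\KK[T]$ directly via \cref{lem:elimination}, whereas you specialize the Gr\"obner element $g(T,\vecZ)$ from the proof of \cref{thm:primitive} at $\vecZ=\vecc$; both routes work, and your detour about computing an explicit inverse $R$ of $P_0^\vecc$ is unnecessary (as you yourself note).
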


\begin{proof}
We follow the approach of \cite{GTZ88}.
Let $\alpha_i=X_i+\mm\in \KK[\vecX]/\mm$ for $i\in [n]$. 
By \cref{thm:primitive}, there exist $\vecc=(c_1,\dots,c_n)\in\FF^n$ and $P_0^\vecc,P_1^\vecc,\dots,P_n^\vecc\in\KK[T]$ such that
$\beta_\vecc=\sum_{i=1}^n c_i \alpha_i$ 
satisfies $P_0^\vecc(\beta_\vecc)\neq 0$ and $\alpha_i=\frac{P_i^\vecc(\beta_\vecc)}{P_0^\vecc(\beta_\vecc)}$ for $i\in [n]$.
Moreover, $P_0^\vecc,\dots,P_n^\vecc\in P(d^n, d_0)$ for some $d_0=O_{d,d',n}(1)$.

Let $T=\sum_{i=1}^n c_i X_i\in \KK[\vecX]$.
Let $\mm_0=\mm\cap \KK[T]$, which is a prime ideal of $\KK[T]$. Let $I_0=I\cap \KK[T]$.
As $\KK[T]$ is a PID, $\mm_0$ and $I_0$ are generated by polynomials $g,h\in\KK[T]$ over $\KK$, respectively, and $g|h$. 
The inclusion $\KK[T]\hookrightarrow \KK[\vecX]$ induces an inclusion $\KK[T]/I_0\hookrightarrow\KK[\vecX]/I$.
So we have
\[
\deg(g)\leq \deg(h)=\dim_\KK(\KK[T]/I_0)\leq \dim_\KK(\KK[\vecX]/I)\leq d^n,
\]
where the last inequality holds by \cref{lem:bezout2}.
By \cref{lem:elimination} and the fact that $I_0=\ideal{h}$, we may assume that $h\in P(d^n, d_1)$ for some $d_1=O_{d,d',n}(1)$.\footnote{We may view $h$ as a polynomial in $T$ and another $n-1$ variables by performing an $\FF$-linear transformation on the system of coordinates, which does not affect the degrees of the numerators and denominators of the coefficients.}
As $g|h$, by \cref{lem:factor}, we may assume that $g\in P(d^n, d_2)$ for some $d_2=O_{d,d',n}(1)$.

For $i\in [n]$, let $g_i=P_0^\vecc(T) X_i-P_i^\vecc(T)\in\KK[\vecX]$, whose degree is at most $d^n+1$.
And the coefficients of each $g_i$ are in $C(d_0)$ since the same holds for $P_0^\vecc$ and $P_i^\vecc$.
So $g_1,\dots,g_n\in P(d^n+1, d_0)$.
Let $d''=\max(d_0,d_2)$.
The $g,g_1,\dots,g_n\in P(d^n+1,d'')$.

For $i\in [n]$, the image of $g_i$ in $\KK[\vecX]/\mm$ is $P_0^\vecc(\beta_\vecc)\alpha_i-P_i^\vecc(\beta_\vecc)$, which is zero since $\alpha_i=\frac{P_i^\vecc(\beta_\vecc)}{P_0^\vecc(\beta_\vecc)}$.
So $g_1,\dots,g_n\in\mm$.

Note that $\KK[\vecX]/\ideal{g,g_1,\dots,g_n}\cong \KK[T]/\ideal{g}=\KK[T]/\mm_0$.
This follows by noting that $P_0^\vecc(T)\in\KK[T]$ is invertible modulo $\ideal{g}=\mm_0$ (since $P_0^\vecc(\beta_\vecc)\neq 0$ and $\mm_0=\mm\cap \KK[T]$), and therefore, we can use the relations $g_i=P_0^\vecc(T) X_i-P_i^\vecc(T)=0$, $i=1,\dots,n$, to eliminate $X_1,\dots,X_n$.
As $\KK[T]/\mm_0$ is a field, $\ideal{g,g_1,\dots,g_n}$ is a maximal ideal of $\KK[\vecX]$.
As $g,g_1,\dots,g_n\in\mm$ and $\mm$ is also a maximal ideal of $\KK[\vecX]$, we have $\mm=\ideal{g,g_1,\dots,g_n}$.
The lemma follows.
\end{proof}

\paragraph{Idealizer.}

Let $A$ be an integral domain and $J$ be an ideal of $A$. The \emph{idealizer} of $J$ is defined as
\[
\id{A}{J}:=\{a\in\fr{A}: aJ\subseteq J\}.
\]
It is the largest subring of $\fr{A}$ in which $J$ is still an ideal.

A closely related notion is the \emph{ideal quotient} of two ideals.
For ideals $I$ and $J$ of a ring $R$, define the ideal quotient $(I:J):=\{a\in R: aJ\subseteq I\}$, which is an ideal of $R$.

\begin{lemma}\label{lem:find-idealizer}
Let $f_1,\dots,f_k,g_1,\dots,g_m\in\KK[\vecX]=\KK[X_1,\dots,X_n]$ be polynomials in $P(d,d')$.
Suppose $I:=\ideal{f_1,\dots,f_k}$ is a prime ideal of $\KK[\vecX]$, or equivalently, $A:=\KK[\vecX]/I$ is an integral domain.
Let $J$ be the ideal $\ideal{g_1+I,\dots,g_m+I}$ of $A$.

Assume that $c\cdot \id{A}{J}\subseteq A$, where $c=f+I\in A\setminus\{0\}$ and $f\in P(d,d')$.
Then with respect to any monomial order, the preimage of $c\cdot \id{A}{J}$ in $\KK[\vecX]$ under the natural quotient map $\KK[\vecX]\to A$ has a \grobner basis consisting only of polynomials in $P(D,D')$, where $D=O_{d,n}(1)$ and $D'=O_{d,d',n}(1)$.
\end{lemma}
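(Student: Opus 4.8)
The key observation is that the idealizer $\id{A}{J}$ can be expressed as an \emph{ideal quotient} in the ring $A$ itself, and that $c\cdot \id{A}{J}$ is then an honest ideal of $A$ whose generators pull back to a small set of generators in $\KK[\vecX]$ — at which point \cref{lem:grobner-bounds} finishes the job. Concretely, I would first establish the identity
\[
c\cdot\id{A}{J} = (cJ : J) := \{a \in A : aJ \subseteq cJ\},
\]
where the right-hand side is the usual ideal quotient of the ideals $cJ$ and $J$ in $A$. The inclusion $\subseteq$ is immediate: if $b \in \id{A}{J}$ then $bJ \subseteq J$, so $(cb)J \subseteq cJ$, and $cb \in A$ by the hypothesis $c\cdot\id{A}{J}\subseteq A$. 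For $\supseteq$, if $a \in A$ with $aJ \subseteq cJ$, then since $A$ is a domain and $c \neq 0$, the element $a/c \in \fr{A}$ satisfies $(a/c)J \subseteq J$, i.e. $a/c \in \id{A}{J}$, so $a = c\cdot(a/c) \in c\cdot\id{A}{J}$.

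**Reducing to a \grobner basis computation.** Having identified $c\cdot\id{A}{J}$ with the ideal quotient $(cJ:J)$ of $A$, I would pull everything back to $\KK[\vecX]$. Let $\pi:\KK[\vecX]\to A$ be the quotient map. Then $\pi^{-1}(cJ:J) = (\widetilde{cJ} : \widetilde{J})$, where $\widetilde{J} = \ideal{I, g_1,\dots,g_m}$ and $\widetilde{cJ} = \ideal{I, fg_1,\dots,fg_m}$ are the pullbacks of $J$ and $cJ$ — these are ideals of $\KK[\vecX]$ generated by polynomials in $P(d,d')$ (using \cref{lem:prod-deg-bound} to control $fg_i \in P(2d, O_{d,d',n}(1))$), with $O_{d,n}(1)$ generators each. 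So it remains to bound the \grobner basis complexity of an ideal quotient $(I_1 : I_2)$ where $I_1, I_2$ are ideals of $\KK[\vecX]$ with controlled generator complexity. This is a standard elimination computation: $(I_1 : \ideal{h}) = (I_1 \cap \ideal{h})/h$ (divide each generator by $h$), and $I_1 \cap \ideal{h}$ is obtained by introducing an auxiliary variable $t$, forming $t I_1 + (1-t)\ideal{h}$ in $\KK[\vecX, t]$, and eliminating $t$; then $(I_1 : I_2) = \bigcap_{j} (I_1 : \ideal{h_j})$ over a generating set $\{h_j\}$ of $I_2$, and intersection of ideals is again an elimination computation with one extra variable. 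Each of these steps is covered by \cref{lem:elimination} (for the elimination of $t$) together with \cref{lem:grobner-bounds} and \cref{lem:factor} (for dividing generators by $h_j$, which only changes coefficient degrees by $O_{d,d',n}(1)$), and composing the constantly-many steps keeps all degree bounds at $D = O_{d,n}(1)$ and coefficient-degree bounds at $D' = O_{d,d',n}(1)$.

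**Assembling the bounds.** Finally I would run through the composition explicitly: start with $\widetilde{J}$ and $\widetilde{cJ}$ in $P(O_{d,n}(1), O_{d,d',n}(1))$; for each of the $m = O_{d,n}(1)$ generators $g_i$ of $J$, compute $(\widetilde{cJ} : \ideal{g_i})$ via one colon-with-principal-ideal step (intersection + division), landing in $P(O_{d,n}(1), O_{d,d',n}(1))$ by \cref{lem:elimination}, \cref{lem:grobner-bounds}, \cref{lem:factor}; then intersect these $m$ ideals pairwise, again $O_{d,n}(1)$ elimination steps, each preserving the shape of the bounds. Since the number of iterations is $O_{d,n}(1)$ and each iteration inflates the degree bound polynomially (really, by one more level of the \grobner tower) and the coefficient-degree bound by an $O_{\cdot}(1)$ factor, the final output is a \grobner basis in $P(D, D')$ with $D = O_{d,n}(1)$ and $D' = O_{d,d',n}(1)$, as claimed.

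**Main obstacle.** The conceptual content — the identity $c\cdot\id{A}{J} = (cJ:J)$ — is short; the real work is bookkeeping. The main obstacle is verifying that each elimination/division step genuinely falls under the hypotheses of \cref{lem:elimination} and \cref{lem:grobner-bounds} (in particular that auxiliary variables like $t$ can be introduced without breaking the $P(\cdot,\cdot)$ bookkeeping, since those lemmas are stated for $\FF(\vecY)[\vecX]$ with $\vecX$ an arbitrary variable set), and that the constantly-many compositions do not secretly blow up a parameter we need to keep constant. One should also double-check that dividing a \grobner-basis element by $h$ (to pass from $I_1 \cap \ideal h$ to $(I_1 : \ideal h)$) is legitimate — this uses that every element of $I_1 \cap \ideal h$ is divisible by $h$ in $\KK[\vecX]$, so the quotient is a genuine polynomial, and \cref{lem:factor} bounds its coefficient complexity.
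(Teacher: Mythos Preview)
Your proposal is correct and follows essentially the same route as the paper's proof: establish $c\cdot\id{A}{J} = (cJ:J)$, pull back to $\KK[\vecX]$ as the ideal quotient $(\widetilde{cJ}:\widetilde{J})$, compute each $(\widetilde{cJ}:\ideal{g_i})$ via intersection-plus-division using an auxiliary variable and \cref{lem:elimination}/\cref{lem:factor}, then intersect. The only minor difference is that the paper performs the final intersection of the $m$ ideals in a single elimination step using $m$ fresh variables $T_1,\dots,T_m$ simultaneously (via the identity $\bigcap_i I_i = \ideal{1-\sum_i T_i} + \sum_i T_i I_i$ intersected with $\KK[\vecX]$), rather than pairwise; both approaches yield $O_{d,n}(1)$ bounds once one notes, as you do, that $m$ may be taken to be $O_{d,n}(1)$.
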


\begin{proof}
For any ideal $I_0$ of $A$, denote by $\widehat{I_0}$ the preimage of $I_0$ in $\KK[\vecX]$ under the natural quotient map, i.e., $\widehat{I_0}=\{a\in \KK[\vecX]: a+I\in I_0\}$.

As $c\neq 0$ and $A$ is an integral domain, we know $c$ is a non-zero-divisor of $A$.
As $c\cdot \id{A}{J}\subseteq A$, we know 
\begin{equation*}
c\cdot \id{A}{J}=\{b\in A:b=ca, aJ\subseteq J\}=\{b\in A: bJ\subseteq cJ\}=(cJ:J),
\end{equation*}
where the second equality uses the fact that $c$ is a non-zero-divisor.
It follows that 
\begin{equation}\label{eq:idealizer2}
\widehat{c\cdot \id{A}{J}}=\widehat{(cJ:J)}=(\widehat{cJ}:\widehat{J}).
\end{equation}
By definition, $\widehat{J}=\ideal{f_1,\dots,f_k,g_1,\dots,g_m}$ and $\widehat{cJ}=\ideal{f_1,\dots,f_k,fg_1,\dots,fg_m}$. It follows from \cite[Lemmas~2.3.10 and 2.3.11]{AL94} that
\begin{equation}\label{eq:idealizer3}
\begin{aligned}
(\widehat{cJ}:\widehat{J})&=\bigcap_{h\in\{f_1,\dots,f_k,g_1,\dots,g_m\}} \frac{1}{h}(\ideal{f_1,\dots,f_k,fg_1,\dots,fg_m}\cap\ideal{h})\\
&=\bigcap_{i=1}^m \frac{1}{g_i}(\ideal{f_1,\dots,f_k,fg_1,\dots,fg_m}\cap\ideal{g_i})\\
&=\bigcap_{i=1}^m \frac{1}{g_i} I_i,
\end{aligned}
\end{equation}
where $I_i:=\ideal{f_1,\dots,f_k,fg_1,\dots,fg_m}\cap\ideal{g_i}$ for $i\in [m]$.
By \cite[Proposition~2.3.5]{AL94}, we can compute this intersection by introducing a new variable $T$ and then eliminating it:
\begin{equation}\label{eq:idealizer5}
\begin{aligned}
I_i&=(T\cdot \ideal{f_1,\dots,f_k,fg_1,\dots,fg_m}+(1-T)\cdot \ideal{g_i})\cap \KK[\vecX]\\
&=\ideal{Tf_1,\dots,Tf_k,Tfg_1,\dots,Tfg_m, (1-T)g_i}\cap \KK[\vecX].
\end{aligned}
\end{equation}
As $f,f_1,\dots,f_k,g_1,\dots,g_m\in P(d,d')$, we have by \cref{lem:prod-deg-bound} that $Tf_1,\dots,Tf_k,Tfg_1,\dots,Tfg_m, (1-T)g_1,\dots, (1-T)g_m$ are in $P(2d+1, d_0)$ for some $d_0=O_{d,d',n}(1)$.
By \eqref{eq:idealizer5} and \cref{lem:elimination}, for $i\in [m]$, $I_i$ has a \grobner basis $G_i\subseteq P(d_1,d_2)$ with $d_1=O_{d,n}(1)$ and $d_2=O_{d,d',n}(1)$.

Consider $i\in [m]$. The polynomials in $G_i$ are all divisible 
by $g_i$ since $I_i\subseteq\ideal{g_i}$.
Let $\frac{1}{g_i}G_i=\{\frac{g}{g_i}: g\in G_i\}$.
Note that $\lt{\frac{1}{g_i}G_i}=\lt{\frac{1}{g_i}I_i}$ since $\lt{G_i}=\lt{I_i}$.
So $\frac{1}{g_i}G_i$ is a \grobner basis of the ideal $\frac{1}{g_i}I_i$.
By \cref{lem:factor}, $\frac{1}{g_i}G_i\subseteq P(d_1,d_3)$ for some $d_3=O_{d,d',n}(1)$.

By \cite[Exercise~2.3.8]{AL94}, we have
\begin{equation}\label{eq:idealizer4}
\bigcap_{i=1}^m \frac{1}{g_i} I_i=I^*\cap \KK[\vecX]
\end{equation}
where 
\[
I^*:=\ideal{1-\sum_{i=1}^m T_i}+T_1 \cdot \frac{1}{g_1} I_1+\dots+T_m\cdot \frac{1}{g_m}I_m\subseteq \KK[\vecX, T_1,\dots,T_m].
\]
The ideal $I^*$ has the set of generators $\{1-\sum_{i=1}^m T_i\}\cup\{T_i\cdot g: g\in \frac{1}{g_i}G_i, i\in [m]\}$.
These generators are in $P(d_1+1, d_3)$ as $\frac{1}{g_i}G_i\subseteq P(d_1,d_3)$.
By \eqref{eq:idealizer4} and \cref{lem:elimination}, $\bigcap_{i=1}^m \frac{1}{g_i} I_i$ has a \grobner basis contained in $P(D,D')$, where $D=O_{d,n}(1)$ and $D'=O_{d,d',n}(1)$.
Finally, by \eqref{eq:idealizer2} and \eqref{eq:idealizer3}, we know that $\bigcap_{i=1}^m \frac{1}{g_i}$ is the preimage $\widehat{c\cdot \id{A}{J}}$ of $c\cdot \id{A}{J}$ under the natural quotient map $\KK[\vecX]\to A$, concluding the proof.
\end{proof}

\section{Normalization of Curves}\label{sec:normalization}

In this section, we essentially present a constructive normalization procedure for affine curves, following the framework of Trager~\cite{Tra84}, which itself is a function field analog of an algorithm by Ford and Zassenhaus~\cite{For78}. The core idea is to start with a subring of the coordinate ring of the curve and iteratively adjoin elements until the ring becomes integrally closed. These elements are identified by computing the idealizer of certain ideals.

We revisit these arguments primarily because the bounds on coefficient complexity that we need are not provided in Trager~\cite{Tra84} or other literature. Moreover, our treatment differs in several respects from that of Trager. In particular, Trager uses a different method for computing the idealizer, which imposes restrictions on the characteristic of the base field. In contrast, our method uses \grobner bases and is characteristic-free.

Throughout this section, let $\KK=\FF(\vecY)=\FF(Y_1,\dots,Y_\ell)$, where $\FF$ is an infinite field.

\subsection{Orders, Integral Bases, and Discriminants}

In this subsection, let $\LL$ be a finite separable extension of $\KK(X)$ of degree $s$. Denote by $\order_{\LL}$ the integral closure of $\KK[X]$ in $\LL$. 

\paragraph{Orders.} A subring $\order\subseteq\order_{\LL}$ is said to be a $(\KK[X], \LL)$-order if it is a finite $\KK[X]$-module and $\order\otimes_{\KK[X]} \KK(X)=\LL$ (i.e., the elements in $\order$ span $\LL$ over $\KK(X)$).

\paragraph{Integral bases.} It is known that every $(\KK[X], \LL)$-order  is a free $\KK[X]$-module of rank $[\LL:\KK(X)]=s$.
For a $(\KK[X], \LL)$-order $\order$, we say $\vecb=(b_1,\dots,b_s)\in\order^s$ is an \emph{integral basis} of $\order$ if $b_1,\dots,b_s$ form a basis of $\order$ over $\KK[X]$.

\paragraph{Discriminants.}

Let $\vecb=(b_1,\dots,b_s)\in\LL^s$. 
As $\LL/\KK(X)$ is separable, by Galois theory, $\LL$ has $s$ distinct embeddings $\sigma_1,\dots,\sigma_s$ into the Galois closure of $\LL$ over $\KK(X)$.
The \emph{discrminiant} of $\vecb$ is defined to be
\[
\disc{\vecb}
=\det\begin{pmatrix}
    \sigma_1(b_1) & \sigma_1(b_2) & \cdots & \sigma_1(b_s)\\
    \sigma_2(b_1) & \sigma_2(b_2) & \cdots & \sigma_2(b_s)\\
    \vdots & \vdots & \ddots & \vdots\\
    \sigma_s(b_1) & \sigma_d(b_2) & \cdots & \sigma_s(b_s)
\end{pmatrix}^2.
\]
It is fixed by the Galois group of $\LL$ over $\KK(X)$, and hence is an element of $\KK(X)$.

We also need the definition of the discriminant of a univariate polynomial.
For simplicity, we only give the definition for univariate monic polynomials.
Let $f(T)=\sum_{i=0}^d a_i T^i\in R[T]$ be a univariate monic (i.e., $a_d=1$) polynomial of degree $d$ over a ring $R$. The \emph{discriminant} $\disc{f}$ of $f$ is defined to be
$(-1)^{d(d-1)/2}\mathrm{Res}(f,f')\in R$, where $\mathrm{Res}(f,f')$ denotes
the resultant of $f$ and $f'=\sum_{i=0}^{d-1}(i+1)a_{i+1} T^i$, given by
\begin{equation*}
\mathrm{Res}(f,f')=
\det\begin{pmatrix}
a_d & a_{d-1} & a_{d-2} & \cdots 
& a_0 & 0 & \cdots & 0 \\
0 & a_d & a_{d-1} & \cdots 
& a_1 & a_0 & \cdots & 0 \\
\vdots & \vdots & \vdots & \ddots & 
\vdots & \vdots & \ddots & \vdots \\
0 & \cdots & 0 & a_d & a_{d-1} 
& a_{d-2} & \cdots & a_0 \\
da_d & (d-1)a_{d-1} & (d-2)a_{d-2} & \cdots & a_1 & 0 & 
\cdots & 0 \\
0 & da_d & (d-1)a_{d-1} & \cdots & 2a_2 & a_1 & 
\cdots & 0 \\
\vdots & \vdots & \vdots & \ddots & \vdots & \vdots &  \ddots & \vdots \\
0 & \cdots & 0 & da_d & (d-1)a_{d-1} & (d-2)a_{d-2} & \cdots  & a_1
\end{pmatrix}.
\end{equation*}

The following lemma follows straightforwardly from the definition of $\disc{f}$ and \cref{lem:frac-deg-bound}.

\begin{lemma}\label{lem:disc-poly}
Let $R=\KK[X]$.
Suppose $f\in R[T]$ is in $P_{X,T}(d,d')$ when viewed as a polynomial in both $X$ and $T$ over $\KK$.
Then $\disc{f}\in P_X(D, D')$ for some $D=O_{d}(1)$ and $D'=O_{d,d'}(1)$.
\end{lemma}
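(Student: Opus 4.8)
The plan is simply to unwind the definition $\disc{f}=(-1)^{d(d-1)/2}\mathrm{Res}(f,f')$ and push the two complexity parameters through the Sylvester determinant. Write $f=\sum_{i=0}^{d}a_i(X)\,T^i$ with each $a_i\in\KK[X]$ (so $a_d=1$ since $f$ is monic in $T$). Because $f\in P_{X,T}(d,d')$, a monomial $X^jT^i$ of $f$ satisfies $j+i\le d$, hence $\deg_X(a_i)\le d$, and the coefficients of each $a_i$ as a polynomial in $X$ are among the coefficients of $f$ as a polynomial in $X,T$, hence lie in $C(d')$; thus $a_i\in P_X(d,d')$ for all $i$. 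The coefficients of $f'=\sum_{i=0}^{d-1}(i+1)a_{i+1}T^i$ are the $a_{i+1}$ scaled by the integers $i+1\le d$, which are images of elements of $\FF$ in $\KK$, so by \cref{lem:frac-deg-bound}\,\eqref{item:deg1} these scaled polynomials also lie in $P_X(d,d')$. Consequently every entry of the Sylvester matrix defining $\mathrm{Res}(f,f')$ lies in $P_X(d,d')$ (or is $0$), and the matrix has size at most $(2d-1)\times(2d-1)$.

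Next I would expand this determinant by the Leibniz formula: it is a sum of at most $(2d-1)!$ signed products, each a product of at most $2d-1$ matrix entries. Iterating \cref{lem:prod-deg-bound} (with $n=1$, a number of times bounded in terms of $d$, and taking common coefficient-degree bounds at each step via $C(a)\subseteq C(b)$ for $a\le b$), each such product lies in $P_X\big((2d-1)d,\,d_1\big)$ for some $d_1=O_{d,d'}(1)$; in particular its $X$-degree is at most $(2d-1)d=:D=O_d(1)$, which notably does not depend on $d'$. Summing the $(2d-1)!$ products, each coefficient of the result is a sum of at most $(2d-1)!$ elements of $C(d_1)$, hence lies in $C\big((2d-1)!\,d_1\big)=:C(D')$ with $D'=O_{d,d'}(1)$, by iterated use of \cref{lem:frac-deg-bound}\,\eqref{item:deg2}. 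Therefore $\mathrm{Res}(f,f')\in P_X(D,D')$, and multiplying by the unit $(-1)^{d(d-1)/2}\in\FF$ keeps it in $P_X(D,D')$ by \cref{lem:frac-deg-bound}\,\eqref{item:deg1}. This yields $\disc{f}\in P_X(D,D')$ with $D=O_d(1)$ and $D'=O_{d,d'}(1)$, as claimed.

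There is essentially no obstacle here; this is a direct unwinding, and the only points needing (minor) care are: (i) the integer multipliers $i+1$ appearing in $f'$ must not spoil the coefficient bound, which is handled by observing that they are scalars from $\FF$ so \cref{lem:frac-deg-bound}\,\eqref{item:deg1} applies; and (ii) since $d$ is a parameter, both the number of factors per determinant term and the number of terms are bounded functions of $d$, so the repeated applications of \cref{lem:prod-deg-bound} and \cref{lem:frac-deg-bound} inflate the hidden constants only by a $d$- (and $d'$-)dependent amount, keeping all final bounds of the form $O_{d,d'}(1)$.
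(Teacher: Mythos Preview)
Your proposal is correct and is exactly the straightforward unwinding the paper intends: the paper's own proof consists of a single sentence saying the lemma ``follows straightforwardly from the definition of $\disc{f}$ and \cref{lem:frac-deg-bound},'' and you have supplied precisely that unwinding (using also \cref{lem:prod-deg-bound}, which itself rests on \cref{lem:frac-deg-bound}). The only cosmetic slip is writing $a_d=1$: the parameter $d$ in $P_{X,T}(d,d')$ is an upper bound on the total degree, not necessarily the exact $T$-degree, so the monic condition gives $a_{\deg_T f}=1$ with $\deg_T f\le d$; this changes nothing in your bounds since the Sylvester matrix then has size at most $(2d-1)\times(2d-1)$.
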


We now list some facts about the discriminant of a tuple, the discriminant of a polynomial, and their relations. These facts can be found in, e.g., \cite{Tra84}.

\begin{lemma}\label{lem:disc-facts}
We have the following facts:
\begin{enumerate}[(1)] 
\item\label{item:disc1} Let $\vecb=(b_1,\dots,b_s)\in\LL^s$. Then $\disc{\vecb}\neq 0$ if and only if $b_1,\dots,b_s$ are linearly independent over $\KK(X)$.
\item\label{item:disc2} Let $\vecb=(b_1,\dots,b_s)\in\LL^s$ 
such that $b_1,\dots,b_s$ are integral over $\KK[X]$. Then $\disc{\vecb}\in \KK[X]$. 
\item\label{item:disc3} Let $\vecb,\vecb'\in\LL^s$.
Suppose $\vecb=A\cdot \vecb'$ for some $A\in\KK[X]^{s\times s}$, then $\disc{\vecb}=\det(A)^2\cdot \disc{\vecb'}$. And $\vecb$ and $\vecb'$ generate the same $\KK[X]$-module if and only if $A$ is invertible as a matrix over $\KK[X]$, i.e., $\det(A)\in\KK[X]^\times=\KK^\times$.
\item\label{item:disc4} Suppose $\LL=\KK(X)(\alpha)$ and $\alpha\in\LL$ is integral over $\KK[X]$. Let $f$ be the minimal polynomial of $\alpha$ over $\KK(X)$, which is a monic polynomial with all coefficients in $\KK[X]$ \cite[Theorem~9.2]{Mat89}. 
Then $\disc{1,\alpha,\dots,\alpha^{s-1}}=\pm\disc{f}$.
\end{enumerate}
\end{lemma}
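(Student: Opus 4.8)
The plan is to reduce parts \eqref{item:disc1}--\eqref{item:disc3} to a single identity: for any $\vecb=(b_1,\dots,b_s)\in\LL^s$, letting $M=(\sigma_i(b_j))_{i,j}$ be the matrix from the definition of $\disc{\vecb}$, one has $M^\top M=\left(\sum_{k=1}^s \sigma_k(b_i)\sigma_k(b_j)\right)_{i,j}=\left(\mathrm{Tr}_{\LL/\KK(X)}(b_ib_j)\right)_{i,j}$, since $\sigma_k$ is a ring homomorphism and $\mathrm{Tr}_{\LL/\KK(X)}(x)=\sum_k\sigma_k(x)$ by Galois theory (using that $\LL/\KK(X)$ is separable, so there are exactly $s$ embeddings). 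Taking determinants gives
\[
\disc{\vecb}=\det(M)^2=\det(M^\top M)=\det\!\left(\mathrm{Tr}_{\LL/\KK(X)}(b_ib_j)\right)_{i,j}.
\]
Everything else is linear algebra over $\KK(X)$ together with the fact that $\KK[X]$ is a PID (hence integrally closed, with unit group $\KK^\times$), and the nondegeneracy of the trace form, which is exactly the separability of $\LL/\KK(X)$.

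For \eqref{item:disc1}: if $b_1,\dots,b_s$ are linearly dependent over $\KK(X)$, say $\sum_j\lambda_j b_j=0$ with $\lambda_j\in\KK(X)$ not all zero, then applying each $\sigma_i$ (which fixes $\KK(X)$) shows the columns of $M$ are dependent, so $\det M=0$ and $\disc{\vecb}=0$. Conversely, if $b_1,\dots,b_s$ are linearly independent, they form a $\KK(X)$-basis of $\LL$ (as $\dim_{\KK(X)}\LL=s$), and the Gram matrix $\left(\mathrm{Tr}_{\LL/\KK(X)}(b_ib_j)\right)_{i,j}$ of the trace form in this basis is invertible because the trace form is nondegenerate by separability; hence $\disc{\vecb}\neq0$. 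For \eqref{item:disc2}: when the $b_i$ are integral over $\KK[X]$, each product $b_ib_j$ is integral over $\KK[X]$, so its trace—which lies in $\KK(X)$ and is a sum of elements integral over $\KK[X]$—is integral over $\KK[X]$, hence lies in $\KK[X]$ by integral closedness; therefore $\disc{\vecb}=\det(\cdots)\in\KK[X]$.

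For \eqref{item:disc3}: from $b_j=\sum_k A_{jk}b'_k$ with $A_{jk}\in\KK[X]\subseteq\KK(X)$ fixed by every $\sigma_i$, we get $M_{\vecb}=M_{\vecb'}A^\top$, so $\disc{\vecb}=\det(M_{\vecb})^2=\det(A)^2\det(M_{\vecb'})^2=\det(A)^2\disc{\vecb'}$. For the module statement we work under the relevant hypothesis that $b'_1,\dots,b'_s$ are linearly independent over $\KK(X)$ (equivalently $\disc{\vecb'}\neq0$), which is automatic for integral bases of orders. If $A$ is invertible over $\KK[X]$ then $\vecb'=A^{-1}\vecb$, so $\vecb$ and $\vecb'$ generate the same $\KK[X]$-module. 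Conversely, if they generate the same module there is $B\in\KK[X]^{s\times s}$ with $\vecb'=B\vecb$; then $(AB-I_s)\vecb=0$, and since $\vecb$ spans a free $\KK[X]$-module of rank $s$ it is $\KK(X)$-linearly independent, forcing $AB=I_s$ (and symmetrically $BA=I_s$), so $A\in\mathrm{GL}_s(\KK[X])$ and $\det(A)\in\KK[X]^\times=\KK^\times$.

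For \eqref{item:disc4}: here $s=\deg f=[\LL:\KK(X)]$, and $f$ is separable since it is the minimal polynomial of $\alpha$ and $\LL/\KK(X)$ is separable; its roots are the distinct conjugates $\alpha_i:=\sigma_i(\alpha)$. With $\vecb=(1,\alpha,\dots,\alpha^{s-1})$ the matrix $M=(\sigma_i(\alpha^{j-1}))=(\alpha_i^{\,j-1})$ is a Vandermonde matrix, so $\disc{1,\alpha,\dots,\alpha^{s-1}}=\det(M)^2=\prod_{i<j}(\alpha_i-\alpha_j)^2$; on the other hand the classical formula $\mathrm{Res}(f,f')=\prod_i f'(\alpha_i)=\prod_{i\neq j}(\alpha_i-\alpha_j)$ gives $\disc{f}=(-1)^{s(s-1)/2}\mathrm{Res}(f,f')=\prod_{i<j}(\alpha_i-\alpha_j)^2$ up to the sign convention, whence $\disc{1,\alpha,\dots,\alpha^{s-1}}=\pm\disc{f}$.

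I expect no serious obstacle here: the only real ingredient is the identification $\disc{\vecb}=\det(\mathrm{Tr}_{\LL/\KK(X)}(b_ib_j))$ and the separability-equals-nondegeneracy principle; the mild care points are the hidden linear-independence hypothesis needed to make the ``same module'' equivalence in \eqref{item:disc3} literally true, and keeping track of the sign convention in \eqref{item:disc4} (which is why the statement only claims equality up to sign). Alternatively, since all four facts are entirely standard, one could simply cite \cite{Tra84}; we include the short argument above for completeness.
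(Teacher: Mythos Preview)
Your proof is correct. The paper itself does not prove this lemma at all: it simply states ``These facts can be found in, e.g., \cite{Tra84}'' and moves on, which you in fact anticipated in your closing remark. Your argument via the identity $\disc{\vecb}=\det\bigl(\mathrm{Tr}_{\LL/\KK(X)}(b_ib_j)\bigr)$, integral closedness of $\KK[X]$, the change-of-basis computation $M_\vecb=M_{\vecb'}A^\top$, and the Vandermonde identity for part~\eqref{item:disc4} is the standard one and is sound; your flagging of the implicit linear-independence hypothesis in the module-equivalence half of~\eqref{item:disc3} is accurate and appropriately handled (it holds automatically in every use the paper makes of the lemma, namely for integral bases of orders).
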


Recall that $\order_\LL$ denotes the integral closure of $\KK[X]$ in $\LL$.
We have the following statement, whose proof can be found in, e.g., \cite[Proof of Proposition~13.14]{Eis13}.

\begin{lemma}\label{lem:disc-square}
Let $\order\subseteq \order_\LL$ be a $(\KK[X], \LL)$-order with an integral basis $\vecb$.
Then
\[
\order_\LL\subseteq \frac{1}{\disc{\vecb}} \order\subseteq  \frac{1}{\disc{\vecb}} \order_\LL.
\]
\end{lemma}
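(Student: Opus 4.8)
The plan is to prove the nontrivial inclusion $\order_\LL \subseteq \frac{1}{\disc{\vecb}}\order$; the remaining inclusion $\frac{1}{\disc{\vecb}}\order \subseteq \frac{1}{\disc{\vecb}}\order_\LL$ is immediate from $\order \subseteq \order_\LL$. The main tool is the trace form of the separable extension $\LL/\KK(X)$. First I would record the identity $\disc{\vecb} = \det\big((\mathrm{Tr}_{\LL/\KK(X)}(b_i b_j))_{i,j\in[s]}\big)$: writing $\Sigma = (\sigma_k(b_i))_{k,i}$ for the matrix of the $s$ distinct embeddings $\sigma_1,\dots,\sigma_s$ appearing in the definition of $\disc{\vecb}$, one has $\disc{\vecb} = \det(\Sigma)^2 = \det(\Sigma^\top\Sigma)$, and the $(i,j)$ entry of $\Sigma^\top\Sigma$ is $\sum_k \sigma_k(b_i)\sigma_k(b_j) = \sum_k \sigma_k(b_ib_j) = \mathrm{Tr}_{\LL/\KK(X)}(b_ib_j)$, the last equality being exactly the separability of $\LL/\KK(X)$ (the trace is the sum of the images of the element under all $s$ embeddings).

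Second, I would observe that $\KK[X]$ is a PID, hence integrally closed, so for any $\alpha \in \LL$ that is integral over $\KK[X]$ the trace $\mathrm{Tr}_{\LL/\KK(X)}(\alpha)$ lies in $\KK[X]$: the minimal polynomial of $\alpha$ over $\KK(X)$ has all coefficients in $\KK[X]$ by \cite[Theorem~9.2]{Mat89} (as already used in \cref{lem:disc-facts}\eqref{item:disc4}), the characteristic polynomial of multiplication-by-$\alpha$ on $\LL$ over $\KK(X)$ is a power of this minimal polynomial, and $\mathrm{Tr}_{\LL/\KK(X)}(\alpha)$ is, up to sign, one of its coefficients. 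In particular, since $\order$ is a subring of $\order_\LL$ and hence consists of elements integral over $\KK[X]$, all the entries $\mathrm{Tr}_{\LL/\KK(X)}(b_ib_j)$ of the matrix $T := (\mathrm{Tr}_{\LL/\KK(X)}(b_ib_j))_{i,j\in[s]}$ lie in $\KK[X]$.

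The heart of the argument is then linear algebra over $\KK(X)$ via Cramer's rule. Take any $x \in \order_\LL$. Since $\vecb$ spans $\LL$ over $\KK(X)$, write $x = \sum_{i=1}^s c_i b_i$ with $c_i \in \KK(X)$. For each $j \in [s]$ we have $x b_j \in \order_\LL$ (as $\order_\LL$ is a ring), so $\mathrm{Tr}_{\LL/\KK(X)}(x b_j) \in \KK[X]$ by the previous paragraph; expanding the trace gives $\sum_{i=1}^s \mathrm{Tr}_{\LL/\KK(X)}(b_ib_j)\, c_i = \mathrm{Tr}_{\LL/\KK(X)}(x b_j)$, i.e.\ the column vector $\mathbf{c} = (c_1,\dots,c_s)^\top$ solves $T\mathbf{c} = \mathbf{t}$, where $\mathbf{t} = (\mathrm{Tr}_{\LL/\KK(X)}(x b_j))_{j}$ has entries in $\KK[X]$. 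By the identity of the first paragraph together with \cref{lem:disc-facts}\eqref{item:disc1}, $\det T = \disc{\vecb} \neq 0$ (the $b_i$ are linearly independent over $\KK(X)$ since $\vecb$ is a basis), so $T$ is invertible over $\KK(X)$ and Cramer's rule yields $c_i = \det(T_i)/\det(T)$, where $T_i$ is $T$ with its $i$-th column replaced by $\mathbf{t}$. Since $T_i$ has entries in $\KK[X]$, $\det(T_i) \in \KK[X]$, hence $\disc{\vecb}\, c_i = \det(T_i) \in \KK[X]$ for every $i$, and therefore $\disc{\vecb}\, x = \sum_i (\disc{\vecb}\, c_i)\, b_i \in \sum_i \KK[X]\, b_i = \order$, i.e.\ $x \in \frac{1}{\disc{\vecb}}\order$. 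This proves $\order_\LL \subseteq \frac{1}{\disc{\vecb}}\order$ and completes the proof.

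The step needing the most care is the first one: correctly using separability to identify the embedding-theoretic $\disc{\vecb}$ with the determinant of the trace-form Gram matrix, and confirming that the trace of an integral element genuinely lands in $\KK[X]$ — which is precisely where integral closedness of $\KK[X]$ enters. Everything after that is a routine Cramer's-rule computation.
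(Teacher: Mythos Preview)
Your proof is correct and follows the standard argument via the trace form and Cramer's rule; this is precisely the approach in the cited reference \cite[Proof of Proposition~13.14]{Eis13}, which the paper defers to rather than reproducing. Nothing to add.
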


\begin{definition}[Discriminant ideal]
Let $\order\subseteq \order_\LL$ be a $(\KK[X], \LL)$-order with an integral basis $\vecb$.
Define $\mathfrak{D}_{\order/\KK[X]}$ to be the ideal of $\KK[X]$ generated by $\disc{\vecb}$, called the \emph{discriminant ideal} of $\order$. By  \cref{lem:disc-facts}\,\eqref{item:disc3}, $\mathfrak{D}_{\order/\KK[X]}$ is well-defined and does not depend on the choice of $\vecb$.
\end{definition}

The following theorem gives a criterion for $\order$ being integrally closed.

\begin{theorem}[\cite{Tra84}]\label{thm:criterion}
$\order$ is integrally closed if and only if the idealizer $\id{\order}{\mm}$ of every maximal ideal $\mm$ of $\order$ containing $\mathfrak{D}_{\order/\KK[X]}$ equals $\order$. 
\end{theorem}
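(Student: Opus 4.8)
The plan is to prove both implications by localizing at maximal ideals of $\order$ and invoking the one-dimensional characterization of discrete valuation rings in \cref{eq:equivalence}. Throughout we use that $\order$ is a finite $\KK[X]$-module, hence a Noetherian domain that is integral over $\KK[X]$; consequently $\fr{\order}=\LL$, $\dim\order=\dim\KK[X]=1$, and (since $\order$ is a finitely generated domain over $\KK$) $\dim\order_\mm=1$ for every maximal ideal $\mm$. Fix an integral basis $\vecb$ of $\order$, so that $\mathfrak{D}_{\order/\KK[X]}=\ideal{\disc\vecb}$ with $\disc\vecb\neq 0$ by \cref{lem:disc-facts}\,\eqref{item:disc1}.

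For the ``only if'' direction, suppose $\order$ is integrally closed. I claim $\id{\order}{J}=\order$ for \emph{every} nonzero ideal $J$ of $\order$, in particular for every maximal ideal. The inclusion $\order\subseteq\id{\order}{J}$ is trivial. Conversely, if $a\in\LL$ satisfies $aJ\subseteq J$, then since $\order$ is Noetherian, $J$ is a finitely generated faithful $\order$-module, so the determinant trick shows $a$ is integral over $\order$; integral closedness of $\order$ gives $a\in\order$.

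For the ``if'' direction, assume $\id{\order}{\mm}=\order$ for every maximal ideal $\mm\supseteq\mathfrak{D}_{\order/\KK[X]}$. By \cref{lem:integral-local} it suffices to show that $\order_\mm$ is integrally closed for every maximal ideal $\mm$ of $\order$; each such $\order_\mm$ is a one-dimensional Noetherian local domain with fraction field $\LL$. \emph{Case 1: $\disc\vecb\notin\mm$.} Then $\disc\vecb$ is a unit in $\order_\mm$. Since integral closure commutes with localization, the integral closure of $\order_\mm$ in $\LL$ is $(\order_\LL)_\mm$, and by \cref{lem:disc-square} we have $(\order_\LL)_\mm\subseteq\frac{1}{\disc\vecb}\order_\mm=\order_\mm$; hence $\order_\mm$ is integrally closed. (This case also covers the degenerate situation where $\disc\vecb\in\KK^\times$, so that no maximal ideal contains $\mathfrak{D}_{\order/\KK[X]}$.) \emph{Case 2: $\disc\vecb\in\mm$}, i.e., $\mm\supseteq\mathfrak{D}_{\order/\KK[X]}$. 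Writing the idealizer as the colon $\{a\in\LL:a\mm\subseteq\mm\}$ and using that colon ideals of finitely generated modules commute with localization, the hypothesis $\id{\order}{\mm}=\order$ yields $\id{\order_\mm}{\mm\order_\mm}=(\id{\order}{\mm})_\mm=\order_\mm$. So the proof reduces to the following local statement.

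\emph{Local statement: a one-dimensional Noetherian local domain $(A,\mm)$ with $\id{A}{\mm}=A$ is integrally closed.} By \cref{eq:equivalence} it is enough to show $\mm$ is principal. Pick $0\neq x\in\mm$; then $A/(x)$ is a Noetherian local ring all of whose primes are maximal, hence Artinian, so $\mm^n\subseteq(x)$ for some $n$, which we take minimal. Choose $y\in\mm^{n-1}\setminus(x)$ (taking $y=1$ if $n=1$, which is legitimate since $1\notin(x)$), and set $z=x/y\in\LL$. Since $y\mm\subseteq\mm^n\subseteq(x)$, the set $z^{-1}\mm=(y/x)\mm$ is an ideal of $A$. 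As $A$ is local, either $z^{-1}\mm=A$, in which case $\mm=zA$ is principal, or $z^{-1}\mm\subseteq\mm$, in which case $z^{-1}=y/x\in\id{A}{\mm}=A$, i.e., $y\in(x)$, contradicting the choice of $y$. Hence $\mm$ is principal, $A$ is a discrete valuation ring, and $A$ is integrally closed.

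I expect the main obstacle to be the bookkeeping in Case 2: correctly setting up the two localization-commutation facts (integral closure with localization, and the idealizer/colon with localization), together with the minimality-of-$n$ argument and the dichotomy $z^{-1}\mm=A$ versus $z^{-1}\mm\subseteq\mm$ in the local statement, which is the real heart of the proof.
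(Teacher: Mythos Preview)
The paper does not give its own proof of this theorem; it is stated with a citation to \cite{Tra84} and used as a black box. Your argument is correct and is essentially the standard proof: the ``only if'' direction is exactly \cref{lem:idealizer-integral} (the determinant trick), while for the ``if'' direction you localize, use \cref{lem:disc-square} to handle maximal ideals not containing the discriminant, and for the remaining maximal ideals reduce to the classical fact that a one-dimensional Noetherian local domain whose maximal ideal has trivial idealizer is a DVR. The bookkeeping you flagged (that the idealizer localizes correctly) is fine: if $a\,\mm_\mm\subseteq\mm_\mm$ then clearing finitely many denominators gives $s\in\order\setminus\mm$ with $sa\,\mm\subseteq\mm$, hence $sa\in\id{\order}{\mm}=\order$ and $a\in\order_\mm$.
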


The next lemma states that taking the idealizer does not introduce non-integral elements.

\begin{lemma}[\cite{Tra84}]\label{lem:idealizer-integral}
 The idealizer $\id{\order}{I}$ is contained in the integral closure of $\order$ for any nonzero ideal $I$ of $\order$.
\end{lemma}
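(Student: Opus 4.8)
The plan is to invoke the classical ``determinant trick'' (a Cayley--Hamilton argument), which is precisely the mechanism behind the standard criterion that an element stabilizing a faithful, finitely generated module is integral.

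First I would note that $\order$ is Noetherian: a $(\KK[X],\LL)$-order is by definition a finite $\KK[X]$-module, and $\KK[X]$ is Noetherian, so $\order$ is a Noetherian $\KK[X]$-module and hence a Noetherian ring. Therefore the nonzero ideal $I\subseteq\order$ is finitely generated as an $\order$-module; fix generators $b_1,\dots,b_m$, not all of them zero. Observe also that $I$ is a faithful $\order$-module, since $\order$ is a domain (being a subring of $\LL$) and $I\neq 0$.

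Next, take an arbitrary element $a\in\id{\order}{I}$, so that $a\in\fr{\order}$ and $aI\subseteq I$. For each $i$ I can write $a b_i=\sum_{j=1}^m c_{ij} b_j$ with $c_{ij}\in\order$; collecting these into a matrix $C=(c_{ij})\in\order^{m\times m}$ and the column vector $\mathbf{b}=(b_1,\dots,b_m)^{\mathsf T}$ gives $(aE_m-C)\,\mathbf{b}=\mathbf{0}$, an identity over the commutative ring $\fr{\order}$, where $E_m$ denotes the $m\times m$ identity matrix. Multiplying on the left by the adjugate of $aE_m-C$ yields $p(a)\,b_i=0$ for all $i$, where $p(T):=\det(TE_m-C)\in\order[T]$ is monic of degree $m$ with coefficients in $\order$. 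Since $I\neq 0$, some $b_i$ is nonzero, and as $\fr{\order}$ is a field this forces $p(a)=0$. Hence $a$ is a root of the monic polynomial $p(T)\in\order[T]$, i.e.\ $a$ is integral over $\order$; as $a$ was arbitrary, $\id{\order}{I}\subseteq\widetilde{\order}$.

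There is essentially no obstacle here: the argument is standard, and the only places the hypotheses enter are that $I$ is an ideal of an \emph{order} (finiteness over $\KK[X]$ is what makes $I$ finitely generated, so that the matrix $C$ exists) and that $I$ is \emph{nonzero} (used to pass from $p(a)b_i=0$ to $p(a)=0$ in $\fr{\order}$; if $I=0$ the idealizer would be all of $\fr{\order}$ and the statement would fail). The only mild care needed is to keep track that the adjugate manipulation takes place over the commutative ring $\fr{\order}$ rather than over $\order$.
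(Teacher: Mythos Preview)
Your proof is correct and is precisely the standard determinant-trick argument for this fact (essentially \cite[Proposition~2.4]{AM94}). The paper itself does not give a proof of this lemma; it simply cites Trager~\cite{Tra84}, whose argument is the same Cayley--Hamilton mechanism you describe.
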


\subsection{Finding the Integral Closure}

The main result of this section is the following theorem.

\begin{theorem}\label{thm:integral-closure}
Let $f_1,\dots,f_k\in \KK[\vecX]=\KK[X_1,\dots,X_n]$ be polynomials in $P(d,d')$. 
Let $\alpha$ be an element in the $\FF$-linear span of $X_1,\dots,X_n$ such that
the natural ring homomorphism $\KK[\alpha]\to A:=\KK[\vecX]/\ideal{f_1,\dots,f_k}$ sending $\alpha$ to $\alpha+\ideal{f_1,\dots,f_k}$ is injective and makes $A$ a finite $\KK[\alpha]$-module.
Moreover, suppose the following hold:
\begin{enumerate}[(1)]
\item For any algebraic extension $\LL$ of $\KK$, $A_\LL:=\LL[\vecX]/\ideal{f_1,\dots,f_k}$ is an integral domain of Krull dimension one.
\item $\fr{A}$ is a finite separable extension of $\KK(\alpha)$.
\end{enumerate} 
Then there exist $D,D',m,k',e\in\NN$ and  $g_1,\dots,g_{n},h_1,\dots,h_{k'}\in\KK^{(e)}[\vecZ]=\KK^{(e)}[Z_1,\dots,Z_m]$ such that $D,D',m,k',p^e=O_{d,d',n}(1)$ and the following hold:
\begin{enumerate}[(1)]
\item $g_1,\dots,g_{n},h_1,\dots,h_{k'}\in P^{(e)}(D,D')$.
\item The map
\begin{align*}
\phi: A_{\KK^{(e)}}&\to \KK^{(e)}[\vecZ]/\ideal{h_1,\dots,h_{k'}}
\\
X_i+\ideal{f_1,\dots,f_k}&\mapsto g_i + \ideal{h_1,\dots,h_{k'}}, \quad i=1,2,\dots, n.
\end{align*} 
defines an injective $\KK^{(e)}$-linear ring homomorphism.
\item $\KK^{(e)}[\vecZ]/\ideal{h_1,\dots,h_{k'}}$ is isomorphic to the integral closure of $A_{\KK^{(e)}}$, and this isomorphism composed with $\phi$ is the natural inclusion of $A_{\KK^{(e)}}$ in its integral closure.
\end{enumerate}
\end{theorem}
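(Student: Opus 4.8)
The plan is to run the iterative normalization procedure of Trager~\cite{Tra84} over the base $\KK^{(e)}$, enlarging $e$ only as needed so that the residue fields that arise are separable, and to invoke the effective results of \cref{sec:comp} to keep every piece of data of bounded complexity. We use throughout that $\KK^{(e)}=\FF(\vecY^{1/p^e})$ is again a rational function field over $\FF$, so each lemma of \cref{sec:comp} applies verbatim after $\vecY\leftrightarrow\vecY^{1/p^e}$, and that a base change $\KK^{(e)}\to\KK^{(e')}$ costs only the inclusion $P^{(e)}(d,d')\hookrightarrow P^{(e')}(d,p^{e'-e}d')$. Set $R:=\KK^{(e)}[\alpha]$ (a polynomial ring, embedded in $A_{\KK^{(e)}}$ via $\alpha\mapsto\sum_ic_iX_i$, where $c_i\in\FF$ are the fixed coordinates of $\alpha$). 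Since $\alpha$ is a separating transcendence basis of $\LL:=\fr{A_{\KK^{(e)}}}$ over $\KK^{(e)}$, the extension $\LL/\fr R$ is finite separable of degree $s=O_{d,n}(1)$, and by \cref{cor:integral-subring} the integral closure $\order_\LL$ of $R$ in $\LL$ equals the integral closure of $A_{\KK^{(e)}}$. The initial order is $\order_0:=A_{\KK^{(e)}}=\KK^{(e)}[\vecX]/\ideal{f_1,\dots,f_k}$: a $(\KK^{(e)}[\alpha],\LL)$-order contained in $\order_\LL$, free of rank $s$ over $R$. Throughout we maintain the invariant that $\order_j$ carries a \emph{prime} presentation $\KK^{(e)}[\vecX,\mathbf W_j]/I_j$ on $O_{d,n}(1)$ variables, with $I_j$ generated by polynomials in $P^{(e)}(O_{d,n}(1),O_{d,d',n}(1))$, with $R\hookrightarrow\order_j$ via $\alpha\mapsto\sum_ic_iX_i$, and with the image of each original $X_i$ equal to the variable $X_i$; this is clear for $j=0$.

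Rather than compute integral bases, we fix once and for all $s$ monomials $\mathbf b=(\mu_1,\dots,\mu_s)$ in $X_1,\dots,X_n$ of degree $O_{d,n}(1)$ whose images in $\LL$ are $\fr R$-linearly independent (they exist because $\LL$ is spanned over $\fr R$ by bounded-degree monomials in the $\alpha_i=X_i$). Using the trace form of $\LL/\fr R$ in the basis $\mathbf b$ together with reductions modulo a \grobner basis of $\ideal{f_1,\dots,f_k}$ (via \cref{lem:reduction-alg} and \cref{lem:linsol}), one computes $\Delta:=\disc{\mathbf b}$; it lies in $R$, is nonzero (\cref{lem:disc-facts}), lies in $P^{(e)}(O_{d,n}(1),O_{d,d',n}(1))$, and—being a multiple of the discriminant of a genuine integral basis of $\order_0$—satisfies $\ideal{\Delta}\subseteq\mathfrak D_{\order_j/R}$ and $\Delta\cdot\order_\LL\subseteq\order_0\subseteq\order_j$ for all $j$, by \cref{lem:disc-square} and \cref{lem:disc-facts}. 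In one iteration, given $\order_j$: if $\Delta$ is a nonzero constant then $\order_\LL\subseteq\order_j$ and we stop. Otherwise $\ideal{I_j,\Delta(\sum_ic_iX_i)}$ is zero-dimensional with $O_{d,n}(1)$ associated maximal ideals, all of residue degree $O_{d,n}(1)$ over $\KK^{(e)}$; enlarging $e$ by $O_{d,n}(1)$ (using \cref{lem:pth-root}) we may assume these residue fields are separable over $\KK^{(e)}$, so \cref{lem:extract} furnishes bounded-complexity generators for each such $\mm$. For each $\mm$ we compute $\id{\order_j}{\mm}$: \cref{lem:idealizer-integral} gives $\id{\order_j}{\mm}\subseteq\order_\LL$, hence $\Delta\cdot\id{\order_j}{\mm}\subseteq\order_j$, so \cref{lem:find-idealizer} with $c=\Delta+I_j$ outputs a bounded-complexity \grobner basis $q_1,\dots,q_t$ of the preimage of $\Delta\cdot\id{\order_j}{\mm}$ in $\KK^{(e)}[\vecX,\mathbf W_j]$, i.e.\ $\id{\order_j}{\mm}=\tfrac1\Delta\ideal{\bar q_1,\dots,\bar q_t}$. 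If every $\id{\order_j}{\mm}=\order_j$, then $\order_j$ is integrally closed by \cref{thm:criterion} (the inspected $\mm$'s include all those containing $\mathfrak D_{\order_j/R}$), so $\order_j=\order_\LL$ and we stop. Otherwise pick $\mm$ with $\id{\order_j}{\mm}\supsetneq\order_j$ and set $\order_{j+1}:=\id{\order_j}{\mm}=\order_j[\bar q_1/\Delta,\dots,\bar q_t/\Delta]$; viewing $\order_{j+1}$ as the image of $\order_j[V_1,\dots,V_t]$ in the localization $(\order_j)_\Delta\cong\order_j[V_0]/\ideal{\Delta V_0-1}$ (\cref{fact:inverse}) under $V_l\mapsto\bar q_lV_0$, \cref{lem:ring-iso} produces a prime presentation $\KK^{(e)}[\vecX,\mathbf W_{j+1}]/I_{j+1}$ of $\order_{j+1}$ of bounded complexity with $\mathbf W_{j+1}=(\mathbf W_j,V_1,\dots,V_t)$, restoring the invariant.

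Since $\order_j\subsetneq\order_{j+1}\subseteq\order_\LL$ and $\order_\LL/\order_0$ embeds (via $\order_\LL\subseteq\tfrac1\Delta\order_0$) into $\order_0/\Delta\order_0\cong(R/\Delta R)^s$, of $\KK^{(e)}$-dimension $s\deg_\alpha\Delta=O_{d,n}(1)$, the iteration halts after $N=O_{d,n}(1)$ steps with $\order_N=\order_\LL$. In particular the total enlargement of $e$ and the total number of variables added are $O_{d,n}(1)$, so $p^e,m,k'=O_{d,n}(1)$, and the bounds from $N$ successive applications of the \cref{sec:comp} lemmas—towers of exponentials of bounded height—stay $O_{d,n}(1)$ in degree and $O_{d,d',n}(1)$ in coefficient complexity. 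Setting $\vecZ=(Z_1,\dots,Z_m):=(X_1,\dots,X_n,\mathbf W_N)$, letting $h_1,\dots,h_{k'}$ be the generators of $I_N$, and $g_i:=Z_i=X_i$ for $i\in[n]$, the map $\phi$ of the statement is exactly the natural inclusion $A_{\KK^{(e)}}\hookrightarrow\KK^{(e)}[\vecZ]/\ideal{h_1,\dots,h_{k'}}=\order_N=\order_\LL$—well defined because each $f_l\in I_N$, injective because $\order_N$ is a domain containing $A_{\KK^{(e)}}\neq 0$—and $\order_N$ is the integral closure of $A_{\KK^{(e)}}$ by \cref{cor:integral-subring}. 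This yields all three conclusions with the required bounds.

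The main obstacle is twofold. First, one must maintain, through each idealizer step, a \emph{prime} bounded-complexity presentation of $\order_{j+1}$ as a polynomial-ring quotient with the original coordinates tracked, since \cref{lem:find-idealizer} and \cref{lem:extract} are stated only for prime quotients $\KK[\vecX]/I$; this is where the localization identity $(\order_j)_\Delta\cong\order_j[V_0]/\ideal{\Delta V_0-1}$ and \cref{lem:ring-iso} enter. Second, one must control inseparability: that a single $e=O_{d,n}(1)$ suffices rests on $\alpha$ remaining a separating transcendence basis and on hypotheses (1)--(2) persisting under each base change $\KK^{(e)}\to\KK^{(e')}$ (so that $A_{\KK^{(e')}}$ stays a one-dimensional domain whose integral closure is the target of the continued iteration), together with the a priori bound $O_{d,n}(1)$ on every residue degree that ever occurs, which limits \cref{lem:pth-root} to boundedly many uses. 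The remaining work—verifying the complexity of $\Delta$ and of each \grobner-basis, idealizer, and maximal-ideal computation—is routine given \cref{sec:comp}.
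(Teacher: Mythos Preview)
Your proposal is correct and follows the same Trager-style iterative idealizer strategy as the paper, but with several organizational differences worth noting. The paper starts from the smaller order $\KK[\alpha][\overline X]$ generated by a single primitive element $\overline X=\sum_i c_iX_i$, represents each successive order by generators of the form $a_t/\disc{\vecb_0}$ inside $(A_{\KK^{(e)}})_{\disc{\vecb_0}}$ (\cref{lem:ring-hom}), and bounds the number of iterations by the drop in $\deg_\alpha$ of the discriminant ideal at each step (\cref{lem:disc-facts}\,(3)); you instead start from $A_{\KK^{(e)}}$ itself, maintain a direct presentation $\KK^{(e)}[\vecX,\mathbf W_j]/I_j$ that keeps the original $X_i$ as coordinates (so $g_i=Z_i$ at the end, avoiding the final inversion via \cref{lem:ring-iso}\,(3)), use a fixed conductor $\Delta$ throughout, and bound the number of iterations by the $\KK^{(e)}$-length of $\order_\LL/\order_0$. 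Both packages work.

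The one place your write-up is thinner than the paper's is the computation of $\Delta=\disc{\vecb}$ from a monomial basis via the trace form: to bound the complexity of the matrix entries $\mathrm{Tr}(\mu_i\mu_j)$ you need to express products in $A$ in the chosen $\fr R$-basis, which requires an extra word (e.g., a coordinate change making $\alpha$ a variable and a block-order \grobner basis so that the standard monomials in the remaining variables give an explicit free $R$-basis of $A$). The paper sidesteps this entirely by taking $\vecb_0=(1,\overline X,\dots,\overline X^{s-1})$ and computing $\disc{\vecb_0}=\pm\disc F$ as a resultant of the minimal polynomial $F$ of $\overline X$ over $\KK(\alpha)$ (\cref{lem:disc-bound}), which plugs directly into the \grobner-basis machinery of \cref{sec:comp}.
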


\begin{remark}
The subring $\KK[\alpha]$ in \cref{thm:integral-closure} can be obtained via Noether normalization, though we defer this step to the next section.
\end{remark}
\begin{remark}
The assumptions that $A_\LL$ is an integral domain and that $\fr{A}$ is separable over $\KK(\alpha)$ are imposed for simplicity; they hold in our setting.  It may be possible to eliminate these assumptions, but doing so could require additional tools, such as primary decomposition.
\end{remark}

In the following, we adopt the notations and assumptions of \cref{thm:integral-closure}. 

By assumption, $\fr{A}$ is a finite separable extension of $\KK(\alpha)$. 
We view $\alpha$ as a variable as it is transcendental over $\KK$.
By the primitive element theorem \cite[\S II.9, Theorem~19]{ZS75}, we can fix $X=\sum_{i=1}^n c_i X_i$ with $c_i\in \FF$ such that $\overline{X}:=X+\ideal{f_1,\dots,f_k}$ generates the field $\fr{A}$ over $\KK(\alpha)$. 
Let $s:=[\fr{A}:\KK(\alpha)]$ and $\vecb_0=(1,\overline{X},\dots,\overline{X}^{s-1})$.

To prove \cref{thm:integral-closure}, we first prove several lemmas. First, the following lemma bounds the complexity of the discriminant of $\vecb_0$.

\begin{lemma}\label{lem:disc-bound}
$\disc{\vecb_0}\in P_{\alpha}(d_1,d_2)$ for some $d_1=O_{d,n}(1)$ and $d_2=O_{d,d',n}(1)$.
\end{lemma}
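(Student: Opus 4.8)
The plan is to compute an explicit bivariate model of the minimal polynomial $f\in\KK[\alpha][T]$ of $\overline{X}$ over $\KK(\alpha)$, with total degree $O_{d,n}(1)$ and coefficients in $C(O_{d,d',n}(1))$, and then to transfer the bound from $\disc{f}$ to $\disc{\vecb_0}$ using \cref{lem:disc-facts}\,\eqref{item:disc4} and \cref{lem:disc-poly}. As a first step I would bound $s=[\fr{A}:\KK(\alpha)]$: since $A$ is a finite $\KK[\alpha]$-module and, by assumption~(1) of \cref{thm:integral-closure} with $\LL=\KK$, an integral domain, it is torsion-free and hence free over the PID $\KK[\alpha]$ of rank $s$, so $\dim_\KK(A/\alpha A)=s$; but $A/\alpha A=\KK[\vecX]/\ideal{f_1,\dots,f_k,\alpha}$ is then zero-dimensional, so \cref{lem:bezout2} gives $s\le d^n$ (we may assume $d\ge 1$, as $d=0$ forces either $A=0$, contradicting that $A$ is a domain, or $A=\KK[\alpha]$ with $s=1$). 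Also, $\overline{X}$ is integral over $\KK[\alpha]$ by \cref{lem:finite-vs-integral}, so $f$ is monic in $T$ of degree $s\le d^n$ with coefficients in $\KK[\alpha]$ \cite[Theorem~9.2]{Mat89}.

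Next I would extract $f$ by elimination. Introduce fresh variables $S,T$ and put $J=\ideal{f_1,\dots,f_k,\,S-\alpha(\vecX),\,T-X(\vecX)}\subseteq\KK[\vecX,S,T]$, where $\alpha(\vecX)$ and $X(\vecX)$ are the given $\FF$-linear forms. The relations $S=\alpha(\vecX)$, $T=X(\vecX)$ identify $\KK[\vecX,S,T]/J$ with $A$ via $S\mapsto\overline{\alpha}$, $T\mapsto\overline{X}$, so $I:=J\cap\KK[S,T]=\ker(\KK[S,T]\to A)$; localizing at $\KK[S]\setminus\{0\}$ (which lands in $A\setminus\{0\}$, $A$ being a domain and $\alpha$ transcendental) identifies $\KK(S)[T]/I\KK(S)[T]$ with $\fr{A}$, so $I\KK(S)[T]=(f)$. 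The generators of $J$ lie in $P(d,d')$, so by \cref{lem:elimination} the ideal $I$ has a \grobner basis $G\subseteq P(D_0,D_0')$, with $D_0=O_{d,n}(1)$ and $D_0'=O_{d,d',n}(1)$, with respect to the elimination order for $T$ on $\mathcal{M}_{S,T}$ (compare $T$-degree first, then $S$-degree); for this order the leading term of any nonzero polynomial carries its full $T$-degree. Since every nonzero element of $I$ is a $\KK(S)[T]$-multiple of $f$ and some $\KK[S]$-multiple of $f$ lies in $I$ (clear denominators in an expression of $f$ via generators of $I$), the least $T$-degree occurring in $I$ is exactly $s$, so some $g^*\in G$ has $\deg_T g^*=s$; then $g^*=\mathrm{lc}_T(g^*)\cdot f$ with $\mathrm{lc}_T(g^*)\in\KK[S]$, i.e.\ $f$ is a factor of $g^*$ in $\KK[S,T]$. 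By \cref{lem:factor} there is $c\in\KK^\times$ with $cf\in P_{S,T}(D_0,d_f')$ for some $d_f'=O_{d,d',n}(1)$; since $f$ is monic in $T$, $c=\coeff_{cf}(T^s)\in C(d_f')$, so by \cref{lem:frac-deg-bound} we get $f=c^{-1}(cf)\in P_{S,T}(D_0,2d_f')$.

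Finally I would apply \cref{lem:disc-poly} with $R=\KK[\alpha]$ to the monic polynomial $f\in R[T]$ (identifying $S$ with $\alpha$): from $f\in P_{\alpha,T}(D_0,2d_f')$ it yields $\disc{f}\in P_\alpha(d_1,d_2)$ with $d_1=O_{d,n}(1)$ and $d_2=O_{d,d',n}(1)$. By \cref{lem:disc-facts}\,\eqref{item:disc4}, $\disc{\vecb_0}=\disc{1,\overline{X},\dots,\overline{X}^{\,s-1}}=\pm\,\disc{f}$, hence $\disc{\vecb_0}\in P_\alpha(d_1,d_2)$, which is the assertion. The main obstacle is the middle step: correctly establishing $I\KK(S)[T]=(f)$ and deducing that the minimal $T$-degree in a \grobner basis of $I$ (for the $T$-priority order) equals $s$ --- this is precisely what lets one pull the abstract minimal polynomial out of a concrete \grobner-basis element without losing control of its degree or coefficient complexity; the remaining steps are routine invocations of the tools from \cref{sec:comp}.
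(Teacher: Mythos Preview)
Your proof is correct and follows essentially the same approach as the paper's: introduce auxiliary variables for $\alpha$ and $X$, eliminate $\vecX$ to land in a bivariate ring, extract the minimal polynomial from a \grobner basis element of minimal $T$-degree, and then invoke \cref{lem:disc-poly} and \cref{lem:disc-facts}\,\eqref{item:disc4}. The only notable difference is in how the minimal polynomial is pulled out of the \grobner basis: the paper observes that since $F$ itself lies in $I$ with $\lm{F}=T^s$, one can choose $g\in G$ with $\lm{g}=T^s$ exactly, so $g$ is already a $\KK^\times$-scalar multiple of $F$ and no appeal to \cref{lem:factor} is needed; you instead allow $\mathrm{lc}_T(g^*)\in\KK[S]$ and recover $f$ via the factor lemma and the monicity of $f$. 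Both routes give the same bounds; the paper's is marginally more direct, while yours is a clean illustration of how \cref{lem:factor} can absorb such bookkeeping. Your preliminary bound $s\le d^n$ is a nice sanity check but is not actually needed, since the total-degree bound $D_0$ from \cref{lem:elimination} already controls $\deg_T f$.
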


\begin{proof}
By assumption, $A$ is a finite $\KK[\alpha]$-module.
So by \cref{lem:finite-vs-integral}, it is integral over $\KK[\alpha]$. Therefore, by \cref{lem:disc-facts}\,\eqref{item:disc4}, the minimal polynomial $F$ of $\overline{X}$ over $\KK(\alpha)$ is a monic polynomial over $\KK[\alpha]$, and $\disc{\vecb_0}=\pm\disc{F}$.

We first find $F$ as follows: Introduce new variables $T$ and $Z$.
Let $\preceq_{T,Z}$ be the elimination order for $T$ on $\mathcal{M}_{T,Z}$.
Let $\preceq=(\preceq_\vecX, \preceq_{T,Z})$ be an elimination order for $\vecX$ on $\mathcal{M}_{\vecX, T, Z}$.
Let $G$ be a \grobner basis for the ideal $I=\ideal{f_1,\dots,f_k,T-\overline{X},Z-\alpha}$ of $\KK[\vecX, T, Z]$ with respect to $\preceq$.
By \cref{lem:grobner-bounds}, we may assume $G\subseteq P(d_3,d_4)$ for some $d_3=O_{d,n}(1)$ and $d_4=O_{d,d',n}(1)$.
By \cref{lem:elim-ideal}, $G\cap \KK[T,Z]$ is a \grobner basis for $I\cap \KK[T,Z]$ with respect to $\preceq_{T,Z}$. 
Choose some $g\in G\cap \KK[T,Z]$ such that $\lm{g}$ has the form $m T^{e}$, where $m\in \mathcal{M}_Z$ and $e$ is minimized. We actually have $m=1$ since $F$ is monic over $\KK[\alpha]$. 
Let $u\in \KK^\times$ be the leading coefficient of $g$. Then $u^{-1}g(T,\alpha)\in(\KK[\alpha])[T]$ equals the minimal polynomial $F$ of $\overline{X}$ over $\KK[\alpha]$.
By \cref{lem:frac-deg-bound}, $F\in P_{T,\alpha}(d_3,d_5)$ for some $d_5=O_{d,d',n}(1)$.
Finally, by \cref{lem:disc-poly}, we have $\disc{\vecb_0}=\pm\disc{F}\in P_\alpha(d_1,d_2)$, where $d_1=O_{d,n}(1)$ and $d_2=O_{d,d',n}(1)$.
\end{proof}

For an algebraic extension $\LL$ of $\KK$, we denote by $I_\LL$ the ideal $\ideal{f_1,\dots,f_k}$ of $\LL[\vecX]$.
The following lemma shows how to use a ring homomorphism to describe an order.

\begin{lemma}\label{lem:ring-hom}
Let $\LL$ be an algebraic extension of $\KK$.
Let $a_1,\dots,a_t\in\LL[\vecX]$. Assume that the elements $\frac{a_1+I_\LL}{\disc{\vecb_0}},\dots,\frac{a_t+I_\LL}{\disc {\vecb_0}} \in (A_\LL)_{\disc{\vecb_0}}$ generate an algebra $\order$ over $\LL[\alpha]$ that is a $(\LL[\alpha], \fr{A_{\LL}})$-order.
Also assume $a_1,\dots,a_t\in P(d_1,d_2)$.
Then there exist $r_1,\dots,r_{t+1}\in\LL[\vecX, U]$ and $h_1,\dots,h_{k'}\in\LL[\vecZ]=\LL[Z_1,\dots,Z_{t+1}]$, where $k'=O_{d,d_1,n}(1)$, such that the following hold:
\begin{enumerate}[(1)]
\item $r_1,\dots,r_{t+1}\in P_{\vecX, U}(d_1+1,d_2)$ and $h_1,\dots,h_{k'}\in P_\vecZ(d_3,d_4)$ for some $d_3=O_{d,d_1,n}(1)$ and $d_4=O_{d,d',d_1,d_2,n}(1)$. Moreover, 
$r_{t+1}=\alpha$.
\item The map
\begin{align*}
\psi: \LL[\vecZ]/\ideal{h_1,\dots,h_{k'}}&\to \LL[\vecX, U]/\ideal{f_1,\dots,f_k,\disc{\vecb_0}\cdot U-1}\cong A_\LL[U]/\ideal{\disc{\vecb_0}\cdot U-1}
\\
Z_i+\ideal{h_1,\dots,h_{k'}}&\mapsto r_i + \ideal{f_1,\dots,f_k,\disc{\vecb_0}\cdot U-1}, \quad i=1,2,\dots, t+1.
\end{align*} 
defines an injective $\LL$-linear ring homomorphism.
\item Let 
\[
\theta: A_\LL[U]/\ideal{\disc{\vecb_0}\cdot U-1}\to (A_\LL)_{\disc{\vecb_0}}
\]
be the $A_\LL$-linear ring isomorphism sending $U+\ideal{\disc{\vecb_0}\cdot U-1}$ to $1/\disc{\vecb_0}$ given by \cref{fact:inverse}. Then the image of $\theta\circ \psi$ is $\order$. 
\end{enumerate}
\end{lemma}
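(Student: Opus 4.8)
The plan is to exhibit the order $\order$ --- which a priori lives only abstractly inside the localization $(A_\LL)_{\disc{\vecb_0}}$ --- as the image of a polynomial ring modulo the kernel of an explicit $\LL$-algebra homomorphism, and then to invoke the \grobner basis machinery of \cref{sec:comp}, specifically \cref{lem:ring-iso}, to bound the complexity of that kernel. The bridge to an honest finite presentation is \cref{fact:inverse}, which identifies $(A_\LL)_{\disc{\vecb_0}}$ with $A_\LL[U]/\ideal{\disc{\vecb_0}U-1}$; this is legitimate since $\disc{\vecb_0}\neq 0$ (the components of $\vecb_0$ form a $\KK(\alpha)$-basis of $\fr{A}$, so \cref{lem:disc-facts}\,(1) applies, and $\disc{\vecb_0}$ stays nonzero in $A_\LL$ via the inclusion $\LL[\alpha]\hookrightarrow A_\LL$).

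First I would record the lifts. By \cref{lem:disc-bound}, $\disc{\vecb_0}$ is a polynomial in the $\FF$-linear form $\alpha$ lying in $P_{\alpha}(O_{d,n}(1),O_{d,d',n}(1))$; substituting $\alpha=\sum_i c_iX_i$ and adjoining the variable $U$ then shows $\disc{\vecb_0}U-1\in P_{\vecX,U}(O_{d,n}(1),O_{d,d',n}(1))$. Let $\theta\colon A_\LL[U]/\ideal{\disc{\vecb_0}U-1}\to(A_\LL)_{\disc{\vecb_0}}$ be the $A_\LL$-linear isomorphism of \cref{fact:inverse} with $\theta(U)=1/\disc{\vecb_0}$. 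Because $\theta$ is $A_\LL$-linear, the class of $a_iU$ maps under $\theta$ precisely to $\frac{a_i+I_\LL}{\disc{\vecb_0}}$, and the class of $\alpha$ maps to $\alpha+I_\LL$. I therefore set $r_i:=a_iU$ for $i\in[t]$ and $r_{t+1}:=\alpha$. Since $a_i\in P_{\vecX}(d_1,d_2)$ and multiplying by $U$ raises the total degree by one without disturbing the coefficients, while $\alpha$ is an $\FF$-linear form, all of $r_1,\dots,r_{t+1}$ lie in $P_{\vecX,U}(d_1+1,d_2)$, and $r_{t+1}=\alpha$; this gives the first half of item~(1).

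Next I would study the $\LL$-algebra homomorphism
\[
\phi'\colon\ \LL[\vecZ]=\LL[Z_1,\dots,Z_{t+1}]\ \longrightarrow\ \LL[\vecX,U]/\ideal{f_1,\dots,f_k,\disc{\vecb_0}U-1},\qquad Z_i\longmapsto r_i+\ideal{\cdots}.
\]
Its defining data --- the $f_j$, the polynomial $\disc{\vecb_0}U-1$, and the $r_i$ --- all lie in $P_{\vecX,U}(O_{d,d_1,n}(1),O_{d,d',d_2,n}(1))$ by the previous paragraph. Applying \cref{lem:ring-iso} with the $n+1$ ambient variables $\vecX,U$ and the $t+1$ new variables $\vecZ$ yields, for any monomial order on $\mathcal{M}_\vecZ$, a \grobner basis $h_1,\dots,h_{k'}$ of $\ker(\phi')$ with each $h_j\in P_\vecZ(d_3,d_4)$, $d_3=O_{d,d_1,n}(1)$, $d_4=O_{d,d',d_1,d_2,n}(1)$, $k'=O_{d,d_1,n}(1)$; in particular $\ideal{h_1,\dots,h_{k'}}=\ker(\phi')$, which completes item~(1). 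Item~(2) is then the first isomorphism theorem (\cref{lem:ring-iso}\,(1)): $\phi'$ factors as the surjection $\LL[\vecZ]\twoheadrightarrow\LL[\vecZ]/\ideal{h_1,\dots,h_{k'}}$ followed by an injective map $\psi$ with $\psi(Z_i+\ideal{\cdots})=r_i+\ideal{\cdots}$. For item~(3): $\theta\circ\psi$ and $\theta\circ\phi'$ have the same image, namely the $\LL$-subalgebra of $(A_\LL)_{\disc{\vecb_0}}$ generated by the elements $\theta(r_i+\ideal{\cdots})$, $i\in[t+1]$, i.e.\ by $\frac{a_1+I_\LL}{\disc{\vecb_0}},\dots,\frac{a_t+I_\LL}{\disc{\vecb_0}}$ together with $\alpha+I_\LL$; but this subalgebra is exactly the $\LL[\alpha]$-algebra generated by $\frac{a_1+I_\LL}{\disc{\vecb_0}},\dots,\frac{a_t+I_\LL}{\disc{\vecb_0}}$, which by hypothesis is $\order$.

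The conceptual content is light; the real work --- and the step demanding the most care --- is the complexity bookkeeping: one must verify that \cref{lem:disc-bound}, followed by the passage from $\disc{\vecb_0}$ to $\disc{\vecb_0}U-1$ (substituting a linear form in $\vecX$ and adjoining one variable), and then \cref{lem:ring-iso}, together produce bounds of the advertised shapes $O_{d,d_1,n}(1)$ and $O_{d,d',d_1,d_2,n}(1)$, and in particular that $t$ enters only as the count of variables and of the $r_i$'s rather than inflating any degree bound (this, together with the boundedness of $t$ in the intended applications, is what yields $k'=O_{d,d_1,n}(1)$). A secondary point worth being explicit about is that the $A_\LL$-linearity of $\theta$ is precisely what forces $\theta(r_i+\ideal{\cdots})$ to equal the chosen generators of $\order$ on the nose, so that item~(3) is an equality of subalgebras rather than merely an inclusion.
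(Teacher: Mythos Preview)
Your proof is correct and follows essentially the same route as the paper: set $r_i=a_iU$ and $r_{t+1}=\alpha$, identify $(A_\LL)_{\disc{\vecb_0}}$ with $A_\LL[U]/\ideal{\disc{\vecb_0}U-1}$ via \cref{fact:inverse}, and then invoke \cref{lem:ring-iso} (with \cref{lem:disc-bound} to control the complexity of $\disc{\vecb_0}U-1$) to produce the $h_j$. The only point where the paper is slightly more self-contained is the bound $k'=O_{d,d_1,n}(1)$: rather than appealing to boundedness of $t$ ``in the intended applications,'' the paper observes directly that since each $a_i$ has degree at most $d_1$ in $n$ variables, one may drop $\LL$-linearly dependent generators and assume $t\le\binom{n+d_1}{d_1}$ without loss of generality.
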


\begin{proof}
By assumption, $\frac{a_1+I_\LL}{\disc{\vecb_0}},\dots,\frac{a_t+I_\LL}{\disc{\vecb_0}}$ and $\alpha$ generate the algebra $\order$ over $\LL$.
Let $r_i=a_i\cdot U$ for $i\in [t]$ and $r_{t+1}=\alpha$. Then $r_1,\dots,r_{t+1}\in P_{\vecX, U}(d_1+1,d_2)$. 

Identifying $\LL[\vecX, U]/\ideal{f_1,\dots,f_k,\disc{\vecb_0}\cdot U-1}$ with $(A_\LL)_{\disc{\vecb_0}}$ via $\theta$, we see that the image of the map $\LL[\vecZ]\to \LL[\vecX, U]/\ideal{f_1,\dots,f_k,\disc{\vecb_0}\cdot U-1}$ sending $Z_i$ to $r_i+\ideal{f_1,\dots,f_k,\disc{\vecb_0}\cdot U-1}$ for $i\in [t+1]$ is precisely $\order$. The kernel of this map is generated by a collection of polynomials $h_1,\dots,h_{k'}\in\LL[\vecZ]$.
Note that we may assume $t\leq \binom{n+d_1}{d_1}$.
By \cref{lem:ring-iso} and \cref{lem:disc-bound}, we may choose $h_1,\dots,h_{k'}$ to be in $P_\vecZ(d_3,d_4)$ for sufficiently large $d_3=O_{d,d_1,n}(1)$ and $d_4=O_{d,d',d_1,d_2,n}(1)$.
We may also assume $k'\leq \binom{n+d_3}{d_3}=O_{d,d_1,n}(1)$.
\end{proof}

The next lemma constructively identifies integral elements that can be adjoined to a non-integrally closed order $\order$ to obtain a larger one.

\begin{lemma}\label{lem:add-new}
Let $e\geq 0$ be an integer and let $\LL=\KK^{(e)}$.
Let $a_1,\dots,a_t\in\LL[\vecX]$. Assume $\frac{a_1+I_\LL}{\disc{\vecb_0}},\dots,\frac{a_t+I_\LL}{\disc{\vecb_0}}$ generate an algebra $\order$ over $\LL[\alpha]$ that is a $(\LL[\alpha], \fr{A_{\LL}})$-order.
Also assume $a_1,\dots,a_t\in P^{(e)}(d_1,d_2)$.
Then there exist $d_3$, $d_4$, and $e'\geq e$ such that $p^{e'},d_3,d_4=O_{d,d',d_1,d_2,n,p^e}(1)$, and one of the following is true: 
\begin{enumerate}[(1)]
\item $\order$ is integrally closed.
\item\label{item:add-new} There exists $a_{t+1}\in P^{(e')}(d_3,d_4)$ such that $\frac{a_1+I_{\LL'}}{\disc{\vecb_0}},\dots,\frac{a_{t+1}+I_{\LL'}}{\disc{\vecb_0}}$  generate an algebra $\order'$ over $\LL'[\alpha]$ that is a $(\LL'[\alpha], \fr{A_{\LL'}})$-order strictly larger than $\order\otimes_\LL \LL'$, where $\LL'=\KK^{(e')}$.
\end{enumerate}
\end{lemma}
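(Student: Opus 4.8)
The plan is to execute a single step of Trager's normalization procedure while keeping track of coefficient complexity, using \cref{thm:criterion} and \cref{lem:idealizer-integral} as black boxes. Write $\mathcal{C}_\LL$ for the integral closure of $\LL[\alpha]$ in $\fr{A_\LL}$; by assumptions~(1)--(2) of \cref{thm:integral-closure} (which give that $A_\LL$ is a domain and $\fr{A_\LL}/\LL(\alpha)$ is separable) every generator $\tfrac{a_i+I_\LL}{\disc{\vecb_0}}$ of $\order$ is integral over $\LL[\alpha]$, so $\order\subseteq\mathcal{C}_\LL$. If $\order$ is integrally closed we are in case~(1), so assume it is not. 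I will build a field $\LL'=\KK^{(e')}$ with $p^{e'}=O_{d,d',d_1,d_2,n,p^e}(1)$ and an element $\xi\in\mathcal{C}_{\LL'}\setminus\order_{\LL'}$, where $\order_{\LL'}:=\order\otimes_\LL\LL'$, such that $\xi=\tfrac{a_{t+1}+I_{\LL'}}{\disc{\vecb_0}}$ for some $a_{t+1}\in P^{(e')}(d_3,d_4)$ with $d_3,d_4$ bounded. Granting this, $\order':=\order_{\LL'}[\xi]$ is generated over $\LL'[\alpha]$ by $\tfrac{a_1+I_{\LL'}}{\disc{\vecb_0}},\dots,\tfrac{a_{t+1}+I_{\LL'}}{\disc{\vecb_0}}$, it is a finite $\LL'[\alpha]$-module (a finitely generated $\LL'[\alpha]$-algebra that is integral over $\LL'[\alpha]$, by \cref{lem:finite-vs-integral}), it spans $\fr{A_{\LL'}}$ because $\order_{\LL'}$ already does, and it strictly contains $\order_{\LL'}$ since $\xi\notin\order_{\LL'}$; thus $\order'$ is a $(\LL'[\alpha],\fr{A_{\LL'}})$-order strictly larger than $\order\otimes_\LL\LL'$, which is case~(2). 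The bounds $d_3,d_4,e'$ produced along the way depend only on the stated parameters, so they serve in both cases.

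The first real step is to produce a bounded-complexity generator $D(\alpha)\in\LL[\alpha]$ of the discriminant ideal $\mathfrak{D}_{\order/\LL[\alpha]}$. Present $\order$ via \cref{lem:ring-hom} as $\LL[\vecZ]/\ideal{h_1,\dots,h_{k'}}$ with bounded-complexity $h_i$ and $\alpha$ corresponding to one of the $Z_j$. Since each generator of $\order$ is integral over $\LL[\alpha]$ of degree at most $s=[\fr{A}:\KK(\alpha)]=O_{d,n}(1)$ (by \cref{lem:bezout2}), $\order$ is spanned as an $\LL[\alpha]$-module by the at most $s^t$ monomials of bounded degree in the generators; reducing each such monomial against a \grobner basis of $\ideal{h_1,\dots,h_{k'}}$ (with the bounds of \cref{sec:comp}) and using $\mathcal{C}_\LL\subseteq\tfrac{1}{\disc{\vecb_0}}\LL[\alpha]\vecb_0$ from \cref{lem:disc-square} expresses it as a bounded-complexity vector in the free module $\tfrac{1}{\disc{\vecb_0}}\LL[\alpha]\vecb_0\cong\LL[\alpha]^s$. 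Putting this spanning set into Hermite normal form over the principal ideal domain $\LL[\alpha]$ yields an integral basis $\vecb$ of $\order$: the $\alpha$-degrees do not exceed the input degrees, and each gcd/elimination step is a bounded linear-algebra problem over $\LL$ which by \cref{lem:linsol} does not blow up the $\vecY$-degrees, so $\vecb$ has bounded complexity. Writing $\disc{\vecb_0}\cdot\vecb=M\vecb_0$ with $M\in\LL[\alpha]^{s\times s}$ (bounded), \cref{lem:disc-facts}\,\eqref{item:disc3} gives $\disc{\vecb}=\det(M)^2/\disc{\vecb_0}^{2s-1}$, a polynomial in $\LL[\alpha]$ of bounded complexity by \cref{lem:disc-bound} and \cref{lem:frac-deg-bound}; set $D(\alpha):=\disc{\vecb}$, so $\mathfrak{D}_{\order/\LL[\alpha]}=\ideal{D(\alpha)}$. (When additionally $\LL[\alpha]\vecb_0\subseteq\order$ --- the situation whenever this lemma is applied with the starting order $\LL[\alpha][\overline X]$ --- one may skip the Hermite-normal-form step entirely and take $D(\alpha)=\disc{\vecb_0}$, since then $\disc{\vecb_0}\in\mathfrak{D}_{\order/\LL[\alpha]}$, which already suffices for everything below.)

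Now enlarge the field and finish. Because $\vecb$ stays an integral basis of $\order_{\LL'}$ for every algebraic extension $\LL'$ of $\LL$ (here $A_{\LL'}$ is a domain by assumption~(1)), we have $\mathfrak{D}_{\order_{\LL'}/\LL'[\alpha]}=\ideal{D(\alpha)}$, and the maximal ideals of $\order_{\LL'}$ containing it correspond to maximal ideals of $\LL'[\vecZ]$ containing the zero-dimensional ideal $\ideal{h_1,\dots,h_{k'},D(Z_j)}$, each of residue degree at most $N:=\dim_{\LL'}\!\big(\order_{\LL'}/\ideal{D(\alpha)}\big)=O_{d,d',n}(1)$, independent of $\LL'$ (\cref{lem:bezout2}). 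Pick $e'\ge e$ minimal with $p^{e'-e}\ge N$ (so $p^{e'}$ is bounded) and set $\LL'=\KK^{(e')}$; since $\FF$ is perfect, $(\KK^{(e)})^{(e'-e)}=\KK^{(e')}$, so by \cref{lem:pth-root} every such maximal ideal of $\order_{\LL'}$ has residue field separable over $\LL'$. Moreover $\order_{\LL'}$ is still not integrally closed: faithful flatness of $\LL'/\LL$ gives $\order_{\LL'}\subsetneq\mathcal{C}_\LL\otimes_\LL\LL'\subseteq\mathcal{C}_{\LL'}$. Hence by \cref{thm:criterion} there is a maximal ideal $\mm'$ of $\order_{\LL'}$ containing $\mathfrak{D}_{\order_{\LL'}/\LL'[\alpha]}$ with $\id{\order_{\LL'}}{\mm'}\supsetneq\order_{\LL'}$, whose preimage $\widehat{\mm'}\subseteq\LL'[\vecZ]$ has separable residue field and contains $\ideal{h_1,\dots,h_{k'},D(Z_j)}$; by \cref{lem:extract} (applied over $\KK^{(e')}$ via the bijection $P^{(e')}\leftrightarrow P$), $\widehat{\mm'}$, and hence $\mm'$, is generated by bounded-complexity elements. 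By \cref{lem:disc-square} applied to $\order_{\LL'}$ with integral basis $\vecb$, together with \cref{lem:idealizer-integral} and \cref{cor:integral-subring}, we get $D(\alpha)\cdot\id{\order_{\LL'}}{\mm'}\subseteq\order_{\LL'}$, so \cref{lem:find-idealizer} with $c=D(\alpha)$ (a non-zero-divisor of the domain $\order_{\LL'}$) yields a bounded-complexity \grobner basis of the preimage $\mathcal{J}\subseteq\LL'[\vecZ]$ of $D(\alpha)\cdot\id{\order_{\LL'}}{\mm'}$. Since $\id{\order_{\LL'}}{\mm'}\supsetneq\order_{\LL'}$, the ideal $\mathcal{J}$ strictly contains the bounded-complexity preimage $\ideal{h_1,\dots,h_{k'},D(Z_j)}$ of $D(\alpha)\order_{\LL'}$, so some \grobner basis element $g$ of $\mathcal{J}$ lies outside it; then $\xi:=\overline{g}/D(\alpha)\in\id{\order_{\LL'}}{\mm'}\setminus\order_{\LL'}$, where $\overline{g}$ is the image of $g$ in $\order_{\LL'}$. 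Finally $\xi\in\mathcal{C}_{\LL'}\subseteq\tfrac{1}{\disc{\vecb_0}}\LL'[\alpha]\vecb_0$, so $\xi=\tfrac{a_{t+1}+I_{\LL'}}{\disc{\vecb_0}}$ for a unique $a_{t+1}$ in the $\LL'[\alpha]$-span of $1,X,\dots,X^{s-1}$ (with $X=\sum_i c_iX_i$ as fixed earlier); determining its coefficients is a bounded-size linear system over $\LL'$ with bounded-complexity data, so by \cref{lem:linsol} $a_{t+1}\in P^{(e')}(d_3,d_4)$ with $d_3,d_4$ bounded, completing case~(2).

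The step I expect to be the main obstacle is the second paragraph: obtaining a \emph{bounded-complexity} generator of $\mathfrak{D}_{\order/\LL[\alpha]}$ requires extracting an integral basis of $\order$ over $\LL[\alpha]$, i.e.\ a Hermite-normal-form computation over the polynomial ring $\LL[\alpha]$ with explicit control over the $\vecY$-degrees of the rational-function coefficients --- unlike the other ingredients this is not packaged as a ready lemma in \cref{sec:comp} and must be assembled from \cref{lem:linsol}. (In the intended application, where $\LL[\alpha]\vecb_0\subseteq\order$, this difficulty disappears and one uses $\disc{\vecb_0}$ directly via \cref{lem:disc-bound}.) A secondary point requiring care is verifying that the discriminant ideal, the witnessing maximal ideal, and its idealizer all base-change correctly from $\LL$ to $\LL'$, and that the required $e'$ (hence $p^{e'}$) stays bounded in terms of the allowed parameters.
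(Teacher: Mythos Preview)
Your proposal is correct and follows the same high-level strategy as the paper (one step of Trager's normalization with complexity tracking via \cref{lem:ring-hom}, \cref{lem:extract}, \cref{lem:find-idealizer}). Two implementation choices differ and are worth noting.

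First, the step you flag as the main obstacle---computing an integral basis of $\order$ by Hermite normal form in order to get a bounded-complexity generator $D(\alpha)$ of $\mathfrak D_{\order/\LL[\alpha]}$---is bypassed entirely in the paper. The paper (implicitly using $\LL[\alpha]\vecb_0\subseteq\order$, which holds in every invocation from \cref{thm:integral-closure}) observes via \cref{lem:disc-facts}\,\eqref{item:disc3} that $\mathfrak D_{\order/\LL[\alpha]}$ is generated by some \emph{factor} $Q(\alpha)$ of $\disc{\vecb_0}$, and then bounds $Q$ by combining \cref{lem:disc-bound} with \cref{lem:factor}. Your parenthetical shortcut---simply taking $D(\alpha)=\disc{\vecb_0}$ as a nonzero element of the discriminant ideal---already suffices for every use of $D(\alpha)$ downstream (separability, \cref{lem:extract}, and the hypothesis $c\cdot\id{\order_{\LL'}}{\mm'}\subseteq\order_{\LL'}$ of \cref{lem:find-idealizer}), and is arguably the cleanest route; no HNF is needed.

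Second, for recovering $a_{t+1}\in\LL'[\vecX]$ from the new integral element, the paper does not solve a linear system. Instead it pushes the \grobner basis element $\gamma\in\LL'[\vecZ]$ through the presentation map of \cref{lem:ring-hom} into $\LL'[\vecX,U]/\ideal{f_1,\dots,f_k,\disc{\vecb_0}\cdot U-1}\cong (A_{\LL'})_{\disc{\vecb_0}}$ by substituting $Z_i\mapsto r_i$, and then takes the remainder of $\gamma(r_1,\dots,r_{t+1})$ modulo a \grobner basis of $\ideal{f_1,\dots,f_k,\disc{\vecb_0}\cdot U-1}$ with respect to an elimination order for $U$ (\cref{lem:reduction-alg2}). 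This lands directly in $\LL'[\vecX]$ with the required complexity bounds, whereas your approach still needs a separate argument bounding the $\alpha$-degrees of the coefficients $c_i(\alpha)$ in $\disc{\vecb_0}\cdot\xi=\sum_i c_i(\alpha)\overline X^i$ before \cref{lem:linsol} can be applied.
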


\begin{proof}
Assume that $\order$ is not integrally closed. We will prove that \cref{item:add-new} holds.

By \cref{lem:ring-hom}, there exist $r_1,\dots,r_{t+1}\in P_{\vecX, U}^{(e)}(d_5,d_6)$ and $h_1,\dots,h_{k'}\in P_\vecZ^{(e)}(d_5,d_6)$ for some  $d_5=O_{d,d_1,n}(1)$ and $d_6=O_{d,d',d_1,d_2,n}(1)$ such that $r_{t+1}=\alpha$, the ring homomorphism 
\begin{align*}
\psi: \LL[\vecZ]/\ideal{h_1,\dots,h_{k'}}&\to \LL[\vecX, U]/\ideal{f_1,\dots,f_k,\disc{\vecb_0}\cdot U-1}
\\
Z_i+\ideal{h_1,\dots,h_{k'}}&\mapsto r_i + \ideal{f_1,\dots,f_k,\disc{\vecb_0}\cdot U-1}, \quad i=1,2,\dots, t+1.
\end{align*} 
is injective, and the image of $\psi$ equals $\order$ if we identify $\LL[\vecX, U]/\ideal{f_1,\dots,f_k,\disc{\vecb_0}\cdot U-1}$ with $(A_\LL)_{\disc{\vecb_0}}$. 

By \cref{lem:disc-facts}\,\eqref{item:disc3}, the discriminant ideal $\mathfrak{D}_{\order/\LL[\alpha]}$ is generated by some factor $Q(\alpha)\in \LL[\alpha]$ of $\disc{\vecb_0}$ and is not the unital ideal.
So by \cref{lem:disc-bound} and \cref{lem:prod-deg-bound},
$Q\in P^{(e)}(d_7,d_8)$ for some $d_7=O_{d,n}(1)$ and $d_8=O_{d,d',n,p^e}(1)$.
Identify $\LL[\vecZ]/\ideal{h_1,\dots,h_{k'}}$ with $\order$. Then $Q(Z_{t+1})+\ideal{h_1,\dots,h_{k'}}$ is identified with $Q(\alpha)$ since $r_{t+1}=\alpha$. As $\order$ is an integral domain of Krull dimension one and $\mathfrak{D}_{\order/\LL[\alpha]}$ is not the unital ideal, we know $\mathfrak{D}_{\order/\LL[\alpha]}=\ideal{Q(\alpha)}$ is a zero-dimensional ideal of $\order$.

For each maximal ideal $\mm$ of $\order$ containing $Q(\alpha)$, let $e_\mm$ be the largest nonnegative integer such that $p^{e_\mm}$ divides $[\order/\mm: \LL]$. Let $e_0$ be the maximum of $e_{\mm}$ over all maximal ideals $\mm$ of $\order$ containing $Q(\alpha)$. By \cref{lem:bezout2}, we have $p^{e_0}\leq \max(\deg(Q), \deg(h_1),\dots,\deg(h_{k'}))^{t+1}$, where $t$ may be assumed to be at most $\binom{n+d_1}{d_1}$.
So $p^{e_0}=O_{d,d_1,n,p^e}(1)$. Let $e'=e+e_0$ and $\LL'=\KK^{(e')}$. Then $p^{e'}=O_{d,d_1,n,p^e}(1)$.

By \cref{lem:pth-root}, the fact that $\FF$ is algebraically closed, and the discussion in the previous paragraph, for any maximal ideal $\mm$ of $\order\otimes_{\LL} \LL'$ containing $Q(\alpha)$, we know $(\order\otimes_{\LL} \LL')/\mm$ is a finite separable extension over $\LL'$. As $\order$ is not integrally closed, neither is $\order\otimes_{\LL}\LL'$.
This follows from \cref{lem:integral-transitivity} and the fact that $\order\otimes_{\LL}\LL'$ is integral over $\order$. Then by \cref{thm:criterion} and \cref{lem:idealizer-integral}, $\order\otimes_{\LL} \LL'$ has a maximal ideal $\mm$ containing $Q(\alpha)$ whose idealizer $\id{\order\otimes_{\LL}\LL'}{\mm}$ is integral over $\order\otimes_{\LL}\LL'$ and strictly larger than it. Fix such $\mm$.

We now find an element in $\id{\order\otimes_{\LL}\LL'}{\mm}\setminus (\order\otimes_{\LL}\LL')$. 
Let $c=Q(Z_{t+1})+\ideal{h_1,\dots,h_{k'}}\in \LL'[\vecZ]/\ideal{h_1,\dots,h_{k'}}$.
View $\mm$ as a maximal ideal of $\LL'[\vecZ]/\ideal{h_1,\dots,h_{k'}}$ containing $c$. 
Let $\widetilde{\mm}$ be the preimage of $\mm$ in $\LL'[\vecZ]$ under the natural quotient map, which is also maximal.
As $(\order\otimes_{\LL} \LL')/\mm$ is separable, by \cref{lem:extract}, $\widetilde{\mm}$ is generated by polynomials in $P^{(e')}(d_9,d_{10})$ where $d_9=O_{d,d_1,n}(1)$ and $d_{10}=O_{d,d',d_1,d_2,n,p^e}(1)$.

As $Q(\alpha)$ generates $\mathfrak{D}_{\order/\LL[\alpha]}$,
identifying $\LL'[\vecZ]/\ideal{h_1,\dots,h_{k'}}$ with $\order\otimes_\LL \LL'$ and applying
 \cref{lem:disc-square} and \cref{lem:idealizer-integral} shows that $c \cdot \id{\LL'[\vecZ]/\ideal{h_1,\dots,h_{k'}}}{\mm}\subseteq \LL'[\vecZ]/\ideal{h_1,\dots,h_{k'}}$.
Let $J\subseteq \LL'[\vecZ]$ be the preimage of $c \cdot \id{\LL'[\vecZ]/\ideal{h_1,\dots,h_{k'}}}{\mm}$ under the natural quotient map $\LL'[\vecZ]\to\LL'[\vecZ]/\ideal{h_1,\dots,h_{k'}}$.
By \cref{lem:find-idealizer}, $J$ has a \grobner basis $G_1$ contained in $P^{(e')}(d_{11},d_{12})$, where $d_{11}=O_{d,d_1,n}(1)$ and $d_{12}=O_{d,d',d_1,d_2,n,p^e}(1)$.

As $\id{\order\otimes_{\LL}\LL'}{\mm}$ is strictly larger than $\order\otimes_{\LL}\LL'$, $G_1$ contains an element $\gamma\in\LL'[\vecZ]$ such that the image of $\gamma+\ideal{h_1,\dots,h_{k'}}$ in $(A_{\LL'})_{\vecb_0}\cong \LL'[\vecX, U]/\ideal{f_1,\dots,f_k,\disc{\vecb_0}\cdot U-1}$ is the desired $a_{t+1}+I_{\LL'}$. However, to find $a_{t+1}$, we cannot simply map each $Z_i$ to $r_i$ and let $a_{t+1}=\gamma(r_1,\dots,r_{t_1})$, as the variable $U$ may appear in $a_{t+1}$. Instead, we choose a \grobner basis $G_2$ of the ideal $\ideal{f_1,\dots,f_k,\disc{\vecb_0}\cdot U-1}$ of $\LL'[\vecX, U]$ with respect to $\preceq$, where $\preceq$ is an elimination order for $U$ on $\mathcal{M}_{\vecX, U}$ that is degree-compatible in the $\vecX$ variables. 
Then we choose $a_{t+1}$ to be the remainder of $\gamma(r_1,\dots,r_{t_1})$ modulo $G_2$ with respect to $\preceq$. The resulting $a_{t+1}$ is in $\LL'[\vecX]$ and also in $P^{(e')}(d_3,d_4)$ for some $d_{3}=O_{d,d_1,n}(1)$ and $d_{4}=O_{d,d',d_1,d_2,n,p^e}(1)$ by \cref{lem:elimination} and \cref{lem:reduction-alg2}.
\end{proof}

Now we are ready to prove \cref{thm:integral-closure}.

\begin{proof}[Proof of \cref{thm:integral-closure}]
We start from the order $\order_0$ generated by $\overline{X}$ over $\KK[\alpha]$. In the $i$-th step, we use \cref{lem:add-new} to add a generator, replacing an $(\KK^{(e_{i-1})}[\alpha],\fr{A_{\KK^{e_{i-1}}}})$-order $\order_{i-1}$ by a $(\KK^{(e_{i})}[\alpha],\fr{A_{\KK^{e_{i}}}})$-order $\order_{i}$, until we obtain an order that is integrally closed. Suppose this process terminates after $\tau$ steps. So  $\order_\tau$ is integrally closed.
For $i=0,1,\dots,\tau$, the generators of $\order_i$ are represented by elements $\alpha_1,\dots,\alpha_{i+1} \in \KK^{(e_i)}[\vecX]$ that are in $P^{(e_i)}(d_i,d_i')$.

By \cref{lem:add-new}, the parameters $d_i,d_i',p^{e_i}$ satisfies
\[
d_0,d_0',p^{e_0} = O_{d,d',n}(1)
\]
and
\[
d_i,d_i',p^{e_i} \leq F_{d,d',n}(d_{i-1},d'_{i-1},p^{e_{i-1}})\quad \text{ for } i>0.
\]
where $F_{d,d',n}$ is some non-decreasing function depending only on $d$, $d'$, and $n$.
Define $F(0)=\max(d_0,d_0',p^{e_0})$ and $F(i)=F_{d,d',n}(F(i-1),F(i-1),F(i-1))$.
Then $d_i,d'_i,p^{e_i}\leq F(i)$.

By \cref{lem:disc-facts}\,\eqref{item:disc3}, replacing $\order_{i-1}$ by $\order_{i}$ decreases the degree of the generator of the discriminant ideal by at least two.
So by \cref{lem:disc-bound}, it takes $\tau\leq \deg_\alpha(\disc{\vecb_0})/2=O_{d,n}(1)$ steps before an integrally closed order is found.
Therefore, the final parameters $d_\tau$, $d'_\tau$, and $p^{e_\tau}$ are bounded by $F(\tau)=O_{d,d',n}(1)$. 

Note that  $\order_\tau$ equals the integral closure of $A_{\KK^{(e_\tau)}}$. This follows from \cref{cor:integral-subring} and the fact that $A_{\KK^{(e_\tau)}}$ is integral over $\KK[\alpha]$.

To find the polynomials $g_i$, use the map $\psi$ associated with $\order_\tau$ given by \cref{lem:ring-hom}, and then use \cref{lem:ring-iso}\,\eqref{item:poly3} to compute the inverse of $X_i+\ideal{f_1,\dots,f_k,\disc{\vecb_0}\cdot U-1}$ under $\psi$ for $i\in [n]$.

Finally, the parameter $m$ in  \cref{thm:integral-closure} equals $t+1=\tau+2$, where $t=\tau+1$ is the the number of elements $a_1,\dots,a_t$ corresponding to the generators of $\order_\tau$ as an algebra 
over $\KK^{(e_\tau)}[\alpha]$. As $a_1,\dots,a_t\in P^{(e_\tau)}(d_\tau, d_\tau')$ with $d_\tau=O_{d,d',n}(1)$, we may assume $m=O_{d,d',n}(1)$.
\end{proof}

\section{PIT via Normalization}\label{sec:pit}

In this section, let $\FF$ be a field and
let $\KK=\FF(\vecY)=\FF(Y_{0,1},\dots Y_{0,n},Y_{1,1},\dots,Y_{1,n},Y_{2,1},\dots,Y_{2,n})$.

\subsection{Restricting to a Generic Affine Plane}

We start by discussing the algebra resulting from restricting to a generic affine plane.

\begin{definition}[Restriction to a generic affine plane]\label{def:res}
For $f\in\FF[\vecX]=\FF[X_1,\dots,X_n]$ be a polynomial over a field $\FF$, define $\res(f)$ to be the polynomial
\begin{equation}\label{eq:res}
f\left(Y_{0,1}+Y_{1,1} Z_1+Y_{2,1} Z_2, \dots, Y_{0,n}+Y_{1,n}Z_1+Y_{2,n} Z_2\right)\in \FF[\vecY][Z_1,Z_2]\subseteq \KK[Z_1,Z_2].
\end{equation}
\end{definition}

\begin{lemma}\label{lem:irreducibility}
Let $n\geq 2$.
Suppose $\FF$ is algebraically closed and $f\in\FF[\vecX]=\FF[X_1,\dots,X_n]$ is irreducible over $\FF$. Then $\res(f)\in \KK[\vecZ]$ is absolutely irreducible over $\KK=\FF(\vecY)$.
\end{lemma}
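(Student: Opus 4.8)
Here is the proof I would give.

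The plan is to recognize $\res(f)$ as the defining equation of a generic affine-plane section of the hypersurface $V(f)$, and then to invoke Bertini's irreducibility theorem; the only case that needs real work is $n\ge 3$. \emph{Geometric reformulation.} Since $\FF$ is algebraically closed, hence perfect, and $f$ is irreducible, $V(f)=\mathrm{Spec}(\FF[\vecX]/(f))$ is an integral affine variety of dimension $n-1$ that stays integral after any field extension of $\FF$; that is, $V(f)$ is geometrically integral. Let $\psi\colon\aff^2_\KK\to\aff^n_\KK$ be the affine-linear map $(Z_1,Z_2)\mapsto(Y_{0,i}+Y_{1,i}Z_1+Y_{2,i}Z_2)_{i=1}^n$, so that $\res(f)=\psi^\ast f$. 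Because $n\ge 2$, the minor $Y_{1,1}Y_{2,2}-Y_{1,2}Y_{2,1}$ is a nonzero element of $\KK$, so the linear part of $\psi$ has rank $2$; thus $\psi$ is a closed immersion of $\aff^2_\KK$ onto an affine $2$-plane $\Pi\subseteq\aff^n_\KK$, and it identifies $\mathrm{Spec}(\KK[\vecZ]/(\res f))$ with the scheme-theoretic intersection $V(f)_\KK\cap\Pi$. Hence $\res(f)$ is absolutely irreducible over $\KK$ if and only if $V(f)_\KK\cap\Pi$ is geometrically integral over $\KK$, and proving the latter is the goal. When $n=2$ this is immediate: $\psi$ is then an affine automorphism of $\aff^2_\KK$, so $V(f)_\KK\cap\Pi\cong V(f)_\KK$, which is geometrically integral because $V(f)$ is; so assume $n\ge 3$ from now on.

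\emph{Bertini.} The key input is the Bertini irreducibility theorem over an arbitrary field (see, e.g., \cite[Bertini irreducibility]{stacks-project}, or Jouanolou, \emph{Th\'eor\`emes de Bertini et applications}): a generic affine-hyperplane section of a geometrically integral variety of dimension $\ge 2$ over a field $k$ is again geometrically integral, of dimension one less. Starting from $V(f)$ and applying this $n-2$ times---the dimension remains $\ge 2$ before the final cut and becomes $1$ after it---one obtains that the section of $V(f)$ by the intersection of $n-2$ generic affine hyperplanes, that is, by a generic affine $2$-plane, is a geometrically integral curve. Now geometric integrality of $V(f)\cap(\text{a }2\text{-plane})$ depends only on the $2$-plane, and it is insensitive to extensions of the base field; since both our family $\vecY\mapsto\Pi_\vecY$ and the iterated-hyperplane family are dominant over the (irreducible) parameter space of affine $2$-planes in $\aff^n$, each generic member maps to the generic point of that parameter space. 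Therefore $V(f)_\KK\cap\Pi_\vecY$ is obtained from the geometrically integral generic $2$-plane section of the iterated-hyperplane construction by base change to $\KK$, hence is itself geometrically integral over $\KK$; equivalently, $\res(f)$ is absolutely irreducible over $\KK$.

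\emph{Where the difficulty lies.} Essentially all of the content is in the Bertini step; the rest is bookkeeping. Two points need care. First, one must use the \emph{irreducibility/integrality} form of Bertini, which holds in every characteristic---unlike the smoothness form of Bertini, which fails in characteristic $p$; fortunately only the former is needed. Second, transferring the conclusion ``the generic $2$-plane section of $V(f)$ is geometrically integral'' from whatever parametrization Bertini is proved in to our specific parametrization $\psi$ must be argued through the parameter space of affine $2$-planes (equivalently, through the constructibility of the locus where the fibers of a finite-type morphism are geometrically integral, combined with dominance of $\vecY\mapsto\Pi_\vecY$); this is standard but should be spelled out. Finally, I record a shortcut that gives the weaker conclusion ``$\res(f)$ is irreducible over $\KK$'' with no Bertini at all: the ring homomorphism $\FF[\vecY,\vecZ]/(\res f)\to(\FF[\vecX]/(f))[U_1,\dots,U_{2n+2}]$ sending $Y_{1,i}\mapsto U_i$, $Y_{2,i}\mapsto U_{n+i}$, $Z_j\mapsto U_{2n+j}$, $Y_{0,i}\mapsto X_i-U_iU_{2n+1}-U_{n+i}U_{2n+2}$ is an isomorphism onto a domain, so $\res(f)$ is irreducible in $\FF[\vecY,\vecZ]$ (and nonzero), hence primitive over $\FF[\vecY]$ in the variables $\vecZ$, hence irreducible over $\KK$ by Gauss's lemma; only the word ``absolutely'' requires the Bertini argument.
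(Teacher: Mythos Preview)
Your proof is correct but takes a different route from the paper, which gives no argument of its own and instead cites \cite[Lemma~7]{Kal95} (with a footnote explaining how to match the two parametrizations). Kaltofen's proof is a direct, self-contained algebraic argument that does not invoke the Bertini machinery. Your approach through Jouanolou's Bertini theorem is more conceptual and extends verbatim to generic linear sections of arbitrary geometrically integral varieties, at the cost of importing a nontrivial black box. One point worth tightening in your writeup: the Bertini statement you actually need preserves \emph{geometric integrality}, not merely geometric irreducibility; both are in Jouanolou, but irreducibility alone would not rule out $\res(f)=c\,g^m$ with $m>1$ over $\overline{\KK}$. You can also supply the missing squarefreeness directly, as is essentially done in the paper's \cref{lem:irred-sep}: since $\FF$ is perfect and $f$ irreducible, some $\partial_{X_i}f\ne 0$, hence $\partial_{Z_1}\res(f)\ne 0$; being of strictly smaller $Z_1$-degree than the $\KK$-irreducible $\res(f)$, it is coprime to $\res(f)$ in $\KK[Z_1,Z_2]$ and therefore also in $\overline{\KK}[Z_1,Z_2]$.
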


For a proof of \cref{lem:irreducibility}, see \cite[Lemma~7]{Kal95}.\footnote{The statement \cite[Lemma~7]{Kal95} is slightly different, where the expression $Y_{0,1}+Y_{1,1} Z_1+Y_{2,1} Z_2$ in \eqref{eq:res} is replaced by $Y_{0,1}+Z_1$, resulting another polynomial $\varphi_2$. However, the proof can be adapted to prove \cref{lem:irreducibility}.
Alternatively, we can recover \eqref{eq:res} from $\varphi_2$ by performing the invertible $\KK$-linear variable substitution $Z_1\mapsto Y_{1,1} Z_1+Y_{2,1} Z_2$, followed by applying the field automorphism of $\KK$ that maps $Y_{1,i}\mapsto Y_{1,i}/Y_{1,1}$ and $Y_{2,i}\mapsto Y_{2,i}-Y_{2,1}Y_{1,i}/Y_{1,1}$ for $i\in [n]\setminus\{1\}$. Both transformations preserve absolute irreducibility.}

\begin{lemma}\label{lem:irred-sep}
Let $n\geq 2$.
Suppose $\FF$ is algebraically closed and $f\in\FF[\vecX]=\FF[X_1,\dots,X_n]$ is an irreducible polynomial over $\FF$ of degree $d>0$.
Let $\LL$ be an algebraic extension of $\KK=\FF(\vecY)$.
Then $\LL[Z_1,Z_2]/\ideal{\res(f)}$ is an integral domain.
Moreover, for every 
$(c_1,c_2)\in(\FF^\times)^2$, the field of fractions $\EE$ of $\LL[Z_1,Z_2]/\ideal{\res(f)}$ is a finite separable extension of $\LL(
\overline{X}_{c_1,c_2})$, where $\overline{X}_{c_1,c_2}:=c_1 Z_1+c_2 Z_2+\ideal{\res(f)}\in \LL[Z_1,Z_2]/\ideal{\res(f)}\subseteq\EE$.
\end{lemma}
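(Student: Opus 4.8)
The plan is to deduce both assertions from \cref{lem:irreducibility} together with one explicit invertible linear change of coordinates. Write $R:=\LL[Z_1,Z_2]/\ideal{\res(f)}$. First, the integral-domain claim. By \cref{lem:irreducibility}, $\res(f)$ is absolutely irreducible over $\KK$; in particular it is nonconstant, since its $(Z_1,Z_2)$-degree equals $\deg(f)=d\geq 1$ (the degree-$d$ form of $\res(f)$ is $f_d(Y_{1,\cdot}Z_1+Y_{2,\cdot}Z_2)$, whose coefficient of $Z_2^d$ is $f_d(Y_{2,1},\dots,Y_{2,n})\neq 0$, where $f_d\neq 0$ is the leading form of $f$). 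Fixing an embedding $\LL\hookrightarrow\overline{\KK}$ (possible as $\LL/\KK$ is algebraic), absolute irreducibility forces $\res(f)$ to remain irreducible in $\LL[Z_1,Z_2]$, so $\ideal{\res(f)}$ is a nonzero prime ideal and $R$ is an integral domain.

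Now fix $(c_1,c_2)\in(\FF^\times)^2$ and set $t:=\overline{X}_{c_1,c_2}=c_1Z_1+c_2Z_2+\ideal{\res(f)}\in R$, $\lambda:=c_2/c_1\in\FF^\times$. The substitution $Z_1\mapsto (T-c_2Z_2)/c_1$, $Z_2\mapsto Z_2$ is an $\LL$-algebra isomorphism $\sigma:\LL[Z_1,Z_2]\xrightarrow{\sim}\LL[T,Z_2]$; put $h:=\sigma(\res(f))\in\LL[T,Z_2]$. Since $\sigma$ is an isomorphism, $h$ is irreducible in $\LL[T,Z_2]$, and a short computation with leading forms (analogous to the one above, now evaluating $f_d$ at $(c_1Y_{2,j}-c_2Y_{1,j})_j$) gives $\deg_{Z_2}(h)=d\geq 1$; hence $h$ is primitive in $\LL[T][Z_2]$ and, by Gauss's lemma, irreducible in $\LL(T)[Z_2]$. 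In $R$ we have $Z_1+\ideal{\res(f)}=(t-c_2(Z_2+\ideal{\res(f)}))/c_1$, so $R=\LL[t,\,Z_2+\ideal{\res(f)}]$, $\EE=\fr{R}=\LL(t)(Z_2+\ideal{\res(f)})$, and $Z_2+\ideal{\res(f)}$ is a root of $h(t,\cdot)$, whose monic normalization is its minimal polynomial over $\LL(t)$. Consequently $t$ is transcendental over $\LL$ (otherwise $h\mid P(T)$ for a nonzero $P\in\LL[T]$, impossible since $\deg_{Z_2}(h)\geq 1$), $\LL(t)\cong\LL(T)$, and $[\EE:\LL(t)]=d<\infty$. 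It remains to prove that $h$ is a separable polynomial over $\LL(T)$; as $h$ is irreducible, this is equivalent to $\partial h/\partial Z_2\neq 0$, which is automatic in characteristic $0$.

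The crux is separability in characteristic $p>0$, and this is precisely where genericity of $\vecY$ is essential: for a non-generic substitution $\res(f)$ could be, e.g., of the form $Z_2^p-Z_1$, which yields a purely inseparable $\EE$ over $\LL(Z_1+\ideal{\res(f)})$ — this is why both $c_1,c_2$ must be nonzero and $\vecY$ generic. By the chain rule, $\partial_{Z_2}h=\sigma\bigl((\partial_{Z_2}-\lambda\,\partial_{Z_1})\res(f)\bigr)$, so since $\sigma$ is an isomorphism it suffices to show $(\partial_{Z_2}-\lambda\,\partial_{Z_1})\res(f)\neq 0$ in $\KK[Z_1,Z_2]$. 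Applying the chain rule once more, this equals $\sum_{j=1}^n (Y_{2,j}-\lambda Y_{1,j})\,(\partial_{X_j}f)(\mathbf{L})$, where $\mathbf{L}_j=Y_{0,j}+Y_{1,j}Z_1+Y_{2,j}Z_2$. Since $\FF$ is algebraically closed (hence perfect) and $f$ is irreducible of positive degree, $f\notin\FF[X_1^p,\dots,X_n^p]$ (otherwise $f$ would be a $p$-th power), so $\partial_{X_{j_0}}f\neq 0$ for some $j_0$. To see the displayed sum is nonzero, I would evaluate at $Y_{0,j}\gets 0$, perform the invertible substitution $Y_{2,j}\gets\lambda Y_{1,j}+W_j$ and $Z_1\gets U-\lambda Z_2$, and then set $W_j\gets 0$ for $j\neq j_0$; since these operations only shrink the support, it is enough that the resulting expression $W_{j_0}\,(\partial_{X_{j_0}}f)(Y_{1,1}U,\dots,Y_{1,j_0}U+W_{j_0}Z_2,\dots,Y_{1,n}U)$ is nonzero. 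Viewed as a polynomial in $Z_2$ over $\FF[Y_{1,1},\dots,Y_{1,n},W_{j_0},U]$, its top-degree coefficient is $W_{j_0}^{\,D+1}$ times $g_D\bigl((Y_{1,j}U)_{j\neq j_0}\bigr)$, where $D=\deg_{X_{j_0}}(\partial_{X_{j_0}}f)$ and $g_D\neq 0$ is the leading coefficient of $\partial_{X_{j_0}}f$ with respect to $X_{j_0}$; and $g_D\bigl((Y_{1,j}U)_{j\neq j_0}\bigr)\neq 0$ because distinct monomials in $\{X_j:j\neq j_0\}$ are sent to $\FF[U]$-linearly independent monomials in the indeterminates $Y_{1,j}$. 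Hence $\partial_{Z_2}h\neq 0$, $h$ is separable over $\LL(T)$, and $\EE/\LL(t)$ is finite separable, which completes the proof.

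I expect the main obstacle to be exactly this last paragraph: keeping the bookkeeping of the successive coordinate changes straight and verifying that the genericity of $\vecY$ forces $(\partial_{Z_2}-\lambda\,\partial_{Z_1})\res(f)\neq 0$. The rest — the integral-domain claim, the reduction through $\sigma$, and the finiteness of the extension — is routine given \cref{lem:irreducibility} and Gauss's lemma.
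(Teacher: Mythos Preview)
Your proof is correct and follows essentially the same strategy as the paper's: use \cref{lem:irreducibility} for the integral-domain claim, pass to coordinates $(T,Z_2)$ with $T=c_1Z_1+c_2Z_2$ (the paper uses $(Z_1,X_{c_1,c_2})$, which is the same up to swapping roles), compute the relevant partial derivative via the chain rule, and show it is nonzero using that some $\partial_{X_j}f\neq 0$ by perfectness of~$\FF$. The only cosmetic difference is in how nonvanishing of the derivative is verified: the paper exhibits a single surviving monomial $X_{c_1,c_2}^{\deg(m)}\bigl(\prod_i Y_{2,i}^{e_i}\bigr)Y_{1,i_0}$ directly, whereas you reach the same conclusion through a short chain of specializations --- a bit longer, but equally valid.
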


\begin{proof}
By \cref{lem:irreducibility}, $\res(f)$ is absolutely irreducible over $\KK$. So it is irreducible over $\LL$. Therefore, $\LL[Z_1,Z_2]/\res(f)$ is an integral domain.

Consider $(c_1,c_2)\in(\FF^\times)^2$ and let $X_{c_1,c_2}:=c_1 Z_1+c_2 Z_2$.
Then $Z_2=c_2^{-1}(X_{c_1,c_2}-c_1 Z_1)$. So $\LL[Z_1,Z_2]=\LL[Z_1,X_{c_1,c_2}]$. 
Performing the substitution $Z_2=c_2^{-1}(X_{c_1,c_2}-c_1 Z_1)$ in \eqref{eq:res}, we have
\[
\res(f)=f(T_1,\dots, T_n),
\]
where 
\begin{equation}\label{eq:Ti}
T_i:=Y_{0,i}+Y_{1,i}Z_1+c_2^{-1}Y_{2,i}(X_{c_1,c_2}-c_1 Z_1),\quad i=1,2,\dots,n
\end{equation}
View $\res(f)$ as a univariate polynomial in $Z_1$ over $\LL[X_{c_1,c_2}]$ (which is fine as $Z_1$ and $X_{c_1,c_2}$ are algebraically independent).
Its derivative, which we denote by $D_{\res(f)}$, can be determined by the chain rule:
\begin{equation}\label{eq:res-partial}
D_{\res(f)}=\sum_{i=1}^n \frac{\partial f}{\partial X_i}(T_1,\dots,T_n)\frac{\partial T_i}{\partial Z_1}=\sum_{i=1}^n \frac{\partial f}{\partial X_i}(T_1,\dots,T_n)\cdot (Y_{1,i}-c_2^{-1}c_1Y_{2,i}).
\end{equation}

As $\FF$ is algebraically closed (and hence a perfect field) and $f$ is irreducible over $\FF$, we know $\frac{\partial f}{\partial X_i}\neq 0$ for some $i\in [n]$.
Choose $i_0\in [n]$ and a monomial $m=\prod_{i=1}^n X_i^{e_i}$ that appears in $\frac{\partial{f}}{\partial X_{i_0}}$ such that $\deg_{\vecX}(m)$ is maximized over all choices of $i_0$ and $m$. 
Then by \eqref{eq:Ti} and \eqref{eq:res-partial}, $D_{\res(f)}$, viewed as a polynomial in $\vecY$, $Z_1$, and $X_{c_1,c_2}$ over $\FF$, has a monomial $X_{c_1,c_2}^{\deg_{\vecX}(m)}\left(\prod_{i=1}^n Y_{2,i}^{e_i}\right) Y_{1,i_0}$ 
that comes from the term $\frac{\partial f}{\partial X_{i_0}}(T_1,\dots,T_n)\cdot Y_{1,i_0}$ in  \eqref{eq:res-partial} and is not canceled by other monomials.
So $D_{\res(f)}\neq 0$.

As the degree of $D_{\res(f)}$ is smaller than that of $\res(f)$, both viewed as univariate polynomials in $Z_1$ over $\LL[X_{c_1,c_2}]$, we have $D_{\res(f)}\not\in \ideal{\res(f)}$.

Let $f_d=f_d(X_1,\dots,X_n)$ be the homogeneous degree-$d$ component of $f$, where $d=\deg(f)$. 
View $\res(f)$ as a univariate polynomial in $Z_1$ over $\LL[X_{c_1,c_2}]$. 
Then we can write $\res(f)=\sum_{i=0}^d a_i Z_1^i$, where $a_i\in\LL[X_{c_1,c_2}]$.
By $\eqref{eq:Ti}$, we have 
\[
a_d=f_d(Y_{1,1}-c_2^{-1}c_1Y_{2,1},\dots,Y_{1,n}-c_2^{-1}c_1Y_{2,n})\neq 0
\]
We also have $a_d\not\in\ideal{\res(f)}$ as $d>0$ and $a_d$ is independent of $Z_1$.

Note that $\LL[X_{c_1,c_2}]\cap \ideal{\res(f)}=0$ since $d>0$.
So we have an inclusion 
\[
\LL[X_{c_1,c_2}]\cong \LL[X_{c_1,c_2}]/(\LL[X_{c_1,c_2}]\cap \ideal{\res(f)})\hookrightarrow \LL[Z_1,Z_2]/\ideal{\res(f)} 
\]
Taking the fields of fractions, we see that $\LL(X_{c_1,c_2})$ may be identified with $\LL(\overline{X}_{c_1,c_2})\subseteq \EE$.
And the minimal polynomial of $Z_1$ over $\LL(X_{c_1,c_2})$ is $F(T):=\sum_{i=1}^d (a_i/a_d) T^d\in \LL(X_{c_1,c_2})[T]$.
We also have $F'(Z_1+\ideal{\res(f)})=(D_{\res(f)}+\ideal{\res(f)})/(a_d+\ideal{\res(f)})\neq 0$ since $a_d,D_{\res(f)}\not\in\ideal{\res(f)}$.
It follows that $Z_1+\ideal{\res(f)}\in\EE$ is separable over  $\LL(
\overline{X}_{c_1,c_2})$.
A symmetric argument shows that $Z_2+\ideal{\res(f)}\in \EE$ is   separable over $\LL(
\overline{X}_{c_1,c_2})$.
As $\EE$ is generated by $Z_1+\ideal{\res(f)}$ and $Z_2+\ideal{\res(f)}$, we conclude that that $\EE$ is a finite separable extension of $\LL(
\overline{X}_{c_1,c_2})$. 
\end{proof}

\subsection{Proof of the Main Theorem}

For convenience, we state our main theorem again.

\begin{theorem}[Main theorem, homogeneous version]\label{thm:main}
Let $C_{n,d,k,\delta,\FF}$ be the set of polynomials $F\in\FF[\vecX]=\FF[X_1,\dots,X_n]$ over a field $\FF$ satisfying the following conditions:
\begin{enumerate}[(1)]
    \item $F$ can be expressed as a sum $F=\sum_{i=0}^{k_0-1} F_i$, where $k_0\leq k$, $F_i=\prod_{j=1}^{m_i} f_{i,j}$ for $i\in \{0,1,\dots,k_0-1\}$, and each $f_{i,j}\in \FF[\vecX]$ is a nonzero homogeneous polynomial of degree at most $\delta$.
    \item $\deg(F_i)=d_0$ for some $d_0\leq d$ and all $i\in \{0,1,\dots,k_0-1\}$.
    \item\label{item:hom3} $F_i$ is squarefree for some $i\in \{0,1,\dots,k_0-1\}$, meaning that the irreducible factors of $F_i$ over $\overline{\FF}$ are distinct.
\end{enumerate}
Then there exists an explicit $(nd)^{O_{\delta}(1)}$-sized hitting set $\mathcal{H}\subseteq \overline{\FF}^n$ for $C_{n,d,3,\delta,\FF}$.
\end{theorem}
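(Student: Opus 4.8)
The plan is to reduce, deterministically and in polynomial time, to detecting the non-vanishing of an explicit polynomial on a \emph{curve of bounded complexity}, and then to apply the normalization machinery of \cref{sec:normalization}. Since $C_{n,d,3,\delta,\FF}\subseteq C_{n,d,3,\delta,\overline{\FF}}$, I would first reduce to the case $\FF=\overline{\FF}$, and I may assume $n\ge 2$ (univariate PIT of degree $\le d$ being trivial). The hitting set $\mathcal{H}$ will be a union of two parts. The first part handles every $F$ equal to a nonzero scalar multiple of a single product of nonzero polynomials of degree $\le\delta$ and total degree $\le d$; this subsumes the case $k_0\le 2$ and, by \cite{Guo24}, admits an explicit $(nd)^{O_\delta(1)}$-sized hitting set. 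The second part handles the main case $F=F_0+F_1+F_2$ with $F_0$ squarefree (after relabelling): if every irreducible factor of $F_0$ divides $F_1+F_2$, then squarefreeness forces $F_0\mid F_1+F_2$, and comparing degrees gives $F_1+F_2\in\{0\}\cup\FF^\times F_0$, so $F$ is a scalar multiple of $F_0$ and is covered by the first part; otherwise there is an irreducible $\theta\mid F_0$ with $\theta\nmid F_1+F_2$, and $\deg\theta\le\delta$ since $\theta$ divides one of the degree-$\le\delta$ factors of $F_0$.

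For the main case I would pass to the generic affine plane of \cref{def:res} over $\KK=\FF(\vecY)$: as $\res(\cdot)$ is injective it suffices to detect non-vanishing of $\res(F)=\res(F_0)+\res(F_1)+\res(F_2)$. By \cref{lem:irreducibility}, $\theta':=\res(\theta)$ is absolutely irreducible over $\KK$ of degree $\le\delta$, so $\mathcal{C}:=V(\theta')\subseteq\aff^2_\KK$ is an irreducible plane curve; moreover a polynomial vanishes on $\mathcal{C}$ iff it is divisible by $\theta$, so $\res(F_0)\equiv 0$ and $\res(F)\equiv\bar F:=\overline{\res(F_1)}+\overline{\res(F_2)}$ on $\mathcal{C}$, with $\bar F\ne 0$ in $A:=\KK[Z_1,Z_2]/\ideal{\theta'}$ precisely because $\theta\nmid F_1+F_2$. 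Choosing generic $c_1,c_2\in\FF^\times$ via \cref{lem:noether} and setting $\alpha:=c_1Z_1+c_2Z_2$, the ring $A$ is a finite $\KK[\alpha]$-module with $\KK[\alpha]\cong\KK[t]$, and by \cref{lem:irred-sep} $\fr{A}$ is a finite separable extension of $\KK(\alpha)$ while $A_\LL$ is a one-dimensional domain for every algebraic $\LL/\KK$. These are exactly the hypotheses of \cref{thm:integral-closure}, which I would invoke to obtain, over $\KK^{(e)}=\FF(\vecY^{1/p^e})$ with $p^e$ and the presentation size all $O_\delta(1)$, an explicit presentation $\widetilde{A}:=\widetilde{A_{\KK^{(e)}}}\cong\KK^{(e)}[W_1,\dots,W_m]/\ideal{h_1,\dots,h_{k'}}$ of the integral closure together with the inclusion $A_{\KK^{(e)}}\hookrightarrow\widetilde{A}$. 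Being a one-dimensional integrally closed domain, $\widetilde{A}$ is Dedekind, so each nonzero element has a well-defined divisor supported on finitely many maximal ideals, all of residue degree $O_\delta(1)$ by \cref{lem:bezout2}.

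Now I would run the $k=2$-style argument on the normalized curve. Put $\bar F_i:=\overline{\res(F_i)}\in A_{\KK^{(e)}}\subseteq\widetilde{A}$; each $\bar F_i=\prod_j\overline{\res(f_{i,j})}$ is a product of images of degree-$\le\delta$ polynomials, each of complexity $O_\delta(1)$ in the presentation of $\widetilde{A}$ (by \cref{lem:ring-iso}), so $\mathrm{div}(\bar F_i)$ is a sum of $\le d$ divisors of degree $O_\delta(1)$ and $\bar F=\bar F_1+\bar F_2\ne 0$ has at most $O_\delta(d)$ zeros on $\widetilde{\mathcal{C}}$ (if some $\bar F_i$ vanishes, e.g.\ when $\theta$ divides one of the $f_{i,j}$, the situation degenerates to the $k=1$ case and the conclusion below is unchanged). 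If $\mathrm{div}(\bar F_1)=\mathrm{div}(\bar F_2)$ then $\bar F_1/\bar F_2$ is a unit of $\widetilde{A}$ and $\bar F$ is a nonzero element whose zero locus on $\widetilde{\mathcal{C}}$ is still of degree $O_\delta(d)$; otherwise the divisors differ, and, reasoning as in the $k=2$ case from the introduction but with ``irreducible factor of $F_i$'' replaced by ``a point of $\widetilde{\mathcal{C}}$ with its multiplicity in $\mathrm{div}(\overline{\res(f_{i,j})})$'', the non-matching of the two divisor decompositions persists after intersecting with a generic affine line $\ell\subseteq\aff^2$, forcing $\bar F\ne 0$ at an intersection point. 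In all cases, a generic line meets $\mathcal{C}$ transversally in $\le\delta$ points, all avoiding the $O_\delta(d)$-point bad locus of $\res(F)$ on $\mathcal{C}$; composing with the plane parametrization of \cref{def:res} gives a point where $F\ne 0$.

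For derandomization, all the purely structural requirements (absolute irreducibility of $\res(\theta)$, the Noether data $(c_1,c_2)$ and separability of $\fr{A}$, correctness of the normalization output, transversality of $\ell$) are witnessed by the non-vanishing of an $O_\delta(1)$-sized family of polynomials in $\vecY$ (and in the line parameters) of degree $O_\delta(1)$ that depend only on the degree-$\le\delta$ datum $\theta$ and on the presentation of $\widetilde{A}$, not on $F$; a generic-to-specific argument in the style of \cite{Kal95} then gives a single admissible specialization valid simultaneously for every irreducible $\theta$ of degree $\le\delta$. The only genuinely $F$-dependent condition, namely that $\ell$ avoid the $O_\delta(d)$ bad points of $\res(F)$ on $\mathcal{C}$, lives in the two-dimensional plane and is met by an explicit $\mathrm{poly}(d)$-sized family of lines in general position. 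I would find admissible values of $\vecY$ and of the line parameters deterministically in polynomial time by bounded-degree PIT \cite{KS01}, and take $\mathcal{H}$ to be the (finitely many) resulting curve-line intersection points pushed forward to $\overline{\FF}^n$, together with the first part; its size is $(nd)^{O_\delta(1)}$. The main obstacle, and the reason for \cref{sec:comp,sec:normalization}, is to make this whole pipeline effective with \emph{all} complexity parameters $O_\delta(1)$ while working over the non-perfect field $\KK^{(e)}$ and the possibly singular curve $\mathcal{C}$ via its computed normalization: one must keep $p^e=O_\delta(1)$ when repairing inseparability, track coefficient complexity through every \grobner computation, and carry out the divisor-theoretic non-matching argument rigorously over a field that is neither algebraically closed nor perfect.
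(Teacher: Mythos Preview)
Your high-level plan (pass to the generic plane, restrict to the curve $\mathcal{C}=V(\res(\theta))$, normalize via \cref{thm:integral-closure}, and compare the divisors of $\bar F_1$ and $\bar F_2$) matches the paper's architecture, but the derandomization step contains a genuine gap that makes the argument circular.

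The problem is your handling of the ``$F$-dependent condition.'' You assert that the only $F$-dependent obstruction is that the line $\ell$ avoid the $O_\delta(d)$ zeros of $\res(F)$ on $\mathcal{C}$, and that a $\mathrm{poly}(d)$-sized family of lines handles this by pigeonhole. But the pigeonhole only works \emph{over $\KK=\FF(\vecY)$}; after you specialize $\vecY\mapsto\veca$, you need $\res(F_1+F_2)|_\veca\not\equiv 0\pmod{\res(\theta)|_\veca}$, and nothing in your structural conditions (all of $\vecY$-degree $O_\delta(1)$) guarantees this. If you instead try to carry ``$\res(F)(p)\neq 0$'' for a point $p\in\mathcal{C}\cap\ell$ as one more condition to be preserved under specialization, note that $\res(F)(p)=\sum_{i=1,2}\prod_j \res(f_{i,j})(p)$ is a $\Sigma^{[2]}\Pi$ of $O(d)$ factors each of $\vecY$-degree $O_\delta(1)$, hence a polynomial in $\vecY$ of degree $O_\delta(d)$ with the same depth-4 structure you are trying to hit; sparse PIT via \cite{KS01} does not apply, and you are back to the original problem.

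The paper breaks this circularity by \emph{not} trying to evaluate $\res(F)$ at a point. Instead, in the ``hard case'' (\cref{lem:hard-case}) it fixes a single maximal ideal $\mm$ of $\widetilde{A}$ with $\ord_\mm(\bar F_1)\neq\ord_\mm(\bar F_2)$ and certifies this inequality \emph{factor by factor}: for each $(i,j)$ it records the relation $t_{i,j}\cdot\res(f_{i,j})(g_1,g_2)\equiv u^{k_{i,j}}\pmod{u^{k_{i,j}}\widehat\mm+\ideal{h_\ell}}$, together with invertibility of $t_{i,j}$ and membership data for $u$ and the $h_\ell$. Each such certificate has $\vecY^{1/p^e}$-degree $O_\delta(1)$, and there are $O_\delta(d)$ of them. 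The set $S^*$ of denominators and nonsingular-minor determinants that must not vanish therefore consists of $O_\delta(d)$ polynomials each of degree $O_\delta(1)$, so an $\varepsilon$-hitting set with $\varepsilon<1/|S^*|$ from \cite{KS01} finds a common non-root $\veca$. After specialization, multiplying the certificates reconstructs $\ord_\mm(\bar F_1)\neq\ord_\mm(\bar F_2)$ in the specialized ring, whence $\res(F_1+F_2)|_\veca\not\equiv 0\pmod{\res(\theta)|_\veca}$ without ever facing a degree-$O_\delta(d)$ condition. Your proposal gestures at the product structure (``$\mathrm{div}(\bar F_i)$ is a sum of $\le d$ divisors of degree $O_\delta(1)$'') but never uses it to decompose the specialization obstruction.

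A secondary gap: your dichotomy ``$\mathrm{div}(\bar F_1)=\mathrm{div}(\bar F_2)$ vs.\ not'' is on the \emph{affine} normalization, where $\bar F_1/\bar F_2$ being a unit does not force it to be constant. The paper's ``easy case'' (\cref{lem:easy-case}) instead requires regularity of $g=\pres(F_1)/\pres(F_2)$ on the full \emph{projective} normalization $\widetilde{C}$, uses properness to conclude $g|_C\in\KK$, and then transfers this back to the original hypersurface $H=V(f_{0,1})\subseteq\aff^n_\FF$ to deduce $(F_1/F_2)|_H\in\FF$; the hitting set for this case comes from \cite{Guo24} applied to the pairs $(f_{0,1},f_{2,i})$ and involves no specialization of $\vecY$ at all. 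Your treatment of this case (``$\bar F$ is a nonzero element whose zero locus is still of degree $O_\delta(d)$'') reduces again to the same uncontrolled $F$-dependent specialization condition. Relatedly, the paper needs the stronger Assumption~(7) that $f_{0,1}$ divides \emph{none} of $F_1,F_2,F_1+F_2$ (so that $g$ is a well-defined nonzero rational function on $C$), and handles its failure separately in \cref{lem:justify}; your weaker ``$\theta\nmid F_1+F_2$'' is consistent with your approach but would not suffice for the paper's.
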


\paragraph{Assumptions.}

We make some further assumptions to simplify the discussion, and briefly justify them:

\begin{enumerate}[(1)]
\item \label{item:assumption1} $F$ is non-constant. Otherwise, any set will be a hitting set for $F$ by definition.
\item $\FF$ is algebraically closed. This can be guaranteed by extending $\FF$ to $\overline{\FF}$.
\item $k_0=3$. This is because \cref{thm:main} is known to hold when $k_0=1$ or $k_0=2$ even without the condition of squarefreeness. So $F=F_0+F_1+F_2$.
\item $n\geq 3$. This is because there exists an explicit hitting set of size $d^{O(1)}$ for constant-variate polynomials of degree at most $d$. This follows from either Kronecker substitutions or explicit hitting set constructions for sparse polynomials \cite{KS01}.
\item $F_0$ is squarefree. 
\cref{item:hom3} of \cref{thm:main} states that some $F_i$ is squarefree. 
So this assumption can be guaranteed by permuting the summands $F_i$.
\item The GCD of $F_0$, $F_1$, and $F_2$ is $1$. This is because we can take out the GCD of $F_0$, $F_1$, and $F_2$, if it is nontrivial, from each $F_i$. See \cite{Gup14} for a more detailed discussion.
\item \label{item:assumption6} $f_{i,j}$ is an irreducible non-constant polynomial for $i=0,1,2$ and $j\in [m_i]$. This can be guaranteed by replacing $f_{i,j}$ with its irreducible factors over $\FF$, and absorbing the constant $f_{i,j}$ into other factors.
\item \label{item:assumption7}  There exists $j_0\in [m_0]$ such that $f_{0,j_0}$ does not divide any polynomial in $\{F_1, F_2, F_1+F_2\}$. By permuting the factors of $F_0$, we may assume, without loss of generality, that $j_0=1$.
\end{enumerate}

Assumption~\eqref{item:assumption7}  is justified by the following lemma.
  
\begin{lemma}\label{lem:justify}
There exists an explicit $(nd)^{O_\delta(1)}$-sized hitting set for all $F\in C_{n,d,3,\delta,\FF}$ satisfying Assumptions \eqref{item:assumption1}--\eqref{item:assumption6} and the additional assumption that $F_0$ has factor $f_{0,j}$ dividing some polynomial in $\{F_1, F_2, F_1+F_2\}$. 
\end{lemma}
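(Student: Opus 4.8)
The plan is to reduce the case covered by the lemma — where $F_0$ has an irreducible factor $f_{0,j}$ dividing some $G\in\{F_1,F_2,F_1+F_2\}$ — back to a situation with strictly fewer top-fan-in summands (or strictly smaller degree), so that the known $k_0\le 2$ case, together with the recursion, finishes the job via an easy union bound on hitting sets. The key observation is that if $f_{0,j}\mid G$ for some such $G$, then modulo $f_{0,j}$ the identity $F=F_0+F_1+F_2$ collapses: on the hypersurface $V(f_{0,j})$ we have $\overline{F}_0=0$, and if say $f_{0,j}\mid F_1+F_2$ then $\overline{F}=\overline{F}_1+\overline{F}_2=0$ as well, so $f_{0,j}\mid F$; whereas if $f_{0,j}\mid F_1$ (the case $f_{0,j}\mid F_2$ is symmetric) then $F=F_0+F_1+F_2$ with $f_{0,j}\mid F_0$ and $f_{0,j}\mid F_1$, hence $F\equiv F_2\pmod{f_{0,j}}$.

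First I would split into these cases. \textbf{Case (a): $f_{0,j}\mid F$.} Then write $F=f_{0,j}\cdot F'$. Here $F'=\sum_{i}(F_i/f_{0,j})$ where the quotients make sense after regrouping: $F_0/f_{0,j}$ is again a product of $\le\delta$-degree homogeneous factors, and for the $F_i$ divisible by $f_{0,j}$ the same holds; for the others one uses that $f_{0,j}\mid F_1+F_2$, so $F' = F_0/f_{0,j} + (F_1+F_2)/f_{0,j}$ is a sum of \emph{two} such products (a $\Sigma^{[2]}\Pi\Sigma\Pi^{[\delta]}$-type expression), of degree $d_0-\deg(f_{0,j})<d_0$. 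A hitting set for $F'$ (which exists by the known $k_0\le 2$ result, applied with the smaller degree bound) is a hitting set for $F$. \textbf{Case (b): $F\equiv F_2\pmod{f_{0,j}}$ with $F_2\not\equiv 0$.} This is the genuinely new case. Here I would pass to a generic affine plane as in \cref{def:res}: apply $\res(\cdot)$ to everything. By \cref{lem:irreducibility}, $\res(f_{0,j})$ is absolutely irreducible over $\KK=\FF(\vecY)$, so $A:=\KK[Z_1,Z_2]/\ideal{\res(f_{0,j})}$ is a domain of Krull dimension one, and by \cref{lem:irred-sep} its fraction field is separable over $\KK(\overline X_{c_1,c_2})$ for generic $(c_1,c_2)$ — exactly the hypotheses of \cref{thm:integral-closure}. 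Working in the integral closure $\widetilde A$ (computed with bounded coefficient complexity by \cref{thm:integral-closure}), the images of $\res(F_1)$, $\res(F_2)$ have well-defined divisors; since $\res(f_{0,j})\nmid \res(F_2)$ (because $f_{0,j}\nmid F_2$ — note $F_2\not\equiv 0\bmod f_{0,j}$ by assumption, so this holds), the image of $\res(F_2)$ is a nonzero element of $\widetilde A$. One then chooses a point on the normalized curve where this element does not vanish — concretely a maximal ideal extracted via \cref{lem:extract} — and pulls back to get a small family of lines/points on the original plane. Specializing $\vecY\gets\veca$ avoiding the bounded-degree bad polynomials $Q_i$ coming from all the \grobner-basis and Noether-normalization steps, and then intersecting with an explicit hitting set for the resulting univariate (or bivariate) restriction of $F$, gives the desired $(nd)^{O_\delta(1)}$-sized hitting set.

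Finally, I would take the \emph{union} over the (at most $m_0=O(d)$) choices of the factor $f_{0,j}$ of $F_0$ and over the three choices of $G$, together with the hitting sets from Cases (a) and (b); since each constituent set has size $(nd)^{O_\delta(1)}$ and there are $O(d)$ of them, the union is still $(nd)^{O_\delta(1)}$. The main obstacle I anticipate is Case (b): making the divisor/order argument in $\widetilde A$ fully rigorous over the non-algebraically-closed, possibly non-perfect field $\KK=\FF(\vecY)$ — in particular, ensuring separability (hence the need for the $p^e$-th root extension $\KK^{(e)}$ in \cref{thm:integral-closure}), verifying that $\res(F_2)$ really is a unit-times-nonzero-nonunit in $\widetilde A$ rather than vanishing, and tracking that every quantity (the generators $g_i,h_j$, the extracted maximal ideal, the discriminant, and all the bad polynomials $Q_i$ to be avoided during specialization) stays of bounded degree in $\vecY$ so that the final specialization is a bounded-degree PIT instance solvable by \cite{KS01}. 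By contrast, Case (a) is routine bookkeeping with degrees and the $k_0\le 2$ base case.
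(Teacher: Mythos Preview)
Your Case~(a) contains a genuine gap. You claim that when $f_{0,j}\mid F_1+F_2$, the quotient $(F_1+F_2)/f_{0,j}$ is ``again a product of $\le\delta$-degree homogeneous factors,'' making $F'=F_0/f_{0,j}+(F_1+F_2)/f_{0,j}$ a $\Sigma^{[2]}\Pi\Sigma\Pi^{[\delta]}$ expression. This is false: $F_1+F_2$ is a \emph{sum} of two products, not itself a product, and there is no reason its quotient by $f_{0,j}$ should factor into low-degree pieces. (Note also that by Assumption~(6), $\gcd(F_0,F_1,F_2)=1$, so if $f_{0,j}\mid F_1+F_2$ then $f_{0,j}$ divides neither $F_1$ nor $F_2$ individually; hence you cannot pull $f_{0,j}$ out of either summand.) So your reduction to the $k_0\le 2$ case does not go through as stated.

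The paper's proof avoids Case~(a) altogether by first exploiting the squarefreeness of $F_0$: if $F_1+F_2$ is not a scalar multiple of $F_0$ (the scalar-multiple case being trivially $k_0=1$), then $F_0\nmid F_1+F_2$, and since $F_0$ is squarefree, \emph{some} factor $f_{0,j}$ fails to divide $F_1+F_2$. The lemma's hypothesis then forces this particular $f_{0,j}$ to divide $F_1$ or $F_2$ (say $F_1$), landing you directly in your Case~(b). For Case~(b), the paper does \emph{not} use normalization at all: it invokes \cite[Theorem~1.6]{Guo24} to produce an explicit family of affine planes $P$ such that for at least one $P$, the curve $V(f_{0,j})\cap P$ is one-dimensional while each $V(f_{0,j},f_{2,i})\cap P$ is zero-dimensional, whence $F_2$ (and therefore $F\equiv F_1+F_2\equiv F_2$) does not vanish identically on $V(f_{0,j})\cap P$. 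This is a much lighter argument than the generic-plane-plus-integral-closure machinery you propose, which is reserved in the paper for the genuinely hard case (Assumption~(7) holding) treated in \cref{lem:hard-case}.
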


\begin{proof}
If $F_1+F_2=cF_0$ for some $c\in\FF$, then $F=F_0+F_1+F_2=(c+1)F_0$, which is a product of polynomials of degree at most $\delta$. This is the case where $k_0=1$, which is already solved as mentioned. So we assume this is not the case. In particular, $F_1+F_2\neq 0$. Then, $\deg(F_1+F_2)=\deg(F_0)=d_0$.
As $F_1+F_2$ is not of the form $c F_0$ with $c\in \FF$ but has  degree $d_0=\deg(F_0)$, we see that $F_0$ does not divide $F_1+F_2$. As $F_0=\prod_{j=1}^{m_0} f_{0,j}$ is squarefree, we conclude that for some $j\in [m_0]$, the factor $f_{0,j}$ does not divide $F_1+F_2$. Fix such $j$. 

By assumption, $f_{0,j}$ divides $F_1$ or $F_2$, but not both since it does not divide $F_1+F_2$.
By symmetry, we may assume $f_{0,j}$ divides $F_1$ but not $F_2$. By Assumption~\eqref{item:assumption6}, $V(f_{0,j})$ is an irreducible hypersurface of $\aff^n$. 
As $f_0$ divides $F_1$, we have $V(f_{0,j}, F_1)=V(f_{0,j})$ and therefore its codimension in $\aff^n$ equals one.
On the other hand, as $f_0$ does not divide $F_2=\prod_{i=1}^{m_2} f_{2,i}$, the codimension of $V(f_{0,j}, f_{2,i})$ in $\aff^n$ equals two for all $i\in [m_2]$.
We also have $\deg(V(f_{0,j}))=\deg(f_{0,j})\leq \delta$ and $\deg(V(f_{0,j}, f_{2,i}))\leq \delta^2$ for $i\in [m_2]$ by \bezout's inequality.

Guo \cite[Theorem~1.6]{Guo24} showed how to explicitly construct 
a set $\mathcal{H}$ of affine planes in $\aff^n$ of size at most $n^{O(\delta^2)}d^{O(1)}$ such that for at least one $P\in \mathcal{H}$, we have 
$\dim (V(f_{0,j})\cap P)=1$ and 
$\dim (V(f_{0,j}, f_{2,i})\cap P))=0$ for all $i\in [m_2]$.
Fix such $P$. Then
\[
\dim (V(f_{0,j},F_1)\cap P)=\dim (V(f_{0,j})\cap P)=1
\] and 
\[
\dim (V(f_{0,j},F_2)\cap P)
=\dim\left(\bigcup_{i=1}^{m_2}( V(f_{0,j}, f_{2,i})\cap P)\right)
=\max_{i\in [m_2]}(\dim (V(f_{0,j}, V_{2,i})\cap P))=0.
\]

Note that this implies that $F_1+F_2$ is not identically zero when restricting to $V(f_{0,j})\cap P$, since otherwise we would have 
\[
V(f_{0,j},F_1)\cap P=V(f_{0,j})\cap P\cap V(F_1)=V(f_{0,j})\cap P\cap V(-F_2)
=V(f_{0,j},F_2)\cap P,
\]
contradicting the fact that $\dim (V(f_{0,j},F_1)\cap P)\neq \dim(V(f_{0,j},F_2)\cap P)$.

As $f_{0,j}$ divides $F_0$, we have $F=F_0+F_1+F_2\equiv F_1+F_2\not\equiv 0$ when restricting to $V(f_{0,j})\cap P$.
So $F$ is not identically zero on $P$. As $F$ restricted to $P$ is a bivariate polynomial of degree $\deg(F)=d_0\leq d$, we know how to construct an explicit hitting set of size $d^{O(1)}$ on $P$ for $F$. Finally, while we do not know which affine plane $P\in\mathcal{H}$ works, we could construct an explicit hitting set for each affine plane in $\mathcal{H}$ and then take their union as the final hitting set, whose size is at most $|\mathcal{H}|\cdot d^{O(1)}\leq n^{O(\delta^2)}d^{O(1)}$.
\end{proof}

From now on, we assume Assumptions \eqref{item:assumption1}--\eqref{item:assumption7}.
Next, we introduce the projective analogue of \cref{def:res}.

\begin{definition}[Restriction to a generic projective plane]\label{def:pres}
$f\in\FF[\vecX]=\FF[X_1,\dots,X_n]$ be a homogeneous polynomial.
Define $\pres(f)$ to be the homogeneous polynomial
\begin{equation}\label{eq:pres}
f\left(Y_{0,1}\widehat{Z}_0+Y_{1,1} \widehat{Z}_1+Y_{2,1} \widehat{Z}_2, \dots, Y_{0,n} \widehat{Z}_0+Y_{1,n}\widehat{Z}_1+Y_{2,n} \widehat{Z}_2\right)\in \FF[\vecY][\widehat{Z}_0,\widehat{Z}_1,\widehat{Z}_2]\subseteq \KK[\widehat{Z}_0,\widehat{Z}_1,\widehat{Z}_2].
\end{equation}
\end{definition}

Consider the projective space $\proj^2_{\KK}$ over $\KK=\FF(Y_{0,1},Y_{1,1},Y_{2,1},\dots,Y_{0,n},Y_{1,n},Y_{2,n})$ with homogeneous coordinates $\widehat{Z}_0$, $\widehat{Z}_1$, and $\widehat{Z}_2$. For $i=0,1,2$, let $U_i\cong \aff^2_\KK$ be the affine open chart of $\proj^2_{\KK}$ defined by $\widehat{Z}_i\neq 0$. 

The homogeneous polynomial $\pres(f_{0,1})$ defines a projective hypersurface $C\subseteq \proj^2_{\KK}$, which is also a projective curve.
Identify $U_0$ with $\aff^2_\KK$ and let $Z_1=\widehat{Z}_1/\widehat{Z}_0$ and $Z_2=\widehat{Z}_2/\widehat{Z}_0$ be the coordinates of $U_0$. By \cref{def:res} and \cref{def:pres}, $C\cap U_0\subseteq U_0$ is defined precisely by the polynomial $\res(f_{0,1})$.
By Assumption~\eqref{item:assumption6}, $f_{0,1}$ is irreducible over $\FF$.
So by \cref{lem:irred-sep}, $\res(f_{0,1})$ is absolutely irreducible.
Thus, the affine curve $C\cap U_0$ and the projective curve $C$ are both absolutely irreducible.

Let $g=\pres(F_1)/\pres(F_2)$, which is a homogeneous rational function of degree $\deg(F_1)-\deg(F_2)=0$ on the projective space $\proj^2_{\KK}$.
By Assumption~\eqref{item:assumption7}, $f_{0,1}$ divides neither $F_1$ nor $F_2$. 
So $\pres(f_{0,1})$ divides neither $\pres(F_1)$ nor $\pres(F_2)$. This follows from the fact that for any homogeneous polynomial $P\in \FF[\vecX]$, we can recover $P$ from $\pres(P)$ via
\[
P(X_1,\dots,X_n)=(\pres(P))|_{Y_{0,i}=X_i, Y_{1,i}=Y_{2,i}=0, \widehat{Z}_0=\widehat{Z}_1=\widehat{Z}_2=1 \text{ for } i\in [n]}
\]
So $g$ restricts to a nonzero rational function $g|_C\in \KK(C)^\times$ on the projective curve $C$.

We start with the easy case:

\begin{lemma}\label{lem:easy-case}
There exists an explicit set $S\subseteq\FF^n$ of size $n^{O(\delta^2)}d^{O(1)}$ independent of $F$ such that, if the restriction\footnote{The restriction of $g$ to $\widetilde{C\cap U_i}$ means first restricting $g$ to $C\cap U_i$, and then viewing it as a (rational) function on $\widetilde{C\cap U_i}$. Note that $C\cap U_i$ and $\widetilde{C\cap U_i}$ share the same function field, $\fr{\KK[C\cap U_i]}$.} of $g=\pres(F_1)/\pres(F_2)$ to the normalization $\widetilde{C\cap U_i}$ of $C\cap U_i$ is regular for $i=0,1,2$, then $S$ is a hitting set for $F$.
\end{lemma}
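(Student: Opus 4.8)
The plan is to use the regularity hypothesis to force $g|_C$ to be a constant different from $-1$, and then to restrict $F$ to a small, explicit, $F$-independent family of affine planes on which this constancy (hence nonvanishing) is preserved.

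First I would show that $g|_C$ is a nonzero constant $c\in\KK$ with $c\neq -1$. Normalization commutes with restriction to open subvarieties, so $\widetilde C=\bigcup_{i=0}^{2}\widetilde{C\cap U_i}$, and the hypothesis that $g|_C$ is regular on each $\widetilde{C\cap U_i}$ says exactly that $g|_C$ is regular on all of the projective normalization $\widetilde C$. By \cref{lem:irred-sep}, $C$ is geometrically integral over $\KK$, so $\widetilde C$ is a regular, geometrically integral, projective curve over $\KK$. A function with no poles on $\widetilde C$ has effective divisor, and a principal divisor on a proper curve has degree $0$, so $\mathrm{div}(g|_C)=0$; hence $g|_C$ is a unit of $H^0(\widetilde C,\mathcal O_{\widetilde C})=\KK$ (the equality holds since $\KK$ is algebraically closed in $\KK(C)$ by geometric integrality), i.e.\ $g|_C$ equals some $c\in\KK^\times$. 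To rule out $c=-1$: that would force $\pres(f_{0,1})\mid\pres(F_1+F_2)$, and recovering $F_1+F_2$ from its $\pres$ (as in the discussion preceding this lemma) would give $f_{0,1}\mid F_1+F_2$, contradicting Assumption~\eqref{item:assumption7}. In particular $\pres(F)\equiv(c+1)\pres(F_2)\pmod{\pres(f_{0,1})}$ with $c+1\neq 0$, so $F$ does not vanish on $C$.

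Next I would descend to honest affine planes, following the template of \cref{lem:justify} but using the constancy of $g|_C$ in place of the dimension count there. Using the coefficient-degree bounds of \cref{sec:comp} and the normalization procedure of \cref{sec:normalization}, I would extract an $F$-independent list of polynomials $Q_1,\dots,Q_m\in\FF[\vecY]$ of bounded degree such that any $\vecb\in\FF^{3n}$ with all $Q_i(\vecb)\neq 0$ yields a "good" plane $P=P_\vecb\subseteq\aff^n$: $f_{0,1}|_P$ is squarefree of degree $\deg(f_{0,1})$; $\dim(V(f_{0,1},f_{i,j})\cap P)=0$ for $i\in\{1,2\}$ and all $j$, so $f_{0,1}|_P\nmid F_i|_P$; and $c$ specializes at $\vecb$ to a scalar $c(\vecb)$ with $c(\vecb)+1\neq 0$. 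For a good $\vecb$ the congruence above specializes to $F|_P\equiv(c(\vecb)+1)F_2|_P\pmod{f_{0,1}|_P}$, and since $f_{0,1}|_P$ is squarefree, does not divide $F_2|_P$, and $c(\vecb)+1$ is a nonzero scalar, the right-hand side is nonzero mod $f_{0,1}|_P$; hence $F|_P\neq 0$. By \bezout's inequality $V(f_{0,1})$ has degree $\leq\delta$ and the $O(d)$ varieties $V(f_{0,1},f_{i,j})$ have degree $\leq\delta^2$, so Guo's explicit plane construction~\cite{Guo24} (as in \cref{lem:justify}), combined with a \cite{KS01}-type hitting set for the bounded-degree $Q_i$ folded into the plane parameters, gives an explicit $F$-independent family $\mathcal P$ of planes of size $n^{O(\delta^2)}d^{O(1)}$ with at least one good member whenever the regularity hypothesis holds. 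Taking on each $P\in\mathcal P$ the $d^{O(1)}$-size grid hitting set for degree-$\leq d$ bivariate polynomials and letting $S$ be the union of these finishes the construction: $S$ is explicit, independent of $F$, of size $n^{O(\delta^2)}d^{O(1)}$, and hits every $F$ satisfying the regularity hypothesis.

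The hard part will be the middle step: pinning down precisely which bounded-degree conditions on the plane are needed and checking that they are cheap enough. In particular one must control the $\vecY$-degree (and sparsity) of the constant $c$ and of the denominators appearing in $g|_C$ and in the normalization data—note that although $F_1,F_2$ themselves have degree up to $d$, the relevant objects live on the degree-$\leq\delta$ curve $C=V(\res(f_{0,1}))$, which is exactly why the effective Gröbner-basis and curve-normalization bounds of \cref{sec:comp,sec:normalization} make the $Q_i$ controllable—and one must verify that Guo's construction can meet the Bézout-type dimension conditions and the non-vanishing conditions $Q_i(\vecb)\neq 0$ simultaneously. Everything else is a routine assembly of \bezout's inequality, the explicit plane family of \cite{Guo24}, and standard bivariate hitting sets.
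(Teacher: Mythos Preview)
Your first paragraph matches the paper exactly: regularity on each $\widetilde{C\cap U_i}$ glues to regularity on the proper, geometrically integral $\widetilde C$, forcing $g|_C=c\in\KK^\times$, and Assumption~\eqref{item:assumption7} rules out $c=-1$.

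Where you diverge is in descending from $\KK$ to an honest plane over $\FF$. The paper does \emph{not} specialize $c$ at a carefully chosen $\vecb$. Instead it proves the stronger statement that $(F_1/F_2)|_H\in\FF$ outright, where $H=V(f_{0,1})\subseteq\aff^n_\FF$, via a two-point comparison: if $(F_1/F_2)|_H$ were nonconstant, a general $(\veca,\vecb,\vecc)\in\FF^{3n}$ would produce points $\vecu,\vecv\in H$ (on the lines through $\veca$) with $(F_1/F_2)(\vecu)\neq(F_1/F_2)(\vecv)$, yet both values equal the evaluation of the constant $g|_C\in\KK$ at $\vecY=(\veca,\vecb,\vecc)$. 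Once $\gamma:=(F_1/F_2)|_H\in\FF$, the congruence $F\equiv(1+\gamma)F_2\pmod{f_{0,1}}$ lives in $\FF[\vecX]$, and the only conditions on the plane are the \bezout-type dimension constraints on the degree-$\le\delta^2$ loci $V(f_{0,1},f_{2,i})$, which Guo's construction handles on its own. No normalization data, no specialization of coefficients, is needed.

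Your specialization route has a concrete gap in the justification. You invoke \cref{sec:comp,sec:normalization} to bound the $\vecY$-complexity of $c$, on the grounds that ``the relevant objects live on the degree-$\le\delta$ curve $C$.'' But those sections bound the complexity of the \emph{normalization of $C$} (which depends only on $f_{0,1}$), not of the particular function $c=g|_C$, which is built from $\pres(F_1),\pres(F_2)$---polynomials whose $\vecY$-degree is $d$, not $\delta$. A direct division argument shows the numerator and denominator of $c$ have $\vecY$-degree $O(d)$, not $O_\delta(1)$. Relatedly, your ``$F$-independent list $Q_1,\dots,Q_m$'' cannot literally be $F$-independent, since the condition $c(\vecb)+1\neq 0$ varies with $F$; you would have to argue, as in \cref{lem:hard-case}, that the $F$-specific conditions all lie in a fixed bounded-degree class and build an $\epsilon$-hitting set for that class. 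With those corrections your approach can be pushed through (and still fits in the stated budget), but the paper's passage from $c\in\KK$ to $\gamma\in\FF$ eliminates both issues cleanly and makes the invocation of \cref{sec:normalization} unnecessary for this case.
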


\begin{proof}
Suppose the restriction of $g=\pres(F_1)/\pres(F_2)$ to $\widetilde{C\cap U_i}$ is regular for $i=0,1,2$. These normalizations $\widetilde{C\cap U_i}$ with $i=0,1,2$ glue together to form the normalization $\widetilde{C}$ of $C$; see \cite[Proposition~4.13]{Eis13} and the discussion thereafter. Then $g|_C$ is a regular function on $\widetilde{C}$.

By \cref{lem:irreducibility}, even after changing the base field $\KK$ to $\overline{\KK}$, the coordinate ring $\overline{\KK}[C\cap U_i]$ is an integral domain, and so is its integral closure $\widetilde{\overline{\KK}[C\cap U_i]}$.
This means $\widetilde{C}$ is \emph{geometrically integral} \cite[\href{https://stacks.math.columbia.edu/tag/05DW}{Tag 05DW}, \href{https://stacks.math.columbia.edu/tag/0366}{Tag 0366}]{stacks-project}. 
Moreover, $\widetilde{C}$ is projective and hence \emph{proper} over $\KK$.\footnote{Properness is a version of compactness in algebraic geometry. See \cite[\S11.4]{Vak24}.} Any regular function on a geometrically integral and proper variety lives in the base field \cite[\href{https://stacks.math.columbia.edu/tag/0BUG}{Tag 0BUG}]{stacks-project}. So $g|_{\widetilde{C}}\in\KK$. It follows that $g|_{C}\in\KK=\FF(\vecY)$. 

View $F_0,F_1,F_2$ as polynomials on $\aff^{n}_\FF$.
Let $H$ be the irreducible hypersurface in $\aff^{n}_\FF$ defined by $f_{0,1}$.
As $f_{0,1}$ divides neither $F_1$ nor $F_2$, we know $F_1/F_2$ restricts to a nonzero rational function $(F_1/F_2)|_H\in \FF(H)$ on $H$.

Assume $(F_1/F_2)|_H \not\in \FF$. Then for a general point $\vecx=(\veca,\vecb,\vecc)\in\FF^{3n}$, we have
\begin{enumerate}
\item $\veca, \vecb,\vecc\in\FF^n$ are linearly independent.
\item $g|_C$ is regular at $\vecx$.
\item The affine line passing through $\veca$ and $\vecb$ intersects $H$ at some point $\vecu$.
\item The affine line passing through $\veca$ and $\vecc$ intersects $H$ at some point $\vecv$.
\item $F_1/F_2$ is regular at $\vecu$ and $\vecv$ but $(F_1/F_2)(\vecu)\neq (F_1/F_2)(\vecv)$.
\end{enumerate}
Fix such $\vecx=(\veca,\vecb,\vecc)$.
We may write $\vecu=\alpha \veca+(1-\alpha)\vecb$ and
$\vecv=\beta\veca+(1-\beta)\vecc$ for some $\alpha,\beta\in\FF$. 
By definition, as $g|_C(\vecx)\in\KK$ does not depend on $(\widehat{Z_0},\widehat{Z_1},\widehat{Z_2})$, we have 
\[
\left.(F_1/F_2)(\vecu)=\frac{\pres(F_1)}{\pres(F_2)}\right\vert_{\vecY=\vecx, (\widehat{Z_0},\widehat{Z_1},\widehat{Z_2})=(\alpha,1-\alpha,0)}=g|_C(\vecx),
\]
and similarly,
\[
\left.(F_1/F_2)(\vecv)=\frac{\pres(F_1)}{\pres(F_2)}\right\vert_{\vecY=\vecx, (\widehat{Z_0},\widehat{Z_1},\widehat{Z_2})=(\beta,0,1-\beta)}=g|_C(\vecx).
\]
So $(F_1/F_2)(\vecu)=(F_1/F_2)(\vecv)$, contradicting the fact that $(F_1/F_2)(\vecu)\neq (F_1/F_2)(\vecv)$.

Therefore, $(F_1/F_2)|_H \in \FF$. 
Denote $(F_1/F_2)|_H$ by $\gamma$. 
The facts that $(F_1/F_2)|_H=\gamma$, $H$ is defined by $f_{0,1}$, and $f_{0,1}$ divides $F_0$ imply that 
\begin{equation}\label{eq:congruence}
F=F_0+F_1+F_2\equiv F_1+F_2\equiv (1+\gamma) F_2=(1+\gamma)\prod_{i=1}^{m_2} f_{2,i}\pmod {f_{0,1}}.
\end{equation}
If $1+\gamma=0$, then by \eqref{eq:congruence}, $F_1+F_2$ would be divisible by $f_{0,1}$, contradicting Assumption~\eqref{item:assumption7}. 
So $1+\gamma\neq 0$.
Therefore, by \eqref{eq:congruence},
\[
V(f_{0,1}, F_1+F_2)=\bigcup_{i=1}^{m_2} V(f_{0,1}, f_{2,i}),
\]
where the degree of each $V(f_{0,1}, f_{2,i})$ in $\aff^n_\FF$ is bounded by $\delta^2$ by \bezout's inequality.

As $f_{0,1}$ does not divide $F_2$, the codimension of each $V(f_{0,1}, f_{2,i})$ is exactly two. By \cite[Theorem~1.6]{Guo24}, there exists an explicit set $\mathcal{H}$ of affine planes in $\aff^n_\FF$ of size at most $n^{O(\delta^2)}d^{O(1)}$ such that for at least one affine plane $P\in \mathcal{H}$, 
$\dim (V(f_{0,1}, f_{2,i})\cap P))=0$ for all $i\in [m_2]$. This implies that $\prod_{i=1}^{m_2} f_{2,i}$ is nonzero when restricted to $V(f_{0,1})\cap P$.
Combining this with \eqref{eq:congruence} and the fact that $1+\gamma\neq 0$, we have $F\equiv (1+\gamma)\prod_{i=1}^{m_2} f_{2,i}\not\equiv 0$ when restricted to $V(f_{0,1})\cap P$. In particular, $F$ is nonzero when restricted to $P$.

The rest of the proof is the same as the last part of the proof of \cref{lem:justify}:  As $F$ restricted to $P$ is a bivariate polynomial of degree $\deg(F)=d_0\leq d$, we know how to construct an explicit hitting set of size $d^{O(1)}$ on $P$ for $F$. While we do not know which affine plane $P\in\mathcal{H}$ works, we could construct an explicit hitting set for each affine plane in $\mathcal{H}$ and then take their union as the final hitting set, whose size is at most $|\mathcal{H}|\cdot d^{O(1)}\leq n^{O(\delta^2)}d^{O(1)}$.
\end{proof}

Next, we address the harder case, where normalization is needed.

\begin{lemma}\label{lem:hard-case}
For $i=0,1,2$, there exists an explicit set $\mathcal{H}_i\subseteq\FF^n$ of size at most $(nd)^{O_\delta(1)}$ independent of $F$ such that, if the restriction of $g=\pres(F_1)/\pres(F_2)$ to the normalization $\widetilde{C\cap U_i}$ of $C\cap U_i$ is not regular, then $\mathcal{H}_i$ is a hitting set for $F$.
\end{lemma}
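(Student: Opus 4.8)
Fix $i\in\{0,1,2\}$ and suppose $g=\pres(F_1)/\pres(F_2)$ is not regular on $\widetilde{C\cap U_i}$. The plan is first to unwind this into the existence of a single \emph{pole}, and then to translate the pole into a \emph{bounded-complexity} structural statement about $F$ that can be fed into the evasion machinery of \cite{Guo24}, exactly as in the proofs of \cref{lem:justify} and \cref{lem:easy-case}. For Step~1, note that $\res_i(F_1),\res_i(F_2)$ are regular functions on $C\cap U_i$, hence lie in $B:=\widetilde{\KK[C\cap U_i]}$, and $g=\res_i(F_1)/\res_i(F_2)$ there. So non-regularity gives a closed point $p$ of $\widetilde{C\cap U_i}$, i.e.\ a maximal ideal $\mm$ of $B$, with $\ord_\mm(g)<0$, equivalently $\ord_\mm(\res_i F_2)>\ord_\mm(\res_i F_1)\geq 0$. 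Since $f_{0,1}\mid F_0$ we have $\pres(F_0)|_C=0$, hence $g+1=\pres(F)/\pres(F_2)$ on $C$; thus $g+1$ also has a pole at $p$, so $\ord_\mm(\res_i F)<\ord_\mm(\res_i F_2)$ while $\ord_\mm(\res_i F_2)\geq 1$.

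Step~2 is to localize the pole. From $\ord_\mm(\res_i F_2)\geq 1$ and $\res_i F_2=\prod_{j=1}^{m_2}\res_i f_{2,j}$, some factor $\res_i f_{2,j}$ vanishes at $p$; let $\bar p\in C\cap U_i$ be the image of $p$. Then $\bar p\in V(\res_i f_{0,1},\res_i f_{2,j})$. By Assumption~\eqref{item:assumption7}, $f_{0,1}\nmid F_2$, so $f_{0,1}\nmid f_{2,j}$; since $f_{0,1}$ is irreducible, $\res_i f_{0,1}$ is absolutely irreducible (\cref{lem:irred-sep}) and does not divide $\res_i f_{2,j}$, so $V(\res_i f_{0,1},\res_i f_{2,j})\subseteq U_i\cong\aff^2_{\KK}$ is zero-dimensional of degree $\leq\delta^2$ by \bezout. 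Pulling back along the generic plane, $V(f_{0,1},f_{2,j})\subseteq\aff^n_{\FF}$ has pure codimension $2$ and degree $\leq\delta^2$, and the pole of $g$ has been pinned down to sit over the generic plane section of this bounded-degree variety.

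Step~3 is where the normalization enters with its complexity control, and is the step I expect to be the main obstacle. The curve $C\cap U_i$ is cut out by the \emph{single} polynomial $\res_i(f_{0,1})$, which lies in $P_{\vecZ}(\delta,\delta)$: degree $\leq\delta$ in the plane coordinates, with $\vecY$-coefficients of degree $\leq\delta$. Hence \cref{thm:integral-closure} applies with all input parameters $O(\delta)$ and produces an explicit presentation of $\widetilde{C\cap U_i}$ — possibly over the extended base field $\KK^{(e)}=\FF(\vecY^{1/p^e})$ with $p^e=O_\delta(1)$, to make the relevant extensions separable — of complexity $O_\delta(1)$. The plan is to use this presentation, together with the properness of the projective normalized curve $\widetilde{C}$ (so that $\deg\mathrm{div}(g)=0$, exactly the ingredient used in \cref{lem:easy-case}) and a \bezout/degree count, to show that the pole forces the zero divisor of $\res_i F$ on $\widetilde{C\cap U_i}$ — equivalently, the locus on which $F$ vanishes on the generic plane section of $V(f_{0,1})$ — to be covered by the generic plane sections of finitely many codimension-$2$ subvarieties of $\aff^n$, each of degree $O_\delta(1)$, assembled from the $V(f_{0,1},f_{2,j})$ together with the $O_\delta(1)$-complexity normalization data. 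The subtle point is precisely that this must produce degree bounds depending on $\delta$ \emph{alone}: the naive bound, intersecting $V(f_{0,1})$ with $V(F_1+F_2)$, only gives a codimension-$2$ variety of degree $O(\delta d)$, and feeding that into \cite{Guo24} would require a plane family of size $n^{O(\delta d)}$ rather than $(nd)^{O_\delta(1)}$. Replacing the large-degree ``repeated'' part of $F_1+F_2$ by order conditions along the $O_\delta(1)$ many places of $\widetilde{C\cap U_i}$ lying over the $V(f_{0,1},f_{2,j})$ — each carrying only $\delta$-bounded data — is exactly what the bounded-complexity normalization buys us, and carrying this out carefully is the crux.

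Granting Step~3, Step~4 concludes as in \cref{lem:justify} and \cref{lem:easy-case}: invoke the explicit evasion construction of \cite[Theorem~1.6]{Guo24} to obtain, independently of $F$, a set $\mathcal P$ of affine planes in $\aff^n_{\FF}$ of size $(nd)^{O_\delta(1)}$ such that for at least one $P\in\mathcal P$ the plane $P$ meets $V(f_{0,1})$ in an absolutely irreducible curve and meets each of the relevant degree-$O_\delta(1)$ codimension-$2$ subvarieties in dimension $0$; this forces $F$ to be nonzero on the curve $V(f_{0,1})\cap P$, hence on $P$. Since $F|_P$ is a bivariate polynomial of degree $\leq d$, we let $\mathcal H_i$ be the union over $P\in\mathcal P$ of explicit size-$d^{O(1)}$ hitting sets for bivariate degree-$\leq d$ polynomials on $P$ (Kronecker substitution, cf.\ \cite{KS01}); then $|\mathcal H_i|\leq|\mathcal P|\cdot d^{O(1)}=(nd)^{O_\delta(1)}$ and $\mathcal H_i$ is independent of $F$. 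The only remaining wrinkle is that $\mathcal P$ (and the $\FF$-rationality of the final points) requires specializing $\vecY$, and its $p^e$-th roots, to values in $\FF$ avoiding finitely many bounded-degree polynomials coming from \cref{thm:integral-closure} and from the evasion construction; this is a bounded-degree PIT instance, solvable deterministically in polynomial time by \cite{KS01}.
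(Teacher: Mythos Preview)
Your Steps~1 and~2 are fine and match how the paper begins. The gap is in Step~3, and it is not merely a matter of ``carrying this out carefully'': the approach you sketch---lifting the pole to a collection of codimension-$2$ subvarieties of $\aff^n_\FF$ of degree $O_\delta(1)$ and then applying the evasion construction of \cite{Guo24}---does not work and is not what the paper does. The pole at $\mm$ tells you only that $\res_i(F)$ has \emph{strictly smaller} order than $\res_i(F_2)$ at one point of the normalization. It gives no control on the \emph{rest} of the zero divisor of $\res_i(F)$ on $\widetilde{C\cap U_i}$, which genuinely has degree $\Theta(\delta d)$; there is no way to cover this by $O_\delta(1)$-degree codimension-$2$ pieces using only the normalization data, and your appeal to $\deg\mathrm{div}(g)=0$ does not help (that identity is used in the easy case, not here).

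The paper's route is quite different and never touches \cite{Guo24} in this lemma. It stays over $\KK^{(e)}$, uses \cref{thm:integral-closure} to present $\widetilde{A_{\KK^{(e)}}}\cong\KK^{(e)}[\vecT]/\ideal{h_1,\dots,h_{k'}}$ with all data in $P^{(e)}(O_\delta(1),O_\delta(1))$, extracts a uniformizer $u$ at $\mm$, and then writes, \emph{for each factor} $f_{i,j}$, an explicit relation of the form $t_{i,j}\cdot\res(f_{i,j})(g_1,g_2)\equiv u^{k_{i,j}}\pmod{u^{k_{i,j}}\widehat\mm+\ideal{h_1,\dots,h_{k'}}}$ with $k_{i,j}=\ord_\mm(\res(f_{i,j}))=O_\delta(1)$ (by \cref{lem:bezout2}) and $t_{i,j}$ a unit mod $\widehat\mm$. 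This is $O(d)$ relations, each of $O_\delta(1)$ complexity. The crucial step is then to \emph{specialize} $\vecY\to\veca$: one collects a set $S^*\subseteq\FF[\vecY^{1/p^e}]$ of size $O_\delta(d)$, each element of degree $O_\delta(1)$, such that for any common non-root $\veca$ the specialized relations persist (using that $u|_\veca$ remains a non-zero-divisor and $\widehat\mm|_\veca$ remains proper). Multiplying the specialized relations over $j$ gives $\res(F_1)(g_1,g_2)|_\veca\notin(\overline{u}|_\veca)^{n_1}\cdot\overline\mm|_\veca$ while $\res(F_2)(g_1,g_2)|_\veca\in(\overline{u}|_\veca)^{n_1}\cdot\overline\mm|_\veca$, so $\res(F_1+F_2)|_\veca\not\equiv 0\pmod{\res(f_{0,1})|_\veca}$ directly---no covering of zero loci, no \cite{Guo24}. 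The hitting set $\mathcal H_i$ is then obtained by using the $\varepsilon$-hitting set of \cite{KS01} (with $\varepsilon<1/|S^*|$) to produce candidate $\veca$'s and taking bivariate hitting sets on each resulting plane $P_\veca$. So your Step~4 should also be revised: the relevant derandomization tool here is sparse PIT for the $O_\delta(d)$ polynomials in $S^*$, not plane evasion.
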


\begin{proof}
By symmetry, it suffices to explicitly construct $\mathcal{H}_0$ and prove that this set satisfies the lemma.
Let $C_0=C\cap U_0$.
Suppose the restriction of $g$ to $\widetilde{C_0}$ is not regular.
Identifying $U_0$ with the affine plane $\aff^2_\KK$ using the coordinates $Z_1=\widehat{Z}_1/\widehat{Z}_0$ and $Z_2=\widehat{Z}_2/\widehat{Z}_0$, one can see that the affine curve $C_0$ is defined by $\res(f_{0,1})$ in $\aff^2_\KK$, and $g=\pres(F_1)/\pres(F_2)$ restricted to $\aff^2_\KK$ becomes the rational function $\res(F_1)/\res(F_2)$.
So $(\res(F_1)/\res(F_2))|_{\widetilde{C}_0}$ is not regular. 

The coordinate ring of $C_0$ is the integral domain $A:=\KK[Z_1,Z_2]/\ideal{\res(f_{0,1})}$.
Define 
\begin{equation}\label{eq:def-of-g0}
g_0:=\frac{\res(F_1)+\ideal{\res(f_{0,1})}}{\res(F_2)+\ideal{\res(f_{0,1})}}\in \fr{A}.
\end{equation}
As mentioned above, $(\res(F_1)/\res(F_2))|_{\widetilde{C}_0}$ is not regular. Algebraically, this means
$g_0$ is not in the integral closure $\widetilde{A}$ of $A$.

By \cref{lem:noether}, there exist $c_1,c_2\in\FF^\times$ such that for $\alpha=c_1 Z_1+c_2 Z_2$, the natural ring homomorphism $\KK[\alpha]\to A$ sending $\alpha$ to $\alpha+\ideal{\res(f_{0,1})}$ is injective and makes $A$ a finite $\KK[\alpha]$-module.
Moreover, by \cref{lem:irreducibility}, for any algebraic extension $\LL$ of $\KK$, $A_\LL:=\LL[Z_1,Z_2]/\ideal{\res(f_{0,1})}$ is an integral domain of Krull dimension one.
And by \cref{lem:irred-sep}, $\fr{A}$ is a finite separable extension of $\KK(\alpha)$.
By \cref{def:res} and the fact that $\deg(f_{0,1})\leq \delta$, we have $\res(f_{0,1})\in P(\delta,\delta)$.
(Similarly, $\res(f_{i,j})\in P(\delta,\delta)$ for all $i=0,1,2$ and $j\in [m_i]$.)
So, by \cref{thm:integral-closure}, 
there exist integers $D,D',m,k',e$ satisfying $D,D',m,k',p^e=O_{\delta}(1)$, and  $g_1,g_2,h_1,\dots,h_{k'}\in\KK^{(e)}[\vecT]=\KK^{(e)}[T_1,\dots,T_m]$ such that the following hold: 
\begin{enumerate}[(1)]
\item $g_1,g_2,h_1,\dots,h_{k'}\in P^{(e)}(D,D')$.
\item The map
\begin{align*}
\phi: A_{\KK^{(e)}}&\to \KK^{(e)}[\vecT]/\ideal{h_1,\dots,h_{k'}}
\\
Z_i+\ideal{\res(f_{0,1})}&\mapsto g_i + \ideal{h_1,\dots,h_{k'}}, \quad i=1,2.
\end{align*} 
defines an injective $\KK^{(e)}$-linear ring homomorphism.
\item $\KK^{(e)}[\vecT]/\ideal{h_1,\dots,h_{k'}}$ is isomorphic to the integral closure $\widetilde{A_{\KK^{(e)}}}$ of $A_{\KK^{(e)}}$, and this isomorphm composed with $\phi$ is the natural inclusion of $A_{\KK^{(e)}}$ in its integral closure.
\end{enumerate}

We claim that $g_0$ is not in the integral closure $\widetilde{A_{\KK^{(e)}}}$ of $A_{\KK^{(e)}}$ either.
To see this, assume to the contrary that $g_0\in \widetilde{A_{\KK^{(e)}}}$. Then it is integral over $A_{\KK^{(e)}}$. And $A_{\KK^{(e)}}=A\otimes_\KK \KK^{(e)}$ is integral over $A$. So by \cref{lem:integral-transitivity}. $g_0$ is integral over $A$, contradicting the fact that $g_0\not\in\widetilde{A}$. Therefore, $g_0\not\in \widetilde{A_{\KK^{(e)}}}$.

By \cref{lem:intersection-of-localization}, we have
\begin{equation}\label{eq:intersection-of-local}
\widetilde{A_{\KK^{(e)}}}=\bigcap_{\textup{maximal ideal }\mm\subseteq \widetilde{A_{\KK^{(e)}}}} (\widetilde{A_{\KK^{(e)}}})_\mm.\footnote{Alternatively, \eqref{eq:intersection-of-local} follows from \emph{Algebraic Hartogs's Lemma} \cite[13.5.19]{Vak24}, which extends to higher-dimensional normal varieties.}
\end{equation}

By \eqref{eq:intersection-of-local}, for some maximal ideal $\mm$ of $\widetilde{A_{\KK^{(e)}}}$, we have $g_0\not\in (\widetilde{A_{\KK^{(e)}}})_\mm$. Fix such $\mm$. As $\widetilde{A_{\KK^{(e)}}}$ is integrally closed, the one-dimensional local ring $(\widetilde{A_{\KK^{(e)}}})_\mm$ is a discrete valuation ring and is equipped with a normalized valuation $\ord_\mm(\cdot)$. An element $f\in\fr{A_{\KK^{(e)}}}$ is in $(\widetilde{A_{\KK^{(e)}}})_\mm$ if and only if $\ord_\mm(f)\geq 0$.

By \eqref{eq:def-of-g0} and the fact that $g_0\not\in (\widetilde{A_{\KK^{(e)}}})_\mm$, we have
\begin{equation}\label{eq:inequality}
0\leq \ord_\mm(\res(F_1)+\ideal{\res(f_{0,1})})<\ord_\mm(\res(F_2)+\ideal{\res(f_{0,1})}).
\end{equation}
For $i=1,2$ and $j\in [m_i]$, let $k_{i,j}=\ord_\mm(\res(f_{i,j})+\ideal{\res(f_{0,1})})$.
Then for $i=1,2$, since $F_i=\prod_{j=1}^{m_i} f_{i,j}$, we have
\begin{equation}\label{eq:decompose}
\ord_\mm(\res(F_i)+\ideal{\res(f_{0,1})})=\sum_{j\in [m_i]}\ord_\mm(\res(f_{i,j})+\ideal{\res(f_{0,1})})=\sum_{j\in [m_i]} k_{i,j}.
\end{equation}

We know that $\widetilde{A_{\KK^{(e)}}}$ may be identified with $\KK^{(e)}[\vecT]/\ideal{h_1,\dots,h_{k'}}$ such that $\phi$ becomes the inclusion $A_{\KK^{(e)}}\hookrightarrow \widetilde{A_{\KK^{(e)}}}$. We now consider the latter ring $\KK^{(e)}[\vecT]/\ideal{h_1,\dots,h_{k'}}$ for computational purposes. Let $\widehat{\mm}$ the the maximal ideal of $\KK^{(e)}[\vecT]$ such that $\widehat{\mm}/\ideal{h_1,\dots,h_{k'}}$ corresponds to the maximal ideal $\mm$ of $\widetilde{A_{\KK^{(e)}}}$.

Consider $i\in\{0,1,2\}$ and $j\in [m_i]$.
By definition, we have $\phi(\res(f_{i,j}))=\res(f_{i,j})(g_1,g_2)+\ideal{h_1,\dots,h_{k'}}$.
As $\res(f_{i,j})\in P(\delta,\delta)$ and $g_1,g_2\in P^{(e)}(D,D')$, where $D,D',p^e=O_{\delta}(1)$, we have $\res(f_{i,j})(g_1,g_2)\in P^{(e)}(d_1,d_2)$ for some $d_1,d_2=O_{\delta}(1)$. 

By \eqref{eq:inequality} and \eqref{eq:decompose}, there exists $j_0\in [m_2]$ such that $k_{2,j_0}>0$, implying that $\res(f_{2,j_0})\in \mm$.
So $\phi(\res(f_{2,j_0}))\in \widehat{\mm}/\ideal{h_1,\dots,h_{k'}}$, or equivalently, $\res(f_{i,j})(g_1,g_2)\in \widehat{\mm}+\ideal{h_1,\dots,h_{k'}}$.
Therefore, $\widehat{\mm}$ contains the zero-dimensional ideal $I:=\ideal{h_1,\dots,h_{k'},\res(f_{i,j})(g_1,g_2)}$.
By increasing $e$ if necessary as in the proof of \cref{lem:add-new}, we may assume that
the $\KK^{(e)}[\vecT]/\widehat{\mm}$ is a finite separable extension over $\KK^{(e)}$. 
By \cref{lem:extract}, $\widehat{\mm}$ admits a \grobner basis $G\subseteq P^{(e)}(d_3,d_4)$ for some $d_3,d_4\in O_{\delta}(1)$.

Let $\overline{\mm}=\widehat{\mm}/\ideal{h_1,\dots,h_{k'}}$.
The normalized valuation $\ord_\mm$ of $(\widetilde{A_{\KK^{(e)}}})_\mm$ corresponds to a normalized valuation of the discrete valuation ring
\[
B:=(\KK^{(e)}[\vecT]/\ideal{h_1,\dots,h_{k'}})_{\overline{\mm}},
\]
which we again denote by $\ord_\mm$ by a slight abuse of notation. As $\ord_\mm$ is normalized and $G$ generates the ideal $\widehat{\mm}$, there exists $u\in G$ such that $\overline{u}:=u+\ideal{h_1,\dots,h_{k'}}$ satisfies $\ord_\mm(\overline{u})=1$.

Consider $i\in \{1,2\}$ and $j\in [m_i]$.
By the structure of discrete valuation rings, the element
\[
\phi(\res(f_{i,j})+\ideal{\res(f_{0,1})})=\res(f_{i,j})(g_1,g_2)+\ideal{h_1,\dots,h_{k'}}
\]
equals $\overline{u}^{k_{i,j}}$ multiplied by a unit of $B$.
Combining this with the fact that any element of $\KK^{(e)}[\vecT]$ not in $\widehat{\mm}$ is invertible modulo $\widehat{\mm}$ (since $\widehat{\mm}$ is a maximal ideal), shows that there exists  $t_{i,j}\in \KK^{(e)}[\vecT]$ that is invertible modulo $\widehat{\mm}$ such that
\begin{equation}\label{eq:tij-relation}
t_{i,j}(\res(f_{i,j})(g_1,g_2))-u^{k_{i,j}}\in u^{k_{i,j}}\widehat{\mm}+\ideal{h_1,\dots,h_{k'}}.
\end{equation}
We can find $t_{i,j}$ from \eqref{eq:tij-relation} by applying \cref{lem:reduction-alg} with $f=u^{k_{i,j}}$.
Note that by \cref{lem:bezout2}, we have $k_{i,j}=O_{\delta}(1)$.
Given that the complexity of the data describing the other terms in \eqref{eq:tij-relation} also only depends on $\delta$, we see that $t_{i,j}\in P^{(e)}(d_5,d_6)$ for some $d_5,d_6\in O_\delta(1)$.

Recall that $G\subseteq P^{(e)}(d_3,d_4)$ is a \grobner basis of $\widehat{\mm}$. By \cref{lem:ideal-membership-coeff} and \eqref{eq:tij-relation}, we may write
\begin{equation}\label{eq:relation1}
t_{i,j}(\res(f_{i,j})(g_1,g_2))=u^{k_{i,j}} \left(1+\sum_{g\in G} a_{i,j,g} g\right)+\sum_{\ell=1}^{k'} b_{i,j,\ell} h_\ell,
\end{equation}
with $a_{i,j,g},b_{i,j,\ell}\in P^{(e)}(d_7,d_8)$ for $g\in G$ and $\ell\in [k']$, where $d_7,d_8\in O_d(1)$.

Recall that $t_{i,j}$ is invertible modulo $\widehat{\mm}$. By applying \cref{lem:reduction-alg} with $f=1$, we may find $s_{i,j}\in \KK^{(e)}[\vecT]$ such that 
\begin{equation}\label{eq:relation2}
s_{i,j} t_{i,j}=1+\sum_{g\in G} c_{i,j,g} g.
\end{equation}
with $s_{i,j}, c_{i,j,g}\in P^{(e)}(d_9,d_{10})$ for all $g\in G$, where $d_9,d_{10}\in O_d(1)$.

Similarly, as $u\in\widehat{\mm}$ and $\ideal{h_1,\dots,h_{k'}}\subseteq\widehat{\mm}$, by \cref{lem:ideal-membership-coeff}, we can write
\begin{equation}\label{eq:relation3}
u=\sum_{g\in G} d_{g} g \quad\text{and}\quad h_\ell=\sum_{g\in G} e_{\ell,g} g \text{~~for~~} \ell\in[k']
\end{equation}
with $d_g, e_{\ell,g}\in P^{(e)}(d_{11},d_{12})$ for all $g\in G$ and $\ell\in [k']$, where $d_{11},d_{12}\in O_d(1)$.

Suppose $Q$ is a polynomial over $\KK^{(e)}$, and $\veca\in\FF^{3n}$ is a point such that none of the denominators of the coefficients of $Q$, which are polynomials in $\vecY^{1/p^e}$, vanishes after assigning $\veca$ to $\vecY^{1/p^e}$. Then the resulting polynomial after the assignment $\vecY^{1/p^e}\gets \veca$ is a well-defined polynomial over $\FF$. Denote this polynomial by $Q|_{\veca}$.

For convenience, for a set $S\subseteq \FF[\vecY^{1/p^e}]$ and $\veca\in\FF^{3n}$, we say $\veca$ is a \emph{common non-root} of $S$ if every $Q\in S$ is non-vanishing at $\veca$.

We will construct a set $S^* \subseteq \FF[\vecY^{1/p^e}]$ of size $O_{\delta}(d)$, where $d=\deg(F)$, such that $\res(F_1+F_2)|_\veca\not\equiv 0\pmod{\res(f_{0,1})|_\veca}$ for any common non-root $\veca$ of $S^*$.

First, we construct a subset $S_0$ of $S^*$ of size $O_\delta(d)$ as follows: For all $i=1,2$, $j\in [m_i]$, and all polynomials over $\KK^{(e)}$ that appear in \eqref{eq:relation1}, \eqref{eq:relation2}, or \eqref{eq:relation3}, add the denominators of all their coefficients to $S_0$. 
Define the ideal 
\[
\widehat{\mm}|_\veca:=\ideal{\{g|_\veca: g\in G\}}
\]
of $\FF[\vecT]$.
Then by \eqref{eq:relation1}, \eqref{eq:relation2}, and \eqref{eq:relation3}, for a common non-root $\veca$ of $S_0$, we have
\begin{equation}\label{eq:reduced1}
\left(t_{i,j}(\res(f_{i,j})(g_1,g_2))\right)|_\veca= \left.\left(u^{k_{i,j}} \left(1+\sum_{g\in G} a_g g\right)+\sum_{\ell=1}^{k'} b_\ell h_\ell\right)\right\vert_\veca,
\end{equation}
\begin{equation}\label{eq:reduced2}
(s_{i,j} t_{i,j})|_\veca=\left.\left(1+\sum_{g\in G} c_g g\right)\right\vert_\veca,
\end{equation}
and
\begin{equation}\label{eq:reduced3}
u|_\veca\in \widehat{\mm}|_\veca \quad\text{and}\quad \ideal{h_1|_\veca,\dots,h_{k'}|_\veca}\subseteq \widehat{\mm}|_\veca.
\end{equation}

Let $n_i=\ord_\mm(\res(F_i)+\ideal{\res(f_{0,1})})$ for $i=1,2$. Then $n_1<n_2$ by \eqref{eq:inequality} and $n_i=\sum_{j=1}^{m_i} k_{i,j}$ for $i=1,2$ by \eqref{eq:decompose}.
Taking the product of \eqref{eq:reduced1} over all $j\in [m_i]$, we have that for $i=1,2$,
\begin{equation}\label{eq:product}
\left.\left(\left(\prod_{j\in [m_i]} t_{i,j}\right) \res(F_i)(g_1,g_2)\right)\right\vert_\veca - u^{n_i}|_\veca
\in u^{n_i}|_\veca \cdot \widehat{\mm}|_\veca + \ideal{h_1|_\veca,\dots,h_{k'}|_\veca}.
\end{equation}

We also need the following two claims, with a proof for one and a proof sketch for the other provided later.

\begin{claim}\label{claim:proper}
There exists a set $S_1\subseteq \FF[\vecY^{1/p^e}]\cap C^{(e)}(D_1)$ of size $O_{\delta}(1)$, where $D_1=O_\delta(1)$, such that for any common non-root $\veca\in\FF^{3n}$ of $S_0\cup S_1$, 
it holds that $1\not\in \widehat{\mm}|_\veca$. 
\end{claim}

\begin{claim}\label{claim:non-zerodivisor}
There exists a set $S_2\subseteq \FF[\vecY^{1/p^e}]\cap C^{(e)}(D_2)$ of size $O_{\delta}(1)$, where $D_2=O_\delta(1)$, such that for any common non-root $\veca\in\FF^{3n}$ of $S_0\cup S_2$, the element $u|_{\veca}+\ideal{h_1|_\veca,\dots,h_{k'}|_\veca}$ is a non-zero-divisor of $\FF[\vecT]/\ideal{h_1|_\veca,\dots,h_{k'}|_\veca}$.
\end{claim}

Let $S^{**}=S_0\cup S_1\cup S_2$. Consider any common non-root $\veca\in\FF^{3n}$ of $S^{**}$. Let 
\[
\overline{\mm}|_\veca:=\widehat{\mm}|_\veca/\ideal{h_1|_\veca,\dots,h_{k'}|_\veca},
\]
which is an ideal of $R_\veca:=\FF[\vecT]/\ideal{h_1|_\veca,\dots,h_{k'}|_\veca}$.
By \cref{claim:proper}, we have $1\not\in\overline{\mm}|_\veca$, i.e., $\overline{\mm}|_\veca\subsetneq R_\veca$.

Let $\overline{u}|_\veca:=u|_\veca+\ideal{h_1|_\veca,\dots,h_{k'}|_\veca}\in R_\veca$.
By \cref{claim:non-zerodivisor}, $\overline{u}|_\veca$ is a non-zero-divisor of $R_\veca$, and so are its powers.
Therefore, for any integer $r\geq 0$, we have an isomorphism between the $(R_\veca/\overline{\mm}|_\veca)$-modules
\begin{align*}
\times (\overline{u}|_\veca)^r: R_\veca/\overline{\mm}|_\veca&\to \ideal{(\overline{u}|_\veca)^r}/((\overline{u}|_\veca)^r\overline{\mm}|_\veca)
\\
x+\overline{\mm}|_\veca &\mapsto (\overline{u}|_\veca)^r x+(\overline{u}|_\veca)^r \overline{\mm}|_\veca.
\end{align*} 

As $1 \not\in\overline{\mm}|_\veca$, using the above isomorphism, we see that $(\overline{u}|_\veca)^{r}$ is in $\ideal{(\overline{u}|_\veca)^r}$ but not in $(\overline{u}|_\veca)^r\cdot \overline{\mm}|_\veca$ for $r\geq 0$.

Let $U\subseteq R_\veca$ be the set of all finite products of elements in $\{t_{2,j}|_\veca+\ideal{h_1|_\veca,\dots,h_{k'}|_\veca}: j\in [m_2]\}$. By \eqref{eq:reduced2},  elements in $U$ are already invertible modulo $\overline{\mm}|_\veca$. So localizing $R_\veca/\overline{\mm}|_\veca$ and $\ideal{(\overline{u}|_\veca)^r}/((\overline{u}|_\veca)^r\overline{\mm}|_\veca)$ with respect to the multiplicatively closed set $U$ does not change these two modules. The argument in the previous two paragraphs then shows that $(\overline{u}|_\veca)^{r}$ is not in the localization $U^{-1}((\overline{u}|_\veca)^r\overline{\mm}|_\veca)$ for $r\geq 0$.

In particular, by \eqref{eq:product}, we have 
\[
\left.\left(\left(\prod_{j\in [m_1]} t_{1,j}\right) \res(F_1)(g_1,g_2)\right)\right\vert_\veca+\ideal{h_1|_\veca,\dots,h_{k'}|_\veca}\not\in U^{-1}((\overline{u}|_\veca)^{n_1}\cdot \overline{\mm}|_\veca).
\]
It follows that 
\begin{equation}\label{eq:F1}
\res(F_1)(g_1,g_2)|_\veca+\ideal{h_1|_\veca,\dots,h_{k'}|_\veca}\not\in U^{-1}((\overline{u}|_\veca)^{n_1}\cdot \overline{\mm}|_\veca).
\end{equation}

As $n_2>n_1$ and $\overline{u}|_\veca\in \overline{\mm}|_\veca$ (which holds by \eqref{eq:reduced3}), it also follows from \eqref{eq:product} that
\[
\left.\left(\left(\prod_{j\in [m_2]} t_{2,j}\right) \res(F_2)(g_1,g_2)\right)\right\vert_\veca+\ideal{h_1|_\veca,\dots,h_{k'}|_\veca}\in\ideal{(\overline{u}|_\veca)^{n_2}}\subseteq (\overline{u}|_\veca)^{n_1}\cdot \overline{\mm}|_\veca.
\]
Localizing with respect to $U$ makes $\left(\prod_{j\in [m_2]} t_{2,j}\right)+\ideal{h_1|_\veca,\dots,h_{k'}|_\veca}$ invertible.
Therefore,
\begin{equation}\label{eq:F2}
\res(F_2)(g_1,g_2)|_\veca+\ideal{h_1|_\veca,\dots,h_{k'}|_\veca}\in U^{-1}((\overline{u}|_\veca)^{n_1}\cdot \overline{\mm}|_\veca).
\end{equation}
It follows from \eqref{eq:F1} and \eqref{eq:F2} 
that 
\begin{equation}\label{eq:not-in-ideal}
\res(F_1+F_2)(g_1,g_2)|_{\veca}\not\in \ideal{h_1|_\veca,\dots,h_{k'}|_\veca}.
\end{equation}
Finally, we need the following claim, whose proof is given later.
\begin{claim}\label{claim:injectivity}
There exists a set $S_3\subseteq \FF[\vecY^{1/p^e}]\cap C^{(e)}(D_3)$ of size $O_{\delta}(d)$, where $D_3=O_\delta(1)$, such that for any common non-root $\veca\in\FF^{3n}$ of $S_0\cup S_3$, the polynomial $\res(f_{i,j})|_\veca\in\FF[Z_1,Z_2]$ is well-defined for all $i\in\{0,1,2\}$ and $j\in [m_i]$, and the kernel of the $\FF$-linear ring homomorphism $\FF[Z_1,Z_2]\to \FF[\vecT]/\ideal{h_1|_\veca,\dots,h_{k'}|_\veca}$ sending $Z_i$ to $g_i|_\veca+\ideal{h_1|_\veca,\dots,h_{k'}|_\veca}$ contains $\ideal{\res(f_{0,1})|_\veca}$. 
\end{claim}

We define $S^*:=S^{**}\cup S_3$. Then for any common non-root $\veca$ of $S^*$, we have that $\res(F_0)|_\veca$, $\res(F_1)|_\veca$, and $\res(F_2)|_\veca$ are well-defined by \cref{claim:injectivity}.
Morever, by \eqref{eq:not-in-ideal} and \cref{claim:injectivity}, $\res(F_1+F_2)|_\veca\not\equiv 0\pmod{\ideal{\res(f_{0,1})|_\veca}}$.
So \[
\res(F)|_\veca=\res(F_0+F_1+F_2)|_\veca\equiv \res(F_1+F_2)|_\veca\not\equiv 0\pmod{\ideal{\res(f_{0,1})|_\veca}}
\]
It follows that $\res(F)|_\veca\neq 0$. In other words, $F$ is not identically zero when restricted to $P_\veca$, which is the affine plane $\aff^2_\FF\subseteq\aff^n_\FF$ determined by the parameters $\veca=(a_{0,1},\dots,a_{2,n})$ via the map $(z_1,z_2)\mapsto (a_{0,1}+a_{1,1}z_1+a_{2,1}z_2,\dots, a_{0,n}+a_{1,n}z_1+a_{2,n}z_2)$.

As $F$ restricted to $P_\veca$ is a bivariate polynomial of degree at most $d$, we can construct a hitting set $\mathcal{H}_\veca\subseteq P_\veca\subseteq \aff^n_\FF$ of size $d^{O(1)}$ independent of $F$, assuming a common non-root $\veca$ of $S^*$ is given. 

We need $\veca$ to be a common non-root of all the elements in $S^*$, where $|S^*|=O_{\delta}(d)$. Note that the elements in $S^*$ are $3n$-variate polynomials of degree $O_\delta(1)$ in $\vecY^{1/p^e}$. 
Using the deterministic black-box PIT algorithm for sparse polynomials in \cite{KS01}, even without knowing $F$, we may construct an $\epsilon$-hitting set $\mathcal{H}$ of size at most $d^{O(1)} n^{O_\delta(1)}$ for the polynomials in $S^*$, where $\epsilon<1/|S^*|$. By the union bound, $\mathcal{H}$ is guaranteed to contain a common non-root of $S^*$. The final hitting set $\mathcal{H}_0$ is then $\bigcup_{\veca\in\mathcal{H}}\mathcal{H}_\veca$. Its size is $d^{O(1)} n^{O_\delta(1)}\leq (nd)^{O_\delta(1)}$.
\end{proof}

\begin{proof}[Proof of \cref{thm:main}]
The theorem follows by combining \cref{lem:easy-case} and \cref{lem:hard-case}.
\end{proof}

Finally, we prove \cref{claim:proper}, \cref{claim:non-zerodivisor}, and \cref{claim:injectivity}.

\begin{proof}[Proof of \cref{claim:proper}]
Consider $\veca\in S_0$, so that $g|_\veca$ is well-defined for $g\in G$.
Note that if $1\in \widehat{\mm}|_\veca$, by \cref{lem:ideal-membership}, we can write $1$ as an $\FF$-linear combination of polynomials in 
\[
U:=\{m g|_\veca: g\in G, m\text{ is a monomial of degree at most }d_0\}
\]
for some sufficiently large $d_0=O_\delta(1)$. View $1$ and the polynomials in $U$ as (row) vectors of coefficients. These vectors then form a matrix $M|_\veca$.
The statement that $1\not\in \widehat{\mm}|_\veca$ is equivalent to the statement that the row of $M|_\veca$ corresponding to $1$ is not in the span of the other rows over $\FF$.

The same argument works over $\KK^{(e)}$ with $mg|_\veca$ replaced by $mg$, allowing us to form a matrix $M$ over $\KK^{(e)}$. Then $1\not\in \widehat{\mm}$ is equivalent to the statement that the row of $M$ corresponding to $1$ is not in the span of the other rows over $\KK^{(e)}$.

As we do know that $1\not\in \widehat{\mm}$, over  $\KK^{(e)}$, we can find a nonsingular minor $M_0$ of maximal size, and it involves the row corresponding to $1$.
Note that $\det(M_0)\in \KK^{(e)}$ is a nonzero rational function in $\vecY^{1/p^e}$ whose denominators $P$ and numerators $Q$ have degree $O_\delta(1)$.  Suppose $P(\veca), Q(\veca)\neq 0$. Then the determinant of the corresponding minor $M_0|_\veca$ of $M|_\veca$ is nonzero too, which, combined with \cref{lem:ideal-membership}, implies that $1\not\in \widehat{\mm}|_\veca$.

Therefore, we can let $S_1=\{P,Q\}$.
\end{proof}

\begin{proof}[Proof sketch of \cref{claim:non-zerodivisor}]
$u|_{\veca}+\ideal{h_1|_\veca,\dots,h_{k'}|_\veca}$ is a zero-divisor of $\FF[\vecT]/\ideal{h_1|_\veca,\dots,h_{k'}|_\veca}$ 
if and only if there exists $v\in\FF[T]$ such that 
\begin{equation}\label{eq:bad-v}
u|_\veca \cdot v\in \ideal{h_1|_\veca,\dots,h_{k'}|_\veca} \quad\text{but}\quad v\not\in \ideal{h_1|_\veca,\dots,h_{k'}|_\veca}.
\end{equation}
The ideal of $v\in \FF[\vecT]$ satisfying $u|_\veca \cdot v\in \ideal{h_1|_\veca,\dots,h_{k'}|_\veca}$ is the ideal quotient $( \ideal{h_1|_\veca,\dots,h_{k'}|_\veca}:\ideal{u|_\veca})$, and a \grobner basis of it can be computed via \cite[Lemma~2.3.10 and Lemma~2.3.11]{AL94}. Using arguments based on \grobner bases, one can see that if $v$ satisfying \eqref{eq:bad-v} exists, it exists with $\deg_\vecT(v)\leq d_0$ for some $d_0\in O_{\delta}(1)$.

Let $W$ (resp. $W|_\veca$) be the $O_\delta(1)$-dimensional space of polynomials in $\KK^{(e)}[\vecT]$ (resp. $\FF[\vecT]$) of degree at most $d_0$.
Let $W_1\subseteq W$ (resp. $W_1|_\veca\subseteq W|_\veca$) be the subspace of $v$ satisfying $u\cdot v\in \ideal{h_1,\dots,h_{k'}}$ (resp.  $u|_\veca\cdot v \in \ideal{h_1|_\veca,\dots,h_{k'}|_\veca})$.
Let $W_2\subseteq W$ (resp. $W_2|_\veca\subseteq W|_\veca$) be the subspace of $v$ satisfying $v\in \ideal{h_1,\dots,h_{k'}}$ (resp.  $v \in \ideal{h_1|_\veca,\dots,h_{k'}|_\veca})$.

For $v\in W_1$, we have $u\cdot v=\sum_{i=1}^{k'} w_i h_i$ for some polynomial $w_i$ of degree at most $d_1=O_\delta(1)$ by \cref{lem:ideal-membership}.
Viewing the coefficients of $v$ and $w_1,\dots,w_{k'}$ as unknowns, we can construct a matrix $M_1$ representing the system of linear equations $u\cdot v=\sum_{i=1}^{k'} w_i h_i$. Solve it to find a basis of $W_1$. Its elements, viewed as row vectors, form a matrix $B_1$ over $\KK^{(e)}$.
Similarly,  we can construct a matrix $M_2$ representing the system of linear equations $v=\sum_{i=1}^{k'} w_i h_i$. Solve it to find a basis of $W_2$. Its elements, viewed as row vectors, form a matrix $B_2$ over $\KK^{(e)}$.

As we know $u+\ideal{h_1,\dots,h_{k'}}$ is a non-zero-divisor of the integral domain $\KK^{(e)}[\vecT]/\ideal{h_1,\dots,h_{k'}}$, we know $W_1\subseteq W_2$.
Add the denominators of the entries of $M_1,M_2,B_1,B_2$ to $S_2$. Further add the determinants of certain nonsingular minors of $M_1,M_2,B_2$ to $S_2$.
This would guarantee that if $\veca$ is a common non-root of $S_0\cap S_2$, then $B_1|_\veca$ is a basis of $W_1|_\veca$, $B_2|_\veca$ is a basis of $W_2|_\veca$, and $W_1|_\veca\subseteq W_2|_\veca$, implying that $u|_{\veca}+\ideal{h_1|_\veca,\dots,h_{k'}|_\veca}$ is a non-zero-divisor of $\FF[\vecT]/\ideal{h_1|_\veca,\dots,h_{k'}|_\veca}$. Details are omitted.
\end{proof}

\begin{proof}[Proof of \cref{claim:injectivity}]
First, for all $i\in\{0,1,2\}$ and $j\in [m_i]$, add the denominators of all the coefficients of $\res(f_{i,j})$ to $S_3$. 

As $\res(f_{0,1})(g_1,g_2)\in \ideal{h_1,\dots,h_{k'}}$, by \cref{lem:ideal-membership-coeff}, we may write 
\[
\res(f_{0,1})(g_1,g_2)=\sum_{i=1}^{k'} q_{i} h_i,
\] where $q_i\in P(d_0,d_0')$ for some $d_0,d_0'\in O_{\delta}(1)$.
Add the denominators of all the coefficients of $q_1,\dots,q_{k'}$ to $S_3$. 
Then $\res(f_{0,1})(g_1,g_2)|_\veca=\sum_{i=1}^{k'} q_{i}|_\veca h_i|_\veca$. So $\res(f_{0,1})(g_1,g_2)|_\veca\in \ideal{h_1|_\veca,\dots,h_{k'}|_\veca}$.
 It follows that the $\FF$-linear ring homomorphism $\FF[Z_1,Z_2]\to \FF[\vecT]/\ideal{h_1|_\veca,\dots,h_{k'}|_\veca}$ sending $Z_i$ to $g_i|_\veca+\ideal{h_1|_\veca,\dots,h_{k'}|_\veca}$ maps $\res(f_{0,1})|_\veca$ to zero, as desired. 
\end{proof}

\section{Conclusions and Future Directions}

In this paper, we present deterministic polynomial-time black-box PIT algorithms for $\Sigma^{[3]}\Pi\Sigma\Pi^{[\delta]}$ circuits over arbitrary fields, under a squarefreeness assumption. Along the way, we introduce new techniques and establish a novel connection between PIT and normalization.

We emphasize that our approach should not be viewed as entirely disjoint from the Sylvester--Gallai-based line of work. Rather, we see it as complementary and potentially opening up new directions.

A natural and important question is whether the squarefreeness condition can be removed. We suspect that normalization alone is insufficient for this purpose and may need to be combined with other advanced tools, such as those developed in \cite{OS24,GOS25}. Nevertheless, we believe our approach is both valuable and promising. In particular, to the best of our knowledge, no prior deterministic PIT algorithms, even partial results, were known for $\Sigma^{[k]}\Pi\Sigma\Pi^{[\delta]}$ circuits in small positive characteristic before our work.

Another interesting direction is to identify special cases in which some of the algebraic computation tasks involved in our method can be performed more efficiently. A recent result in this vein was obtained by Garg, Oliveira, and Saxena~\cite{GOS25-2}, who showed that certain instances of primality testing lie in $\mathsf{PSPACE}$ or the polynomial hierarchy (some conditionally, assuming the Generalized Riemann Hypothesis).

\bibliographystyle{alpha}
\bibliography{ref}

\appendix

\section{Proof of \cref{lem:factor}}\label{sec:appendix}

We now prove \cref{lem:factor}. First, we introduce some notation. Let $R=\FF[\vecY]\subseteq \KK=\FF(\vecY)$.
For an irreducible polynomial $a\in R$ and $b\in \KK^\times$, define $\ord_a(b)$ to be the unique integer $r$ such that $b$ can be written as $b=a^r \frac{s}{t}$, where $s\in R$ and $t\in R\setminus\{0\}$ are not divisible by $a$.
For a univariate polynomial $f\in \KK[X]\setminus\{0\}$ and an irreducible polynomial $a\in R$ in $\vecY$, define $\ord_a(f):=\min_c(\ord_a(c))$, where $c$ ranges over the nonzero coefficients of $f$.
Finally, denote by $\cont{f}$ the \emph{content} of $f$, defined as
\[
\cont{f}:=\prod_a a^{\ord_a(f)}
\]
with the product taken over the irreducible polynomials $a\in R$ such that $\ord_a(f)\neq 0$. If two such polynomials differ by a unit factor, only one representative is included in the product.
Note that $\cont{f}$ is well-defined up to multiplication by a unit of $R$. Also note that by definition, $f\in R[X]$ as long as $\cont{f}\in R$.

We need the following version of Gauss's lemma, which holds more generally when $R$ is a UFD \cite[Theorem~2.1]{Lan12}.

\begin{lemma}[Gauss's lemma]\label{lem:gauss}
$\cont{gh}=\cont{g}\cont{h}$ for nonzero univariate polynomials $g,h\in \KK[X]$.
\end{lemma}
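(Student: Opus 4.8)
The plan is to deduce the multiplicativity of $\cont{\cdot}$ from its local analogue: for every irreducible $a\in R=\FF[\vecY]$, one has $\ord_a(gh)=\ord_a(g)+\ord_a(h)$. Since $R=\FF[\vecY]$ is a unique factorization domain, for a nonzero $f\in\KK[X]$ the content $\cont{f}=\prod_a a^{\ord_a(f)}$ is, up to a unit of $R$, completely determined by the tuple $(\ord_a(f))_a$ as $a$ ranges over irreducible elements of $R$; recall moreover that $\cont{\cdot}$ is only defined up to units. Hence it suffices to establish that $\ord_a$ is additive on products of nonzero polynomials in $\KK[X]$ for each fixed irreducible $a$, and then $\ord_a(\cont{gh})=\ord_a(gh)=\ord_a(g)+\ord_a(h)=\ord_a(\cont{g}\cont{h})$ for all $a$, so $\cont{gh}$ and $\cont{g}\cont{h}$ are associates in $R$, which is the asserted identity.

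To prove $\ord_a(gh)=\ord_a(g)+\ord_a(h)$ for a fixed irreducible $a\in R$: using that $\ord_a$ is a well-defined additive function on $\KK^\times$ (again because $R$ is a UFD) together with $\ord_a(a^k f)=k+\ord_a(f)$ for any nonzero $f\in\KK[X]$, we may replace $g$ by $a^{-\ord_a(g)}g$ and $h$ by $a^{-\ord_a(h)}h$, and thereby reduce to the case $\ord_a(g)=\ord_a(h)=0$, where it remains to show $\ord_a(gh)=0$. Because $a$ is irreducible in the UFD $R$, the ideal $\ideal{a}$ is prime, so the localization $R_{\ideal{a}}$ is a discrete valuation ring with uniformizer $a$ and residue field $\kappa(a):=\fr{R/\ideal{a}}$. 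Since $\ord_a(g)=0$, all coefficients of $g$ lie in $R_{\ideal{a}}$ and at least one of them is a unit there; likewise for $h$. Reducing coefficients modulo $aR_{\ideal{a}}$ therefore gives nonzero polynomials $\bar g,\bar h\in\kappa(a)[X]$, and as $\kappa(a)[X]$ is an integral domain we get $\overline{gh}=\bar g\,\bar h\neq 0$. On the other hand, every coefficient of $gh$ is a sum of products of coefficients of $g$ and $h$, each of which has nonnegative $\ord_a$, so by the ultrametric inequality $\ord_a(x+y)\ge\min(\ord_a(x),\ord_a(y))$ every coefficient of $gh$ has nonnegative $\ord_a$. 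Combining this with $\overline{gh}\neq 0$ (which says some coefficient of $gh$ is a unit in $R_{\ideal{a}}$, i.e.\ has $\ord_a=0$) yields $\ord_a(gh)=0$, as desired.

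Running this argument for every irreducible $a\in R$ completes the proof; alternatively, the statement is exactly the special case $R=\FF[\vecY]$ of \cite[Theorem~2.1]{Lan12}, as polynomial rings over fields are UFDs. I do not anticipate a genuine obstacle here: the only point needing care — and the reason for passing to the localization $R_{\ideal{a}}$ rather than to $R/\ideal{a}$ directly — is that the coefficients of $g$ and $h$ need not lie in $R$, only in $\KK$; after the normalization $\ord_a(g)=\ord_a(h)=0$ they do lie in $R_{\ideal{a}}$, which is precisely what makes reduction modulo $a$ meaningful and the domain argument applicable.
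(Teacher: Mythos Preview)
Your proof is correct. The paper does not actually prove this lemma; it simply cites \cite[Theorem~2.1]{Lan12}, noting that the statement holds whenever $R$ is a UFD, and your argument is precisely the standard local/valuation-theoretic proof one finds in that reference.
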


\begin{proof}[Proof of \cref{lem:factor}] 
Consider the Kronecker map $\phi:\KK[X_1,\dots,X_n]\to \KK[X]$, which is the $\KK$-linear ring homomorphism sending $X_i$ to $X^{(d+1)^i}$ for $i$. We have $\phi(f)=\phi(g)\cdot \phi(f/g)$. The terms of $f$ (resp. $g$) correspond one-to-one to the terms of $\phi(f)$ (resp. $\phi(g)$), preserving coefficients.
And the degree of $\phi(f)$ is at most $(d+1)^n=O_{d,n}(1)$.
This reduces the problem to the univariate case.

Now we prove the lemma for the univariate case. Let $f\in \KK[X]$ be a nonzero polynomial in $P(d,d')$. Let $g$ be a factor of $f$.
Recall that $R=\FF[\vecY]$.
Let $s$ be the least common multiple of all the denominators of the nonzero coefficients of $f$, so that $sf\in R[X]$.
Let $t=\cont{sf}\in R\setminus\{0\}$. Then $\cont{\frac{s}{t}f}=1$. Note that the coefficients of $\frac{s}{t}f$ are polynomials in $\FF[\vecY]$ of degree at most $d'':=(d+1)d'$.
Let $c=\frac{1}{\cont{g}}\in\KK^\times$. Then $\cont{c g}=1$. 
Let $h=\frac{s}{ct}\cdot \frac{f}{g}$, so that $\frac{s}{t}f=(cg)h$. As $\cont{\frac{s}{t}f}=1$ and $\cont{cg}=1$, we have $\cont{h}=1$ by \cref{lem:gauss}. In particular, $\frac{s}{t}f$, $cg$, and $h$ are all in $R[X]$.

Write $cg=\sum_{i=0}^{\deg(cg)} a_i X^i$ and $h=\sum_{j=0}^{\deg(h)} b_j X^i$ with $a_i,b_j\in R$. Let $D=\max_{0\leq i\leq \deg(cg)} \deg_\vecY(a_i)$ and $D'=\max_{0\leq j\leq \deg(h)} \deg_\vecY(b_j)$.
Choose the maximal $i_0\in \{0,1,\dots,\deg(cg)\}$ such that $a_{i_0}$ has a nonzero degree-$D$ homogeneous component, and let $H$ be this homogeneous component.
Similarly, choose the maximal $j_0\in \{0,1,\dots,\deg(h)\}$ such that $b_{j_0}$ has a nonzero degree-$D'$ homogeneous component, and let $H'$ be this homogeneous component.
By the maximality of $D$, $D'$, $i_0$, and $j_0$, the degree-$(D+D')$ homogeneous component of the coefficient of $X^{i_0+j_0}$ in $(cg)\cdot h=\frac{s}{t}f$ is exactly $H\cdot H'\neq 0$. However, we know that the coefficients of $\frac{s}{t}f$ all have degrees at most $d''$. So 
\[
D+D'\leq d''.
\]
Therefore, the coefficients of $cg$ are all polynomials in $\FF[\vecY]$ of degree at most $D\leq d''$. In particular, $cg\in P(d, d'')$, where $d''=(d+1)d'\in O_{d,d'}(1)$. 
\end{proof}

\end{document}